\documentclass[letterpaper,11pt]{article}
\usepackage[utf8]{inputenc}

\pdfoutput=1

\usepackage{amsmath, amsthm, amssymb}
\usepackage{algpseudocode,algorithm}
\usepackage{algorithmicx}
\usepackage{mathtools}
\usepackage[numbers,sort]{natbib}
\usepackage{comment} 
\usepackage[suppress]{color-edits} 
\usepackage{tcolorbox}
\usepackage{xfrac}
\usepackage{hyperref}
\usepackage{multirow}
\usepackage{caption}
\usepackage{bm}
\usepackage{newfloat}
\usepackage{enumitem}
\usepackage{wrapfig}
\usepackage{setspace}
\usepackage{xfrac}
\usepackage{tocloft}
\let\LaTeXStandardContentsName\contentsname
\renewcommand{\contentsname}{\phantom{\LaTeXStandardContentsName}}%

\usepackage[T1]{fontenc}
\usepackage{tgtermes}

\usepackage[margin=1in]{geometry}

\allowdisplaybreaks



\title{Optimal Strategies of Blotto Games: Beyond Convexity\thanks{A portion of this work was completed while some of the authors were visiting Simons Institute for Theory of Computing.}}



\author{
\texorpdfstring{\hspace*{-8pt}}{}%
\begin{tabular}{c} Soheil Behnezhad\\ University of Maryland\\ \\
MohammadTaghi Hajiaghayi\\University of Maryland\end{tabular} \and
\begin{tabular}{c} Avrim Blum\\TTI-Chicago\\\\
Christos H. Papadimitriou\\ Columbia University \end{tabular}\and
\begin{tabular}{c} Mahsa Derakhshan \\ University of Maryland \\ \\Saeed Seddighin \\ University of Maryland\end{tabular}
\texorpdfstring{\hspace*{-8pt}}{}%
}


\date{}

\DeclareMathOperator{\poly}{poly}

\addauthor{sb}{blue}    
\addauthor{md}{red}  
\addauthor{ss}{purple} 

\newtheorem{theorem}{Theorem}[section]
\newtheorem{lemma}[theorem]{Lemma}

\newtheorem{corollary}[theorem]{Corollary}

\newtheorem{definition}[theorem]{Definition}
\newtheorem{claim}[theorem]{Claim}

\newtheorem{observation}[theorem]{Observation}

\definecolor{mygreen}{RGB}{20,100,60}

\hypersetup{
     colorlinks=true,
     citecolor= mygreen,
     linkcolor= black
}

\newcommand{\etal}[0]{\textit{et al.}}

\renewcommand{\paragraph}[1]{\vspace{0.2cm} \noindent \textbf{#1}}

\algnewcommand{\IIf}[1]{\State\algorithmicif\ #1\ \algorithmicthen}
\algnewcommand{\EndIIf}{\unskip\ \algorithmicend\ \algorithmicif}

\newcommand{\restatethm}[2]{\vspace{0.35cm}\noindent \textbf{Theorem~#1} (restated). {\em #2}\vspace{0.15cm}}

\newcommand{\resultsrestatethm}[3]{\vspace{0.15cm}\mybox{\noindent \textbf{Theorem~#1} (#2). {\em #3}}\vspace{0.10cm}}

\newcommand{\resultsrestatethms}[3]{\vspace{0.15cm}\mybox{\noindent \textbf{Theorems~#1} (#2). {\em #3}}\vspace{0.10cm}}

\newcommand{\resultsrestatethmnorestated}[2]{\vspace{0.15cm}\mybox{\noindent \textbf{Theorems~#1}. {\em #2}}\vspace{0.10cm}}

\DeclareRobustCommand{\mybox}[2][gray!20]{%
\begin{tcolorbox}[
        left=0pt,
        right=0pt,
        top=0pt,
        bottom=0pt,
        colback=#1,
        colframe=#1,
        width=\dimexpr\textwidth\relax, 
        enlarge left by=0mm,
        boxsep=5pt,
        arc=0pt,outer arc=0pt,
        ]
        #2
\end{tcolorbox}
}

\DeclareMathOperator*{\argmin}{arg\,min}
\DeclareMathOperator*{\argmax}{arg\,max}
\renewcommand{\b}[1]{\ensuremath{\bm{\mathrm{#1}}}}


\newcommand{\utilitya}[2]{\ensuremath{\mathsf{u}_1(#1, #2)}}
\newcommand{\utilityb}[2]{\ensuremath{\mathsf{u}_2(#1, #2)}}

\newcommand{\pureseta}[0]{\ensuremath{\mathcal{S}_1}}
\newcommand{\puresetb}[0]{\ensuremath{\mathcal{S}_2}}

\newcommand{\puresetaheavy}[0]{\ensuremath{\mathcal{S}^h_1}}
\newcommand{\puresetbheavy}[0]{\ensuremath{\mathcal{S}^h_2}}

\DeclareFloatingEnvironment[fileext=frm,name=LP]{lp}

\newcommand{\maxmin}[2]{\ensuremath{(#1,#2)\text{-}\textsf{maximin}}}
\newcommand{\maxminc}[3]{\maxmin{#1}{#2} \cmixedstrategy{#3}}
\newcommand{\cmixedstrategy}[1]{\ensuremath{#1\textsf{-strategy}}}
\newcommand{\cmixedstrategies}[1]{\ensuremath{#1\textsf{-strategies}}}

\newcommand{\A}{\ensuremath{L_1}}
\newcommand{\B}{\ensuremath{L_2}}
\newcommand{\AB}{\ensuremath{L_{12}}}
\newcommand{\wmax}{\mathsf{w_{max}}}
\newcommand{\sign}[1]{\ensuremath{\sigma_{#1}}}
\newcommand{\signtwo}[2]{\ensuremath{\mathsf{\sigma}(#1, #2)}}
\newcommand{\cbinstance}[3]{\ensuremath{\mathcal{B}(#1, #2, #3)}}

\newcommand{\profile}[1]{\ensuremath{\rho(#1)}}
\newcommand{\winningsubset}[2]{\ensuremath{\mathsf{W}_{#1}(#2)}}

\newcommand{\nonlosingset}[1]{\ensuremath{\mathcal{N}(#1)}}

\begin{document}

\maketitle

\begin{abstract}
The {\em Colonel Blotto} game, first introduced by Borel in 1921, is a well-studied game theory classic. Two colonels each have a pool of {\em troops} that they divide simultaneously among a set of {\em battlefields}. The winner of each battlefield is the colonel who puts more troops in it and the overall utility of each colonel is the sum of weights of the battlefields that s/he wins. Over the past century, the Colonel Blotto game has found applications in many different forms of competition from advertisements to politics to sports.

Two main objectives have been proposed for this game in the literature: (i) maximizing the guaranteed expected payoff, and (ii) maximizing the probability of obtaining a minimum payoff $u$. The former corresponds to the conventional utility maximization and the latter concerns scenarios such as elections where the candidates' goal is to maximize the probability of getting at least half of the votes (rather than the expected number of votes). In this paper, we consider both of these objectives and show how it is possible to obtain (almost) optimal solutions that have few strategies in their support.

One of the main technical challenges in obtaining bounded support strategies for the Colonel Blotto game is that the solution space becomes non-convex. This prevents us from using convex programming techniques in finding optimal strategies which are essentially the main tools that are used in the literature. However, we show through a set of structural results that the solution space can, interestingly, be partitioned into polynomially many disjoint convex polytopes that can be considered independently. Coupled with a number of other combinatorial observations, this leads to polynomial time approximation schemes for both of the aforementioned objectives.

We also provide the first complexity result for finding the maximin of Blotto-like games: we show that computing the maximin of a  generalization of the Colonel Blotto game that we call {\sc General Colonel Blotto} is exponential time-complete. 
\end{abstract}


\clearpage
\setcounter{page}{1}

\section{Introduction}
The {\em Colonel Blotto} game, originally introduced by Borel in 1921 \cite{B21, B53, F53a, F53b}, is among the first mathematically formulated strategic situations and has become one of the game theory classics over the past century. In this game, two players --- which we refer to as player 1 and player 2 --- are fighting over a set of {\em battlefields}. Each player has a number of {\em troops} to distribute across the battlefields and the winner of each of the battlefields is the player who puts more troops in it. The overall utility of each player is the sum of weights of the battlefields that s/he wins. Although originally introduced to consider a war situation (as the terminology implies,) the Colonel Blotto game has found applications in many different forms of competition from advertisements to politics to sports \cite{myerson1993incentives, merolla2005play, laslier2002distributive, kovenock2010conflicts, kovenock2012coalitional, chowdhury2009experimental}.

The goal in the Colonel Blotto game is to find optimal (i.e., {\em maximin}) strategies of the players. A mixed strategy is maximin if it maximizes the guaranteed {\em expected} utility against any strategy of the opponent. For certain applications of the Colonel Blotto game --- such as the United States presidential elections, which we elaborate on in the next paragraph --- maximizing the expected utility is not desirable. In such scenarios, we need to maximize the probability of obtaining a minimum utility $u$. This has been captured by the concept of \maxmin{u}{p} strategies in the literature \cite{behnezhad2018battlefields}. A strategy is \maxmin{u}{p} if it guarantees receiving a utility of at least $u$ with probability at least $p$. In this paper we consider both of these objective functions.

Perhaps the most notable application of the Colonel Blotto game is in the U.S. presidential election race. Every state has a number of electoral votes and the winning candidate is the one who receives the majority of these votes. In most\footnote{All states except Maine and Nebraska.} of the states, a {\em winner-take-all} rule determines the electoral votes. That is, the candidate who gets the popular vote in a state receives all the electoral votes of that state. Thus, the candidates have to strategically decide on how they spend their resources such as funds, staff, etc., in different states. This can be modeled by a Colonel Blotto instance in which each battlefield corresponds to a state and the colonels' troops correspond to the candidates' resources. Here, maximizing the expected utility (i.e., the expected number of electoral votes) does not necessarily maximize the likelihood of winning the race. Instead, the actual goal is to maximize the probability of obtaining more than half (i.e., at least 270) of the electoral votes. As such, we need to find a \maxmin{270}{p} strategy with maximum possible $p$.


Maximin strategies, or equivalently Nash equilibrium for constant-sum games such as Colonel Blotto, are often criticized for that they may be too complicated (see e.g., \cite{lipton2003playing, simon1982models, rubinstein1998modeling}). That is, even if we are able to find such solution in polynomial time, we may not be able to deploy it since the equilibria can have a large support\footnote{The {\em support} of a mixed strategy is the set of all pure strategies to which a non-zero probability is associated.}.  In the case of the Blotto game, the potential size of the support is enormous, while every possible pure strategy in the support requires a prior (often costly) setup. 
Therefore it is tempting to find a near-optimal strategy that uses very few pure strategies, and is near optimum against the opponent's best response --- the main goal of the present paper.

However, limiting the support size ofter renders the game intractable. For instance, the decision problem of existence of a Nash equilibrium  when the support size is bounded by a given number is NP-hard even in two-player zero-sum games \cite{gilboa1989nash}, while this problem is unlikely to be fixed parameter tractable when the problem is parametrized by the support size ~\cite{estivill2009computing}. These results imply that in order to find optimal strategies with bounded support, we need to use structural properties of the game at hand, even when the players have polynomially many strategies. It becomes even more challenging for succinct games such as Colonel Blotto wherein the strategy space itself is exponentially large.

Recent studies have made significant progress in understanding the optimal response problem in Colonel Blotto when the support size is unbounded \cite{behnezhad2018battlefields, ahmadinejad2016duels,behnezhad2017faster,roberson2006colonel,gross1950continuous,hart2008discrete,shubik1981systems,hortala2012pure,roberson2012non}. Note that the unbounded case is also challenging to solve due to the exponential number of pure strategies of the players in the Colonel Blotto game. In spite of that, for maximizing the expected utility, one can obtain optimal strategies in polynomial time by algorithms of Ahmadinejad \etal~\cite{ahmadinejad2016duels} and Behnezhad \etal~\cite{behnezhad2017faster}. The case of \maxmin{u}{p} strategies is generally harder to solve. However, Behnezhad \etal~\cite{behnezhad2018battlefields} show that it is possible to obtain a logarithmic factor approximation in polynomial time.

All of the results mentioned above rely crucially on the fact that the support size is unbounded. The challenges in obtaining bounded support strategies turn out to be entirely different. On one hand, for the choice of each pure strategy in the support we still have exponentially\footnote{Or even an unbounded number of strategies for the continuous variant of Colonel Blotto where the troops are capable of fractional partitioning.} many possibilities. On the other hand, we show that bounding the support size makes the solution space non-convex. The latter prevents us from using convex programming techniques in finding optimal strategies --- which are essentially the main tools that are used in the literature for solving succinct games in polynomial time \cite{immorlica2011dueling, ahmadinejad2016duels, behnezhad2018battlefields, behnezhad2017faster}. However, we show through a set of structural results that the solution space can, interestingly, be partitioned into polynomially many disjoint convex polytopes, allowing us to consider each of these sub-polytopes independently. Combined with a number of other techniques that are highlighted in Section~\ref{sec:results}, this leads to polynomial time approximation schemes (PTASs) for both the expectation case and the case of \maxmin{u}{p} strategies. 

We also provide the first complexity result for finding the maximin of Blotto-like games: we show that computing the maximin of a  generalization of the Colonel Blotto game that we call {\sc General Colonel Blotto}  --- roughly, the utility is a general function of the two allocations --- is exponential time-complete. 




\section{Preliminaries}
Throughout the paper, for any integer $n$, we use $[n]$ to denote the set $\{1, 2, \ldots, n\}$. We denote the vectors by bold fonts and for every vector $\b{v}$, use $v_i$ to denote its $i$th entry.

\paragraph{The Colonel Blotto game.} The Colonel Blotto game is played between two players which we refer to as player 1 and player 2. Any instance of the game can be represented by a tuple \cbinstance{n}{m}{\b{w}} where $n$ and $m$ respectively denote the number of {\em troops} of player 1 and player 2, and $\b{w} = (w_1, \ldots, w_k)$ is a vector of length $k$ of positive integers denoting the {\em weight} of $k$ {\em battlefields} on which the game is being played.

 A pure strategy of each of the players is a partitioning of his troops over the battlefields. That is, any pure strategy of player 1 (resp. player 2) can be represented by a vector $\b{x} = (x_1, \ldots, x_k)$ of length $k$ of non-negative numbers where $\sum_{i\in[k]} x_i \leq n$ (resp. $\sum_{i\in[k]} x_i \leq m$). In the {\em discrete} variant of the game, the number of troops that are assigned to each of the battlefields must be non-negative integers. In contrast, in the {\em continuous} variant, any assignment with non-negative real values is considered valid. Throughout the paper, we denote respectively by \pureseta{} and \puresetb{} the set of all valid pure strategies of player 1 and player 2.

 Let $\b{x}$ and $\b{y}$ be the pure strategies that are played respectively by player 1 and player 2. Player 1 wins battlefield $i$ if $x_i > y_i$ and loses it otherwise. The winner of battlefield $i$ gets a partial utility of $w_i$ on that battlefield, and the overall utility of each player is the sum of his partial utilities on all the battlefields. More precisely, the utilities of players 1 and 2, which we respectively denote by $\utilitya{\b{x}}{\b{y}}$ and $\utilityb{\b{x}}{\b{y}}$, are as follows:
\begin{equation*}
	\utilitya{\b{x}}{\b{y}} = \sum_{i \in [k] : x_i > y_i} w_i, \hspace{2cm} \utilityb{\b{x}}{\b{y}} = \sum_{i \in [k] : x_i \leq y_i} w_i.
\end{equation*}
Note that in the definition above, we break the ties in favor of player 2, i.e., when both players put the same number of troops on a battlefield, we assume the winner is player 2. 

Also, we define the \textit{uniform} Colonel Blotto game to be a special case of Colonel Blotto in which all of the battlefields have the same weights, i.e., $w_1 = w_2 = \ldots = w_k = 1$.


\paragraph{The objective.} 
The guaranteed expected utility of a (possibly mixed) strategy $\b{x}$ of player 1 is $u$, if against any (possibly mixed) strategy $\b{y}$ of player 2, we have $\mathbb{E}_{\b{x}' \sim \b{x}, \b{y}' \sim \b{y}}[\utilitya{\b{x}'}{\b{y}'} ] \geq u.$
A strategy is {\em maximin} if it maximizes this guaranteed expected utility.

In this paper, we are interested in bounded {\em support} strategies. The support of a mixed strategy is the set of pure strategies to which it assigns a non-zero probability. We call a strategy \cmixedstrategy{c} if its support has size at most $c$. A maximin \cmixedstrategy{c} is a \cmixedstrategy{c} that has the highest guaranteed expected utility among all \cmixedstrategies{c}. Observe that we do not restrict the adversary to play a \cmixedstrategy{c} here. In fact, given the mixed strategy of a player, it is well-known that the best response of the opponent can be assumed to be a pure strategy w.l.o.g. This means that an opponent that can respond by only one pure strategy is as powerful as an opponent that can play mixed strategies as far as maximin strategies are concerned.

Another standard objective for many natural applications of the Colonel Blotto game is to maximize the probability of obtaining a utility of at least $u$. This has been captured in the literature by the notion of \maxmin{u}{p} strategies.
\begin{definition}
	For any two-player game, a (possibly mixed) strategy $\b{x}$ of player 1 is called a \maxmin{u}{p} strategy, if for any (possibly mixed) strategy $\b{y}$ of player 2, $$\Pr_{\b{x}' \sim \b{x}, \b{y}' \sim \b{y}} [\mathsf{u}(\b{x}', \b{y}') \geq u] \geq p.$$
\end{definition}
\noindent In such scenarios, for a given minimum utility $u$, our goal is to compute (or approximate) a \maxmin{u}{p} with maximum possible $p$. Similar to the expected case, we are interested in bounded support strategies. That is, given $u$ and $c$, our goal is to compute (or approximate) a \maxmin{u}{p} \cmixedstrategy{c} with maximum possible $p$. Again, we do not restrict the adversary's strategy.


\section{Our Results \& Techniques}\label{sec:results}
Throughout this paper, we discuss the optimal strategies of the Colonel Blotto game when the support size is small. That is, when player~1 can only randomize over at most $c$ pure strategies and wishes to maximize his utility. Our main interest is in finding (almost) optimal \maxmin{u}{p}  strategies as discussed above. Nonetheless, we show in Section~\ref{sec:exp} that our results carry over to the conventional definition of the maximin strategies and can be used when the goal is to maximize the expected utility.

When randomization is not allowed, any \maxmin{u}{p} \cmixedstrategy{1} with $p > 0$ is also \maxmin{u}{1}. Thus, the problem boils down to finding a pure strategy with the maximum guaranteed payoff. However, when randomization over two pure strategies is allowed, we may play two pure strategies with different probabilities $q$ and $1-q$ and thus the problem of finding a \maxmin{u}{p} \cmixedstrategy{2} with $p \neq 1$ does not necessarily reduce to the case where the goal is to find a single pure strategy. As an example, when $n = m = 2$ and we have two battlefields with equal weights ($w_1 = w_2 = 1$), a \maxmin{1}{\sfrac{1}{2}} \cmixedstrategy{2} can be obtained by selecting a battlefield at random and placing two troops on the selected battlefield. However, in this example (or in any example with $m=n$), no \cmixedstrategy{2} is \maxmin{1}{p} for $p>\sfrac{1}{2}$ since the opponent can just select our higher-probability strategy and copy it, ensuring we get utility 0 with probability at least $\sfrac{1}{2}$. More generally, for $c=2$, even if $n\leq m$, a simple observation shows that the only interesting cases are when $p = 1$ (which reduces to the $c =1$ case) or when $p = \sfrac{1}{2}$. The idea is that when $p > \sfrac{1}{2}$ holds, existence of a \maxmin{u}{p} \cmixedstrategy{2} for an arbitrary $u$ implies that of a \maxmin{u}{1} \cmixedstrategy{2}. Similarly, if a \cmixedstrategy{2} is \maxmin{u}{p} for some $u$ and $0 < p < \sfrac{1}{2}$, one can modify the same strategy to make it \maxmin{u}{\sfrac{1}{2}}. Therefore, for $c = 2$, the problem is (computationally) challenging only when a \maxmin{u}{\sfrac{1}{2}} \cmixedstrategy{2} is desired. Thus, for $c = 2$, the problem essentially reduces to finding two pure strategies $\b{x},\b{x}'$ such that no strategy of the opponent can prevent {\em both} $\b{x}$ and $\b{x}'$ from getting a utility of at least $u$. A similar, but more in-depth analysis gives us the same structure for $c = 3$. That is, in this case, we may look for a \maxmin{u}{\sfrac{1}{3}} or a \maxmin{u}{\sfrac{2}{3}} \cmixedstrategy{3}.  This implies that in an optimal solution, we look for three strategies $\b{x},\b{x}',\b{x}''$ such that no strategy of the opponent prevents two of or all of (depending on whether $p = \sfrac{1}{3}$ or $p=\sfrac{2}{3}$) $\b{x}, \b{x}'$ and $\b{x}''$ from getting a utility of at least $u$.

It is surprising to see that this structure breaks when considering more than three pure strategies ($c \geq 4$).  For instance, consider an instance of the Colonel Blotto game with 4 battlefields ($k=4$) in which the players have $4$ and $6$ troops, respectively ($n = 4$, $m = 6$). Let the weights of the first 3 battlefields be $5$ and the weight of the last battlefield be $10$. In this example, the goal of player 1 is to obtain a utility of at least $10$ with the highest probability. One can verify with an exhaustive search that if player 1 were to randomize over 4 pure strategies with equal probability, he could guarantee a utility of at least 10 with probability no more than $\sfrac{1}{4}$.\footnote{We have verified this claim with a computer program.} However, we present in Table~\ref{table:nonuinformprob}, a  \maxmin{10}{\sfrac{2}{5}} \cmixedstrategy{4} for player 1 that plays 4 pure strategies with non-uniform probabilities.

\newcommand{\thead}[2]{\parbox{2cm}{\setstretch{0.7} \centering \vspace{0.1cm} \textbf{#1} \\ #2 \vspace{0.1cm}}}

\newcommand{\strattable}[2]{\textbf{Strategy #1}, played with probability #2.}

\begin{table}[h]
\centering
\begin{tabular}{|c|c|c|c|c|}
\hline
& \thead{Battlefield 1}{$w_1 = 5$} & \thead{Battlefield 2}{$w_2 = 5$} & \thead{Battlefield 3}{$w_3=5$} & \thead{Battlefield 4}{$w_4=10$} \\ \hline
\strattable{1}{\sfrac{2}{5}} & 0                        & 0                        & 0                      & 4                       \\ \hline
\strattable{2}{\sfrac{1}{5}} & 1                        & 1                        & 2                      & 0                       \\ \hline
\strattable{3}{\sfrac{1}{5}}  & 1                        & 2                        & 1                      & 0                       \\ \hline
\strattable{4}{\sfrac{1}{5}}  & 2                        & 1                        & 1                      & 0                       \\ \hline
\end{tabular}
\caption{A \maxmin{10}{\sfrac{2}{5}} \cmixedstrategy{4} for player 1 on instance \cbinstance{4}{6}{(5, 5, 5, 10)}.}
\label{table:nonuinformprob}
\end{table}

In order to design an algorithm for finding $\maxmin{u}{p}$ strategies that does not lose on $p$, it is essential to understand how player 1  randomizes over his strategies in an optimal solution. We begin in Section~\ref{sec:probdist} by showing that when randomization is allowed only on a small number of pure strategies, the number of different ways to distribute the probability over the pure strategies in an optimal solution is limited. That is, one can list a  number of probability distributions and be sure that at least one of such probability distributions leads to an optimal solution. This structural property is general and applies to any game so long as \maxmin{u}{p} strategies are concerned.
	
\resultsrestatethm{\ref{thm:cstrategyprobabilities}}{restated informally}{When randomization is only allowed on a constant number of pure strategies, the number of possible probability distributions for an optimal \maxmin{u}{p}  strategies is limited by a constant.}

Theorem~\ref{thm:cstrategyprobabilities} is proven via a combinatorial analysis of the optimal solutions. On one hand, we leverage the optimality of the solution to argue that $p$ cannot be improved. On the other hand, we use the maximality of $p$ to derive relations between the probabilities that the pure strategies are played with. Finally, we use these relations to narrow down the probabilities a to a small set.

Indeed, a consequence of Theorem~\ref{thm:cstrategyprobabilities} is that when we fix a utility $u$ and wish to find a \maxmin{u}{p} \cmixedstrategy{c}  with highest $p$, $p$ can only take a constant number of values. This observation makes the problem substantially easier as we can iterate over all possible probability profiles and solve the problem separately for each profile. Thus, from now on, we assume that we fix a probability distribution over the pure strategies, and the goal is to select $c$ strategies to be played according to the fixed probabilities. We assume that (i) $u$ and $p$ are given in the input, (ii) we are guaranteed that a \maxmin{u}{p} \cmixedstrategy{c}  exists, and (iii) the goal is to compute/approximate such a strategy.

\paragraph{Continuous Colonel Blotto.} For the continuous case, we can represent each pure strategy with a vector of size $k$ of real values indicating the number of troops that is placed on each battlefield. Thus, a \cmixedstrategy{c} can be represented by $c$ vectors of length $k$ (given that we have fixed the probability distribution over the pure strategies). A simple observation (also pointed out by Behnezhad \textit{et al.}~\cite{behnezhad2018battlefields}) is that when the goal is to find a pure maximin strategy, the solution is essentially a convex set with respect to the representations. That is, if $\b{x}^1$ and $\b{x}^2$ are both \maxmin{u}{p}, any strategy whose representation is a convex combination of the representations of $\b{x}^1$ and $\b{x}^2$  is also  \maxmin{u}{p}. The pure maximin problem coincides with our setting when $c = 1$. However, it is surprising to see that for $c > 2$, the solution may not be convex, even though we use the same approach to represent the strategies.

As an example, imagine we have two battlefields with equal weights (say 1) and $n = m = 2$. Indeed a \maxmin{1}{\sfrac{1}{2}} strategy can be obtained for player 1 by randomizing over $(0,2)$  and $(2,0)$ uniformly. Similarly, randomizing over $(2,0)$ and $(0,2)$ (the order is changed) gives us the same guarantee. However, a convex combination of the two strategies plays $(1,1)$ deterministically and loses both battlefields against the strategy $(1,1)$ of player 2. The situation may be even worse as one can construct a delicate instance whose solution set is the union of up to $2^{\Omega(k)}$ convex regions no two of which make a convex set when merged. 

A key observation that enables us to compute/approximate the solution is the \textit{partial convexity} of the solution. That is, we show that although the solution set is not necessarily convex, one can identify regions of the space where for each region, the solution is convex. To be more precise, denote by $\b{x}^1,\b{x}^2,\ldots,\b{x}^c$ the representations of the strategies. In this representation, $x^i_j$ denotes the number of troops that $i$'th pure strategy places on the $j$'th battlefield. This gives us a $ck$ dimensional problem space $[0,n]^{ck}$. Divide this space by $k\binom{c}{2}$ hyperplanes each formulated as $x^{i}_j = x^{i'}_j$ for some $i,i' \in [c]$ and $j \in [k]$. Partial convexity implies that the solution space in each region is convex, and as a result, gives us an exponential time solution for the problem in the continuous setting. Roughly speaking, since the solution is convex in each region, we can iterate over all regions and solve the problem separately using a linear program for each region. However, we have exponentially many regions and therefore the running time of this approach is exponentially large.

\begin{figure}[h]
	\centering
	\includegraphics[scale=1]{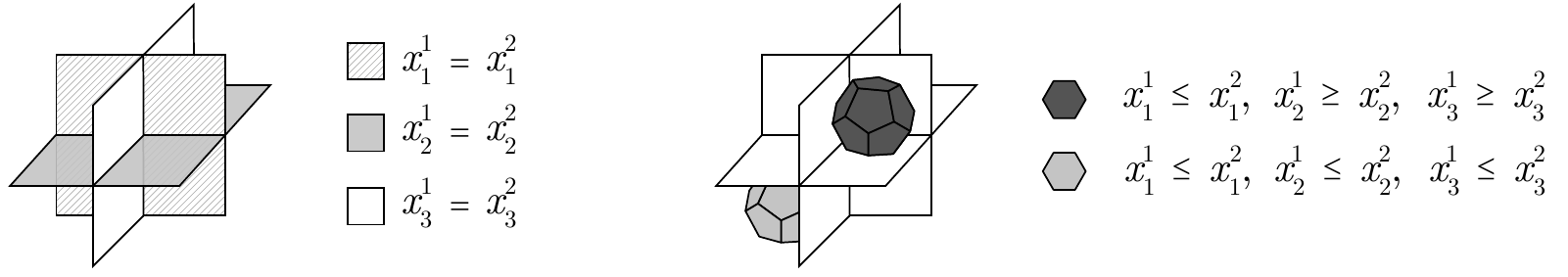}
	\caption{An example of how the solution space is decomposed for the case when $k=3$ and $c=2$. The figure on the left illustrates 3 hyperplanes $x^1_j = x^2_j$. These hyperplanes divide the solution space into 8 regions and for all the points in each region, as illustrated in the figure on right, $x^1_j$ and $x^2_j$ for $j \in [3]$ compare in the same way.}
\end{figure}

The above algorithm can be modified to run in polynomial time in the uniform setting. The high-level idea is that when the battlefields have equal weights, there is a strong symmetry between the solutions of the regions. Based on this, we show that in the uniform setting, one only needs to search a polynomial number of regions for a solution. This idea along with the partial convexity gives us a polynomial time solution for the uniform setting.

\resultsrestatethm{\ref{theorem:one}}{restated informally -- proven in Section~\ref{sec:contc2} for $c=2$ and extended in Section~\ref{sec:contcg} to $c > 2$}{There exists a polynomial time solution for finding a \maxmin{u}{p} \cmixedstrategy{c} for Colonel Blotto in the continuous case when all the battlefields have the same weight.}

Indeed, the uniform setting is a very special case since we are essentially indifferent to the battlefields. When we incorporate the weights of the battlefields, we no longer expect the solutions to be symmetric with respect to the battlefields. However, one may still observe a weak notion of symmetry between the regions. Recall that we denote the weights of the battlefields with $w_1, w_2, \ldots, w_k$. Let us lose a factor of $1+\epsilon$ in the utility and assume w.l.o.g. that each weight $w_i$ is equal to $(1+\epsilon)^j$ for some integer $j \geq 0$. Assuming that the maximum weight is polynomially bounded (we only make this assumption for the sake of simplicity and our solution does not depend on this constraint), the number of different battlefield weights is logarithmically small. Thus, we expect many battlefield to have equal weights which as a result makes the solution regions more symmetric. However, this idea alone gives us a quasi-polynomial time algorithm  for searching the regions as we may have a logarithmic number of different battlefield weights. To reduce the running time to polynomial, we need to further prune the regions of the solution to polynomially many. Indeed, we show that it suffices to search over a polynomial number of regions if we allow an approximate solution. Via this observation, we can design a polynomial time algorithm that approximates the solution within a factor $1+\epsilon$. This settles the problem for the continuous setting.

\resultsrestatethm{\ref{thm:nonuniform-c}}{restated informally -- proven in Section~\ref{sec:contc2} for $c=2$ and extended in Section~\ref{sec:contcg} to $c > 2$}{\maxmin{u}{p} \cmixedstrategy{c} strategies of the Colonel Blotto admit a PTAS in the continuous setting.}

\paragraph{Discrete Colonel Blotto.} For the discrete setting, we take a rather different approach. The main reason is that in this setting, even if we are guaranteed that the solution is convex, we cannot use LP's to compute/approximate a solution. Behnezhad \textit{et al.}~\cite{behnezhad2018battlefields} give a 2 approximation algorithm that finds a pure maximin strategy for the discrete setting. Indeed, this solution can be used to get a 2 approximate solution for the case of $c=1$. We both extend their algorithm to work for $c \geq 2$ and also devise a \textit{heavy-light decomposition} to improve the approximation factor to $1+\epsilon$. Both our extension and  decomposition techniques are novel.

Behnezhad \textit{et al.}~\cite{behnezhad2018battlefields} introduce the notion of a \textit{weak adversary}. Roughly speaking, they define a relaxed best response for player 2 that does not maximize the utility of player 2, but instead, approximately maximizes his utility. We call a player that plays a relaxed best response, a weak adversary. By proposing a greedy algorithm for the weak adversary, Behnezhad \textit{et al.}~\cite{behnezhad2018battlefields} show that the payoff of the weak adversary and the actual adversary differ by the value of at most one battlefield. That is, if the weights of the battlefields are bounded by $\wmax$, then the difference between the utility of an adversary and that of a weak adversary is always bounded by $\wmax$. Next, they show that a dynamic program can find a pure strategy of player 1 that performs best against a relaxed adversary and turn this algorithm into a 2 approximate solution for the problem, by considering two cases $2 \wmax \geq u $ and $2 \wmax < u$. 

Indeed losing an additive error of $\wmax$ may hurt the approximation factor a lot when the desired utility $u$ is not much more than $\wmax$. Thus, in order to improve the approximation factor, one needs to design a separate algorithm for the heavy battlefields. To this end, we introduce our heavy-light decomposition. We define a threshold $\tau \approx \epsilon u$ and call a battlefield $i$ heavy if $w_i > \tau$ and light otherwise. In addition to this, we assume w.l.o.g. that $\wmax \leq u$ since otherwise one can set a cap of $u$ on the weights without changing the solution. Therefore, the maximum weight and the minimum weight of heavy battlefields are within a multiplicative factor of $1/\epsilon$. Next, by incurring an additional $1+\epsilon$ multiplicative factor to the approximation guarantee, we round down the weight of each battlefield to the nearest $(1+\epsilon)^i$. We show that this leaves us with a constant number of different weights for the heavy battlefields. Next, we state that since the number of different weights for the heavy battlefields is constant, the total number of (reasonable) pure strategies of player 1 over these battlefields can be reduced down to a polynomial. Indeed this also holds for player 2, but for the sake of our solution, we should further bound the number of responses of player 2 over these battlefields. We show in fact, that the number of (reasonable) responses of player 2 over the heavy battlefields against a strategy of player 1 is bounded by a constant! For light battlefields, we use the idea of a weak adversary. However, in order to find a solution that considers both heavy and light battlefields, we need to define multiple weak adversaries each with regard to a response of player 2 on the heavy battlefields. 

Let us clarify the challenge of mixing the two solutions via an example. Suppose we have 4 battlefields with weights $10, 8, 7$ and $5$ and the players' troops are as follows: $n = 5$ and $m = 2$. One can verify that in this case, the following pure strategy of player 1 guarantees a payoff of $15$ for him.

\begin{table}[h]
	\centering
	\begin{tabular}{|c|c|c|c|c|}
		\hline
		& \thead{Battlefield 1}{$w_1 = 10$} & \thead{Battlefield 2}{$w_2 = 8$} & \thead{Battlefield 3}{$w_3=7$} & \thead{Battlefield 4}{$w_4=5$} \\ \hline
		A \maxmin{15}{1} \cmixedstrategy{1} for player 1.  & 2                        & 2                        & 1                      & 0                       \\ \hline
	\end{tabular}
	\caption{A \maxmin{15}{1} \cmixedstrategy{1} for player 1 on instance \cbinstance{5}{2}{(10, 8, 7, 5)}.}
	\label{table:pure-example}
\end{table}
 
 In fact, $15$ is the highest utility player 1 can get with a single pure strategy as no other pure strategy of player 1 can guarantee a payoff more than $15$ for him. Now, assume that we select the first two battlefields with weights $10$ and $8$ as heavy battlefields and the rest of the battlefields as the light ones. One may think that by taking a maximin approach for the heavy battlefields and solving the problem separately for the light battlefields, we can obtain a correct solution. The above example shows that this is not the case. We show in what follows, that the maximin approach reports a payoff of $17$ for player 1 which is more than the actual solution. Fix the strategy of player 1 on the heavy battlefields to be placing 2 troops on battlefield 1 and 1 troop on battlefield 2.  As such, the only reasonable responses of player 2 on these battlefields are as shown in the following table.

\begin{table}[h]
	\centering
	\begin{tabular}{|c|c|c|c|}
		\hline
		& \thead{Battlefield 1}{$w_1 = 10$} & \thead{Battlefield 2}{$w_2 = 8$} & troops left for player 2 \\ \hline
		response 1 & 2                        & 0           & 0                                 \\ \hline
		response 2 & 0                        & 1            & 1                                \\ \hline
		response 3 & 0                        & 0             & 2                               \\ \hline
	\end{tabular}
	\caption{The responses of player 2 on the heavy battlefields.}
	\label{table:pure-example}
\end{table}

\noindent Response 3 already gives player 1 a payoff of $18$ which is more than $17$. Also, response 1 of player 2 leaves him with no troops for the light battlefields and thus he loses both light battlefields against strategy $(1,1)$ of player 1 on the light battlefields. Therefore, this gives player 1 a payoff of $20$. Also, if player 2 plays response 2 on the heavy battlefields, player 1 can win the light battlefield $w_3$ by putting two troops on it. Indeed player 2 has only one troop left and there is no way for him to win this battlefield. Thus, in this case, the payoff of player 1 would be $17$. Since we take the maximum solution over all strategies of player 1 for the heavy battlefields, our final utility would be at least 17.

What the above analysis shows is that, if we take a maximin approach on the heavy battlefields and then solve (or approximate) the problem for the light battlefields, we may incorrectly  report a higher payoff for player 1. Roughly speaking, this error happens since in this approach, we allow player 1 to have different actions over the light battlefields against different responses of player 2. To resolve this issue, we design a dynamic program that takes into account all responses of player 2 simultaneously. Indeed, to make sure the program can be solved in polynomial time, we need to narrow down the number of responses of player 2 to a constant. Our heavy-light decomposition along with our structural properties of the optimal strategies enables us to reach this goal. This gives us a non-trivial dynamic program that can approximate the solution within a factor $1+\epsilon$ for the case of a single pure strategy.

\resultsrestatethm{\ref{thm:discretepure}}{restated informally}{The problem of finding a \maxmin{u}{1} strategy for player 1 in discrete Colonel Blotto admits a PTAS.}

To extend the result to the case of $c \geq 2$, we need to design a weak adversary that plays a relaxed best response against more than 1 strategy of player 1. For $c=1$, the greedy algorithm follows from the well-known greedy solution of knapsack. However, when $c \geq 2$ the best-response problem does not necessarily reduce to knapsack and therefore our greedy solution is much more intricate. Roughly speaking, we design a non-trivial procedure for player 2 that gets $c$ thresholds as input, and based on these thresholds, decides about the response for each battlefield locally. This local decision making is a key property that we later exploit in our dynamic program to find an optimal strategy against a weak adversary. This in addition to the heavy-light decomposition technique gives us a PTAS for \maxmin{u}{p} \cmixedstrategy{c} strategies of Colonel Blotto in the discrete setting.

\resultsrestatethm{\ref{theorem:discretemulti}}{restated informally}{The problem of finding a \maxmin{u}{p} \cmixedstrategy{c} for player 1 in discrete Colonel Blotto admits a PTAS.}

\paragraph{Further results.} We show in Section~\ref{sec:exp}, that our techniques also imply PTASs for the Colonel Blotto game when instead of a \maxmin{u}{p} \cmixedstrategy{c} we wish to find a maximin \cmixedstrategy{c}. For the continuous case, similar to the case of \maxmin{u}{p} strategies, we divide the solution space into polynomially many convex subregions and prove that among them a $(1+\epsilon)$-approximate solution is guaranteed to exist. The main difference with the case of \maxmin{u}{p} strategies is in the LP formulation of the problem, but the general approach is essentially the same. For the discrete variant of Colonel Blotto, we also follow a similar approach as in the case of \maxmin{u}{p} strategies. In more details, we partition the battlefields into heavy and light subsets and define a weaker adversary that is adapted to approximately best respond against maximin strategies. We find it surprising and possibly of independent interest that essentially the same approach (though with minor changes) can be applied to these two variants of Colonel Blotto. Prior algorithms proposed for these two variants were fundamentally different \cite{ahmadinejad2016duels, behnezhad2017faster, behnezhad2018battlefields}.

\resultsrestatethms{\ref{thm:expcont} and \ref{thm:expdisc}}{restated informally}{The problem of finding a maximin \cmixedstrategy{c} for player 1 in both discrete and continuous variants of Colonel Blotto admits a PTAS.}

Finally, recall that motivation for approximate algorithms comes from intractability.  In view of all the recent sophisticated algorithmic approaches to solving, approximately, various special cases of the Colonel Blotto game, it is worth asking, what is the computational complexity of the full fledged problem of computing a maximin strategy of the Colonel Blotto game? (Notice that, since the full strategy is too long to return, we should formulate the problem in terms of something succinct, for example {\em one} component of the maximin.)  In Section~\ref{sec:complexity} we present the first complexity results in this area, establishing that an interesting variant of the problem that we call {\sc General Colonel Blotto} --- roughly, the utility is a general function of the two allocations, instead of the probability of winning more than a certain goal of total battlefield weight --- is {\em complete for exponential time.}  
The precise complexity of the two versions of the original game (computing a maximin of the probability of winning a majority, or of the expectation of the total weights of battlefields won) is left as an open question here.  We conjecture that both problems are also exponential time-complete. 

\resultsrestatethmnorestated{\ref{thm:complexity}}{{{\sc General Colonel Blotto} is exponential time-complete.}
}

\subsection{Paper Outline}
In Section~\ref{sec:probdist}, we start by discussing possible probability distributions over the pure strategies in an optimal \maxmin{u}{p} \cmixedstrategy{c}. Then, focusing on \maxmin{u}{p} strategies, we show in Section~\ref{sec:continuous} that the continuous Colonel Blotto game admits a PTAS. In Section~\ref{sec:integral}, we show that the discrete variant of the game also admits a PTAS. We further show how it is possible to adapt these results to the case of maximin strategies in Section~\ref{sec:exp}. Finally, in Section~\ref{sec:complexity}, we describe our complexity results.

\section{Probability Distribution Over the Support}\label{sec:probdist}
In this section we discuss the structure of probabilities that are assigned to the pure strategies in the support of an optimal \maxmin{u}{p} strategy. It is worth mentioning that our results in this section apply to any two player game so long as the goal is to compute a \maxmin{u}{p} strategy.

Recall that given a minimum utility $u$, and a given upper bound $c$ on the support size, our goal is to compute a \maxmin{u}{p} \cmixedstrategy{c} for maximum possible $p$; we call this an optimal \maxmin{u}{p} \cmixedstrategy{c} or simply an optimal strategy when it is clear from context. Naively, there are uncountably many ways to assign probabilities to the $c$ pure strategies in the support. However, in this section we show that there are a small number of (efficiently constructible) possibilities for the probabilities among which an optimal solution is guaranteed to exist.

\begin{definition}
	For a mixed strategy $\b{x}$, define the profile \profile{\b{x}} of $\b{x}$ to be the multiset of probabilities associated to the pure strategies in the support of $\b{x}$. 
\end{definition}

The main theorem of this section is the following.

\newcommand{\thmcstrategyprobabilities}[0]{
	For every constant $c > 0$, there exists an algorithm to construct a set $P_c$ of $O(1)$ profiles in time $O(1)$, such that existence of an optimal \maxmin{u}{p} \cmixedstrategy{c} \b{x} with $\profile{\b{x}} \in P_c$ is guaranteed.
}
\begin{theorem}\label{thm:cstrategyprobabilities}
	\thmcstrategyprobabilities{}
\end{theorem}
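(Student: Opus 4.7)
The plan is to recast the problem of choosing the support probabilities, once the $c$ pure strategies are fixed, as a linear program whose \emph{data} depends on the chosen pure strategies only through a combinatorial object of bounded size. Concretely, suppose $\b{x}$ is a \cmixedstrategy{c} supported on pure strategies $\b{x}^1,\ldots,\b{x}^c$ played with probabilities $q_1,\ldots,q_c$. Since the adversary's best response may be taken to be pure, the value of $p$ that $\b{x}$ guarantees equals
$$ \min_{\b{y}\in\puresetb}\sum_{i\in W(\b{y})} q_i, \quad\text{where}\quad W(\b{y})\;=\;\{i\in[c]\,:\,\utilitya{\b{x}^i}{\b{y}}\ge u\}.$$
Thus, if we hold $\b{x}^1,\ldots,\b{x}^c$ fixed and optimize the probabilities, the problem becomes the LP
$$\max p\quad\text{s.t.}\quad \sum_{i\in W} q_i\;\ge\;p\;\;\forall\, W\in\mathcal{W},\qquad \sum_{i=1}^c q_i=1,\qquad q_i\ge 0,$$
where $\mathcal{W}=\{W(\b{y}):\b{y}\in\puresetb\}\subseteq 2^{[c]}$.

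Next, I would observe that this LP is entirely specified by the family $\mathcal{W}\subseteq 2^{[c]}$. As the $c$-tuple of pure strategies varies over all possible choices, $\mathcal{W}$ ranges over at most $2^{2^c}$ distinct subfamilies of $2^{[c]}$. For each such $\mathcal{W}$ the LP lives in $\mathbb{R}^{c+1}$ and has at most $2^c+c+1$ constraints, so its optimum is attained at a vertex characterised by $c+1$ tight constraints; the total number of vertices is therefore at most $\binom{2^c+c+1}{c+1}$. Enumerating every $\mathcal{W}\subseteq 2^{[c]}$ and, for each, listing the $q$-coordinates of every vertex, produces a set of ordered probability tuples whose cardinality depends only on $c$; taking its image under the ``multiset'' map yields the desired set $P_c$ of profiles. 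Because every quantity in the construction (number of subfamilies, number of vertex systems, size of each linear system) is a function of $c$ alone, both $|P_c|$ and the time needed to produce $P_c$ are $O(1)$.

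To see that this $P_c$ suffices, suppose $\b{x}^\star$ is an optimal \maxmin{u}{p} \cmixedstrategy{c} with support $\b{x}^1,\ldots,\b{x}^c$ and probabilities $q_1^\star,\ldots,q_c^\star$. Then $(q_1^\star,\ldots,q_c^\star,p)$ is feasible for the LP associated to $\mathcal{W}=\{W(\b{y}):\b{y}\in\puresetb\}$ with the chosen support, and in fact $p$ is the optimum of that LP (otherwise a better probability vector over the same pure strategies would contradict the optimality of $\b{x}^\star$). Replacing the probabilities by those of any vertex optimum of this LP yields an optimal strategy with the same support whose profile, by construction, lies in $P_c$.

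The main obstacle I expect is to handle the distinction between a \emph{profile} (a multiset of probabilities) and an \emph{ordered} probability vector, together with the usual LP degeneracies. Degeneracy is harmless: at worst multiple vertices are simultaneously optimal, in which case we are free to pick any of them, and the enumeration already includes all vertex tuples. Strategies whose effective support has fewer than $c$ pure strategies fit naturally into the framework as vertices at which some nonnegativity constraints $q_i\ge 0$ are tight, so no separate argument is needed. The only genuine subtlety is that $\mathcal{W}$ is determined by the actual chosen $\b{x}^i$'s rather than freely chosen, but since our enumeration ranges over \emph{all} subsets of $2^{[c]}$ (an overestimate), the guaranteed family is still captured, and the stated bounds still go through.
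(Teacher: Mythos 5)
Your proposal is correct and follows essentially the same route as the paper: both arguments reduce the choice of probabilities to the family of ``winning subsets'' $\mathcal{W}\subseteq 2^{[c]}$ (the paper's $\winningsubset{u}{\b{x}}$, with the same observation that the guaranteed probability equals $\min_{W\in\mathcal{W}}\sum_{i\in W}q_i$ against pure best responses), enumerate all $2^{2^c}=O(1)$ candidate families, and for each solve the LP maximizing this minimum. The only cosmetic difference is that the paper records a single optimal solution of the LP per family, whereas you enumerate all vertex optima, which is a harmless overestimate yielding the same $O(1)$ bounds.
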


As a corollary of Theorem~\ref{thm:cstrategyprobabilities}, in order to find an optimal \maxmin{u}{p} \cmixedstrategy{c}, one can iterate over the profiles in $P_c$, optimize the  strategy based on each profile, and report the best solution.

Let us start with the case when $c=1$. Clearly, in this case, we have no choice but to play our pure strategy with probability 1, therefore it suffices to have profile $\{1\}$ in set $P_1$, i.e., $P_1 = \{\{1\} \}$. For the case of $c=2$, we play two pure strategies with probabilities $q$ and $q' = 1-q$. Our claim for this case, is that it suffices to have $P_2 = \{ \{1\}, \{\sfrac{1}{2}, \sfrac{1}{2} \} \}$.\footnote{Note that the elements in $P_c$ are multisets, hence multiple occurrences of the same element is allowed in them.} Note that if the profile of a \cmixedstrategy{2} \b{x} is $\{1\}$ it is also a \cmixedstrategy{1} since it plays only 1 pure strategy with non-zero probability.

To prove that $P_2 = \{ \{1\}, \{\sfrac{1}{2}, \sfrac{1}{2} \} \}$ is sufficient, we first show that if a \cmixedstrategy{2} is \maxmin{u}{p} for some $p > \sfrac{1}{2}$, then there also exists a \maxmin{u}{1} \cmixedstrategy{1}, thus, $\{1\} \in P_2$ suffices.

\begin{observation}\label{obs:pmorethanhalf}
	If player 1 has a \maxmin{u}{p} \cmixedstrategy{2} with $p > \sfrac{1}{2}$, then he also has a \maxmin{u}{1} \cmixedstrategy{1}.
\end{observation}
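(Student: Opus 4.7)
The plan is to argue by contradiction: assume $\b{x}$ is a \maxmin{u}{p} \cmixedstrategy{2} with $p > \sfrac{1}{2}$, and show that at least one of the two pure strategies in the support of $\b{x}$ must already be \maxmin{u}{1}, in which case that pure strategy serves as the desired \cmixedstrategy{1}.

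First I would fix notation: write $\b{x}$ as playing pure strategies $\b{x}^1, \b{x}^2$ with respective probabilities $q_1, q_2$ where $q_1 + q_2 = 1$. I would then recall from the preliminaries that, in searching for an opponent response that violates the \maxmin{u}{p} guarantee, it suffices to consider pure responses of player~2 (since a mixed best response is never stronger than some pure response in its support, for the event ``player~1's utility is at least $u$'').

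Next I would suppose for contradiction that neither $\b{x}^1$ nor $\b{x}^2$ is \maxmin{u}{1}. Then for each $i \in \{1,2\}$ there exists a pure strategy $\b{y}^i$ of player~2 with $\utilitya{\b{x}^i}{\b{y}^i} < u$. The key calculation is to bound the probability that player~1 achieves utility at least $u$ when player~2 commits to $\b{y}^i$:
\[
\Pr_{\b{x}' \sim \b{x}}\bigl[\utilitya{\b{x}'}{\b{y}^i} \geq u\bigr] \;\leq\; 1 - q_i,
\]
because the summand corresponding to $\b{x}^i$ contributes zero. Since $\b{x}$ is \maxmin{u}{p}, this must be at least $p$, giving $q_i \leq 1 - p$ for each $i \in \{1,2\}$. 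Summing yields $1 = q_1 + q_2 \leq 2(1-p) < 1$ (using $p > \sfrac{1}{2}$), a contradiction.

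The main obstacle — really the only subtlety — is making sure one is entitled to restrict to pure opponent responses $\b{y}^i$ when constructing the witnesses that $\b{x}^1$ and $\b{x}^2$ are not \maxmin{u}{1}, and when exploiting $\b{x}$ itself; I would briefly justify both appeals to the standard observation that for \maxmin{u}{p} analysis the opponent can be assumed pure without loss of generality, which the paper has already invoked. Everything else is a two-line arithmetic step, so I expect no further technical difficulty.
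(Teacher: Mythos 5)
Your proof is correct and is essentially the paper's argument in contrapositive form: the paper picks the pure strategy played with probability at least $\sfrac{1}{2}$ and notes that $p > \sfrac{1}{2}$ forces it to obtain payoff $u$ against every response, which is exactly your bound $q_i \leq 1-p$ applied to the heavier strategy. Your symmetric summation $1 = q_1 + q_2 \leq 2(1-p) < 1$ and your justification for taking the witnesses $\b{y}^i$ to be pure are both sound, so no gap remains.
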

\begin{proof}
	Let $\b{x}$ and $\b{x}'$ be the two strategies in the support of a given \maxmin{u}{p} \cmixedstrategy{2} for some $p > \sfrac{1}{2}$ and assume w.l.o.g., that strategy $\b{x}$ is played with probability at least $\sfrac{1}{2}$ (this should be true for at least one of the strategies in the support of any \cmixedstrategy{2}). Since $p > \sfrac{1}{2}$, by definition of a \maxmin{u}{p} strategy, $\b{x}$ should obtain a payoff of at least $u$ against any strategy of player 2. This means that pure strategy $\b{x}$ itself, is a $\maxmin{u}{1}$ \cmixedstrategy{1}. 
\end{proof}

On the other hand, the following observation shows that for an optimal \maxmin{u}{p} \cmixedstrategy{2}, we have $p \geq \sfrac{1}{2}$.


\begin{observation}\label{obs:plessthanhalf}
	If player 1 has a \maxmin{u}{p} \cmixedstrategy{2} \b{x} for some $0 < p < \sfrac{1}{2}$, then by playing each of the two pure strategies of \b{x} with probability $\sfrac{1}{2}$, he gets a better \maxmin{u}{\sfrac{1}{2}} \cmixedstrategy{2}. 
\end{observation}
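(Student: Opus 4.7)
The plan is to show that uniformizing the probabilities on the two pure strategies only increases the probability of reaching utility $u$ against any adversary response, and in fact forces it to be at least $\sfrac{1}{2}$. The crux is a simple indicator-function argument against pure adversary responses, combined with the fact that $p>0$ rules out the possibility that both pure strategies simultaneously fail.

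First, I would reduce to pure adversary responses. For any mixed $\b{y}$ of player 2, the quantity $\Pr_{\b{x}''\sim\b{x},\b{y}'\sim\b{y}}[\utilitya{\b{x}''}{\b{y}'}\geq u]$ is a convex combination of the analogous probabilities against the pure strategies in the support of $\b{y}$, so its minimum over mixed $\b{y}$ is attained at a pure $\b{y}$. It therefore suffices to verify the desired guarantee against every pure strategy of the opponent.

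Next, let $\b{s}_1,\b{s}_2$ be the two pure strategies in the support of $\b{x}$, played with probabilities $q$ and $1-q$ respectively. For any pure opponent strategy $\b{y}$, let $A_i(\b{y})\in\{0,1\}$ indicate whether $\utilitya{\b{s}_i}{\b{y}}\geq u$. The \maxmin{u}{p} property of $\b{x}$ together with $p>0$ yields
\[
q\,A_1(\b{y})+(1-q)\,A_2(\b{y})\;\geq\;p\;>\;0
\]
for every pure $\b{y}$. Since this convex combination is strictly positive, at least one summand must be nonzero, hence $A_1(\b{y})+A_2(\b{y})\geq 1$ for every pure $\b{y}$.

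Finally, consider the strategy $\b{x}'$ that plays $\b{s}_1$ and $\b{s}_2$ each with probability $\sfrac{1}{2}$. Against any pure $\b{y}$,
\[
\Pr_{\b{x}''\sim\b{x}'}[\utilitya{\b{x}''}{\b{y}}\geq u]\;=\;\frac{A_1(\b{y})+A_2(\b{y})}{2}\;\geq\;\frac{1}{2}.
\]
By the earlier reduction to pure responses, $\b{x}'$ is \maxmin{u}{\sfrac{1}{2}}, which is strictly better than the original $\b{x}$ since $p<\sfrac{1}{2}$. I do not anticipate any real obstacle here — the entire argument hinges on noticing that the hypothesis $p>0$ prevents both pure strategies from simultaneously failing against any adversary response, which is exactly the property the uniform mixture needs in order to reach probability at least $\sfrac{1}{2}$.
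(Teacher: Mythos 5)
Your proof is correct and follows essentially the same route as the paper's: both hinge on the observation that $p>0$ forces, against every pure response $\b{y}$, at least one of the two supported pure strategies to achieve utility at least $u$, so the uniform mixture is \maxmin{u}{\sfrac{1}{2}}. The only difference is presentational — you make explicit the reduction to pure adversary responses and the indicator-function accounting, both of which the paper treats as immediate (the pure-response reduction is noted as standard in its preliminaries).
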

\begin{proof}
	Let $\b{x}^1$ and $\b{x}^2$ be the two pure strategies in the support of \b{x}. Since $p > 0$, against any pure strategy $\b{y}$ of the opponent, at least one of $\b{x}^1$ or $\b{x}^2$ obtain a utility of at least $u$. Thus, if they are both played with probability $\sfrac{1}{2}$, the resulting strategy is \maxmin{u}{\sfrac{1}{2}}.
\end{proof}

Combining the two observations above, we can conclude that for an optimal \maxmin{u}{p} \cmixedstrategy{2}, either $p=1$ or $p = \sfrac{1}{2}$. For the former case Observation~\ref{obs:pmorethanhalf} implies that $\{1\} \in P_2$ is sufficient and for the latter $\{\sfrac{1}{2}, \sfrac{1}{2}\} \in P_2$ is sufficient by Observation~\ref{obs:plessthanhalf}. We remark that it is necessary for any choice of $P_2$ to have both $\{1\}$ and $\{\sfrac{1}{2}, \sfrac{1}{2}\}$. Therefore our choice of $P_2 = \{ \{1 \}, \{\sfrac{1}{2}, \sfrac{1}{2}\} \}$ is both sufficient and necessary.

We do not attempt to prove it here, as it is not required for the proof of our main theorem in this section, however, it can also be shown that for the case of \cmixedstrategies{3}, it suffices to have $$P_3 = \{ \{1\}, \{\sfrac{1}{2}, \sfrac{1}{2}\}, \{\sfrac{1}{3}, \sfrac{1}{3}, \sfrac{1}{3}\}\}.$$ Unfortunately, the case of $c \geq 4$ does not follow the same pattern. That is,  e.g. when $c=4$, it is not sufficient to have $P_4 = \big\{ \{1\}, \{\sfrac{1}{2}, \sfrac{1}{2}\}, \{\sfrac{1}{3}, \sfrac{1}{3}, \sfrac{1}{3}\}, \{\sfrac{1}{4}, \sfrac{1}{4}, \sfrac{1}{4}, \sfrac{1}{4}\}\big\}$. An example was given for this in Table~\ref{table:nonuinformprob} in the context of the Colonel Blotto game where the optimal \cmixedstrategy{4} has profile $\{ \sfrac{2}{5}, \sfrac{1}{5}, \sfrac{1}{5}, \sfrac{1}{5} \}$. Nevertheless, we show that size of $P_c$, for any constant $c$, can be bounded by a constant.

Consider a \maxmin{u}{p} \cmixedstrategy{c} \b{x} for player 1 in a two player game $\mathcal{G}$ and assume w.l.o.g. that the strategies in its support are numbered from 1 to $c$. A subset $W \subseteq [c]$ is a {\em $u$-winning-subset} of $\b{x}$, if player 2 has a strategy $\b{y}$ such that the $i$th strategy in the support of $\b{x}$ gets a payoff of at least $u$ against $\b{y}$ if and only if $i \in W$. We denote by \winningsubset{u}{\b{x}} the set of all $u$-winning-subsets of $\b{x}$. The following claim implies that it suffices to have the set \winningsubset{u}{\b{x}} of any  strategy \b{x} to decide how to assign probabilities to the strategies in its support.

\begin{lemma}\label{lem:constructprofile}
	Given the set \winningsubset{u}{\b{x}} of a \maxmin{u}{p} \cmixedstrategy{c} \b{x}, one can, in time $O(1)$, assign probabilities $\rho_1, \ldots, \rho_c$ to the pure strategies in the support of \b{x} in a way that the modified strategy $\b{x}'$ is \maxmin{u}{p'} for some $p' \geq p$.
\end{lemma}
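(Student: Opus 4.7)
The plan is to reduce the problem to a small linear program in $c$ variables, whose feasibility and optimality are easy to check. First I would observe that an opponent's best response against any mixed strategy of player~1 can be taken to be pure. Hence the $\maxmin{u}{p}$ guarantee of any reweighting $\b{x}'$ of the pure strategies in the support of $\b{x}$ is determined entirely by what happens against pure opponent strategies $\b{y}$. For such a $\b{y}$, let $W_{\b{y}} \subseteq [c]$ collect the indices of the pure strategies in the support of $\b{x}$ that earn at least $u$ against $\b{y}$. Crucially, $W_{\b{y}}$ depends only on the \emph{support} of $\b{x}$ (i.e., which pure strategies appear with nonzero probability), not on the probabilities assigned to them. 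Therefore, reweighting does not change the collection of achievable winning subsets, which is precisely $\winningsubset{u}{\b{x}}$.

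Next I would write down the probability that $\b{x}'$ earns at least $u$ against a pure opponent strategy $\b{y}$: it is simply $\sum_{i \in W_{\b{y}}} \rho_i$. Hence, to make $\b{x}'$ a $\maxmin{u}{p'}$ strategy with the largest possible $p'$, it suffices to solve the linear program
\begin{equation*}
\max\ p' \quad \text{s.t.} \quad \sum_{i \in W} \rho_i \ge p' \text{ for all } W \in \winningsubset{u}{\b{x}}, \quad \sum_{i=1}^c \rho_i = 1, \quad \rho_i \ge 0.
\end{equation*}
This program has $c+1$ variables and at most $2^c$ constraints (one per element of $\winningsubset{u}{\b{x}}$, which is a family of subsets of $[c]$). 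Since $c$ is a constant, the entire LP has constant size and can be solved in $O(1)$ time by, e.g., Megiddo's linear-time fixed-dimension algorithm.

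To see that the resulting $p'$ is at least $p$, note that the original probabilities of $\b{x}$ are themselves feasible for this LP and, by definition of the $\maxmin{u}{p}$ property, achieve $\sum_{i \in W}\rho_i \ge p$ for every $W \in \winningsubset{u}{\b{x}}$; so the optimum can only be larger. Returning the optimal $(\rho_1,\ldots,\rho_c)$ yields the required $\b{x}'$.

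I do not foresee a real technical obstacle here; the only subtle point to verify is the invariance of $\winningsubset{u}{\b{x}}$ under reweighting of the same support — without this, changing probabilities could a priori allow new opponent responses to become best-responses and introduce new adversarial ``winning subsets''. Once one argues (as in the first paragraph) that the opponent's best response can be taken pure and that the winning-subset structure is a purely combinatorial property of the support, the LP argument goes through cleanly.
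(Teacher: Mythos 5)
Your proof is correct and follows essentially the same route as the paper: both reduce the problem to the identical constant-size linear program maximizing $p'$ subject to $\sum_{i \in W} \rho_i \geq p'$ for every $W \in \winningsubset{u}{\b{x}}$, and both establish $p' \geq p$ by observing that the original probabilities are feasible with value exactly $\min_{W \in \winningsubset{u}{\b{x}}} \sum_{i \in W} q_i = p$. Your extra step making explicit that \winningsubset{u}{\b{x}} is invariant under reweighting of the same support (via pure best responses) is a point the paper uses implicitly, so it adds rigor rather than a different argument.
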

\begin{proof}
Let $\mathcal{W}$ be the given set. Clearly we have $|\mathcal{W}| \leq O(1)$ since it is a subset of the power set of $[c]$ and $c$ is a constant. The following linear program finds $\rho_1, \ldots, \rho_c$ in time $O(1)$ since it has $O(1)$ variables and $O(1)$ constraints. We show that by playing the $i$th pure strategy in the support of \b{x} with probability $\rho_i$, we obtain a strategy $\b{x}'$ that is as good as \b{x}.
\begin{equation}\label{lp:prob}
\begin{array}{ll@{}ll}
\text{maximize}  & p' &  &\\
\text{subject to}& \rho_i \geq 0  \qquad & & \forall i\in[c] \\
&                  \sum \rho_{i} = 1         & &\\
&		p' \leq \sum_{i \in W} \rho_i & & \forall W \in \mathcal{W}
\end{array}\end{equation}
Let us denote by $q_i$ the actual probability with which the $i$th strategy in the support of $\b{x}$ is played. Observe that the probability $p$ for which strategy \b{x} guarantees receiving a payoff of at least $u$ is exactly $\min_{W \in \winningsubset{u}{\b{x}}} \sum_{i\in W} q_i$. This comes from the fact that by definition if $W$ is a  $u$-winning-subset of $\b{x}$, then player 2 has a strategy \b{y} against which player 1 receives a payoff of $u$ by its $i$th pure strategy iff $i \in W$. This means the probability of guaranteeing a utility of $u$ against $\b{y}$ is equal to $\sum_{i\in W}q_i$. On the other hand, by definition of \maxmin{u}{p}, we need to guarantee a payoff of at least $u$ with probability at least $p$ against {\em any} strategy of the opponent. Therefore we have $p = \min_{W \in \winningsubset{u}{\b{x}}} \sum_{i\in W} q_i$. Now, observe that LP~\ref{lp:prob} finds probabilities $\rho_i$ in a way that precisely maximizes $\min_{W \in \winningsubset{u}{\b{x}}} \sum_{i\in W} \rho_i$. Therefore, if we play the $i$th pure strategy in the support of $\b{x}$ with probability $\rho_i$, we obtain a strategy that is as good as \b{x}.
\end{proof}

\begin{claim}\label{claim:sexists}
	One can obtain a set $S$ of size $O(1)$ in time $O(1)$ such that for some optimal \maxmin{u}{p} \cmixedstrategy{c} {\b{x}}, we have $\winningsubset{u}{\b{x}} \in S$.
\end{claim}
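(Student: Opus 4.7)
The plan is to take $S$ to be the full double power set $S := 2^{2^{[c]}}$ and verify that this trivial choice already satisfies the claim. The crucial observation is purely definitional: by construction, every element of $\winningsubset{u}{\b{x}}$ is a subset $W \subseteq [c]$ indexing the pure strategies in the support of $\b{x}$, so $\winningsubset{u}{\b{x}}$ is itself a subset of $2^{[c]}$, i.e., an element of $2^{2^{[c]}} = S$.

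With this choice of $S$, the two quantitative parts of the claim reduce to elementary counting. First, I will note that $|S| = 2^{2^c}$, which is $O(1)$ whenever $c$ is a constant. Second, I will enumerate $S$ in time $O(1)$ by first listing the $2^c$ subsets of $[c]$ and then listing the $2^{2^c}$ subsets of that list; both loops are bounded by functions of the constant $c$ alone. The existential clause is then automatic, since by construction \emph{every} \cmixedstrategy{c} \b{x} satisfies $\winningsubset{u}{\b{x}} \in S$, so in particular any optimal \maxmin{u}{p} \cmixedstrategy{c} does.

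The only real obstacle is conceptual rather than technical: one must recognize that, although $\b{x}$ ranges over the enormous space \pureseta{} and the adversary responds from \puresetb{}, the object $\winningsubset{u}{\b{x}}$ lives in a universe whose size depends only on $c$. This is exactly what makes the claim useful downstream, because combining it with Lemma~\ref{lem:constructprofile} immediately yields the constant-size candidate set $P_c$ promised in Theorem~\ref{thm:cstrategyprobabilities}: for each of the $O(1)$ potential winning-subset families in $S$, invoke Lemma~\ref{lem:constructprofile} to obtain a profile in time $O(1)$, and collect the resulting profiles.
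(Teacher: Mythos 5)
Your proposal is correct and matches the paper's own proof essentially verbatim: the paper also takes $S$ to be the power set of the power set of $\{1,\ldots,c\}$, notes $|S| = 2^{2^c} = O(1)$, and observes that every $u$-winning-subset family is by definition a collection of subsets of $[c]$, hence an element of $S$. Your additional remarks on enumeration and on the downstream use with Lemma~\ref{lem:constructprofile} are accurate but not needed beyond what the paper states.
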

\begin{proof}
	Let $S$ be the power set of the power set of $C = \{1, \ldots, c\}$. Note that $|S| = 2^{2^c} = O(1)$; thus, we can simply construct $S$ in time $O(1)$. On the other hand, take an optimal \maxmin{u}{p} \cmixedstrategy{c} \b{x}. By definition each $u$-winning-subset of $\b{x}$ is a subset of $C$ and, thus, $\winningsubset{u}{\b{x}}$ is a set of some subsets of $C$, thus, $\winningsubset{u}{\b{x}} \in S$ as desired.
\end{proof}

Indeed the two claims above are sufficient to prove Theorem~\ref{thm:cstrategyprobabilities}.

\restatethm{\ref{thm:cstrategyprobabilities}}{\thmcstrategyprobabilities{}}
\begin{proof}
	Consider the set $S$ provided by Claim~\ref{claim:sexists}. For every set $\mathcal{W} \in S$, construct a profile $\rho$ using the algorithm of Lemma~\ref{lem:constructprofile} and put it in a set $P$. Observe that $|P| = |S| = O(1)$. Therefore it only remains to prove that the profile of at least one optimal strategy is in $P$.
	
	Note that by Claim~\ref{claim:sexists}, at least one set $\mathcal{W} \in S$ is the set of all $u$-winning-subsets of an optimal strategy. By Lemma~\ref{lem:constructprofile}, the constructed profile for $\mathcal{W}$ is the profile of a strategy that is also optimal. Therefore among the profiles in $P$, we have the profile of at least one optimal strategy which concludes the proof.
\end{proof}

\section{Continuous Colonel Blotto}\label{sec:continuous}

In this section we consider the continuous variant of Colonel Blotto. We start with the case where our goal is to find a \maxmin{u}{p} \cmixedstrategy{1} and show how we can generalize it to \cmixedstrategies{2} and, further, to any bounded number of strategies.


For the particular case of $c=1$, our goal is to find a single pure strategy that is \maxmin{u}{p}. Indeed, since we are playing a single pure strategy with no randomization, the probability $p$ must be 1. One can think of the solution of this case as a vector of non-negative real values that sum up to $n$ and formulate the problem in the following way.
\begin{equation}\label{program:one}
\begin{array}{ll@{}ll}
	\text{find}  & \b{x}\\
	\text{subject to}& x_i \geq 0   & &\forall i \in [k] \\
	&                  \sum x_{i}  \leq n         & &\\
	& \utilitya{\b{x}}{\b{y}} \geq u & & \forall \b{y} \in \puresetb{}
\end{array}\end{equation}
To analyze Program~\ref{program:one}, we need to better understand the payoff constraints. To this end, we state another interpretation of the payoff constraints in Observation~\ref{observation:two}.

\begin{observation}[proven in ~\cite{behnezhad2018battlefields}]\label{observation:two}
	A pure strategy $\b{x}$ of player 1 guarantees a payoff of at least $u$ against any pure strategy of player 2 if and only if
	$\sum_{i \in S} x_i > m$ for any set $S$ of battlefields with $\sum_{i \notin S} w_i < u$.
\end{observation}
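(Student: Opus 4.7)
My plan is to prove both directions directly, via a natural correspondence between subsets of battlefields and player~2's ``minimum-cost'' responses. The guiding observation is that because ties break in player~2's favor, the cheapest way for player~2 to win exactly the battlefields in some set $S$ against a fixed $\b{x}$ is to set $y_i = x_i$ for $i \in S$ and $y_i = 0$ elsewhere, at total cost $\sum_{i \in S} x_i$. Once this dual view is in hand, each direction becomes essentially bookkeeping.

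For the ``only if'' direction I would proceed by contrapositive. Suppose there exists a set $S$ with $\sum_{i \notin S} w_i < u$ but $\sum_{i \in S} x_i \leq m$. Have player~2 play the copy-on-$S$ response above; it uses at most $m$ troops, so it is feasible, and it forces player~1 to lose every battlefield in $S$ (via the tie) while possibly winning only battlefields outside $S$. Hence $\utilitya{\b{x}}{\b{y}} \leq \sum_{i \notin S} w_i < u$, contradicting the assumed guarantee.

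For the ``if'' direction I would fix an arbitrary $\b{y} \in \puresetb{}$, let $L = \{ i : x_i \leq y_i \}$ be the set of battlefields player~1 loses, and assume toward contradiction that $\utilitya{\b{x}}{\b{y}} = \sum_{i \notin L} w_i < u$. Then taking $S := L$ meets the ``small complement'' hypothesis, so $\sum_{i \in L} x_i > m$. On the other hand, by definition of $L$ we have $x_i \leq y_i$ for every $i \in L$, and since $\b{y}$ is a feasible strategy this gives $\sum_{i \in L} x_i \leq \sum_{i \in L} y_i \leq m$, the desired contradiction.

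The only subtlety I anticipate is consistent handling of the tie-breaking convention: in the forward direction one must pick $y_i = x_i$ exactly (rather than $y_i > x_i$) so that the tie already suffices to let player~2 claim battlefield $i$ at minimum cost, and the corresponding non-strict inequality $x_i \leq y_i$ for $i \in L$ is precisely what makes the bound $\sum_{i \in L} x_i \leq \sum_{i \in L} y_i$ go through in the reverse direction. Apart from this bookkeeping, no step looks like a real obstacle.
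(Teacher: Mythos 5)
Your proof is correct and takes essentially the same route as the paper's: both directions rest on the correspondence between a violating set $S$ (with $\sum_{i \notin S} w_i < u$ and $\sum_{i \in S} x_i \leq m$) and player~2's minimum-cost response that matches player~1's troops on $S$, with the reverse direction extracting $S$ as the set of lost battlefields. Your explicit treatment of the tie-breaking convention (setting $y_i = x_i$ exactly, and defining the losing set via $x_i \leq y_i$) is a detail the paper's terser proof leaves implicit, but the argument is the same.
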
 

\begin{proof}
	($\Rightarrow$): Suppose that a strategy $\b{x}$ does not get a payoff of $u$ against a strategy $\b{y}$ of player 2. This means that there exists a set $S$ of battlefields that player $\b{x}$ loses. The total payoff of $\b{x}$ is below $u$, and therefore $\sum_{i \notin S} w_i < u$. In addition to this $\sum_{i \in S} x_i \leq m$ holds since player 2 needs to match player 1's troops in all battlefields of $S$.
	
	($\Leftarrow$): It is trivial to show that if there exists such a violating set $S$, then $\b{x}$ cannot promise a payoff of $u$ against any strategy of player 2. The reason is that since $\sum_{i \in S} x_i \leq m$ player 2 can match the troops of player 1 in set $S$ and that suffices to prevent player 1 from winning a payoff of $u$.
\end{proof}

Via Observation~\ref{observation:two}, we can turn Program~\ref{program:one} into LP~\ref{lp:one} where we have a constraint for every possible subset $S$ of battlefields with $\sum_{i \notin S} w_i < u$. Although the number of these subsets can be exponentially large, Behnezhad \etal~\cite{behnezhad2018battlefields} show that one can find a violating constraint of LP~\ref{lp:one} in polynomial time and thus find a feasible solution using the ellipsoid method. 

\begin{equation}\label{lp:one}
\begin{array}{ll@{}ll}
\text{find}  & \b{x} &  &\\
\text{subject to}& x_i \geq 0   & &\forall i \in [k] \\
&                  \sum x_{i}  \leq n         & &\\
&                  \sum_{i \in S} x_{i} > m  & &\text{ for every subset $S$ of battlefields with }\sum_{i \notin S} w_i < u.
\end{array}\end{equation}

The key idea that enables us to solve this variant is that we are playing only one pure strategy. Even the case of having two pure strategies in the support is much more challenging. To illustrate the challenges and ideas, we next focus on how to obtain a \cmixedstrategy{2} and later generalize it to \cmixedstrategies{c}.


\subsection{The Case of 2-Strategies}\label{sec:contc2}

Recall by Observation~\ref{obs:pmorethanhalf} of Section~\ref{sec:probdist} that if a \cmixedstrategy{2} is \maxmin{u}{p} for some $p > \sfrac{1}{2}$, then there also exists a pure \maxmin{u}{1} strategy that, by aforementioned techniques, can be found in polynomial time. It was further shown in Observation~\ref{obs:plessthanhalf} that if a \cmixedstrategy{2} is \maxmin{u}{p} for some $p < \sfrac{1}{2}$, we can simply play the two strategies in its support with equal probability $\sfrac{1}{2}$ to obtain a better \maxmin{u}{\sfrac{1}{2}} strategy. Combining these two observations, we can assume w.l.o.g., that in the case of $c=2$, we wish to find two pure strategies $\b{x}$ and $\b{x}'$, and play them with equal probabilities such that any strategy of player 2 gives us a payoff of at least $u$ for at least one of $\b{x}$ or $\b{x}'$. A mathematical formulation of the problem is given below.

\begin{equation}\label{program:two}
\begin{array}{ll@{}ll}
\text{find}  & \b{x}  \text{ and } \b{x}' &  &\\
\text{subject to}& x_i \geq 0 \text{ and } x'_i \geq 0  & &\forall i\in[k] \\
&                  \sum x_{i} \leq n         & &\\
&                  \sum x'_{i} \leq n         & &\\
&		\text{either } \utilitya{\b{x}}{\b{y}} \geq u \text{ or } \utilitya{\b{x'}}{\b{y}} \geq u & & \forall \b{y} \in \puresetb{}
\end{array}\end{equation}

Observe that the fourth constraint of the above program is not linear. In the following, we show that even the polytope that is described by this program is essentially nonconvex.

\begin{observation}\label{obs:nonconvex}
   Program~\ref{program:two} is not convex.
\end{observation}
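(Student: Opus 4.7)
The plan is straightforward: to prove non-convexity I will exhibit a concrete small instance together with two feasible solutions of Program~\ref{program:two} whose midpoint is infeasible. The disjunction ``either $\utilitya{\b{x}}{\b{y}} \ge u$ or $\utilitya{\b{x'}}{\b{y}} \ge u$'' is a union of two half-spaces per choice of $\b{y}$, so it is precisely this constraint that can (and I will show does) break convexity; the other three constraints are linear and define a convex polytope on their own.

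I would use the instance $\cbinstance{2}{2}{(1,1)}$ --- two battlefields of unit weight with $n=m=2$ --- and target utility $u=1$. Consider the two candidate solutions
\begin{equation*}
(\b{x}^{(1)}, \b{x}'^{(1)}) = \bigl((0,2),(2,0)\bigr) \qquad \text{and} \qquad (\b{x}^{(2)}, \b{x}'^{(2)}) = \bigl((2,0),(0,2)\bigr).
\end{equation*}
I would verify feasibility of each by checking player 2's three meaningful pure responses $\b{y}\in\{(0,2),(1,1),(2,0)\}$; any other response spends fewer troops and is weakly dominated. In every case at least one of the two pure strategies in the pair wins a battlefield outright and thus obtains utility $\ge 1$. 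For instance, against $\b{y}=(1,1)$ each of the pure strategies $(0,2)$ and $(2,0)$ wins the battlefield on which it places all of its troops; against $\b{y}=(2,0)$ the strategy $(0,2)$ wins battlefield $2$; and symmetrically against $\b{y}=(0,2)$.

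Now take the convex combination with $\lambda = \tfrac12$: both components collapse to $(1,1)$, giving the point $\bigl((1,1),(1,1)\bigr)$. Against the opponent's response $\b{y} = (1,1)$, the tie-breaking rule (ties go to player 2) forces each copy to lose both battlefields, scoring utility $0 < u$. The disjunctive constraint is therefore violated at this midpoint, certifying that the feasible set of Program~\ref{program:two} is not convex.

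I do not anticipate any real obstacle. The only care needed is (i) to enumerate exhaustively the handful of opponent responses when verifying feasibility of the two candidate solutions and (ii) to confirm that the tie-breaking convention genuinely collapses the midpoint's utility to zero --- both are immediate from the definitions in the preliminaries.
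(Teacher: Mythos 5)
Your proposal is correct and matches the paper's own proof essentially verbatim: the same instance $\cbinstance{2}{2}{(1,1)}$ with $u=1$, the same two feasible pairs $\bigl((2,0),(0,2)\bigr)$ and its swap, the same midpoint $\bigl((1,1),(1,1)\bigr)$, and the same adversary response $\b{y}=(1,1)$ exploiting the tie-breaking rule. Your added care in verifying feasibility against the opponent's (weakly dominating) responses is a minor elaboration the paper leaves implicit, not a different approach.
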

\begin{proof}
   Suppose $n = m = 2$, $w_1 = w_2 = 1$ and the goal is to find a \maxmin{1}{\sfrac{1}{2}} strategy by randomizing over two pure strategies. A possible solution is to play $\b{x} = (2,0)$ with probability $1/2$ and play $\b{x}' = (0, 2)$ with probability $1/2$ which guarantees a payoff of $1$ with probability $1/2$. An alternative way to achieve this goal is to set $\b{x} = (2, 0)$ and $\b{x}' = (0, 2)$ which is the same strategy except that $\b{x}$ and $\b{x}'$ are exchanged. However, the linear combination of the two strategies results in $\b{x} = (1,1)$ and $\b{x}' = (1,1)$ which always loses both battlefields against $\b{y} = (1, 1)$.
   \begin{figure}[h]
   	\centering
   	\includegraphics[scale=0.7]{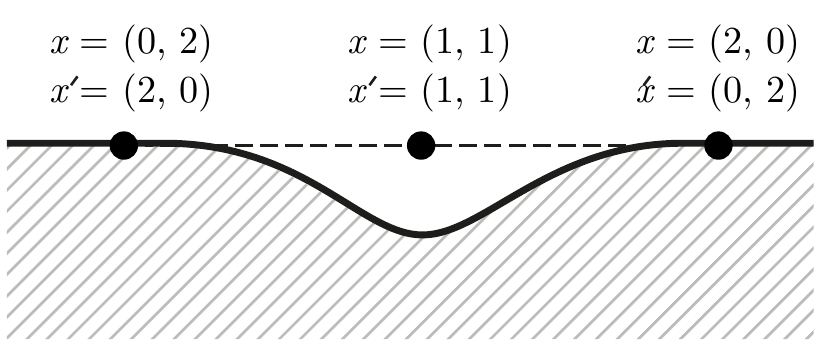}
   	\caption{The linear combination of two feasible strategies in the example of Observation~\ref{obs:nonconvex} is not feasible, therefore, the feasibility region, which is  represented by the hatched area is not convex.}
   	\label{fig:convex}
   \end{figure}
   This is illustrated in Figure~\ref{fig:convex}.
\end{proof}

A more careful analysis shows that the feasible region of Program~\ref{program:two} may be the union of up to $2^k$ convex polytopes which makes it particularly difficult to find a desired solution. In what follows, we present algorithms to overcome this challenge for both the uniform and nonuniform settings.

\subsubsection{Uniform Setting}\label{section:uniformc=2}
Recall that in order to find a pure \maxmin{u}{1} strategy, we proved the linearity of Program~\ref{program:one} by characterizing the optimal solution. Similar to that, we show necessary and sufficient conditions for the solution of Program~\ref{program:two}. To this end, we define {\em critical tuples} as follows.

\begin{definition}
	Let $\A$, $\B$, and $\AB$ be three disjoint subsets of battlefields. We call the tuple $\langle \A, \B, \AB\rangle$ a critical tuple, if critical any of $\A \cup \AB$ or $\B \cup \AB$ suffices to prevent player 1 from getting a payoff of $u$. In other words, $\langle \A,\B,\AB \rangle$ is a critical tuple, if and only if $$\sum_{i \notin \A \cup \AB} w_i < u, \qquad \text{and,} \qquad \sum_{i \notin \B \cup \AB} w_i < u.$$
\end{definition}

Via this definition, we can now describe the feasible solutions of Program~\ref{program:two} as follows.

\begin{observation}\label{observation:three}
	Two pure strategies $\b{x}$ and $\b{x}'$ of player 1 meet the constraints of Program~\ref{program:two} if and only if for any critical tuple $\langle \A, \B, \AB\rangle$ we have
	$$\sum_{i \in \A} x_i + \sum_{i \in \B} x'_i + \sum_{i \in \AB} \max\{x_i, x'_i\} > m. $$
\end{observation}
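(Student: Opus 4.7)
My plan is to prove both directions by contrapositive, matching a failed critical-tuple inequality with an explicit violating response $\b{y}$ for player~2 and vice versa. The structure mirrors Observation~\ref{observation:two} (which did the same for a single strategy); the one new ingredient is a three-way partition of the ``joint losing configuration'' of $\b{x}$ and $\b{x}'$ into battlefields lost only by $\b{x}$, lost only by $\b{x}'$, and lost by both --- exactly the roles played by $\A$, $\B$, $\AB$ in the definition of a critical tuple.

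For the ``only if'' direction, I would suppose that for some critical tuple $\langle\A,\B,\AB\rangle$ we have $\sum_{i\in\A} x_i + \sum_{i\in\B} x'_i + \sum_{i\in\AB}\max\{x_i,x'_i\} \le m$, and construct a response $\b{y}$ by setting $y_i=x_i$ on $\A$, $y_i=x'_i$ on $\B$, $y_i=\max\{x_i,x'_i\}$ on $\AB$, and $y_i=0$ elsewhere. The failed inequality is precisely the statement that $\sum_i y_i \le m$, so $\b{y}\in\puresetb{}$. Because ties break in favor of player~2, strategy $\b{x}$ loses every battlefield in $\A\cup\AB$ and $\b{x}'$ loses every battlefield in $\B\cup\AB$, so the critical-tuple property yields $\utilitya{\b{x}}{\b{y}}\le \sum_{i\notin\A\cup\AB} w_i<u$ and $\utilitya{\b{x}'}{\b{y}}\le \sum_{i\notin\B\cup\AB} w_i<u$, violating the fourth constraint of Program~\ref{program:two}.

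For the ``if'' direction, I would instead assume that some $\b{y}\in\puresetb{}$ keeps both $\b{x}$ and $\b{x}'$ below $u$, and read off the critical tuple from the losing sets. Letting $S=\{i:y_i\ge x_i\}$ and $S'=\{i:y_i\ge x'_i\}$, and setting $\A=S\setminus S'$, $\B=S'\setminus S$, $\AB=S\cap S'$, the hypothesis $\utilitya{\b{x}}{\b{y}}<u$ and $\utilitya{\b{x}'}{\b{y}}<u$ forces $\sum_{i\notin S}w_i<u$ and $\sum_{i\notin S'}w_i<u$, so $\langle\A,\B,\AB\rangle$ is critical. On these sets the pointwise bounds $y_i\ge x_i$, $y_i\ge x'_i$, and $y_i\ge\max\{x_i,x'_i\}$ respectively give
\[ m \;\ge\; \sum_i y_i \;\ge\; \sum_{i\in\A} x_i + \sum_{i\in\B} x'_i + \sum_{i\in\AB} \max\{x_i,x'_i\}, \]
contradicting the inequality of the observation. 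Beyond spotting this three-way partition, the argument reduces to the same budget-counting used for Observation~\ref{observation:two}, so I do not anticipate any real obstacle; the only point that needs mild care is that on $\AB$ player~2 must simultaneously dominate both $x_i$ and $x'_i$, which is exactly what the $\max\{x_i,x'_i\}$ term records.
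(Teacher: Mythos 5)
Your proposal is correct and follows essentially the same route as the paper's proof: both directions rest on the identical three-way partition into battlefields lost only by $\b{x}$ ($\A$), only by $\b{x}'$ ($\B$), and by both ($\AB$), combined with the budget count $\sum_i y_i \leq m$. If anything, your write-up is slightly more complete, since the paper spells out only the violation-to-critical-tuple direction and dismisses the converse with ``a similar argument,'' whereas you give the explicit construction $y_i = x_i$ on $\A$, $y_i = x'_i$ on $\B$, $y_i = \max\{x_i, x'_i\}$ on $\AB$ and correctly invoke the tie-breaking rule in favor of player~2.
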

\begin{proof}
	The proof is similar to that of Observation~\ref{observation:two}. Note that $\b{x}$ and $\b{x}'$ violate a payoff constraint of Program~\ref{program:two} if they both get a payoff less than $u$ against a pure strategy $\b{y}$ of player 2. In this case we define three sets $\A$, $\B$, and $\AB$ as 
	\begin{equation*}
		\A = \{i : x_i \leq y_i \text{ and } x'_i > y_i\}, \quad
		\B = \{i : x_i > y_i \text{ and } x'_i \leq y_i\}, \quad
		\AB = \{i : x_i \leq y_i \text{ and } x'_i \leq y_i\}.
	\end{equation*}
	Observe that $\A$, $\B$, and $\AB$ make a critical tuple since both $\b{x}$ and $\b{x}'$ get a payoff less than $u$ against $y$. Since $\A$, $\B$, and $\AB$ are disjoint and $\sum y_i = m$ we have 	$\sum_{i \in \A} x_i + \sum_{i \in \B} x'_i + \sum_{i \in \AB} \max\{x_i,x'_i\} \leq m $. A similar argument implies that if this condition does not hold for any critical tuple, then $\b{x}$ and $\b{x}'$ meet the conditions of Program~\ref{program:two}.
\end{proof}

Based on Observation~\ref{observation:three} we rewrite Program~\ref{program:two} in the following way.
\begin{equation}\label{program:three}
\begin{array}{ll@{}ll}
\text{find}  & \b{x}  \text{ and } \b{x}' &  &\\
\text{subject to}& x_i \geq 0 \text{ and } x'_i \geq 0  & &\forall i\in[k]\\
&                  \sum x_{i}  \leq n         & &\\
&                  \sum x'_{i}  \leq n         & &\\
&					z_i = \max\{x_i, x'_i\} & & \forall i \in [k]\\
& \sum_{i \in \A} x_i + \sum_{i \in \B} x'_i + \sum_{i \in \AB} z_i > m  & & \text{ for every critical tuple }\langle \A, \B, \AB\rangle
\end{array}\end{equation}
Indeed Program~\ref{program:three} is not convex since $z_i = \max\{x_i, x'_i\}$ is not a linear constraint. The naive approach to get around this issue is to consider $2^k$ possibilities for the assignment of $z_i$'s. More precisely, if we knew in an optimal strategy for which $i$'s we have $x_i > x'_i$ and for which $x_i \leq x'_i$, we could turn Program~\ref{program:three} into a linear program by replacing each $z_i$ with either $x_i$ or $x'_i$. This observation gives us an exponential time solution to find a \maxmin{u}{p} strategy by trying all $2^k$ combinations. However, for the uniform case, we can further improve the running time to a polynomial. The overall idea is that when we are indifferent between the battlefields, we do not necessarily need to know for which subset of battlefields $\b{x}$ puts more troops that $\b{x}'$. It suffices to be given the count!

In the uniform setting, let $\hat{\b{x}}$ and $\hat{\b{x}}'$ be the actual solution. Count the number of battlefields on which $\hat{\b{x}}$ puts more troops than $\hat{\b{x}}'$ and call this number $\alpha$. Therefore, on $k - \alpha$ battlefields, $\b{x}'$ puts at least as many troops as $\b{x}$. Since the battlefields are identical, we can rearrange the order of the battlefields to make sure $\b{x}$ puts more troops than $\b{x}'$ in the first $\alpha$ battlefields. If $\alpha$ is given to us, we can formulate the problem as follows.

\begin{equation}\label{program:four}
\begin{array}{ll@{}ll}
\text{find}  & \b{x} \text{ and } \b{x}' &  &\\
\text{subject to}& x_i \geq 0 \text{ and } x'_i \geq 0  & &\forall i \in [k]\\
&                  \sum x_{i}  \leq n         & &\\
&                  \sum x'_{i}  \leq n         & &\\
&					x_i \geq x'_i, z_i = x_i & & \forall i: 1 \leq i \leq \alpha\\
&					x_i \leq x'_i, z_i = x'_i & & \forall i: \alpha < i \leq k\\
& \sum_{i \in \A} x_i + \sum_{i \in \B} x'_i + \sum_{i \in \AB} z_i > m  & & \text{for every critical tuple }\A, \B, \AB
\end{array}\end{equation}


Program~\ref{program:four} is clearly an LP. We show in Theorem~\ref{theorem:one} that LP~\ref{program:four} can indeed be solved in polynomial time. This settles the problem when $\alpha$ is given.  Note that $\alpha$ takes an integer value between $0$ and $k$ and thus we can iterate over all possibilities and solve the problem in polytime.

\begin{theorem}\label{theorem:one}	Given that an instance of continuous Colonel Blotto in the uniform setting admits a \maxmin{u}{\sfrac{1}{2}} \cmixedstrategy{2} for player 1, there exists an algorithm to find one such solution in polynomial time.
\end{theorem}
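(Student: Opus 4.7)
The plan is to iterate $\alpha$ over $\{0, 1, \ldots, k\}$ and, for each value, attempt to solve LP~\ref{program:four}. For the correct guess $\alpha = \alpha^\star$ (the number of battlefields on which $\b{x}$ strictly exceeds $\b{x}'$ in some optimal pair), the uniform weights permit a relabeling of battlefields that places the $\alpha^\star$ ``$\b{x}$-dominated'' battlefields first, after which LP~\ref{program:four} becomes feasible. The existence hypothesis of the theorem then guarantees that at least one of the $k+1$ iterations succeeds, and any feasible pair it returns is a \maxmin{u}{\sfrac{1}{2}} \cmixedstrategy{2}.

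The main obstacle is that LP~\ref{program:four} has exponentially many constraints --- one per critical tuple $\langle \A, \B, \AB\rangle$ --- so it cannot be written down explicitly. The standard remedy is the ellipsoid method together with a polynomial-time separation oracle that, given a candidate $(\b{x}, \b{x}')$, either certifies feasibility or returns a violated critical-tuple constraint. Building this oracle is the technical heart of the proof.

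To build the oracle, observe that after fixing $\alpha$, each $z_i$ is either $x_i$ (for $i \leq \alpha$) or $x'_i$ (for $i > \alpha$), so the left-hand side of the constraint for a critical tuple $\langle \A, \B, \AB\rangle$ decomposes as a sum over battlefields of a local cost that depends only on whether $i$ is placed in $\A$, $\B$, $\AB$, or none of the three. In the uniform setting, criticality itself reduces to a pair of cardinality lower bounds, $|\A \cup \AB| \geq t$ and $|\B \cup \AB| \geq t$ with $t = \lfloor k - u\rfloor + 1$. Thus the separation problem becomes: assign each battlefield one of four labels to minimize the resulting linear cost subject to at least $t$ labels falling in $\{\A, \AB\}$ and at least $t$ falling in $\{\B, \AB\}$.

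I will solve this combinatorial problem by a dynamic program $f(i, a, b)$ recording the minimum cost over the first $i$ battlefields after which exactly $a$ items lie in $\A \cup \AB$ and $b$ lie in $\B \cup \AB$; the transition at battlefield $i+1$ picks the cheapest of the four label options and updates $(a, b)$ accordingly. The table has $O(k^3)$ cells, each filled in $O(1)$ time, so the oracle runs in polynomial time. If $\min_{a, b \geq t} f(k, a, b) > m$ then the candidate satisfies every critical-tuple constraint, while otherwise the argmin reveals the disjoint sets $\A, \B, \AB$ of a violated constraint to feed back into the ellipsoid method. Plugging this oracle into the ellipsoid method yields a polynomial-time feasibility algorithm for LP~\ref{program:four}, and the outer loop over $\alpha$ completes the proof.
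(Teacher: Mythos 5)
Your proposal matches the paper's proof of Theorem~\ref{theorem:one} in all essentials: iterate $\alpha$ over $\{0,\ldots,k\}$, use uniformity to reorder battlefields so the guess is w.l.o.g.\ a prefix, and solve LP~\ref{program:four} by the ellipsoid method with a polynomial-time separation oracle implemented as a dynamic program over the battlefields. The only difference is cosmetic and lies inside the oracle: the paper runs a feasibility DP $D(j, m', \upsilon, \upsilon')$ indexed by player 2's remaining troop budget and the two utility thresholds, while you run the equivalent dual min-cost DP $f(i,a,b)$ indexed by win-counts (using that uniform weights turn criticality into the cardinality bound $t = \lfloor k-u\rfloor + 1$), which if anything handles the real-valued troop counts of the continuous game more cleanly than the paper's enumeration of $y_j \in \{0,\ldots,m'\}$.
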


\begin{proof}
As discussed earlier, the solution boils down to solving LP~\ref{program:four}. Here we show that LP~\ref{program:four} can be solved in polynomial time using the ellipsoid method. For that, we need a separating oracle that for any given assignment to the variables decides in polynomial time whether any constraint is violated and if so, reports one. LP~\ref{program:four} has polynomially many constraints, except the constraints of form
\begin{equation*}
	\sum_{i \in \A} x_i + \sum_{i \in \B} x'_i + \sum_{i \in \AB} z_i > m  \qquad \text{for every critical tuple }\A, \B, \AB,
\end{equation*}
since there may be exponentially many critical tuples. Therefore the only challenge is whether any constraint of this form is violated. That is for given strategies $\b{x}$ and $\b{x}'$ and with $z_i = \max \{x_i, x'_i\}$, we need to design an algorithm that finds a critical tuple $\langle \A, \B, \AB\rangle$ (if any) for which we have
\begin{equation}
	\sum_{i \in \A} x_i + \sum_{i \in \B} x'_i + \sum_{i \in \AB} \max\{x_i, x'_i\} \leq m.
\end{equation}
For this, note that as implied by Observation~\ref{observation:three}, it suffices to be able to find a pure strategy $\b{y}$ of player 2 such that \begin{equation}\label{eq:bestresponse}
		\utilitya{\b{x}}{\b{y}} < u, \qquad \text{and,} \qquad \utilitya{\b{x}'}{\b{y}} < u.
	\end{equation}
This is in some sense, equivalent to player 2's best-response which we show can be solved in polynomial time via a dynamic program. Define $D(j, m', \upsilon, \upsilon')$ to be 1 if and only if player 2 can use up to $m'$ troops in the first $j$ battlefields in a way that prevents $\b{x}$ (resp. $\b{x}'$) from obtaining a payoff of at least $\upsilon$ (resp. $\upsilon'$) in those battlefields. More precisely, $D(j, m', \upsilon, \upsilon')$ is 1 if and only if there exists a strategy $\b{y}$ for player 2 such that
\begin{equation*}
	\sum_{i=1}^{j} y_i \leq m', \qquad \sum_{i: i \in [j], x_i > y_i} w_i \leq \upsilon, \qquad \text{and,} \qquad \sum_{i: i \in [j], x'_i > y_i} w_i \leq \upsilon'.
\end{equation*}
Clearly, if we are able to solve $D(j, m', \upsilon, \upsilon')$ for all possible inputs, then it suffices to check the value of $D(k, m, u, u)$ to see whether we can find a strategy $\b{y}$ satisfying (\ref{eq:bestresponse}). Indeed, we can update the dynamic program in the following way:
\begin{equation*}
	D(j, m', \upsilon, \upsilon') = \max_{y_j \in \{0, \ldots, m'\}} D\Big(j-1, m'-y_j, \upsilon - g(y_j, j), \upsilon' - g'(y_j, j)\Big),
\end{equation*}
where,
\begin{equation*}
	g(y_j, j) = \begin{cases}
	w_j , & \text{if } x_j > y_j, \\
	0 & \text{otherwise},
	\end{cases}
	\qquad\qquad \text{and,} \qquad\qquad
	g'(y_j, j) = \begin{cases}
	w_j , & \text{if } x'_j > y_j, \\
	0 & \text{otherwise.}
	\end{cases}
\end{equation*}
As for the base case, we set $D(0, 0, 0, 0)=1$. The correctness of the dynamic program is easy to confirm, since we basically check all possibilities for the number of troops that the second player can put on the $j$th battlefield and update the requirements on the previous battlefields accordingly. A minor issue, here, is that this only confirms whether a strategy $\b{y}$ exists that satisfies (\ref{eq:bestresponse}) or not and does not give the actual strategy. However, one can simply obtain the actual strategy by slightly modifying the dynamic program to also store the strategy itself. 

To summarize, we gave a polynomial time separating oracle for LP~\ref{program:four} that gives a polynomial time algorithm to solve it which leads to a \maxmin{u}{\sfrac{1}{2}} \cmixedstrategy{2} for player 1 in polynomial.
\end{proof}

\subsubsection{General Weights}\label{section:nonuniformc=2}
Theorem~\ref{theorem:one} shows that the problem of computing a \maxmin{u}{\sfrac{1}{2}} \cmixedstrategy{2} is computationally tractable when the weights are uniform. In this section, we study the general (i.e., non-uniform) setting and show that it is possible to obtain an (almost) optimal solution for this problem in polynomial time.

Recall that we assume there exists a \maxmin{u}{\sfrac{1}{2}} \cmixedstrategy{2} and the goal is to either compute or approximate such a strategy. Fix the pure strategies of the solution to be $\hat{\b{x}}$ and $\hat{\b{x}}'$. Similar to Section~\ref{section:uniformc=2}, if we knew for which $i$'s $\hat{x}_i \geq \hat{x}'_i$ holds and for which $i$'s it is the opposite, we could model the problem as a linear program and find a solution in polynomial time. Since $w_i$'s are not necessarily the same, unlike Section~\ref{section:uniformc=2}, we need to try an exponential number of combinations to make a correct decision. To alleviate this problem, we show a generalized variant of the above argument. Define the status of battlefield $i$ to be compatible with either $\leq$ or $\geq$ (or both in case of equality) based on the comparison of $\hat{x}_i$ and $\hat{x}'_i$. Assume we make a guess for the status of battlefields, which is incorrect for a set $S$ of battlefields but correct for the rest of them. This means that for every battlefield $i$ in set $S$, if $\hat{{x}}_i > \hat{{x}}'_i$, we assume ${x}_i \leq {x}'_i$ and vice versa. We show that if the total weight of the battlefields in $S$ is small, there exists an almost optimal solution for the problem whose status is compatible with our guess.

For simplicity, we represent a guess for the status of the battlefields with a vector $\b{g} \in \{\leq,\geq\}^k$ of length $k$ in which every entry is either `$\leq$' or
`$\geq$'. A solution $(\b{x},\b{x}')$ is compatible with this guess if $g_i$ correctly compares $x_i$ to $x'_i$.
\begin{lemma}\label{lemma:approx}
Let $(\hat{\b{x}},\hat{\b{x}}')$ be an optimal \maxmin{u}{p} solution of the problem and $\b{g}$ be a guess for the status of the battlefields. Let $S$ be the set of battlefields for which $\b{g}$ makes an incorrect comparison between $\hat{\b{x}}$ and $\hat{\b{x}}'$ on these battlefields. If $\sum_{i \in S} w_i = \alpha$ then there exists a \maxmin{u-\alpha}{p} strategy that is compatible with $\b{g}$.
\end{lemma}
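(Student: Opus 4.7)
The plan is to prove the lemma by an explicit construction: starting from the optimal pair $(\hat{\b{x}},\hat{\b{x}}')$, I would build a new pair $(\b{x},\b{x}')$ that is compatible with $\b{g}$ by zeroing out both strategies on the battlefields in $S$. Concretely, I would set $x_i = \hat{x}_i$ and $x'_i = \hat{x}'_i$ for every $i \notin S$, and $x_i = x'_i = 0$ for every $i \in S$.

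After defining the construction, the first two verifications are straightforward. Compatibility with $\b{g}$ follows because on $i \notin S$ the guess $g_i$ already compares $\hat{x}_i$ and $\hat{x}'_i$ correctly (by the definition of $S$), and on $i \in S$ we have $x_i = x'_i = 0$, which satisfies both $\leq$ and $\geq$. The budget constraints $\sum x_i \le n$ and $\sum x'_i \le n$ follow because we only decreased coordinates relative to $(\hat{\b{x}},\hat{\b{x}}')$, which were feasible.

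The main step is bounding the loss in payoff. I would fix an arbitrary pure opponent strategy $\b{y}$ and compare $\utilitya{\b{x}}{\b{y}}$ with $\utilitya{\hat{\b{x}}}{\b{y}}$. Since the two strategies agree on $[k]\setminus S$ and $\b{x}$ wins nothing on $S$, the only battlefields that could have contributed to the first utility but not the second are those $i \in S$ where $\hat{x}_i > y_i$. Hence
\begin{equation*}
\utilitya{\b{x}}{\b{y}} \;\geq\; \utilitya{\hat{\b{x}}}{\b{y}} \;-\; \sum_{i \in S} w_i \;=\; \utilitya{\hat{\b{x}}}{\b{y}} - \alpha,
\end{equation*}
and the same bound holds for $\b{x}'$ against $\hat{\b{x}}'$. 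Therefore, whenever one of $\hat{\b{x}}, \hat{\b{x}}'$ achieves payoff at least $u$ against $\b{y}$, the corresponding modified strategy achieves at least $u - \alpha$. Keeping the same probability distribution over the two pure strategies as in $(\hat{\b{x}},\hat{\b{x}}')$ then implies that the modified mixed strategy guarantees payoff at least $u-\alpha$ with probability at least $p$ against every pure $\b{y}$, which is precisely the $\maxmin{u-\alpha}{p}$ condition (using the standard reduction to pure best responses noted in the preliminaries).

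I do not expect any real obstacle here: the construction is natural and each verification is a one-line inequality. The only subtlety worth flagging is to make explicit that the payoff comparison is done battlefield-by-battlefield against a fixed pure $\b{y}$, so that the $\alpha$ loss is a uniform additive bound across all opponent responses rather than something that only holds in expectation.
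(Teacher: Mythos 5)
Your proposal is correct and follows essentially the same route as the paper's proof: the identical construction (set $x_i = x'_i = 0$ on $S$ and copy $\hat{\b{x}},\hat{\b{x}}'$ elsewhere), the same compatibility check, and the same additive-$\alpha$ payoff bound, with the only difference being that the paper states the payoff step in contrapositive form (any strategy $\b{y}$ preventing both modified strategies from reaching $u-\alpha$ would prevent both originals from reaching $u$) while you argue it directly battlefield-by-battlefield. Your explicit remarks on budget feasibility and on fixing a pure $\b{y}$ so the loss is a uniform additive bound are sound details the paper leaves implicit.
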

\begin{proof}
  We construct a pair of strategies $(\b{x},\b{x}')$ based on $\hat{\b{x}}$ and $\hat{\b{x}}'$. For every battlefield $i \in S$, we set $x_i = x'_i = 0$ and for every battlefield $i \notin S$ we set $x_i = \hat{x}_i$ and $x'_i = \hat{x}'_i$. If a strategy of player 2 prevents both $\b{x}$ and $\b{x}'$ from getting a utility of $u-\alpha$, then the same strategy prevents both $\hat{\b{x}}$ and $\hat{\b{x}}'$ from getting a payoff of $u$. Therefore, $(\b{x},\b{x}')$ is \maxmin{u-\alpha}{p}. Since for every battlefield outside set $S$ we have $x_i = x'_i = 0$, then both $\leq$ and $\geq$ correctly compare the corresponding values for such battlefields. Therefore, $(\b{x},\b{x}')$ is compatible with $\b{g}$.
\end{proof}

A simple interpretation of Lemma~\ref{lemma:approx} is that if we make a guess that differs from a correct guess in a subset of battlefields with a total weight of $\alpha$, we can use this guess to find a solution with an additive error of at most $\alpha$ in the utility. Based on this idea, we present a polynomial time algorithm that for any arbitrarily small constant $\epsilon < 1$ computes a \maxmin{(1-\epsilon)u}{\sfrac{1}{2}} \cmixedstrategy{2}.

\paragraph{$\delta$-Uniform weights.} One of the crucial steps of our algorithms, is updating battlefield weights. This step, is indeed used in multiple other places of the paper as well. For a parameter $\delta$, we define a {\em $\delta$-uniform} variant of the game to be an instance on which the weight of each of the battlefields is {\em rounded down} to be in set $\mathcal{W} = \{ 1, (1+\delta)^1, (1+\delta)^2, \ldots \}$. That is, for any $i \in [k]$, we set the updated weight of battlefield $i$ to be $w'_i = \max \{ w : w \in \mathcal{W}, w \leq w_i \}$. The following observation implies that we can safely assume  the game is played on the updated weights without losing a considerable payoff.

\begin{observation}\label{obs:bfweightrounding}
	For any $u'$, any \maxmin{u'}{p} strategy of the game instance \cbinstance{n}{m}{\b{w'}} with the updated weights is a \maxmin{(1-\delta)u'}{p} strategy of the original instance \cbinstance{n}{m}{\b{w}}. Similarly, any \maxmin{u'}{p} strategy of the original instance \cbinstance{n}{m}{\b{w}} is a \maxmin{(1-\delta)u'}{p} for instance \cbinstance{n}{m}{\b{w}'}.
\end{observation}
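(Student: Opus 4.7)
The plan is to reduce both halves of the observation to a single pointwise sandwich between the utility functions of the two instances, and then lift that inequality to the probabilistic \maxmin{}{} condition via a monotonicity-of-events argument.

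First I would record the effect of the rounding on each individual weight: by construction $w'_i \le w_i$, and because $w'_i$ is the largest element of $\{1,(1+\delta),(1+\delta)^2,\ldots\}$ that does not exceed $w_i$, we also have $w'_i \ge w_i/(1+\delta) \ge (1-\delta)\,w_i$ (using $(1+\delta)^{-1}\ge 1-\delta$). Since for any pure-strategy pair $(\b{x}',\b{y}')$ the utility $\mathsf{u}_1(\b{x}',\b{y}') = \sum_{i : x'_i > y'_i} w_i$ depends on the weight vector only through a sum over a fixed index set, these per-weight bounds transfer pointwise: writing $\mathsf{u}_1^{\b{w}}$ and $\mathsf{u}_1^{\b{w}'}$ for the utility evaluated with the two weight vectors, $(1-\delta)\,\mathsf{u}_1^{\b{w}}(\b{x}',\b{y}') \le \mathsf{u}_1^{\b{w}'}(\b{x}',\b{y}') \le \mathsf{u}_1^{\b{w}}(\b{x}',\b{y}')$ for every pure $(\b{x}',\b{y}')$.

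Next I would invoke this sandwich in each direction. If $\b{x}$ is \maxmin{u'}{p} on \cbinstance{n}{m}{\b{w}'}, then for any adversary strategy $\b{y}$ we have $\Pr_{\b{x}'\sim\b{x},\,\b{y}'\sim\b{y}}[\mathsf{u}_1^{\b{w}'}(\b{x}',\b{y}') \ge u'] \ge p$; the upper half of the sandwich places this event inside $\{\mathsf{u}_1^{\b{w}}(\b{x}',\b{y}') \ge u'\} \subseteq \{\mathsf{u}_1^{\b{w}}(\b{x}',\b{y}') \ge (1-\delta)u'\}$, which is exactly the \maxmin{(1-\delta)u'}{p} condition on the original instance. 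Conversely, if $\b{x}$ is \maxmin{u'}{p} on \cbinstance{n}{m}{\b{w}}, the lower half places $\{\mathsf{u}_1^{\b{w}}(\b{x}',\b{y}') \ge u'\}$ inside $\{\mathsf{u}_1^{\b{w}'}(\b{x}',\b{y}') \ge (1-\delta)u'\}$, yielding the \maxmin{(1-\delta)u'}{p} condition on \cbinstance{n}{m}{\b{w}'}. Since $\b{y}$ ranges over exactly the same (pure or mixed) strategy set in both instances, nothing about player 2's best response needs to be re-derived.

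There is essentially no technical obstacle here: the only step worth double-checking is the elementary rounding inequality $(1+\delta)^{-1}\ge 1-\delta$, and the trivial fact that a \maxmin{v}{p} guarantee implies a \maxmin{v''}{p} guarantee whenever $v''\le v$. The argument is purely pointwise-plus-monotonicity and does not interact with the non-convexity or combinatorial structure of the game.
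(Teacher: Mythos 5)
Your proposal is correct and follows essentially the same route as the paper: the per-weight bounds $(1-\delta)w_i \le w'_i \le w_i$ are lifted to a pointwise comparison of utilities on every pure-strategy pair, and then to the probabilistic \maxmin{u'}{p} guarantee. If anything, your event-inclusion phrasing makes the lift from the pointwise inequality to the probability bound slightly more explicit than the paper's write-up, which argues per pure strategy in the support, but the underlying argument is identical.
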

\begin{proof}
	Let $\b{x} = (x_1, \ldots, x_k)$ be any pure strategy in the support of the \maxmin{u'}{p} strategy of \cbinstance{n}{m}{\b{w'}}. Consider any arbitrary strategy $\b{y}$ of player 2, it suffices to show that $\b{x}$ gets a payoff of at least $(1- \delta)u'$ against \b{y} in the original instance \cbinstance{n}{m}{\b{w}}. Note that for any $i \in [k]$ we have $w'_i \geq w_i/(1+\delta) \geq (1-\delta) w_i$ by the way that we round down the weights; therefore, we have that
	\begin{equation*}
		\sum_{i: x_i > y_i} w'_i \geq \sum_{i: x_i > y_i} (1-\delta) w_i \geq (1-\delta) \sum_{i: x_i > y_i} w_i \geq (1-\delta) u',
	\end{equation*}
	completing the proof for the first part.
	
	Similarly, since the weight of each battlefield is multiplied by a factor of at most $\sfrac{1}{(1+\delta)}$, any $\maxmin{u'}{p}$ strategy for the original instance is a \maxmin{u'/(1+\delta)}{p} or simply a \maxmin{(1-\delta)u'}{p} strategy for instance \cbinstance{n}{m}{\b{w}'}.
\end{proof}

\paragraph{The algorithm in a nutshell.} 
We first update the weight of every battlefield $i$ to be $\min\{u, w_i\}$. This, in fact, does not change the game instance for player 1 since his only objective is to guarantee a payoff of at least $u$. Now, for $\delta = \epsilon^3/10$ which is a relatively smaller error threshold than $\epsilon$, we consider the $\delta$-uniform variant of the game. In the $\delta$-uniform instance, since the weights change exponentially in $1+\delta$, we have at most $O(1/\delta \cdot \log u)$ distinct weights . We put the battlefields with the same weight into a bucket and denote the sizes of the buckets by $k_1, k_2, \ldots, k_b$ where $b$ is the number of buckets.
Recall from Section~\ref{section:uniformc=2} that if a set of battlefields have the same weight, then we are indifferent between these battlefields and thus the only information relevant to these battlefields is on how many of them $\hat{\b{x}}$ puts more troops than $\hat{\b{x}}'$. Therefore one way to make a correct guess is to try all $\prod (k_i +1)$ possibilities for all of the buckets. Unfortunately, $\prod (k_i +1)$ is not polynomial since the number of buckets is not constant. In order to reduce the number of possibilities to a polynomial, we make a number of observations.

First, since the weights decrease exponentially between the buckets, the number of distinct weights that are larger than $\delta u/k$ (and smaller than $u$ as described above,) is at most $\log_{1+\delta}k/\delta = O_\delta(\log k)$. Observe that we can safely ignore (i.e., make a wrong guess for) all the buckets with weight less than $\delta u /k$ since sum of the weights of all battlefields in such buckets is at most $\delta u$ and by Lemma~\ref{lemma:approx} it causes us to lose an additive error of at most $\delta u$. Although this reduces the number of buckets down to $O_\delta(\log k)$, it is still more than we can afford to try all $\prod (k_i +1)$ possibilities.
	
Second, instead of trying $k_j +1$ possibilities for bucket $j$, we reduce it down to only $O(1/\delta)$ options. More precisely, let $t_j$ be the number of battlefields $i$ in bucket $j$ such that $\hat{x}_i \geq \hat{x}'_i$. For any bucket $j$ with $k_j > 1/\delta$, if we only consider $t_j$ to be in set $\{0,\lfloor \delta k_j \rfloor, \lfloor 2\delta k_j \rfloor, \lfloor 3\delta k_j \rfloor, \ldots, k_j\}$ one of the realizations of $t_j$ makes at most $\delta k_j + 1$ incorrect guesses for bucket $j$. We use this later to argue that we do not lose a significant payoff by considering only $O(1/\delta)$ possibilities per bucket. As a result, we reduce the size of the cartesian product of all possibilities over all the buckets down to a polynomial.
	
Third, we show that if $n \geq (1+\epsilon) m/2$, we can safely assume that losing the value of at most $\lfloor \delta k_j\rfloor$ battlefields of buckets with more than $1/\delta$ battlefields does not hurt the payoff significantly. In other words, when $n \geq (1+\epsilon) m/2$, the optimal utility $u$ is much larger than the total sum of the payoff we lose by only trying $1/\delta$ possibilities for every bucket $j$ (proven in Lemma~\ref{lemma:notlosemuch}).

Based on the above ideas, we outline our PTAS as follows: (i) We first set a cap of $u$ for the weight of the battlefields. (ii) Let $\delta = \epsilon^3/10$. Next, we round down the weight of the battlefields to be powers of $(1+\delta)$.   (iii) We put the battlefields with the same weights in the same bucket and remove the buckets whose battlefield weights are smaller than $\delta u/k$. (iv) Finally, we try $O(1/\delta)$ possibilities for the status of the battlefields within each bucket and check its feasibility using LP~\ref{program:four}.\footnote{As a minor technical detail, since our goal is to guarantee a payoff of at least $(1-\epsilon)u$ instead of $u$, we need to update the definition of losing tuples accordingly for this case.} We return the first feasible solution that we find. The formal algorithm is given as Algorithm~\ref{alg:continuous2general}.

\begin{algorithm}
	\caption{Algorithm to find a $\maxmin{u}{p}$ $\cmixedstrategy{2}$ for general weights.}
	\label{alg:continuous2general}
	\begin{algorithmic}[1]
		\Statex \textbf{Input:} A payoff $u$ for which existence of a \maxmin{u}{\sfrac{1}{2}} \cmixedstrategy{2} is guaranteed.
		\Statex \textbf{Output:} Two pure strategies $\b{x}$ and $\b{x}'$ that form a \maxmin{u}{\sfrac{1}{2}} when played with equal prob. $1/2$.
		\State For every battlefield $i$, update $w_i$ to be $\min\{u, w_i\}$.
		\State For $\delta = \epsilon^3/10$, we further update the battlefield weights and consider its $\delta$-uniform variant.\label{line:two}
		\State Ignore every battlefield $i$ with weight less than $\delta u/k$ (i.e., naively guess $x_i \geq x'_i$).\label{line:three}
		\State Put all battlefields of the same weight into the same bucket and denote by $k_j$ the number of battlefields in bucket $j$.
		\State For each bucket $j$ with $k_j \leq 1/\delta$, let $G_j = \{\leq, \geq\}^{k_j}$ be the set of all possible guesses for it.
		\State For each bucket $j$ with $k_j > 1/\delta$, let $G_j$ be the set of all guesses where for any $d \in \{0,\lfloor \delta k_j \rfloor, \lfloor 2\delta k_j \rfloor, \lfloor 3\delta k_j \rfloor, \ldots, k_j\}$ we have $x_i \geq x'_i$ for any $i \leq d$ and $x_i \leq x'_i$ for any $i > d$.\label{line:six}
		\State Let $G = G_1 \times \ldots \times G_b$ be the cartesian product of the partial guesses of the buckets.
		\State For any guess $\b{g} \in G$, construct an instance of LP~\ref{program:four} and return the first found feasible solution $(\b{x}, \b{x}')$.
	\end{algorithmic}	
\end{algorithm}

Before we present a formal proof, we state an auxiliary lemma  to show a lower bound on the value of $u$ when $n \geq (1+\epsilon)m/2$.
\begin{lemma} \label{lemma:notlosemuch}
	Let $\alpha$ be the total sum of the weights of the battlefields whose buckets have a size of at least $1/\delta$. If $n > (1+\epsilon)m / 2$ then there exists a \maxmin{\epsilon \alpha/8}{\sfrac{1}{2}} strategy for player~1 that randomizes over two pure strategies.
\end{lemma}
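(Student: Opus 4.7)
The plan is to construct explicit pure strategies $\b{x},\b{x}'$ supported only on battlefields in the \emph{big} buckets (those with $k_j\geq 1/\delta$) and show that the uniform mixture over them achieves the claimed guarantee. For each big bucket $j$, pick any two disjoint subsets $A_j,B_j$ of its $k_j$ battlefields, each of size $\lfloor k_j/2\rfloor$, and set $t_j := 2nw_j/\alpha$. Strategy $\b{x}$ places exactly $t_j$ troops on every $i\in A_j$ and $0$ elsewhere; strategy $\b{x}'$ does the same with $B_j$ in place of $A_j$. Both are feasible, since each spends at most $\sum_j t_j\lfloor k_j/2\rfloor\leq \sum_j t_jk_j/2=(2n/\alpha)(\alpha/2)=n$ troops.

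Now fix any pure response $\b{y}$ of player~2, and let $Y^A_j$ (respectively $Y^B_j$) denote the number of $i\in A_j$ (respectively $i\in B_j$) with $y_i\geq t_j$, i.e., the battlefields in bucket $j$ where player~2 prevents player~1 from winning (recall that ties go to player~2). Because each such battlefield costs player~2 at least $t_j$ troops, $\sum_j t_j(Y^A_j+Y^B_j)\leq m$, which rearranges to $\sum_j w_j(Y^A_j+Y^B_j)\leq m\alpha/(2n)$. Since every big bucket satisfies $k_j\geq 1/\delta$, we also have $\sum_j w_j\leq \delta\sum_j w_jk_j=\delta\alpha$, so
\[
\utilitya{\b{x}}{\b{y}}+\utilitya{\b{x}'}{\b{y}} \;=\; \sum_j w_j\bigl(|A_j|+|B_j|-Y^A_j-Y^B_j\bigr) \;\geq\; \alpha(1-\delta)-\frac{m\alpha}{2n},
\]
where I have used $|A_j|+|B_j|\geq k_j-1$. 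Therefore $\max(\utilitya{\b{x}}{\b{y}},\utilitya{\b{x}'}{\b{y}})\geq \alpha\bigl[(1-\delta)/2-m/(4n)\bigr]$.

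To conclude, I invoke $n>(1+\epsilon)m/2$, which gives $m/(4n)<1/[2(1+\epsilon)]$, so the bracketed expression strictly exceeds $[(1-\delta)(1+\epsilon)-1]/[2(1+\epsilon)]=[\epsilon-\delta(1+\epsilon)]/[2(1+\epsilon)]$. With $\delta=\epsilon^3/10$ and $\epsilon\in(0,1]$, the numerator is at least $4\epsilon/5$ and the denominator is at most $4$, so the quantity exceeds $\epsilon/5>\epsilon/8$. Hence against every $\b{y}$ at least one of $\b{x},\b{x}'$ earns more than $\epsilon\alpha/8$, proving that the uniform mixture is the desired \maxmin{\epsilon\alpha/8}{\sfrac{1}{2}} strategy. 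The only delicate point is the parity loss $\sum_j w_j\leq \delta\alpha$ arising from using $\lfloor k_j/2\rfloor$-sized halves; this is absorbed cleanly by the slack between $\epsilon/5$ and the target $\epsilon/8$, which is precisely why $\delta$ is chosen cubic in $\epsilon$.
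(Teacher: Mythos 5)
Your proof is correct, and your construction coincides with the paper's: both spread troops proportionally to bucket weight over two disjoint halves of each big bucket (your normalization by $\alpha$ rather than the paper's $\alpha'$, and your $\lfloor k_j/2\rfloor$-sized halves versus the paper's device of ignoring one battlefield per odd bucket, are equivalent bookkeeping, each costing the same $(1-\delta)\alpha$ term). The analysis, however, takes a genuinely different route. The paper argues by contradiction: it normalizes the adversary so that $y_i \in \{0, x_i, x'_i\}$, uses disjointness of the two supports to split the budget as $m_1 + m_2 \le m$, assumes w.l.o.g.\ $m_1 \le m/2$, and shows that suppressing $\b{x}$ below $\epsilon\alpha/8$ forces the adversary to cover weight exceeding $\alpha'/2 - \epsilon\alpha/8$ and hence to spend at least $n/(1+\epsilon) > m/2$ troops. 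You instead lower-bound the sum $\utilitya{\b{x}}{\b{y}} + \utilitya{\b{x}'}{\b{y}}$ directly via the counting inequality $\sum_j w_j(Y^A_j + Y^B_j) \le m\alpha/(2n)$, and finish with the observation that the maximum is at least half the sum. Your averaging route is shorter --- it needs no normalization of $\b{y}$ and no case split --- and it even delivers the stronger constant $\epsilon\alpha/5$, exposing slack in the paper's $8$. What the paper's budget-splitting formulation buys is a template that transfers more directly to the $c$-strategy setting (Theorem~\ref{thm:nonuniform-c}, where one of the $\lfloor(1-p)c+1\rfloor$ strategies the adversary must suppress receives at most $m/\lfloor(1-p)c+1\rfloor$ of his troops), whereas your max-versus-average step only certifies that a single strategy succeeds --- which is exactly what \maxmin{\epsilon\alpha/8}{\sfrac{1}{2}} demands here, so nothing is lost for this lemma. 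One small imprecision in your closing sentence: the parity loss $\sum_j w_j \le \delta\alpha$ is absorbed through the explicit $(1-\delta)$ factor, i.e., in the step $\epsilon - \delta(1+\epsilon) \ge 4\epsilon/5$, and any $\delta \le \epsilon/10$ would suffice for that; the cubic choice $\delta = \epsilon^3/10$ is dictated by the error budget of the guessing step in Theorem~\ref{theorem:nonuniform-two} (where $\delta u + \delta\cdot 8u/\epsilon$ must stay below $\epsilon^2 u$), not by this lemma.
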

\begin{proof}	
	Let $\mathcal{B} = \{B_1, \ldots, B_{b}\}$ be the set of all buckets with at least $1/\delta$ battlefields. We slightly abuse the notation and respectively denote by $w_i$ and $k_i$ the weight and the number of battlefields in $B_i$. This means we have $\sum_{i=1}^{b}k_i w_i = \alpha$. We construct a \cmixedstrategy{2} $(\b{x}, \b{x}')$ where both $\b{x}$ and $\b{x}'$ are played with equal probability $\sfrac{1}{2}$ and claim that it is \maxmin{\epsilon \alpha /8}{\sfrac{1}{2}}.
	To that end, for any bucket $B_i \in \mathcal{B}$ with an odd number of battlefields, we ignore one battlefield (i.e., we put zero troops in it in both $\b{x}$ and $\b{x}'$) to consider only an even number of battlefields for each bucket. Denote by $\alpha'$ the total weight of the remaining battlefields, i.e., the battlefields in some $B_i \in \mathcal{B}$ that are not ignored. It is easy to confirm that
	\begin{equation}\label{eq:alphaplarge}
		\alpha' \geq (1-\delta)\alpha
	\end{equation}
	since from each of the buckets in $\mathcal{B}$, at most one battlefield is ignored, which is only a $\delta$ fraction of the battlefields in that bucket since all buckets in $\mathcal{B}$ are assumed to have at least $1/\delta$ battlefields. Now, since only an even number of battlefields remain in each  bucket $B_i$, we can partition them into two disjoint subsets $B^{(1)}_i$ and $B^{(2)}_i$ of equal size. Strategies $\b{x}$ and $\b{x}'$ are then constructed as follows:
	\begin{itemize}
		\item In strategy $\b{x}$, for any $i \in [b]$, we put exactly $2w_i n / \alpha'$ troops in each battlefield in $B^{(1)}_i$. We put zero troops in all other battlefields.
		\item In strategy $\b{x}'$, for any $i \in [b]$, we put exactly $2w_i n / \alpha'$ troops in each battlefield in $B^{(2)}_i$. We put zero troops in all other battlefields.
	\end{itemize}
	Observe that the total number of troops that we use in each of the strategies $\b{x}$ and $\b{x}'$ is exactly $n$ as required. To see this, in strategy $\b{x}$ for instance, the number of troops that are used is
	\begin{equation*}
		\sum_{i\in[b]} \frac{2w_in}{\alpha'} |B^{(1)}_i|= \frac{2n \sum_{i\in[b]}{w_i |B^{(1)}_i|}}{\alpha'} =  \frac{2n \sum_{i\in[b]}{w_i |B^{(1)}_i|}}{\sum_{i\in[b]} w_i \cdot 2|B^{(1)}_i|} = n.
	\end{equation*}
	It only remains to prove that this strategy is \maxmin{\epsilon \alpha/8}{\sfrac{1}{2}}. Assume for the sake of contradiction that player 2 has a strategy $\b{y}$ that prevents both $\b{x}$ and $\b{x}'$ from achieving a payoff of at least $\epsilon \alpha/8$. We can assume w.l.o.g., that for every battlefield $i$, we have $y_i \in \{ 0, x_i, x'_i\}$, thus, on all battlefields that are ignored (i.e., $x_i = x'_i = 0$), we have $y_i = 0$. Further, note that because of the special construction of strategies $\b{x}$ and $\b{x}'$, in each battlefield, at most one of $\b{x}$ or $\b{x}'$ put non-zero troops; therefore, on every battlefield $i$ where $y_i > 0$, either we have $x_i > 0$ or $x'_i > 0$. Define
	\begin{equation*}
		m_1 = \sum_{i \in [k]: x_i > 0} y_i, \qquad\qquad \text{and,} \qquad\qquad m_2 = \sum_{i \in [k]: x'_i > 0} y_i.
	\end{equation*}
	Note that $m_1 + m_2 \leq m$ since it cannot be the case that both $x_i$ and $x'_i$ are non-zero at the same time as described above. Therefore at least one of $m_1$ or $m_2$ is not more than $m/2$. Assume w.l.o.g., that $m_1 \leq m/2$. One can think of $m_1$ as the number of troops that are spent by player 2 to prevent strategy $\b{x}$ from getting a payoff of at least $\epsilon \alpha/8$. To obtain the contradiction, we prove that player 2 cannot use only $m/2$ troops to prevent $\b{x}$ from obtaining a payoff of $\epsilon \alpha/8$. Recall that we denote the total sum of battlefields on which we put a non-zero number of troops either in $\b{x}$ or $\b{x}'$ by $\alpha'$. Strategy $\b{x}$ puts non-zero troops in half of these battlefields, and therefore sum of their weights is at least $\alpha'/2$. To prevent $\b{x}$ from getting a payoff of at least $\epsilon \alpha/8$,  player 2 can lose a weight of less than $\epsilon \alpha/8$ on these battlefields. Let $\mathcal{Y}$ be the subset of battlefields on which $\b{y}$ wins $\b{x}$ and let $w(\mathcal{Y})$ be the total weight of all these battlefields. We need to have 
	\begin{equation}\label{eq:wylarge}
			w(\mathcal{Y}) > \alpha'/2 - \epsilon \alpha/8.
	\end{equation}
	Since the number of troops that is put on the battlefields in $\b{x}$ is proportional to the battlefield weights on which $\b{x}$ puts non-zero troops, we have
	\begin{equation*}
		\sum_{i \in \mathcal{Y}} x_i \geq \Big(\frac{w(\mathcal{Y})}{\alpha'/2}\Big) n \overset{\text{By (\ref{eq:wylarge})}}{\ge} \Big(\frac{\alpha'/2 - \epsilon \alpha/8}{\alpha'/2}\Big) n \geq \Big( 1- \frac{\epsilon \alpha}{4\alpha'} \Big)n \overset{\text{By (\ref{eq:alphaplarge})}}{\ge} \Big(1-\frac{\epsilon}{4(1-\delta)}\Big) n \overset{\text{Since $\delta = \epsilon^3/10$}}{\ge} \frac{n}{1+\epsilon}.
	\end{equation*}
	Therefore, to be able to match the troops of $\b{x}$ in every battlefield in $\mathcal{Y}$, using only $m/2$ troops, we have
	\begin{equation}
		m/2 \geq \sum_{i\in\mathcal{Y}} x_i \geq \frac{n}{1+\epsilon}.
	\end{equation}
	The last inequality contradicts the assumption of the lemma that $n > (1+\epsilon)m/2$. Therefore, there exists no such strategy $\b{y}$ for player 2. That means, the constructed strategy is indeed a $\maxmin{\epsilon \alpha/8}{\sfrac{1}{2}}$ \cmixedstrategy{2}.
\end{proof}

\begin{theorem}\label{theorem:nonuniform-two}
	Let $\epsilon > 0$ be an arbitrarily small constant and suppose that we have an instance of continuous Colonel Blotto in which $n \geq (1+\epsilon)m/2$. Given that player 1 has a \maxmin{u}{\sfrac{1}{2}} \cmixedstrategy{2}, there exists an algorithm that finds a \maxmin{(1-\epsilon)u}{\sfrac{1}{2}} \cmixedstrategy{2} for him in polynomial time.
\end{theorem}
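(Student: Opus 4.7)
The plan is to show that Algorithm~\ref{alg:continuous2general} runs in polynomial time and that at least one of the guesses it enumerates yields a feasible instance of LP~\ref{program:four} (whose critical-tuple constraints are set up with target utility $(1-\epsilon)u$) that produces the claimed strategy. The polynomial-time bound is the easier part: after Step~\ref{line:two} every surviving weight is a power of $(1+\delta)$ of value at most $u$, and Step~\ref{line:three} also enforces it to be at least $\delta u/k$, so the number of buckets is $\log_{1+\delta}(k/\delta) = O_\delta(\log k)$. Each small bucket contributes $|G_j|\leq 2^{1/\delta}$ guesses and each large bucket contributes $|G_j|\leq 1/\delta + 1$, both constants in $\delta$, so $|G| \leq 2^{O_\delta(\log k)} = k^{O_\delta(1)}$ is polynomial. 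Each LP~\ref{program:four} is solvable in polytime via the separation oracle developed in the proof of Theorem~\ref{theorem:one}.

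For correctness, let $(\hat{\b{x}},\hat{\b{x}}')$ be the assumed optimal \maxmin{u}{\sfrac{1}{2}} \cmixedstrategy{2}; Observation~\ref{obs:bfweightrounding} shows it remains \maxmin{(1-\delta)u}{\sfrac{1}{2}} on the $\delta$-uniform instance. Write $\alpha$ for the total weight of battlefields that lie in large buckets. I would split on whether $\alpha \leq 8u/\epsilon$. In this first case take the guess determined by the sign pattern of $(\hat{\b{x}},\hat{\b{x}}')$; because battlefields within a bucket share weight they can be permuted, so we may align the pattern with the grid in Step~\ref{line:six}. The guess then disagrees with $(\hat{\b{x}},\hat{\b{x}}')$ only on (i) the ignored battlefields of Step~\ref{line:three} (total weight $\leq \delta u$) and (ii) at most $\lfloor \delta k_j \rfloor$ battlefields in each large bucket, totalling at most $\delta\alpha$. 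Lemma~\ref{lemma:approx} then yields a compatible strategy achieving $(1-\delta)u - \delta u - \delta\alpha \geq (1 - 2\delta - 8\delta/\epsilon)u \geq (1-\epsilon)u$ once $\delta = \epsilon^3/10$.

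In the complementary case $\alpha > 8u/\epsilon$ the rounded optimum may not be well captured by the grid, so I would instead invoke Lemma~\ref{lemma:notlosemuch} (which is where the hypothesis $n \geq (1+\epsilon)m/2$ is used) to produce a \maxmin{\epsilon\alpha/8}{\sfrac{1}{2}} 2-strategy whose sign pattern is of precisely the form enumerated in Step~\ref{line:six}: each large bucket has $\lfloor k_j/2 \rfloor$ coordinates with $x_i > x'_i$ (namely $B^{(1)}_j$) and $\lfloor k_j/2\rfloor$ with $x_i < x'_i$ (namely $B^{(2)}_j$), while every other coordinate is zero. After a within-bucket permutation the nearest grid value $d$ differs from $\lfloor k_j/2\rfloor$ by at most $\lfloor \delta k_j\rfloor$, so one $\b{g} \in G$ disagrees on at most $\delta\alpha$ total weight; Observation~\ref{obs:bfweightrounding} and Lemma~\ref{lemma:approx} then give a compatible strategy with payoff at least $(1-\delta)\epsilon\alpha/8 - \delta\alpha = \alpha(\epsilon/8)(1 - \delta - 8\delta/\epsilon)$, which with $\alpha > 8u/\epsilon$ and $\delta = \epsilon^3/10$ exceeds $(1-\epsilon)u$.

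The main technical obstacle I expect is balancing the four sources of loss --- the multiplicative rounding factor $(1-\delta)$, the ignored low-weight battlefields ($\leq \delta u$), the within-bucket discretization ($\leq \delta\alpha$), and the case-2 slack $\epsilon\alpha/8$ from Lemma~\ref{lemma:notlosemuch} --- against one another; this is exactly what forces $\delta$ to be smaller than $\epsilon$ by two factors and what pins down the case-split threshold $\Theta(u/\epsilon)$. A secondary but routine point is verifying that LP~\ref{program:four}, with critical tuples redefined for target $(1-\epsilon)u$ and with the forced status dictated by $\b{g}$, still admits the polynomial-time separation oracle; this goes through essentially unchanged via the best-response dynamic program of Theorem~\ref{theorem:one}.
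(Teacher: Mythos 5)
Your proof is correct and follows the paper's argument in all essentials: the same $\delta$-uniform rounding via Observation~\ref{obs:bfweightrounding}, the same bucket/grid enumeration with the two loss sources ($\delta u$ from dropping weights below $\delta u/k$, and $\delta\alpha$ from the Step~\ref{line:six} discretization) controlled by Lemma~\ref{lemma:approx}, the same parameter $\delta = \epsilon^3/10$, and the same invocation of Lemma~\ref{lemma:notlosemuch} as the sole place where $n \geq (1+\epsilon)m/2$ is used. The one genuine point of divergence is your Case~2 ($\alpha > 8u/\epsilon$): the paper handles it by bypassing Algorithm~\ref{alg:continuous2general} entirely and returning the explicit strategy of Lemma~\ref{lemma:notlosemuch} as a separate fallback branch, whereas you observe that this strategy's sign pattern --- roughly $\lfloor k_j/2 \rfloor$ strict inequalities of each orientation per large bucket and ties (compatible with any guess) everywhere else --- sits within $\lfloor \delta k_j \rfloor$ per bucket of the enumerated grid, so one of the algorithm's own guesses already yields a feasible instance of LP~\ref{program:four}. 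This is a small but real refinement: it shows the algorithm succeeds as written, with no side branch, at the cost of an extra $\delta\alpha$ in that case, which your arithmetic correctly absorbs. One bookkeeping quibble: in Case~1 you state the guarantee $(1-\delta)u - \delta u - \delta\alpha$ on the rounded instance but omit the final multiplicative $(1-\delta)$ of Observation~\ref{obs:bfweightrounding} when converting back to the original weights (your Case~2 does include it); the chain should read $(1-\delta)\bigl((1-\delta)u - \delta u - \delta\alpha\bigr) \geq (1-\epsilon)u$, which still holds comfortably with your choice of $\delta$, so nothing breaks.
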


\newcommand{\leqc}{\ensuremath{\text{`$\leq$'}}}
\newcommand{\geqc}{\ensuremath{\text{`$\geq$'}}}
	 
\begin{proof}
	The algorithm to achieve this strategy is given as Algorithm~\ref{alg:continuous2general}. We first analyze the approximation factor of Algorithm~\ref{alg:continuous2general} and then prove that it runs in polynomial time.
	
	\paragraph{Approximation factor.} Let $\b{w}'$ denote the updated battlefield weights by the end of Line~\ref{line:two} of Algorithm~\ref{alg:continuous2general}. First note that setting a cap of $u$ for the battlefield weights in the first line of algorithm does not change the game instance at all since the only goal of player 1 is to guarantee a payoff of at least $u$. Second, we know by Observation~\ref{obs:bfweightrounding} that for any $u'$, any $\maxmin{u'}{p}$ strategy for the $\delta$-uniform variant is a $\maxmin{(1-\delta)u'}{p}$ strategy for the original (i.e., not $\delta$-uniform) instance. Roughly speaking, since $\delta$ is relatively smaller than $\epsilon$, we still get a \maxmin{(1-\epsilon)u}{\sfrac{1}{2}} strategy for the original instance if we achieve a good approximation on the $\delta$-uniform variant. 
	
	Consider an optimal \maxmin{u}{\sfrac{1}{2}} \cmixedstrategy{2} for player 1 on the original instance (which recall is guaranteed to exist) and assume that it randomizes over two pure strategies $\hat{\b{x}}$ and $\hat{\b{x}}'$. By Observation~\ref{obs:bfweightrounding}, this is a \maxmin{(1-\delta)u}{\sfrac{1}{2}} strategy for the $\delta$-uniform variant. Let us denote by vector $\b{c} \in \{\leq, \geq\}^k$ the comparison between the entries of $\hat{\b{x}}$ and $\hat{\b{x}}'$. That is, $c_i$ is \geqc{} if and only if $\hat{x}_i \geq \hat{x}'_i$ and it is \leqc{} otherwise. Our goal is to argue that there exists a guess $\b{g} \in G$ that is {\em sufficiently close} to $\b{c}$ --- where by sufficiently close we mean sum of weights of all battlefields with $g_i \not= c_i$ is very small. We will later combine this with Lemma~\ref{lemma:approx} to obtain the desired guarantee.
	
	We first assume w.l.o.g. that for any two battlefields $i$ and $j$ with $i \leq j$ that have the same weight, if $c_i \not= c_j$, then $c_i = \leqc$ and $c_j = \geqc$ (otherwise we swap these two battlefields without changing the payoff guaranteed by the strategy). We have two relaxations over the guesses in Lines~\ref{line:three} and \ref{line:six} of Algorithm~\ref{alg:continuous2general}.
	
	First, in Line~\ref{line:three} of Algorithm~\ref{alg:continuous2general}, sum of weights of all light battlefields with weight at most $\delta u/k$ is not more than $\delta u$ since $k$ is the total number of battlefields. Thus, even if $g_i \not= c_i$ on these battlefields, their total sum is less than $\delta u$.
	
	Second, in Line~\ref{line:six} of Algorithm~\ref{alg:continuous2general}, let us denote by $\alpha$ the sum of weights of all battlefields whose bucket contains more than $1/\delta$ battlefields. Observe that we check almost all possibilities of guesses on these buckets, except on $\delta$ fraction of their battlefields. More precisely, the total sum of weights of such battlefields on which our guess is wrong is at most $\delta \alpha$. It only remains to argue that $\delta \alpha$ cannot be very large. Note that by Lemma~\ref{lemma:notlosemuch}, we can obtain a simple \maxmin{\epsilon \alpha/8}{\sfrac{1}{2}} strategy since the condition of $n \geq (1+\epsilon)m/2$ is also satisfied here. Thus, we can assume $u \geq \epsilon \alpha/8$ or simply $\alpha \leq 8u/\epsilon$ (otherwise instead of Algorithm~\ref{alg:continuous2general} we return the strategy of Lemma~\ref{lemma:notlosemuch}). Therefore the total weight of battlefields of this type, on which we guess wrong is no more than $$\delta \alpha \leq \frac{\delta 8 u}{\epsilon} \overset{\text{Since $\delta = \epsilon^3/10$}}{\leq} \frac{4}{5}\epsilon^2 u.$$

	Combining these two, we show that there exists a guess $\b{g} \in G$ for which sum of battlefields with $g_i \not= c_i$ is at most $$\delta u + \frac{4}{5} \epsilon^2 u = \frac{1}{10} \epsilon^3u + \frac{4}{5} \epsilon^2 u < \epsilon^2 u.$$ By Lemma~\ref{lemma:approx}, this implies a \maxmin{(1-\delta)u - \epsilon^2 u}{\sfrac{1}{2}} \cmixedstrategy{2} for the $\delta$-uniform variant; and by Observation~\ref{obs:bfweightrounding}, guarantees a utility of at least
	\begin{equation*}
		(1-\delta-\epsilon^2)(1-\delta)u = (1-\frac{\epsilon^3}{10} - \epsilon^2)(1-\frac{\epsilon^3}{10}) u \geq (1-\epsilon) u
	\end{equation*}
	for the original instance with probability at least $\sfrac{1}{2}$, which in other words, gives a \maxmin{(1-\epsilon)u}{\sfrac{1}{2}} \cmixedstrategy{2} as desired.
	
	\paragraph{Running time.} It is easy to confirm that the running time of Algorithm~\ref{alg:continuous2general} is $\poly(|G|)$. Thus, it suffices to show that the total number of guesses in $G$ is polynomial. Observe that for any $i \in [b]$, the total number of partial guesses for bucket $i$ is $O(1)$ (though dependent on $\delta$). On the other hand, the total number of buckets as mentioned before is at most $O(\log k)$ (we hide the dependence on $\delta$) therefore $|G| \leq O(1)^{O(\log k)}$ which is polynomial.
\end{proof}

Notice that when $n \leq m/2$, there is no chance for player 1 to get a nonzero utility by randomizing over two pure strategies since player 2 can always play $y_i = \max\{x_i, x'_i\}$ troops on every every battlefield and win all of them.

\subsection{Generalization to the Case of $c$-Strategies for $c > 2$}\label{sec:contcg}

In this section we generalize our results for the continuous variant to the case of multiple (i.e., more than 2) strategies in the support. That is, for a given $u$, we seek to find a \maxmin{u}{p} \cmixedstrategy{c} $\b{x}$ for maximum possible $p$. Throughout the section, we denote the support of \b{x} by $\b{x}^1, \ldots, \b{x}^c$.

We showed that the only computationally challenging problem for the case of \cmixedstrategies{2} is when our goal is to find \maxmin{u}{\sfrac{1}{2}} strategies. For that we only needed to give two pure strategies $\b{x}^1$ and $\b{x}^2$ and make sure that against every pure strategy of the opponent, at least one of these strategies obtains a utility of at least $u$. This structure becomes more complicated when we allow more than 2 strategies in the support.  Consider, for instance, the case of \cmixedstrategies{3} and suppose for simplicity that we are promised that the three pure strategies in the support of an optimal strategy are each played with probability \sfrac{1}{3}. For this example, the computationally challenging cases are obtaining either a \maxmin{u}{\sfrac{2}{3}} strategy or a \maxmin{u}{\sfrac{1}{3}} strategy. For the former case, we need to make sure that against every strategy of the opponent at least 2 of the strategies in the support obtain a utility of at least $u$. For the latter it suffices for 1 of the strategies to obtain a utility of at least $u$. The idea is to first attempt to find a \maxmin{u}{\sfrac{2}{3}}. It could be the case that no such strategy exists; if so, we then attempt to find a \maxmin{u}{\sfrac{1}{3}} strategy.

To generalize this to more than 3 strategies, we use the notion of {\em non-losing sets}.

\begin{definition}[Non-losing sets]
	Consider a \maxmin{u}{p} \cmixedstrategy{c} \b{x} with support $\b{x}^1, \ldots, \b{x}^c$. We define a set $N \in [c]$ to be a {\em non-losing set} of $\b{x}$ if for every strategy $\b{y}$ of player 2 there exists some $i \in N$ for which $\utilitya{\b{x}^i}{\b{y}} \geq u$. A set $N \in [c]$ is a {\em minimal non-losing} set of $\b{x}$ if it is a non-losing set of $\b{x}$ and there is no strict subset $N' \subsetneq N$ of $N$ that is a non-losing set of \b{x}. We denote the set of all minimal non-losing sets of a strategy \b{x} by \nonlosingset{\b{x}}.
\end{definition}

\newcommand{\Lguess}[0]{\ensuremath{\mathcal{L}}}
\newcommand{\Nguess}[0]{\ensuremath{\mathcal{N}}}

Observe that for a \maxmin{u}{\sfrac{1}{2}} \cmixedstrategy{2}, its only minimal non-losing set is $\{1, 2\}$. Moreover, for our example of a \maxmin{u}{p} \cmixedstrategy{3}, if $p=\sfrac{2}{3}$, the minimal non-losing sets are $\{1, 2\}, \{1, 3\}, \{2, 3\}$. In the same example, if $p = \sfrac{1}{3}$, the only minimal non-losing set is $\{1, 2, 3\}$. 

The general structure of our algorithm is to first guess the set of all minimal non-losing sets $\Nguess$ and see whether it is possible to {\em satisfy} it by constructing the pure strategies $\b{x}^1, \ldots, \b{x}^c$ in such a way that we get $\nonlosingset{\b{x}} = \Nguess$. Recall that there are only a constant number of possibilities for the choice of $\mathcal{N}$ since it is a subset of the power set of $[c]$ and $c$ is constant. We also know by Lemma~\ref{lem:constructprofile} that having \nonlosingset{\b{x}} is sufficient to decide what is the best way to assign probabilities to strategies $\b{x}^1, \ldots, \b{x}^c$ as it uniquely determines the winning subsets. Therefore we can find the optimal \cmixedstrategy{c} if for a given choice of $\mathcal{N}$ we can decide in polynomial time whether it is satisfiable or not.

Now, given that our guess \Nguess{} is fixed, we need to find strategies $\b{x}^1, \ldots, \b{x}^c$ in such a way that every $N \in \Nguess$ is indeed a nonlosing set. This gives the following formulation of the problem.
\begin{equation}\label{prog:cg1}
\begin{array}{ll@{}ll}
\text{find}  & \b{x}^1, \ldots, \b{x}^c &  &\\
\text{subject to}& x^j_i \geq 0  & &\forall i, j: i\in[k], j\in[c] \\
&                  \sum_{i\in[k]} x^j_{i} \leq n         & &\forall j \in [c]\\
&	\text{for some } i\in N \text{ we have } \utilitya{\b{x}^i}{\b{y}} \geq u & & \forall \b{y} \in \puresetb{}, \forall N \in \Nguess
\end{array}\end{equation}
Program~\ref{prog:cg1} is in fact the generalization of Program~\ref{program:two} to the case of \cmixedstrategies{c}. Clearly, the last constraint in its current form is not linear. We showed how Program~\ref{program:two} can be decomposed into a set of convex polytopes and how each of them can be solved in polynomial time via the definition of critical tuples. We follow a similar approach and give a generalized definition of critical tuples.

\begin{definition}[Critical tuples]
	Consider a tuple $\b{W} = (W_1, \ldots, W_k)$ where each $W_i$ is a subset of $[c]$. We call $\b{W}$ a {\em critical tuple} if and only if for some $N \in \Nguess$ we have
	\begin{equation*}
		\sum_{i: j \in W_i } w_i < u \qquad \qquad \forall j\in N.
	\end{equation*}
\end{definition}

Consider a tuple $\b{W} = (W_1, \ldots, W_k)$. Fix a strategy $\b{y}$ of player 2, and assume $\b{x}^1, \ldots, \b{x}^c$ are constructed in such a way that $x^j_i > y_i$ iff $j \in W_i$. In other words, $W_i$ is the set of indices of the strategies that win the $i$th battlefield against $\b{y}$. Now $\b{W}$ is a critical tuple iff $\b{x}^1, \ldots, \b{x}^c$ do not satisfy \Nguess{}. More precisely, $\b{W}$ is a critical tuple iff all of the strategies in one of the minimal non-losing sets of \Nguess{} lose (i.e., get a payoff of less than $u$) against $\b{y}$.

Before describing how we can use critical tuples in rewriting Program~\ref{prog:cg1}, we need to know precisely how for each $i\in[k]$, the values of $x^1_i, x^2_i, \ldots, x^c_i$ compare to each other. To that end, we define a {\em configuration} $\b{G}$ to be a vector of $k$ matrices $G_1, \ldots, G_k$ which we call {\em partial configurations}, where for any $i \in [k]$, and for any $j_1, j_2 \in [c]$, the value of $G_i(j_1, j_2)$ is $\leqc$ if $x^{j_1}_i \leq x^{j_2}_i$ and it is $\geqc$ otherwise.

Clearly if we fix the configuration \b{G} of strategies a priori, it is possible to ensure the found strategies comply with it via $O(kc^2)$ linear constraints. It suffices to have one constraint for every $G_i(j_1, j_2)$. The linear program below shows that if the configuration is fixed, we can even rewrite Program~\ref{prog:cg1} as a linear program. For configuration $\b{G}$ and critical tuple $\b{W}$, define $z_i(\b{G}, \b{W}) := \argmax_{j: j \not\in W_i}{x^j_i}$. Note that it is crucial that $z_i(\b{G}, \b{W})$ is solely a function of $\b{G}$ and $\b{W}$ (and not the actual strategies $\b{x}^1, \ldots, \b{x}^c$) so long as strategies $\b{x}^1, \ldots, \b{x}^c$ comply with \b{G}.
\begin{equation}\label{lp:cg2}
\begin{array}{ll@{}ll}
\text{find}  & \b{x}^1, \ldots, \b{x}^c &  &\\
\text{subject to}& x^j_i \geq 0  & &\forall i, j: i\in[k], j\in[c] \\
&                  \sum_{i\in[k]} x^j_{i} \leq n         & &\forall j \in [c]\\
&  \text{ensure that }\b{x}^1, \ldots, \b{x}^c \text{ comply with } \b{G}\\
&	\sum_{i \in [k]} x^{z_i(\b{G}, \b{W})}_i > m & & \text{for every critical tuple $\b{W}=(W_1, \ldots, W_k)$}
\end{array}\end{equation}
The intuition behind the last constraint is that for player 2 to be able to enforce any critical tuple to happen, he needs to have more than $m$ troops. This is a sufficient and necessary condition to ensure that all of the non-losing sets in \Nguess{} are satisfied. Observe that if our guess for  \Nguess{} is wrong, for every configuration \b{G}, LP~\ref{lp:cg2} will be infeasible.

The takeaway from LP~\ref{lp:cg2}, is that if we fix the configuration \b{G}, the solution space becomes convex and can be described via linear constraints. Although there may be exponentially many critical tuples \b{W} and, thus, exponentially many constraints in LP~\ref{lp:cg2}, one can design an appropriate separating oracle and use ellipsoid method to solve it in polynomial time using a dynamic programming approach similar to the one used for solving LP~\ref{program:four}. Therefore one algorithm to solve the problem is to iterate over all possible configurations, solve LP~\ref{lp:cg2} for each and report the best solution. This gives us an exponential time algorithm with running time $(c!)^{k}$. Recall that we followed a rather similar approach for the case of \cmixedstrategies{2} and showed that the solution space can be decomposed to $2^k$ convex polytopes. To overcome this when $c=2$, we showed how it is possible to only consider polynomially many such polytopes as far as the problem is concerned. We follow the same approach here.

Let us start with uniform setting where all the battlefield weights are the same. Since the players are both indifferent to the battlefields in the uniform setting, instead of individually fixing the partial configuration of each battlefield, it only suffices to know the number of battlefields having a particular partial configuration. Note that each partial configuration, e.g., $G_1$, is determined uniquely if we are given the sorted order of $x^1_1, x^2_1, \ldots, x^c_1$. Therefore, there are a total number of $c! = O(1)$ possibilities for each partial configuration. This means, if we only count the number of battlefields having each partial configuration, we reduce the total number of considered configurations down to a polynomial ($O(k^{c!})$ to be more precise). This gives a polynomial time algorithm for the uniform case.

Generalization to the case where the weights are not are equal follows from similar ideas described in Section~\ref{section:nonuniformc=2}. That is, we can consider the $\delta$-uniform variant of the game for a relatively smaller error threshold than $\epsilon$ and group battlefields into buckets with each bucket containing the battlefields of the same updated weight. Then in each bucket, similar to the uniform case, it suffices to only count the number of battlefields of each configuration. Recall, however, that the crucial property for this idea to work was to show that for each bucket, it suffices to check only a constant number of different possibilities. Naively, the number of possibilities for each bucket is $k_i^{O(1)}$. Therefore, if $k_i \leq 1/\delta$ it is bounded by $O(1)$ as desired. However, if $k_i > 1/\delta$, the idea, similar to Section~\ref{section:nonuniformc=2}, is to discretize the number of battlefields having one particular partial configuration of $G_j$ to be in set $\{0, \delta k_i, 2\delta k_i, \ldots, k_i\}$. This way, a similar argument as in Lemma~\ref{lemma:notlosemuch} shows that making a mistake on only a $\delta$ fraction of buckets with at least $1/\delta$ battlefields is negligible so long as our goal is to guarantee a utility of $(1-\epsilon)u  $, resulting in the following theorem.

\begin{theorem}\label{thm:nonuniform-c}
	Let $\epsilon > 0$ be an arbitrarily small constant. Given that player 1 in an instance of continuous Colonel Blotto has a \maxmin{u}{p} \cmixedstrategy{c} for a constant $c$, and given that $n \geq (1+\epsilon)m/\lfloor (1-p)c+1 \rfloor$, there exists an algorithm that finds a \maxmin{(1-\epsilon)u}{p} \cmixedstrategy{c} for him in polynomial time.
\end{theorem}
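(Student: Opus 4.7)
The plan is to lift the three-layered algorithm from the $c=2$ case (Algorithm~\ref{alg:continuous2general}) to arbitrary constant $c$. Outer loop: by Theorem~\ref{thm:cstrategyprobabilities} there are only $O(1)$ candidate profiles, so I fix one; and since a minimal non-losing family $\mathcal{N}\subseteq 2^{[c]}$ lives in a space of size $2^{2^c}=O(1)$, I enumerate all such guesses, with Lemma~\ref{lem:constructprofile} determining the profile for each. Inner task: for each fixed $\mathcal{N}$ we must approximately decide feasibility of LP~\ref{lp:cg2}, which becomes a legitimate linear program once the full configuration $\b{G}=(G_1,\dots,G_k)$ is fixed. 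Its separation oracle is a natural $c$-dimensional extension of the dynamic program in the proof of Theorem~\ref{theorem:one}: a state $D(j,m',\upsilon_1,\dots,\upsilon_c)$ asks whether player~$2$ can spend at most $m'$ troops on the first $j$ battlefields while keeping the utility of every $\b{x}^\ell$ below $\upsilon_\ell$; $c$ being constant keeps the table polynomial in size.

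The bulk of the work is to enumerate only a $\poly(k)$ family of configurations whose union contains a near-optimal solution. Following Section~\ref{section:nonuniformc=2} essentially verbatim, I cap weights at $u$, perform $\delta$-uniform rounding for an appropriately small $\delta=\Theta(\epsilon^{O(1)})$, discard battlefields of weight below $\delta u/k$ (total weight lost at most $\delta u$), and group the rest into $O_\delta(\log u)$ buckets of equal weight. Within a bucket the partial configuration $G_i$ only takes $c!=O(1)$ values (the sorted order of $x^1_i,\dots,x^c_i$), and because battlefields inside a bucket are interchangeable we only need to record the \emph{count} of each of these $c!$ types. For a bucket with $k_j\le 1/\delta$ battlefields we enumerate all $O(1)$ count vectors; for $k_j>1/\delta$ we discretize each coordinate to multiples of $\lfloor\delta k_j\rfloor$, again $O(1)$ choices per bucket. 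The Cartesian product over $O_\delta(\log u)$ buckets yields $O(1)^{O(\log u)}=\poly(u,k)$ configurations in total. A straightforward $c$-strategy analogue of Lemma~\ref{lemma:approx} (zero out, in every pure strategy simultaneously, the battlefields where our guess disagrees with the optimum) controls the utility loss by $\delta\alpha+\delta u$, where $\alpha$ denotes the total weight of battlefields in heavy buckets.

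The hard part is the $c$-strategy analogue of Lemma~\ref{lemma:notlosemuch}: I must show $\alpha=O(u/\epsilon)$ by exhibiting a \maxmin{\Omega(\epsilon\alpha)}{p} \cmixedstrategy{c} under the hypothesis $n\ge(1+\epsilon)m/s$ with $s=\lfloor(1-p)c+1\rfloor$. The construction is a natural generalization of the $c=2$ split: inside every heavy bucket $B_i$ discard up to $c-1$ stragglers and partition the rest into $c$ equal parts $B^{(1)}_i,\dots,B^{(c)}_i$, costing at most a $c\delta$-fraction of $\alpha$. Let $\alpha'$ be the total remaining weight. Define $c$ pure strategies $\b{x}^1,\dots,\b{x}^c$, where $\b{x}^j$ places $cw_in/\alpha'$ troops on each battlefield in $\bigcup_i B^{(j)}_i$ and $0$ elsewhere; the total is exactly $n$. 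The mixed strategy plays each $\b{x}^j$ with probability $1/c$.

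To verify, fix any pure reply $\b{y}$ of player~2 and let $m'_j=\sum_{i\in\bigcup_i B^{(j)}_i}y_i$, so $\sum_j m'_j\le m$. Because the troops of $\b{x}^j$ are proportional to battlefield weights on its assigned parts, its utility is at least $(\alpha'/c)(1-m'_j/n)$. Call $\b{x}^j$ \emph{safe} if $m'_j\le n(1-\epsilon/(1+\epsilon))$; a safe strategy then wins utility at least $(\alpha'/c)\cdot\epsilon/(1+\epsilon)=\Omega(\epsilon\alpha)$. Strict counting of unsafe strategies gives
\begin{equation*}
|\text{unsafe}|\cdot n\Bigl(1-\tfrac{\epsilon}{1+\epsilon}\Bigr)<\sum_{j\text{ unsafe}} m'_j\le m\le \frac{ns}{1+\epsilon},
\end{equation*}
so $|\text{unsafe}|<s/[(1+\epsilon)(1-\epsilon/(1+\epsilon))]=s$, which for integer counts means $|\text{unsafe}|\le s-1$. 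Thus at least $c-(s-1)=\lceil pc\rceil$ strategies are safe, and since each is played with probability $1/c$ the success probability is at least $\lceil pc\rceil/c\ge p$. Consequently $u=\Omega(\epsilon\alpha)$, i.e.\ $\alpha=O(u/\epsilon)$. Choosing $\delta$ so that $\delta\alpha$, $\delta u$, and the $\delta$-rounding error each contribute at most $\epsilon u/4$ closes the approximation analysis and yields a \maxmin{(1-\epsilon)u}{p} \cmixedstrategy{c} in polynomial time.
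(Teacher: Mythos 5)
Your proposal is correct and takes essentially the same approach as the paper: you guess the family of minimal non-losing sets with Lemma~\ref{lem:constructprofile} fixing the profile, enumerate configurations bucket-by-bucket after $\delta$-uniform rounding with per-bucket counts of the $c!$ partial configurations discretized to multiples of $\lfloor \delta k_j \rfloor$, solve LP~\ref{lp:cg2} by the ellipsoid method with a $c$-dimensional dynamic-programming separation oracle, and lower-bound $u$ via a $c$-strategy generalization of Lemma~\ref{lemma:notlosemuch} --- and your safe/unsafe counting showing that any reply of player 2 can suppress at most $s-1$ of the $c$ proportional pure strategies (so at least $\lceil pc\rceil$ succeed) is precisely the argument the paper only asserts. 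The one slip is cosmetic: after discarding battlefields of weight below $\delta u/k$, the number of surviving buckets is $O_\delta(\log k)$, not $O_\delta(\log u)$, so the configuration count is $\poly(k)$ rather than $\poly(u,k)$ (the latter would only be pseudo-polynomial); your own cutoff step already delivers the correct bound.
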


We remark that if $n \leq m/\lfloor (1-p)c+1 \rfloor$, then player 1 has no \maxmin{u}{p} strategy.

\section{Discrete Colonel Blotto}\label{sec:integral}
In this section we focus on the discrete variant of Colonel Blotto. Though similar in spirit to the continuous variant, discrete Colonel Blotto is an inherently different game (especially from a computational perspective) and requires different techniques.

\subsection{The Case of One Strategy} \label{sec:one-pure}
Similar to the continuous variant, we start with the case where the support size is bounded by 1. We showed how for the continuous case, it is possible to obtain an optimal \maxmin{u}{1} strategy by solving LP~\ref{lp:one}. Observe, however, that variable $x_i$ in LP~\ref{lp:one} denotes the number of troops in battlefield $i$; this implies that LP~\ref{lp:one} relies crucially on the fact that a fractional number of troops in a battlefield is allowed. As such, the same idea cannot be applied to the discrete case since the integer variant of LP~\ref{lp:one} is not necessarily solvable in polynomial time.

Prior algorithm of \cite{behnezhad2018battlefields} gives a 2-approximation for this problem, i.e., given that there exists a \maxmin{u}{1} strategy for player 1, they give a \maxmin{u/2}{1} strategy in polynomial time. We improve this result by obtaining an almost optimal solution in polynomial time.

\newcommand{\thmdiscretepure}{Given that player 1 has a \maxmin{u}{1} strategy, there exists a polynomial time algorithm that obtains a \maxmin{(1-\epsilon)u}{1} strategy of player 1 for any arbitrarily small constant $\epsilon > 0$.}

\begin{theorem}\label{thm:discretepure}
	\thmdiscretepure{}
\end{theorem}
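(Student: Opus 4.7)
The plan is to follow the heavy--light decomposition sketched in the results section, carefully addressing the pitfall illustrated by the worked example (Table at page 15), which shows that attacking the heavy and light battlefields independently overcounts the guaranteed utility.

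\textbf{Preprocessing.} First I would normalize the instance: cap each $w_i$ at $u$ (this is harmless since player~1 never needs to collect more than $u$) and round each weight down to the nearest power of $1+\delta$ for $\delta=\Theta(\epsilon)$. By Observation~\ref{obs:bfweightrounding} this costs only a $(1-\epsilon)$ multiplicative factor in $u$. Fix $\tau=\delta u$ and call a battlefield \emph{heavy} if its (rounded) weight exceeds $\tau$, and \emph{light} otherwise. Because every heavy weight lies in $(\delta u, u]$ and takes only values of the form $(1+\delta)^j$, the number of \emph{distinct} heavy weights is $O(\log_{1+\delta}(1/\delta))=O_\epsilon(1)$.

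\textbf{Enumerating heavy play.} Since battlefields within the same weight class are interchangeable, a pure strategy of player~1 on the heavy battlefields can be described by a sorted vector of troop counts per weight class; a simple bound shows that only polynomially many such ``canonical'' heavy strategies are worth considering (combined with a monotonicity/merging argument reducing the troop counts per class to polynomially many distinct values). For each candidate heavy strategy $\mathbf{x}^H$, I would enumerate the ``reasonable'' pure responses of player~2 on the heavy battlefields. The critical structural claim, mirroring the paper's informal remark, is that the number of such responses is bounded by a constant depending only on $\epsilon$: each reasonable response is characterized by a subset of heavy battlefields where player~2 either matches $\mathbf{x}^H$ or abstains, and after the weight rounding only $O_\epsilon(1)$ genuinely distinct such choices remain (responses that match a strict superset of heavy battlefields at higher cost are weakly dominated). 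Let $R_1,\dots,R_t$, $t=O_\epsilon(1)$, be these heavy responses, each leaving player~2 with a residual troop budget $m_j$ and having already given player~1 a heavy utility of $U_j$.

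\textbf{The light DP against a simultaneous weak adversary.} The core step is to pick a light-battlefield strategy $\mathbf{x}^L$ that, \emph{simultaneously against all $R_j$'s}, guarantees $U_j + (\text{light utility against best response using }m_j) \ge (1-\epsilon)u$. I would extend the weak-adversary greedy of \cite{behnezhad2018battlefields} to process light battlefields one at a time, making, for each battlefield, a local decision for player~2 in each of the $t$ parallel scenarios. This local decomposability is exactly the property that lets me design a DP whose state is $(i, \text{troops used so far by player~1}, (r_1,\dots,r_t))$, where $r_j$ is the residual utility player~1 still owes in scenario~$j$; transitions enumerate the number of player~1 troops on battlefield $i+1$ and apply the local weak-response rule in each scenario. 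Discretizing each $r_j$ to multiples of $\epsilon u/k$ keeps the state space polynomial (here $t=O_\epsilon(1)$ is crucial). Since every light battlefield has weight at most $\tau=\delta u$, the weak adversary's one-battlefield slack costs at most one additional $\delta u$ per scenario, absorbing into the overall $(1-\epsilon)$ guarantee.

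\textbf{Main obstacle.} The delicate step is the DP: I must justify (i)~that a constant number $t$ of heavy responses truly suffice, so the state vector $(r_1,\dots,r_t)$ has constant arity, and (ii)~that a weak adversary can be defined whose best response on the light side decomposes battlefield-by-battlefield \emph{simultaneously across all $t$ scenarios}, so the DP transitions remain local. The example on page~15 shows that a maximin-after-minimin approach on the two groups fails; this simultaneous DP is the mechanism that avoids that trap. Everything else---weight rounding, enumeration over $\mathbf{x}^H$, and the loss analysis---plugs in cleanly once this DP is in place, yielding the claimed $(1-\epsilon)u$-guarantee in polynomial time.
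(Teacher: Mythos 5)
Your architecture matches the paper's proof of Theorem~\ref{thm:discretepure} almost component for component: the $\delta$-uniform rounding (Observation~\ref{obs:bfweightrounding}), the heavy--light split with constantly many distinct heavy weights (Observation~\ref{obs:constantheavybfw}), the reduction of player 1's heavy play to polynomially many canonical strategies via the equal-troops-on-equal-weights normalization (Claim~\ref{cl:samewsametroops}, Observation~\ref{obs:puresetaheavypoly}), a constant number of relevant heavy responses for player 2, and a light-battlefield DP run simultaneously against all those responses. You also correctly diagnose why heavy and light cannot be decoupled. However, the step you yourself flag as the ``main obstacle'' is a genuine gap, and your DP as stated does not close it. The weak adversary's behavior on the light battlefields is a greedy knapsack: sort by ratio $w_i/x_i$ and win battlefields in that order until the residual budget $m_j$ is exhausted. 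This rule is inherently \emph{global} --- whether player 2 takes light battlefield $i$ in scenario $j$ depends on the ratios of all the other light battlefields and on the remaining budget --- so a transition that ``applies the local weak-response rule'' at battlefield $i$ is simply not well defined from your state $(i,\ \text{troops used},\ (r_1,\dots,r_t))$. The paper's mechanism for restoring locality is to enumerate, jointly with the heavy strategy $\mathbf{x}^h$, a per-scenario \emph{stopping battlefield} $b_j$ and \emph{stopping ratio} $r_j$ (the triplet $(\mathbf{x}^h,\mathbf{b},\mathbf{r})$; Claim~\ref{claim:polytriple} bounds the number of candidates by a polynomial). Once these thresholds are fixed, ``player 2 wins battlefield $i$ in scenario $j$ iff its ratio beats $r_j$'' is a local rule, and the correctness of the guess is certified \emph{inside} the DP by additional per-scenario coordinates $\omega_j$ tracking how much of player 2's budget $m_j$ is consumed by above-threshold battlefields, enforcing $0 \le m_j - \sum_{\text{ratio} \ge r_j} x_i < x_{b_j}$ together with the pinning constraint $x_{b_j} = w_{b_j}/r_j$ (Lemma~\ref{lem:findstrata}). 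Your state has neither the guessed thresholds nor any budget coordinates, so it cannot verify that the response you charge to the adversary in each scenario is what the greedy adversary would actually play; without that consistency check, the computed guarantee can rest on responses the adversary never makes --- precisely the overcounting failure your own worked example warns against, recurring now inside the light-side computation.

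A secondary soft spot: your justification that only $t=O_\epsilon(1)$ heavy responses matter (``supersets are weakly dominated'' plus rounding) is not the right argument and by itself yields only polynomially many choices per weight class, since the count of battlefields lost in a class ranges up to the class size. The paper's argument (Observation~\ref{obs:puresetbheavyconstant}) is quantitative: any response in which player 2 loses more than $2/\epsilon$ heavy battlefields already concedes utility $u$ on the heavy battlefields alone, so every relevant response is described, per heavy weight class, by how many battlefields of that class player 2 loses, with the entries summing to at most $2/\epsilon$ --- a constant-length vector over a constant alphabet. This fix is routine; the missing threshold-guessing machinery and the budget-tracking DP coordinates constitute the substantive gap in the proposal.
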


\paragraph{The algorithm in a nutshell.} The algorithm that we use to prove Theorem~\ref{thm:discretepure} is composed of three main steps. In step 1, we round down the battlefield weights to be powers of $(1+\epsilon/2)$, i.e., we consider the $(\epsilon/2)$-uniform variant of the game. The goal here is to reduce the number of distinct battlefield weights. As it was previously shown, by optimizing over this updated instance of the game, player 1 does not lose a considerable payoff on the original instance.

In step 2, we partition the battlefields into two subsets: {\em light} battlefields that have a weight of at most $(\epsilon/2) u$ and {\em heavy} battlefields that each has a weight of more than $(\epsilon/2) u$. Roughly speaking, the goal is to separate battlefields that have a high impact on the outcome of the game from the lower weight ones. We show that as a result of step 1, we can give a new representation for strategies of player 1 that reduces the total number of {\em partial strategies} of player 1 on heavy battlefields down to a polynomial. We further show that for any strategy of player 1 on heavy battlefields, player 2 has only a constant number of {\em valid} responses as far as the optimal solution is concerned. Importantly, bounding the number of responses of player 2 by a constant has a crucial role in solving the problem in polynomial time --- we will elaborate more on this in the next paragraph.

In step 3, we propose a {\em weaker adversary} than player 2. Roughly speaking, we assume that for any given strategy of player 1, the weaker adversary responds greedily on light battlefields (though we do not limit him on heavy battlefields). We show that optimizing player 1's strategy against this weaker adversary guarantees an acceptable payoff against the actual adversary (i.e., player 2). In brief, the main advantage of optimizing player 1's strategy against the weaker adversary is in that it allows us to exploit the more predictable greedy response of the opponent. Recall, however, that we do not limit the weaker adversary on heavy battlefields and, therefore, his response to a strategy that we give for player 1 may still come from a somewhat unpredictable function. However, step 2 guarantees that for every strategy of player 1, it suffices to consider only a constant number of responses of the weaker adversary. This allows us to use a dynamic program that has, roughly, the same number of dimensions as the number of strategies of the weaker adversary and solve the problem in polynomial time.

\paragraph{Basic structural properties.} Recall that a \maxmin{u}{1} strategy guarantees a payoff of $u$ against {\em any} strategy of player 2. Therefore, if $u > 0$, we need to have $n > m$; or otherwise, no matter what pure strategy the first player chooses, the second player can match the number of troops of the first player in all battlefields and win them all. Now, assuming that $n > m$, if there exists one battlefield with weight at least $u$, the first player can simply put all his troops in that battlefield and guarantee a payoff of at least $u$. Therefore we assume throughout the rest of the section that $n > m$ and that the maximum battlefield weight is less than $u$.

\paragraph{Step 1: Updating the battlefield weights.} We first {\em round down} the weight of each of the battlefields to be in set $\mathcal{W} = \{ 1, (1+\epsilon/2)^1, (1+\epsilon/2)^2, \ldots \}$ and denote the modified weights vector by $\b{w}'$. That is, we consider the $(\epsilon/2)$-uniform variant of the game (see Section~\ref{sec:continuous} and Observation~\ref{obs:bfweightrounding} for the formal definition).

\begin{corollary}[of Observation~\ref{obs:bfweightrounding}]\label{cor:bfweightrounding}
	For any $u'$, any \maxmin{u'}{1} strategy of the game instance \cbinstance{n}{m}{\b{w'}} with the updated weights is a \maxmin{(1-\epsilon/2)u'}{1} strategy of the original instance \cbinstance{n}{m}{\b{w}}.
\end{corollary}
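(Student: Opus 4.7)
The plan is to obtain this corollary as a direct specialization of Observation~\ref{obs:bfweightrounding}. The rounding performed in Step~1 of the algorithm, where each weight is rounded down to the nearest element of $\mathcal{W} = \{1, (1+\epsilon/2), (1+\epsilon/2)^2, \ldots\}$, is precisely the construction of the $\delta$-uniform variant with $\delta := \epsilon/2$. Observation~\ref{obs:bfweightrounding}, which has already been established, asserts that for every $p$ any \maxmin{u'}{p} strategy for the rounded instance \cbinstance{n}{m}{\b{w}'} is a \maxmin{(1-\delta)u'}{p} strategy for the original instance \cbinstance{n}{m}{\b{w}}.

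I would therefore simply substitute $\delta = \epsilon/2$ and $p = 1$ into the first half of Observation~\ref{obs:bfweightrounding}, which yields the statement of the corollary verbatim. The only thing to verify is that the weight vector $\b{w}'$ appearing in the corollary coincides with the $\b{w}'$ produced by the $(\epsilon/2)$-uniform construction used in the observation; this is immediate from the definition of Step~1. The underlying reason the argument works is the pointwise inequality $w'_i \leq w_i \leq (1+\epsilon/2)\, w'_i$ guaranteed by the rounding, which ensures that every battlefield outcome worth $w'_i$ under the rounded weights is worth at least $(1-\epsilon/2)\, w'_i$ under the original weights, and this carries over summand-by-summand to the utility of any fixed pure strategy.

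Since this is a one-line inference from a previously proved fact, there is no real obstacle. The only subtlety worth emphasizing is that $p = 1$ is preserved rather than degraded: Observation~\ref{obs:bfweightrounding} never discounts the probability parameter, it only scales the utility threshold by $(1-\delta)$. Consequently, a deterministic guarantee on the rounded instance remains a deterministic guarantee on the original instance, with the threshold dropping from $u'$ to $(1-\epsilon/2)u'$, which is exactly what the corollary claims.
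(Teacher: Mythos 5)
Your proposal is correct and is exactly the paper's argument: the corollary is stated as a direct instantiation of Observation~\ref{obs:bfweightrounding} with $\delta = \epsilon/2$ and $p = 1$, which is precisely what you do. Your added remarks — that the rounded vector $\b{w}'$ coincides with the $(\epsilon/2)$-uniform construction and that the probability parameter is untouched while only the utility threshold scales by $(1-\epsilon/2)$ — are accurate and consistent with the paper's proof of the observation itself.
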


Corollary~\ref{cor:bfweightrounding} implies that to obtain a $\maxmin{(1-\epsilon)u}{1}$ strategy on the original instance \cbinstance{n}{m}{\b{w}}, it suffices to find a \maxmin{(1-\epsilon/2)u}{1} strategy on instance \cbinstance{n}{m}{\b{w}'}. Therefore, from now on, we only focus on the instance with the updated weights and, for simplicity of notations, denote the updated weight of battlefield $i$ by $w_i$.

Updating the battlefield weights in the aforementioned way results in reducing the total number of distinct  weights down to $O(1/\epsilon \cdot \log u)$ (recall that the maximum weight among original weights was assumed to be at most $u$). We later show how this can be used to represent pure strategies of player 1 in a different way that results in a significantly fewer number of strategies.

\paragraph{Step 2: Partitioning the battlefields into heavy and light subsets.} We set a threshold $\tau = \epsilon u/2$ and partition the battlefields into two subsets of {\em heavy} battlefields with weights of at least $\tau$ and {\em light} battlefields with weights of less than $\tau$. Roughly speaking, for light battlefields, we can afford to be the weight of one battlefield away from the optimal strategy and remain $(1-\epsilon/2)$-competitive since the maximum weight among them is bounded by $\epsilon u / 2$. For the heavy battlefields, however, we need to be more careful as even one battlefield might have a huge impact on the outcome of the game. The idea is to significantly reduce the number of strategies of the players that have to be considered on heavy battlefields so that we can consider them all. Let us denote by $k_d$ the number of distinct battlefield weights; further, we denote by $k^h_d$ the number of distinct battlefield weights that are larger than $\tau$ (i.e., are heavy). The following observation bounds $k^h_d$ by a constant.

\begin{observation}\label{obs:constantheavybfw}
	The number of distinct heavy battlefield weights, $k^h_d$, is bounded by a constant.
\end{observation}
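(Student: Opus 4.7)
The plan is to combine the two normalisations performed so far, namely (i)~the upper bound $\wmax < u$ established in the ``basic structural properties'' paragraph, and (ii)~the rounding of every battlefield weight to the nearest (from below) power of $(1+\epsilon/2)$ carried out in Step~1. After these two modifications, every heavy battlefield weight lies in the set
\[
\mathcal{W}_h \;=\; \bigl\{(1+\epsilon/2)^j \,:\, j\in\mathbb{Z}_{\geq 0},\ \tau \leq (1+\epsilon/2)^j < u\bigr\},
\]
because on the one hand the rounding puts each weight in $\{1,(1+\epsilon/2),(1+\epsilon/2)^2,\ldots\}$, and on the other hand ``heavy'' means the weight is at least $\tau = \epsilon u/2$ while the weight cap guarantees it is strictly less than $u$.

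The next step is simply to count $|\mathcal{W}_h|$. The admissible exponents $j$ satisfy $(1+\epsilon/2)^j < u$ and $(1+\epsilon/2)^j \geq \epsilon u / 2$, which gives
\[
j \;<\; \log_{1+\epsilon/2}\!\bigl(2/\epsilon\bigr) + \log_{1+\epsilon/2}\!\bigl(\epsilon u/2\bigr)
\quad\text{and}\quad
j \;\geq\; \log_{1+\epsilon/2}\!\bigl(\epsilon u/2\bigr).
\]
Subtracting the two bounds, the number of admissible integer values of $j$ is at most
\[
\log_{1+\epsilon/2}\!\bigl(2/\epsilon\bigr) \;+\; 1 \;=\; \frac{\ln(2/\epsilon)}{\ln(1+\epsilon/2)} + 1.
\]

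The final step is to observe that $\epsilon$ is a fixed constant throughout the PTAS, so the right-hand side above is a constant (depending on $\epsilon$ but not on $n$, $m$, $k$, or $u$). Hence $k^h_d \leq |\mathcal{W}_h| = O(1)$, which is exactly the claim. I do not anticipate a genuine obstacle here: the observation is really just an arithmetic consequence of the geometric rounding combined with the fact that the ratio $\wmax/\tau$ is a constant. The only mildly delicate point is to make explicit that the assumption $\wmax < u$ is harmless (otherwise player~1 wins by piling all troops onto a single battlefield of weight at least $u$, as noted in the ``basic structural properties'' paragraph), so that the upper end of the range in which heavy weights can live is indeed $u$ rather than something arbitrarily large.
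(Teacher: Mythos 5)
Your proof is correct and matches the paper's argument exactly: both count the powers of $(1+\epsilon/2)$ lying in the interval $[\epsilon u/2,\, u]$, obtaining roughly $\log_{1+\epsilon/2}(2/\epsilon)$ values, which is a constant since $\epsilon$ is. Your additional remark justifying the assumption $\wmax < u$ is the same observation the paper makes in its ``basic structural properties'' paragraph.
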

\begin{proof}
	Recall that we assume no battlefield has weight more than $u$. Moreover, all heavy battlefields have weight at least $\epsilon u/2$. Since we updated the weight of all battlefields to be in set $\{ 1, (1+\epsilon/2)^1, (1+\epsilon/2)^2, \ldots \}$, it leaves only $\log_{1+\epsilon/2} (2/\epsilon) = \frac{\log (2/\epsilon)}{\log (1+\epsilon/2)} \leq 4/\epsilon \cdot \log (2/\epsilon)$ values in range $[\epsilon u/2, u]$ which is a constant number since $\epsilon$ is a constant.
\end{proof}

We show by Claim~\ref{cl:samewsametroops} that there always exists an optimal strategy that puts roughly the same number of troops in all battlefields of the same weight. Intuitively, the players are indifferent to battlefields of the same weights and therefore the first player has no incentive to put significantly more troops on one of them.

\begin{claim}\label{cl:samewsametroops}
	If player 1 has a \maxmin{u}{1} strategy, he also has a \maxmin{u}{1} strategy $\b{x}$ where for any two battlefields $i$ and $j$ with $w_i = w_j$, we have $|x_i - x_j| \leq 1$.
\end{claim}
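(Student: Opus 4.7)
The plan is to prove the claim by a single-step smoothing operation that strictly decreases a potential, combined with a knapsack-style exchange argument showing that the operation preserves the \maxmin{u}{1} property.

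First I will set up the elementary move. Given a \maxmin{u}{1} strategy $\b{x}$ and indices $i,j$ with $w_i=w_j=w$ and $x_i\ge x_j+2$, define $\b{x}'$ by $x'_i=x_i-1$, $x'_j=x_j+1$, and $x'_\ell=x_\ell$ for $\ell\ne i,j$. The core claim is that $\b{x}'$ is also \maxmin{u}{1}. Because $\utilitya{\b{x}'}{\b{y}'}=W-\utilityb{\b{x}'}{\b{y}'}$ where $W=\sum_\ell w_\ell$, it suffices to exhibit, for every pure best response $\b{y}'$ to $\b{x}'$, a pure response $\b{y}$ to $\b{x}$ that uses at most $m$ troops and satisfies $\utilityb{\b{x}}{\b{y}}\ge\utilityb{\b{x}'}{\b{y}'}$; then the minimum utility of player~1 can only rise from $\b{x}$ to $\b{x}'$.

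Next I will use the fact that in any pure best response $\b{y}$ one may assume $y_\ell\in\{0,x_\ell\}$ for each $\ell$ (values $0<y_\ell<x_\ell$ waste troops, and $y_\ell>x_\ell$ is suboptimal), so choosing a response is exactly a knapsack problem with item cost $x_\ell$ and item value $w_\ell$ and capacity $m$. Given $\b{y}'$, build $\b{y}$ by copying the selection on all $\ell\ne i,j$ and resolving the pair $\{i,j\}$ as follows: if $\b{y}'$ covers both, let $\b{y}$ cover both (cost $x_i+x_j=(x_i-1)+(x_j+1)$, value $2w$, so feasibility and value match); if $\b{y}'$ covers only $i$ (cost $x_i-1$), let $\b{y}$ cover only $j$ (cost $x_j\le x_i-1$ since $x_j\le x_i-2$, value $w$); if $\b{y}'$ covers only $j$ (cost $x_j+1$), let $\b{y}$ cover only $j$ (cost $x_j<x_j+1$, value $w$); if $\b{y}'$ covers neither, take $\b{y}$ to cover neither. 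In every case the constructed $\b{y}$ is feasible and yields weight at least as large as $\b{y}'$, which establishes the core claim.

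Finally I will iterate the move. The potential $\Phi(\b{x})=\sum_\ell x_\ell^2$ is a non-negative integer, and a single move decreases it by $x_i^2+x_j^2-(x_i-1)^2-(x_j+1)^2=2(x_i-x_j-1)\ge 2$, so the process terminates in at most $\Phi(\b{x})\le n^2$ steps. At termination no pair of same-weight battlefields has $x_i\ge x_j+2$, which is exactly the conclusion $|x_i-x_j|\le 1$. I expect the only delicate point to be the knapsack mapping on the pair $\{i,j\}$; everything else is bookkeeping (reducing WLOG to $y_\ell\in\{0,x_\ell\}$, checking the four cases, and the potential argument).
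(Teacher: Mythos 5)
Your proof is correct and follows essentially the same route as the paper: the identical one-troop rebalancing move ($x'_i = x_i - 1$, $x'_j = x_j + 1$ whenever $x_i \geq x_j + 2$), iterated until all same-weight pairs differ by at most one. The paper dismisses the key step with ``it is easy to confirm'' and says ``inductively update''; your knapsack-style four-case exchange argument (mapping each response against $\b{x}'$ to an at-least-as-good, no-costlier response against $\b{x}$) and the potential $\Phi(\b{x}) = \sum_\ell x_\ell^2$ for termination simply make those two implicit steps rigorous, and both check out.
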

\begin{proof}
	Consider two battlefields $j$ and $j'$ with the same weights and assume that for \maxmin{u}{1} strategy $\b{\hat{x}}$, we have $\hat{x}_{j} \geq \hat{x}_{j'} + 2$. We construct another strategy $\b{x}'$ in the following way: $x'_j = \hat{x}_{j}-1$, $x'_{j'} = \hat{x}_{j'} + 1$, and $x'_i = \hat{x}_i$ for all other battlefields. It is easy to confirm that $\b{x}'$ is also a \maxmin{u}{1} strategy. Applying the same function to $\b{x}'$, we can inductively update the strategy to finally achieve a strategy \maxmin{u}{1} strategy $\b{x}$ for which we have $|x_i - x_j| \geq 1$ for all battlefields of the same weight.
\end{proof}

Now, the idea is to represent the pure strategies of player 1 differently. That is, instead of specifying the number of troops that are put on each battlefield, we represent each valid pure strategy of player 1 by the number of troops that are put on each battlefield weight. By Claim~\ref{cl:samewsametroops}, this uniquely determines an optimal strategy without loss of generality. Therefore the dimension of each pure strategies of player 1 is now changed from $k$ to $O(1/\epsilon \cdot \log u)$. Note that this is not sufficient to solve the problem in polynomial time since the number of pure strategies of player 1 can still be up to $n^{O(1/\epsilon \cdot \log u)}$. However, we show how this can bound the number of pure strategies of player 1 on heavy battlefields.

We define a {\em partial strategy} of player 1 on heavy battlefields to be a strategy in the new representation that only specifies how player 1 plays on heavy battlefields. Formally, each partial strategy of player 1 on heavy battlefields can be represented by a vector $\b{x}^h$ of length $k^h_d$, with non-negative entries in $[n]$ that sum up to at most $n$. Let us denote the set of all such strategies by \puresetaheavy{}. The following observation bounds the total number of strategies in \puresetaheavy{}.

\begin{observation}\label{obs:puresetaheavypoly}
	$|\puresetaheavy{}| \leq n^{\poly(1/\epsilon)}$.
\end{observation}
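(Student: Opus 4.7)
The plan is a direct counting argument that exploits the compact representation of heavy partial strategies justified by the earlier structural results. By definition, a partial strategy $\b{x}^h \in \puresetaheavy{}$ is a vector of length $k^h_d$ with non-negative integer entries summing to at most $n$, where the $j$th coordinate records the total number of troops placed on battlefields whose (updated) weight equals the $j$th distinct heavy weight value. Claim~\ref{cl:samewsametroops} together with the $(1+\epsilon/2)$-rounding of Step~1 is what licenses this bucket-by-weight encoding in the first place: within a weight class the allocation is forced (up to a harmless permutation) by the $|x_i - x_j|\le 1$ property.

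Given the encoding, I would bound $|\puresetaheavy{}|$ by the crudest possible enumeration: each of the $k^h_d$ coordinates ranges over the set $\{0,1,\ldots,n\}$, so
\begin{equation*}
|\puresetaheavy{}| \;\le\; (n+1)^{k^h_d}.
\end{equation*}
(A stars-and-bars count would give the sharper $\binom{n+k^h_d}{k^h_d}$, but no sharpness is needed.) Then I would invoke Observation~\ref{obs:constantheavybfw}, which yields $k^h_d \le \tfrac{4}{\epsilon}\log(2/\epsilon) = \poly(1/\epsilon)$. Substituting and absorbing constants into the exponent (for $n\ge 2$; the statement is trivial otherwise) gives $(n+1)^{k^h_d} \le n^{O(k^h_d)} \le n^{\poly(1/\epsilon)}$, which is exactly the claimed bound.

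There is essentially no technical obstacle in the proof itself; the work is all upstream. The nontrivial content lies in (i) the weight-rounding of Step~1, which collapses the number of distinct heavy weights to $\poly(1/\epsilon)$ via Observation~\ref{obs:constantheavybfw}, and (ii) Claim~\ref{cl:samewsametroops}, which legitimizes collapsing the per-battlefield representation to a per-weight-class representation without loss of generality for \maxmin{u}{1} strategies. Once those are in hand, the size bound is immediate from counting integer vectors of bounded length and bounded sum.
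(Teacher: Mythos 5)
Your proof is correct and follows essentially the same route as the paper: both bound $|\puresetaheavy{}|$ by noting each partial strategy is a vector of length $k^h_d \leq \poly(1/\epsilon)$ (via Observation~\ref{obs:constantheavybfw}) with integer entries in $\{0,\ldots,n\}$, yielding $(n+1)^{k^h_d} \leq n^{\poly(1/\epsilon)}$. Your added remarks about Claim~\ref{cl:samewsametroops} and the weight rounding justifying the encoding are accurate upstream context but not part of the counting argument itself.
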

\begin{proof}
	Since by Observation~\ref{obs:constantheavybfw}, $k^h_d \leq \poly(1/\epsilon)$, each strategy in \puresetaheavy{} is a vector of length at most $\poly(1/\epsilon)$. This, combined with the fact that each entry is an integer between 0 and $n$, leads to the desired bound.
\end{proof}

A similar analysis can show that the strategies of player 2 can also be bounded by a polynomial on heavy battlefields; however, for technical reasons that we will elaborate on later, it is crucial to further bound the number of pure strategies of player 2 on heavy battlefields by a constant. The key idea here, is that to prevent player 1 from achieving a payoff of $u$, player 2 can only lose in at most $2/\epsilon$ heavy battlefields. Similar to that of player 1, we represent each partial strategy of player 2 on heavy battlefields by a vector of length $k^h_d$. However, instead of specifying the number of troops that player 2  puts on each battlefield weight, we only specify how many battlefields of each weight player 2 loses in. Therefore, formally, each partial strategy of player 2 on heavy battlefields can be represented by a vector of length $k^h_d$ where each entry is a non-negative integer and all entries sum up to at most $2/\epsilon$. Denoting the set of all these partial strategies by $\puresetbheavy{}$, we can bound its size to be a constant.

\begin{observation}\label{obs:puresetbheavyconstant}
	$|\puresetbheavy{}| \leq O(1)$.
\end{observation}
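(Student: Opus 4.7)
The plan is a direct counting argument. By definition, a partial strategy of player~2 on heavy battlefields is represented as a vector $\b{y}^h = (y_1^h, \ldots, y_{k^h_d}^h)$ of $k^h_d$ non-negative integers whose sum is at most $2/\epsilon$. Thus $|\puresetbheavy{}|$ equals the number of such vectors, and the task reduces to showing that this count is bounded by a constant independent of the input size.

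First, I would invoke Observation~\ref{obs:constantheavybfw}, which already gives $k^h_d \leq 4/\epsilon \cdot \log(2/\epsilon) = O(1)$ since $\epsilon$ is a constant. Second, I would use the standard stars-and-bars formula: the number of non-negative integer tuples $(y_1^h,\ldots,y_{k^h_d}^h)$ satisfying $\sum_{j} y_j^h \leq 2/\epsilon$ is
\begin{equation*}
\binom{\lfloor 2/\epsilon \rfloor + k^h_d}{k^h_d},
\end{equation*}
obtained by introducing a slack variable for the inequality and counting compositions. Since both $2/\epsilon$ and $k^h_d$ depend only on $\epsilon$, this binomial coefficient is a constant, completing the proof.

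The only subtle point — which is really a matter of justifying the definition of $\puresetbheavy{}$ rather than of proving the observation itself — is the claim, from the paragraph preceding the statement, that any ``reasonable'' response of player~2 on heavy battlefields can lose on at most $2/\epsilon$ of them. This follows because each heavy battlefield has weight at least $\tau = \epsilon u/2$, so losing on more than $2/\epsilon$ heavy battlefields would already hand player~1 a payoff exceeding $u$, making such responses irrelevant for an adversary trying to keep player~1's utility below the target. Given this definitional convention, the observation itself is an immediate stars-and-bars count, and I do not anticipate any real obstacle.
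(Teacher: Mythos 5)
Your proof is correct and takes essentially the same approach as the paper: both bound the dimension $k^h_d$ by a constant via Observation~\ref{obs:constantheavybfw} and then directly count the vectors of non-negative integers summing to at most $2/\epsilon$, concluding the count is constant for constant $\epsilon$. Your exact stars-and-bars count $\binom{\lfloor 2/\epsilon \rfloor + k^h_d}{k^h_d}$ is merely a sharper form of the paper's crude bound $\poly(1/\epsilon)^{2/\epsilon+1}$, and your side remark justifying the $2/\epsilon$ cap (losing more than $2/\epsilon$ heavy battlefields, each of weight at least $\tau = \epsilon u/2$, already hands player 1 a payoff of at least $u$) matches the paper's definitional justification given just before the observation.
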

\begin{proof}
	By Observation~\ref{obs:constantheavybfw}, the vectors in $\puresetbheavy{}$ have $k^h_d \leq \poly(1/\epsilon)$ dimensions and each entry is a number in $[0, 2/\epsilon]$ therefore there are at most $\poly(1/\epsilon)^{2/\epsilon+1}$ such vectors which is $O(1)$ since $\epsilon$ is a constant.
\end{proof}

\paragraph{Step 3: Solving the problem against a weaker adversary.} Given a strategy $\b{x}$ of player 1, the second player's best response is a strategy that maximizes his payoff. That is, player 2 seeks to find a strategy $\b{y} \in \puresetb$ such that $\utilityb{\b{x}}{\b{y}}$, which is the weight of the battlefields in which $\b{y}$ puts more troops than $\b{x}$, is maximized. This is precisely equivalent to the following knapsack problem: The knapsack has capacity $m$ (the number of troops of player 2) and for any $i \in [k]$, there exists an item with weight $x_i$ and value $w_i$. Indeed one of the main challenges in finding an optimal \maxmin{u}{1} strategy for player 1 is in that the second player's best response problem is a rather complicated function that makes it hard for the first player to optimize his strategy against. To resolve this issue, we optimize player 1's strategy against a {\em weaker adversary} than player 2. That is, we assume that the second player, instead of solving the aforementioned knapsack problem optimally, follows a simpler and more predictable algorithm. Note that there is a trade-off on the best-response algorithm that we fix for the weaker adversary. If it is too simplified, the strategy that we find for player 1 against it cannot perform well against player 2 who best-responds optimally; and if it is not simplified enough, there is no gain in considering the weaker adversary instead of player 2. To balance this, we allow the weak adversary to try all possible strategies on heavy battlefields, but only allow him to combine it with a greedy algorithm on light battlefields; this process is formalized in Algorithm~\ref{alg:weakresponse}.\footnote{For ease of exposition, we do not explicitly mention how to sort the battlefields when two have the same ratio in  Algorithm~\ref{alg:weakresponse}. This can be simply handled by assuming that the battlefield with a lower index is preferred by the weaker adversary.} As we show in the rest of this section, while being only $\epsilon$ away from the optimal best-response algorithm, it becomes possible to compute an optimal strategy of player 1 against the weak adversary who responds by Algorithm~\ref{alg:weakresponse} in polynomial time.

\begin{algorithm}
	\caption{Weaker adversary's best response algorithm.}
	\label{alg:weakresponse}
	\begin{algorithmic}[1]
		\Statex \textbf{Input:} a strategy $\b{x} = (x_1, \ldots, x_k)$ of player 1.
		\Statex \textbf{Output:} a strategy $\b{y} = (y_1, \ldots, y_k)$ of player 2.
		\State For every battlefield $i$, denote its ratio $r_i$ to be $w_i / x_i$.
		\State Let $\sigma(i)$ denote the index of the $i$th \underline{light} battlefield with the highest ratio.
		\State $\b{y} \gets (0, \ldots, 0)$
		\For {every possible strategy $\b{y}'$ of player 2 that puts non-zero troops only on heavy battlefields}
			\State $i \gets 1$
			\While {$m - \sum_i y'_i \geq x_{\sigma(i)}$ and $i \leq k$} \Comment{\textit{Play greedily on light battlefields based on their ratios.}}
				\State $y'_{\sigma(i)} \gets x_{\sigma(i)}$
				\State $i \gets i+1$
			\EndWhile
			\IIf {$\utilityb{\b{x}}{\b{y}'} \geq \utilityb{\b{x}}{\b{y}}$} update $\b{y} \gets \b{y}'$
		\EndFor
		\State \Return $\b{y}$
	\end{algorithmic}	
\end{algorithm}

\begin{lemma}\label{lem:weakisgood}
	For any strategy $\b{x}$ of player 1, let $\b{y}$ be the optimal best-response of player 2 and let $\tilde{\b{y}}$ be the strategy obtained by Algorithm~\ref{alg:weakresponse}. We have $\utilityb{\b{x}}{\tilde{\b{y}}} \geq \utilityb{\b{x}}{\b{y}} - \frac{\epsilon u}{2}$.
\end{lemma}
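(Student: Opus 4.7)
The plan is to reduce player 2's best-response problem to 0/1 knapsack and then exploit the fact that Algorithm~\ref{alg:weakresponse} enumerates every placement on the heavy battlefields while applying a near-optimal greedy rule on the light ones. First I would observe that WLOG the optimal response $\b{y}$ is \emph{canonical}, meaning $y_i \in \{0,x_i\}$ for every $i$: since player 2 wins battlefield $i$ iff $y_i \geq x_i$, any troops above $x_i$ are wasted, and any $0 < y_i < x_i$ can be removed and reused elsewhere. Viewed this way, choosing $\b{y}$ is exactly the 0/1 knapsack problem in which item $i$ has cost $x_i$, value $w_i$, and the knapsack has capacity $m$.

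Next, let $H,L$ denote the heavy and light battlefields, write $\b{y}^H = \b{y}|_H$, $\b{y}^L = \b{y}|_L$, and set $m^H = \sum_{i \in H} y_i$ and $m^L = m - m^H \geq 0$. Because Algorithm~\ref{alg:weakresponse}'s outer loop ranges over every placement on $H$, there is one iteration in which the initial $\b{y}'$ equals $\b{y}^H$ on $H$ and zero on $L$; in that iteration the algorithm greedily inserts light items in decreasing order of ratio $w_i/x_i$ until the residual budget $m^L$ is exhausted. I would then apply the classical 0/1 knapsack greedy bound restricted to the light items: let $i^\star$ be the first light item skipped by the greedy (if no skip occurs, greedy is exact). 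Comparing greedy against the fractional relaxation, which is optimal for fractional knapsack and upper-bounds the integer optimum, the greedy value is at least
\[
\mathrm{OPT}_{\text{light},\,m^L} \;-\; w_{i^\star},
\]
where $\mathrm{OPT}_{\text{light},\,m^L}$ is the best value achievable on $L$ with budget $m^L$. By optimality of $\b{y}$, spending $m^L$ troops on $L$ cannot beat $\b{y}^L$, so $\mathrm{OPT}_{\text{light},\,m^L} \geq \utilityb{\b{x}|_L}{\b{y}^L}$, and since $i^\star \in L$ we have $w_{i^\star} < \tau = \epsilon u/2$.

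Combining these pieces, in this single iteration the produced response achieves total value at least
\[
\utilityb{\b{x}|_H}{\b{y}^H} + \utilityb{\b{x}|_L}{\b{y}^L} - \tfrac{\epsilon u}{2} \;=\; \utilityb{\b{x}}{\b{y}} - \tfrac{\epsilon u}{2},
\]
and since $\tilde{\b{y}}$ is the best response taken across all iterations, $\utilityb{\b{x}}{\tilde{\b{y}}}$ is no smaller, giving the claim. The main obstacle will be to state the greedy-knapsack estimate cleanly: one must argue that the \emph{first} skipped item is indeed a light battlefield (which is immediate because heavy items are fixed outside the greedy loop and thus never ``skipped'' by it), invoke the fractional relaxation to relate greedy's value to the integer optimum, and handle the degenerate case in which greedy exhausts every light item before running out of budget. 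Everything else is bookkeeping.
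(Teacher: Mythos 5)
Your proposal is correct and takes essentially the same approach as the paper: the paper's proof also fixes the iteration of Algorithm~\ref{alg:weakresponse} whose heavy placement coincides with that of the optimal response $\b{y}$, views the light battlefields as a 0/1 knapsack with costs $x_i$, values $w_i$, and the residual budget as capacity, and invokes the classical greedy-knapsack guarantee (additive loss at most the maximum item value, which is below $\tau = \epsilon u/2$ since all such items are light). The details you add --- restricting to canonical responses $y_i \in \{0, x_i\}$ and passing through the fractional relaxation to justify the greedy bound --- are precisely the standard facts the paper uses implicitly.
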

\begin{proof}
	Consider the iteration of the for loop in Algorithm~\ref{alg:weakresponse} where the chosen partial strategy $\b{y}'$ over the heavy battlefields is exactly equal to that of the optimal best-response $\b{y}$ of player 2. We claim that after this strategy is combined with the greedy algorithm over light battlefields to obtain a potential strategy $\b{p}$, it provides a payoff of at least $\utilityb{\b{x}}{\b{y}} - \epsilon u / 2$. Note that this is sufficient to prove the lemma since the returned solution $\b{y}$ guarantees a payoff against $\b{x}$ that is not less than that of $\b{p}$.
	
	Observe that we can safely ignore heavy battlefields since both $\b{y}$ and $\b{p}$ choose similar strategies over them. It only suffices to guarantee that the difference over light battlefields is at most $\epsilon u /2$. Let us denote by $m'$ the number of troops that are left for player 2 over light battlefields. The optimal best-response is equivalent to the solution of the following knapsack problem: The knapsack has capacity $m'$, and for each light battlefield $i$, there is an item with cost $x_i$ (since player 2 has to put at least $x_i$ troops to win it) and value $w_i$. Observe that the greedy approach of the weaker-adversary on light battlefields, is equivalent to a greedy algorithm for this knapsack problem where items are sorted based on their value per weight (i.e., $w_i / x_i$) and chosen greedily. This is in fact, the well-known greedy algorithm of knapsack that is known to guarantee an additive error of up to the maximum item value. Since here we only consider light battlefields and the maximum item value, or equivalently, the maximum battlefield weight is at most $\epsilon u / 2$, we lose an additive error of at most $\epsilon u / 2$ compared to optimal strategy $\b{y}$, which concludes the proof.
\end{proof}

\begin{corollary}\label{cor:weakeradversary}
	Any strategy $\b{x}$ of player 1 that obtains a payoff of $u$ against the weaker adversary, obtains a payoff of at least $(1-\epsilon/2)u$ against all possible strategies of player 2.
\end{corollary}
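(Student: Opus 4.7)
The plan is to derive Corollary~\ref{cor:weakeradversary} as an essentially immediate consequence of Lemma~\ref{lem:weakisgood} together with the constant-sum structure of Colonel Blotto. Recall that, by definition, for every pair of pure strategies $(\b{x},\b{y})$ we have $\utilitya{\b{x}}{\b{y}} + \utilityb{\b{x}}{\b{y}} = W$, where $W := \sum_{i \in [k]} w_i$ is the total battlefield weight. Hence an upper bound on $\utilityb{\b{x}}{\b{y}}$ translates directly into a lower bound on $\utilitya{\b{x}}{\b{y}}$.

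First, I would fix a strategy $\b{x}$ of player 1 that, by hypothesis, guarantees a payoff of at least $u$ against the weaker adversary; that is, for the specific response $\tilde{\b{y}}$ produced by Algorithm~\ref{alg:weakresponse} on input $\b{x}$, we have $\utilitya{\b{x}}{\tilde{\b{y}}} \geq u$, or equivalently $\utilityb{\b{x}}{\tilde{\b{y}}} \leq W - u$. Then, for an arbitrary pure strategy $\b{y}$ of player~2, let $\b{y}^\star$ denote player~2's optimal best response to $\b{x}$, so that $\utilityb{\b{x}}{\b{y}} \leq \utilityb{\b{x}}{\b{y}^\star}$. Applying Lemma~\ref{lem:weakisgood} to $\b{y}^\star$ gives $\utilityb{\b{x}}{\tilde{\b{y}}} \geq \utilityb{\b{x}}{\b{y}^\star} - \tfrac{\epsilon u}{2}$, and rearranging yields
\[
\utilityb{\b{x}}{\b{y}} \;\leq\; \utilityb{\b{x}}{\b{y}^\star} \;\leq\; \utilityb{\b{x}}{\tilde{\b{y}}} + \tfrac{\epsilon u}{2} \;\leq\; (W - u) + \tfrac{\epsilon u}{2}.
\]
Using the constant-sum identity, this translates to $\utilitya{\b{x}}{\b{y}} = W - \utilityb{\b{x}}{\b{y}} \geq u - \tfrac{\epsilon u}{2} = (1 - \epsilon/2)\,u$, which is exactly the claimed bound. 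Since $\b{y}$ was arbitrary, the inequality holds against every pure strategy of player~2 (and therefore also against every mixed strategy, as the best response can always be taken pure).

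There is essentially no obstacle here beyond unpacking definitions: the whole content is in Lemma~\ref{lem:weakisgood}, which does the real work of comparing the greedy knapsack response on light battlefields to the optimal knapsack solution. The only subtlety worth flagging is the constant-sum step — one should note that although the tie-breaking rule gives ties to player~2, the partition $\{i: x_i > y_i\}$ and $\{i : x_i \leq y_i\}$ is still a partition of $[k]$, so $\utilitya{\b{x}}{\b{y}} + \utilityb{\b{x}}{\b{y}} = W$ holds exactly (not just approximately) for every $(\b{x},\b{y})$. With that observation in hand, the corollary is a one-line deduction.
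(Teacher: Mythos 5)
Your proof is correct and takes essentially the same route as the paper's own proof: both combine the exact constant-sum identity $\utilitya{\b{x}}{\b{y}} + \utilityb{\b{x}}{\b{y}} = \sum_{i} w_i$ with Lemma~\ref{lem:weakisgood} to convert the additive $\epsilon u/2$ loss of the weaker adversary into the claimed $(1-\epsilon/2)u$ guarantee against any response of player 2. Your additional remark that the identity holds exactly despite the tie-breaking rule, and your explicit reduction from mixed to pure responses, are fine clarifications of steps the paper leaves implicit.
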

\begin{proof}
	Since the game is constant sum with $\sum_i w_i$ being sum of utilities, the weaker adversary gets a payoff of at most $\sum_i w_i - u$ against $\b{x}$. Therefore since by Lemma~\ref{lem:weakisgood} the optimal best response of player 2 obtains a payoff of at most $\sum_i w_i - u + \epsilon u /2$. This means that the first player gets a payoff of at least $u - \epsilon u / 2 = (1-\epsilon / 2) u$ against {\em any} strategy of the second player.
\end{proof}

By Corollary~\ref{cor:weakeradversary}, to obtain the desired $\maxmin{(1-\epsilon/2)u}{1}$ strategy, it suffices to guarantee a payoff of $u$ against the weaker adversary. Note that such strategy is guaranteed to exist since existence of a $\maxmin{u}{1}$ strategy against player 2 is guaranteed.

\paragraph{A wrong approach.} We start with a wrong approach in guaranteeing a payoff of $u$ against the weaker adversary and later show how to fix its shortcomings. Observe that the weaker adversary exhaustively searches through all of his possible strategies among heavy battlefields and plays greedily on light battlefields. Therefore, the first player needs to approach heavy and light battlefields differently. One way of doing this is to fix the number of troops that each of the players will spend on heavy and light battlefields, which is bounded by $O(nm)$, and solve two instances independently. On light battlefields, since player 2 responds greedily, it is not hard to optimize the first player's strategy.  On heavy battlefields, the number of pairs of valid strategies of the players are bounded by a polynomial by Observations~\ref{obs:puresetaheavypoly} and \ref{obs:puresetbheavyconstant}. Therefore, player 1 can find his optimal strategy among them by exhaustively checking all of his strategies against all possible responses of player 2.

The problem with this approach, is that it is not possible to solve the subproblems on heavy and light battlefields independently. More precisely, the number of troops that player 2 puts on heavy and light battlefields is a direct function of the strategy of player 1. This implies that it is not possible to fix the number of troops of player 2 over heavy and light battlefields a priori, without specifying the exact strategy of player 1.

\paragraph{The correct approach.} To resolve the aforementioned problem, we do not fix the budget of player 2 on light and heavy battlefields beforehand. Instead, we first only fix the strategy $\b{x}^h \in \puresetaheavy$ of player 1 on heavy battlefields. (By Observation~\ref{obs:puresetaheavypoly} there are only polynomially many such strategies, therefore it is possible to try them all.) Next, note that player 2, by Observation~\ref{obs:puresetbheavyconstant}, has only a constant number of responses to $\b{x}^h$ that we would have to consider (other responses simply guarantee us a payoff of $u$ on heavy battlefields alone). Roughly speaking, while finding the optimal strategy of player 1 on light battlefields, we would have to consider all these responses of player 2 on heavy battlefields. That is, each response $\b{y}^h \in \puresetbheavy$ of player 2 on heavy battlefields uniquely determines (1) what payoff the players get on heavy battlefields, (2) how many troops is left for player 2 to play on light battlefields. The strategy that we find for player 1 on light battlefields, has to perform well against all of these responses.

In a thought experiment, consider the optimal strategy $\b{x}^\ell$ of player 1 on light battlefields, given that his strategy on heavy battlefields is fixed to be $\b{x}^h$. For every strategy $\b{y}^h$ of player 2 on heavy battlefields, the greedy algorithm of player 2 on light battlefields first sorts light battlefields based on their ratios $w_i / x^\ell_i$ and greedily wins battlefields of higher ratio. This means that for every response of player 2 on heavy battlefields, there is a unique light battlefield of highest ratio that player 2 loses in. Let us denote by $b_i$ the index of this battlefield for the $i$th response of player 2 in $\puresetbheavy$. We also denote by $r_i := w_{b_i} / x^\ell_{b_i}$ the ratio of battlefield $b_i$ based on $\b{x}^\ell$. The takeaway, here, is that given the optimal strategy $\b{x}^\ell$ of player 2 on light battlefields, we can uniquely determine vectors $\b{b} = (b_1, b_2, \ldots, b_{|\puresetbheavy|})$ and $\b{r} = (r_1, r_2, \ldots, r_{|\puresetbheavy|})$. We have no way of knowing these vectors without knowing $\b{x}^\ell$, however, we find the right value of them by checking all possible vectors. To achieve this, the first step is to show that there are only polynomially many options that we need to try.

\begin{claim}\label{claim:polytriple}
	There are only polynomially many valid triplets $(\b{x}^h, \b{b}, \b{r})$.
\end{claim}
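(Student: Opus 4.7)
The plan is to bound each of the three coordinates of the triplet separately, and then multiply.

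\textbf{Bounding the choices for $\b{x}^h$.} By Observation~\ref{obs:puresetaheavypoly}, we have $|\puresetaheavy| \leq n^{\poly(1/\epsilon)}$, which is polynomial in $n$ since $\epsilon$ is a fixed constant. So there are only polynomially many choices for the first coordinate.

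\textbf{Bounding the choices for $\b{b}$.} The vector $\b{b}$ has length $|\puresetbheavy|$, which by Observation~\ref{obs:puresetbheavyconstant} is a constant depending only on $\epsilon$. Each entry $b_i$ is the index of some light battlefield, and hence lies in $[k]$. Therefore the total number of candidate vectors $\b{b}$ is at most $k^{|\puresetbheavy|} = k^{O(1)}$, which is polynomial.

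\textbf{Bounding the choices for $\b{r}$.} Likewise $\b{r}$ has length $|\puresetbheavy| = O(1)$. For each coordinate, $r_i = w_{b_i}/x^\ell_{b_i}$, so $r_i$ is determined by a battlefield weight and an integer troop count in $\{1,\ldots,n\}$. Since the number of distinct updated battlefield weights is $O(\tfrac{1}{\epsilon}\log u)$ (see Step~1) and $x^\ell_{b_i}$ can take at most $n+1$ values, each coordinate of $\b{r}$ ranges over at most $O(\tfrac{n \log u}{\epsilon})$ possible rational numbers. Raising this to the constant power $|\puresetbheavy|$ still yields a polynomial count.

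\textbf{Combining.} Multiplying the three bounds, the total number of triplets $(\b{x}^h,\b{b},\b{r})$ that we need to consider is $n^{\poly(1/\epsilon)} \cdot k^{O(1)} \cdot \bigl(O(n\log u /\epsilon)\bigr)^{O(1)}$, which is polynomial in the input size. I expect no substantive obstacle here: the only conceptual point to check is that $\b{r}$ is not a priori a continuous object, but this is immediate from its definition as a ratio of a battlefield weight to an integer troop count, so its range is discrete and polynomially bounded.
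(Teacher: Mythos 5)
Your proof is correct and follows essentially the same route as the paper's: bound each coordinate separately via Observations~\ref{obs:puresetaheavypoly} and \ref{obs:puresetbheavyconstant}, then multiply. The only cosmetic difference is in counting the possible ratios --- you enumerate (distinct rounded weight) $\times$ (troop count) for a bound of $O(n\log u/\epsilon)$ per entry, while the paper uses the cruder but equally sufficient $O(nk)$ --- and both yield the same polynomial conclusion.
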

\begin{proof}
	We already know from Observation~\ref{obs:puresetaheavypoly} that there are only polynomially many possible choices for $\b{x}^h$. To conclude the proof, it suffices to show that for the choice of $\b{b}$ and $\b{r}$, we also have polynomially many possibilities. To see this, note that $\b{b}$ and $\b{r}$ only have $|\puresetbheavy|$ dimensions which is bounded by a constant by Observation~\ref{obs:puresetbheavyconstant}. Furthermore, each entry of $\b{b}$ is a number in $\{0, 1, \ldots, k\}$ and each entry of $\b{r}$ has at most $O(nk)$ values. This means there are only $k^{O(1)}$ many possibilities for $\b{b}$ and $(nk)^{O(1)}$ many possibilities for $\b{r}$. Both of these upper bounds are polynomial, which proves there are only polynomially many triplets $(\b{x}^h, \b{b}, \b{r})$.
\end{proof}

The next step is to find a strategy $\b{x}^\ell$ of player 1 on light battlefields that {\em satisfies} a given triplet $(\b{x^h}, \b{b}, \b{r})$. We start by formalizing what satisfying exactly means.

\begin{definition}\label{def:sat}
	We say a strategy $\b{x}^\ell$ of player 1 on light battlefields satisfies a triplet $(\b{x}^h, \b{b}, \b{r})$ if the following conditions hold.
	\begin{enumerate}[topsep=0pt,itemsep=-1ex,partopsep=1ex,parsep=1ex]
		\item The number of troops that are used in $\b{x}^\ell$ is at most $n - \sum_i x^h_i$.
		\item If player 2 chooses his $i$th strategy from $\puresetbheavy$ over heavy battlefields and plays greedily (according to Algorithm~\ref{alg:weakresponse}) on light battlefields, the highest ratio light battlefield in which he loses becomes $b_i$.
		\item The ratio $w_{b_i} / x^\ell_{b_i}$ is indeed equal to $r_i$.
	\end{enumerate}
\end{definition}

Note that it is not necessarily possible to satisfy any given triplet $(\b{x^h}, \b{b}, \b{r})$. However, it is guaranteed that the right choice of it, where $\b{x^h}$ corresponds to the optimal partial strategy of player 1 on heavy battlefields and $\b{b}$ and $\b{r}$ correspond to their actual value (as described before) for the optimal strategy of player 1, is satisfiable. Recall that our goal is to exhaustively try all possible triplets and find the optimal one. To achieve this, we need an oracle that confirms whether a given triplet is satisfiable. Lemma~\ref{lem:findstrata} provides this oracle via a dynamic program that further guarantees that the obtained strategy is maximin against the weak adversary given that player 1 is committed to satisfy $(\b{x}^h, \b{b}, \b{r})$.

\begin{lemma}\label{lem:findstrata}
	Given a triplet $\mathcal{T} = (\b{x}^h, \b{b}, \b{r})$, one can in polynomial time, either report that $\mathcal{T}$ is not satisfiable, or find a strategy $\b{x}^\ell$ that satisfies $\mathcal{T}$ while guaranteeing that the payoff of the response of the weaker adversary to strategy $(\b{x}^h, \b{x}^\ell)$ of player 1 that is obtained by Algorithm~\ref{alg:weakresponse} is minimized.
\end{lemma}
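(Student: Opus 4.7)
The plan is to reduce the lemma to a polynomial-size dynamic program over light battlefields whose state tracks player 1's remaining light budget together with one troop counter per scenario $i \in \puresetbheavy$; by Observation~\ref{obs:puresetbheavyconstant}, $s := |\puresetbheavy| = O(1)$, so the state space stays polynomial in $n$, $m$, and $k$. Preprocessing dispatches conditions~1 and~3 of Definition~\ref{def:sat}: condition~3 forces $x^\ell_{b_i} = w_{b_i}/r_i$ for each $i \in [s]$, so I immediately reject $\mathcal{T}$ if any such value fails to be a positive integer or if two scenarios demand conflicting values on the same battlefield; then I subtract the forced pivot allocations from $B := n - \sum_j x^h_j$ (rejecting if it becomes negative) and record $R_i := m - \sum_\ell y^{h,i}_\ell$ as player 2's light budget in scenario $i$.

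The structural observation driving the DP is that for a non-pivot light battlefield $j$, the only feature of $x^\ell_j$ that matters for conditions~1--3 of Definition~\ref{def:sat} and for the weaker adversary's payoff is its \emph{signature} $\sigma(j) := \{i : w_j/x^\ell_j > r_i\}$, i.e., the set of scenarios in which Algorithm~\ref{alg:weakresponse} would win $j$ by its greedy continuation. Because the $r_i$'s come from an $O(1)$-size set, $\sigma(j)$ is downward closed in the ordering of those $r_i$'s, so only $s+1 = O(1)$ signatures are realizable per battlefield, and each corresponds to an explicit interval of admissible $x^\ell_j$ values bracketed by consecutive thresholds of the form $\lceil w_j/r_i\rceil$.

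Equipped with this, I order non-pivot light battlefields arbitrarily and define DP cells $(t, B', T_1, \ldots, T_s)$ with $B' \in \{0, \ldots, n\}$ and each $T_i \in \{0, \ldots, m\}$. At each step I enumerate the $O(1)$ signatures $\sigma$ and the integer $x^\ell_j$ inside $\sigma$'s interval, decrementing $B'$ by $x^\ell_j$ and each $T_i$ with $i \in \sigma$ by $x^\ell_j$. To carry the objective $\max_i \bigl(P^h_i + \sum_{j : i \in \sigma(j)} w_j\bigr)$ I binary search over a target value $U$ and, inside the DP, prune transitions that would cause scenario $i$'s accumulated light payoff to exceed $U - P^h_i$. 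A terminal state is accepting iff, for every $i$, one has $T_i \leq R_i$ together with $R_i - T_i < x^\ell_{b_i}$, which together encode condition~2 of Definition~\ref{def:sat}: player~2's light budget suffices to win every strictly higher-ratio light battlefield but not also to afford the pivot $b_i$. If no state is accepting for any $U$, I declare $\mathcal{T}$ infeasible; otherwise a standard backtrace extracts a witness $\b{x}^\ell$ achieving the minimum weaker-adversary payoff, and the total running time is $\poly(n, m, k)$ with a constant in the exponent depending only on $s$.

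The main obstacle is the two-sided tension inside condition~2: shrinking $x^\ell_j$ saves player~1 troops and enlarges $\sigma(j)$, but leaves player~2 with enough troops to also win $b_i$ and violate $R_i - T_i < x^\ell_{b_i}$; inflating $x^\ell_j$ creates the opposite pressure. This is precisely why the DP must keep the per-scenario counters $T_i$ explicit rather than collapsing each battlefield to its signature alone, so that the tradeoff can be resolved uniformly across all $s$ scenarios simultaneously. A secondary subtlety is ties among battlefields of ratio exactly $r_i$, which I handle by adopting the same tiebreaking rule Algorithm~\ref{alg:weakresponse} uses, ensuring that it is the specific battlefield $b_i$ (and not some other battlefield of the same ratio) at which player~2's greedy continuation first fails.
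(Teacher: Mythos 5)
Your overall plan coincides with the paper's proof of this lemma: dispatch conditions~1 and~3 of Definition~\ref{def:sat} by forcing the pivot allocations $x^\ell_{b_i} = w_{b_i}/r_i$ and rejecting on conflicts or budget overruns, then run a dynamic program over the light battlefields whose state carries one troop counter per response in $\puresetbheavy$ (your $T_i$ is the paper's $\omega_i$), and finally search over the objective value. Your terminal test ``$T_i \leq R_i$ and $R_i - T_i < x^\ell_{b_i}$'' is in fact a clean two-sided rendering of the paper's stopping condition~(\ref{eq:rank}). The ``signature'' abstraction is harmless but buys nothing asymptotically, since inside each signature interval you still enumerate the integer value of $x^\ell_j$, exactly as the paper enumerates $x \in P = \{0, \ldots, n'\}$ at each battlefield.

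There is, however, a genuine gap in how you carry the objective. Your DP cells are $(t, B', T_1, \ldots, T_s)$, yet your binary search over $U$ ``prunes transitions that would cause scenario $i$'s accumulated light payoff to exceed $U - P^h_i$.'' That accumulated payoff, $\sum_{j : i \in \sigma(j)} w_j$, is \emph{not} a function of your state: two partial strategies can reach the same cell $(t, B', T_1, \ldots, T_s)$ while assigning different total weights to each ratio band, because the troops placed in a band do not determine the weights placed there. Hence the pruning rule --- and likewise the backtrace that is supposed to return a witness ``achieving the minimum weaker-adversary payoff'' --- is a property of the path, which a reachability DP over your states cannot evaluate. You also cannot keep a single best payoff record per cell: under the min-max objective the per-scenario payoff vectors reachable at a common cell are pairwise incomparable, so no dominance rule collapses them. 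The repair is precisely the device the paper uses: augment the state with per-scenario payoff (equivalently, remaining-allowance) coordinates, yielding a table of the form $D(j, n', \omega_1, \ldots, \omega_c, u_1, \ldots, u_c)$ with $u_i \in \{0, \ldots, \sum_i w_i\}$, which stays polynomial because the number of scenarios is $O(1)$; the optimum is then read off the terminal layer (the paper does a linear search over the guaranteed utility, and your binary search works just as well once these coordinates exist). With that one change your argument goes through and is essentially the paper's proof.
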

\begin{proof}
	For simplicity of notations, let us denote by $c := |\puresetbheavy|$ the total number of responses of player 2 over the heavy battlefields, and consequently, the size of vectors $\b{b}$ and $\b{r}$. Furthermore, given that player 1 plays strategy $\b{x}^h$ on heavy battlefields and that player 2 plays his $i$th response in $\puresetbheavy$ to $\b{x}^h$, we denote by $m_i$ the number of troops that are left for player 2 to play on light battlefields, and denote respectively by $g_{1}(i)$ and $g_2(i)$ the guaranteed payoff that players 1 and 2 get on heavy battlefields.
	
To satisfy condition 1 of Definition~\ref{def:sat}, we ensure that the strategy that we find for player 1 over the light battlefields only uses $n^\ell := n - \sum_i x^h_i$ troops. Satisfying the third condition is also straightforward: for any $i \in [c]$, it suffices for player 1 to put exactly $w_{b_i} / r_i$ troops on battlefield $b_i$. (We emphasize that we should not change the number of troops on these battlefields throughout the algorithm.) If during this process, we need to use more than $n^\ell$ troops or if for some $i$ and $j$ we have $b_i = b_j$ and $w_{b_i} / r_i \not = w_{b_j} / r_j$ we report that the given triplet is not satisfiable. The main difficulty is to ensure that the second condition of Definition~\ref{def:sat} also holds. For that, the only decision that we have to make is on the number of troops that we put on the remaining battlefields. For our final strategy $\b{x}^\ell$ over the light battlefields, assume that the battlefields are sorted decreasingly based on their ratio $w_i / x^\ell_i$ and let us denote by $\sigma(i)$ the index of the battlefield in position $i$. To convey the overall idea of how to satisfy this condition, let us first focus on the $i$th strategy of player 2 over the heavy battlefields. We need to make sure that in our strategy $\b{x}^\ell$ over light battlefields, the ratio of battlefields are such that if player 2 wins them greedily, he stops at battlefield $b_i$. More precisely, let $ \sigma(\gamma) = b_i$, we need to have\footnote{In case of a tie, we assume the battlefield with the lower index has a higher ratio.} 
\begin{equation}\label{eq:rank}
0 \leq m_i - \sum_{j=1}^{\gamma-1} x^\ell_{\sigma(j)} < x^\ell_{b_i}.
\end{equation}
Fix battlefield $b_i$, roughly speaking, in order to satisfy (\ref{eq:rank}), we need to decide which battlefields will have a higher ratio than battlefield $b_i$ so as to ensure that once player 2 wins them, he will have less than $x^\ell_{b_i}$ troops. Now recall that we need to satisfy this, simultaneously, for every entry $b_i$ of $\b{b}$. For this we use a dynamic program $D(j, n', \omega_1, \ldots, \omega_c, u_1, \ldots, u_c)$ with $j \in \{0, \ldots, k^\ell\}$, $n' \in \{0, , \ldots, n^\ell\}$, $u_i \in \{0, \ldots, \sum_i w_i\}$, and $\omega_i \in \{0, 1 , \ldots, m_i\}$. The value of $D(j, n', \omega_1, \ldots, \omega_c, u_1, \ldots, u_c)$ is either 0 or 1 and it is 1 iff it is possible to give a partial strategy $(x_1, \ldots, x_j)$ over the first $j$ light battlefields such that all the following conditions are satisfied.
	\begin{enumerate}
		\item We use exactly $n'$ troops over them, i.e., 
		\begin{equation}\label{eq:con1}
			\sum_{j' \in [j]} x_{j'} = n'.
		\end{equation}
		\item For any $i \in [c]$, we put at least $\omega_i$ troops in battlefields with ratio higher than $r_i$, i.e., 
		\begin{equation}\label{eq:con2}\sum_{j' \in [j] : \frac{w_{j'}}{x_{j'}} \geq r_i} x_{j'} \geq \omega_i, \qquad \forall i \in [c].	
		\end{equation}

		\item For any $i \in [c]$, sum of weights of battlefields with lower ratio than $r_i$ is at least $u_i - g_1(i)$, i.e., 
		\begin{equation}\label{eq:con3}\sum_{j' \in [j] : \frac{w_{j'}}{x_{j'}} < r_i} w_{j'} \geq u_i-g_1(i), \qquad \forall i \in [c].
		\end{equation}

	\end{enumerate}
It is easy to confirm, by definition, that we can satisfy $(\b{x}^h, \b{b}, \b{r})$ iff, $$D(k^\ell, n^\ell, m_1 - x^\ell_{b_1}-1, \ldots, m_c - x^\ell_{b_c}-1, 0, \ldots, 0) = 1.$$ To further maximize the payoff that player 1 gets against the weaker adversary as well as satisfying the given triplet it suffices to find the maximum value of $u$ where $$D(k^\ell, n^\ell, m_1 - x^\ell_{b_1} - 1, \ldots, m_c - x^\ell_{b_c} - 1, u, \ldots, u) = 1.$$

\paragraph{Base case.} We start with the base case, where $j=0$. Here, clearly, if $\omega_i > 0$ or $u_i > 0$ for some $i$, then the value of $D$ must be 0 as we have no way of satisfying (\ref{eq:con2}) or (\ref{eq:con3}). Moreover, if $n' > 0$, the value must again be 0, as (\ref{eq:con1}) requires that we need to spend exactly $n'$ troops over the first $j$ battlefields. Therefore, the only case where the value of the case where $j=0$ is one is where $n' = 0$, and $\omega_i = 0$ and $u_i = 0$ for all $i \in [c]$.

\paragraph{Updating the DP.} To update $D(j, n', \omega_1, \ldots, \omega_c, u_1, \ldots, u_c)$, we only have to decide on how many troops to put on the $j$th battlefield. The idea is to try all possibilities and check, recursively, whether any of these choices satisfies the requirements for thy dynamic value to be 1. Let us denote by $P = \{0, \ldots, n'\}$ the set of all possibilities for the number of troops that we can put on battlefield $j$. We update $D$ as follows:
\begin{equation*}
	D(j, n', \omega_1, \ldots, \omega_c, u_1, \ldots, u_c) = \max_{x \in P} D\Big(j-1, n'-x, \omega_1(x, j), \ldots, \omega_c(x, j), u_1(x, j), \ldots, u_c(x, j)\Big)
\end{equation*}
where for any $i \in [c]$, $\omega_i(x, j)$ and $u_i(x, j)$ are defined as
\begin{equation*}
	\omega_i(x, j) = \begin{cases}
	\omega_i, & \text{if } w_j/x < r_d, \\
	\omega_i - x, & \text{otherwise},
	\end{cases}
	\qquad\qquad \text{and,} \qquad\qquad
	u_i(x, j) = \begin{cases}
	u_i - w_{j}, & \text{if } w_j/x < r_d, \\
	u_i, & \text{otherwise}.
	\end{cases}
\end{equation*}
Recall that when $j = b_{i}$ for some $i \in [c]$, as previously mentioned, to prevent violation of condition 3 of Definition~\ref{def:sat}, we have to put exactly $w_{j} / r_i$ troops on battlefield $j$. If this is the case, and we cannot afford $w_j/r_i$ troops (i.e., if $w_j/r_i > n'$), we update $D$ to be 0. If $w_j/r_i \leq n'$, we make an exception for battlefield $j$ and overload the set $P$ to be $\{ w_{j}/r_i \}$ instead of $\{0, \ldots, n'\}$; the rest of the updating procedure would be the same.

\paragraph{Correctness.} Here, we argue that our updating procedure produces the right answer. We use induction on the value of $j$ and assume that $D(j', n', \omega_1, \ldots, \omega_c, u_1, \ldots, u_c)$ is correctly updated for $j' < j$. We argued why the base case, where $j=0$ is correctly updated. It only suffices to prove for that we correctly update $D(j, n', \omega_1, \ldots, \omega_c, u_1, \ldots, u_c)$. Assume for now that $D(j, n', \omega_1, \ldots, \omega_c, u_1, \ldots, u_c) = 1$. This implies, by definition, that there exists a strategy $(x_1, \ldots, x_j)$ over the first $j$ light battlefields that satisfies (\ref{eq:con1}), (\ref{eq:con2}), and (\ref{eq:con3}). Observe that $0 \leq x_j \leq n'$, and thus,  $x_j \in P$. Once putting $x_j$ troops in battlefield $j$, $\omega_i(x_j, j)$ denotes the number of troops that still has to be put on battlefields with higher ratio than $r_i$. Moreover, $u_i(x_j, j)$ denotes the requirement for the sum of weights of battlefields with lower ratio than $r_i$. Therefore, by (\ref{eq:con2}), and (\ref{eq:con3}) we need to have $D\big(j-1, n'-x_j, \omega_1(x_j, j), \ldots, \omega_c(x_j, j), u_1(x_j, j), \ldots, u_c(x_j, j)\big) = 1$ concluding this case. On the other hand, if $D(j, n', \omega_1, \ldots, \omega_c, u_1, \ldots, u_c) = 0$, no such strategy $(x_1, \ldots, x_j)$ can be found, therefore for all choices of $x_j$ in $P$, the requirements over the prior battlefields cannot be satisfied and $D\big(j-1, n'-x_j, \omega_1(x_j, j), \ldots, \omega_c(x_j, j), u_1(x_j, j), \ldots, u_c(x_j, j)\big) = 0$ for all choices of $x_j \in P$.

\paragraph{Wrap up.} To obtain the strategy $(\b{x}^h, \b{x}^\ell)$ of player 1 that satisfies the triplet $(\b{x}^h, \b{b}, \b{r})$ while providing the maximum guaranteed payoff against all possible strategies of the weaker adversary, we solve the aforementioned dynamic program and linear search over $[0, \sum_i w_i]$ to find the maximum value of $u$ where $$D(k^\ell, n^\ell, m_1 - x^\ell_{b_1} - 1, \ldots, m_c - x^\ell_{b_c} - 1, u, \ldots, u) = 1.$$
	While this only outputs the maximum payoff that can be guaranteed --- and not the actual strategy to provide it --- it is easy to construct the strategy using standard DP techniques. Roughly speaking, to achieve this, we need to slightly modify the DP to also store the actual partial strategy in case its value is 1.
\end{proof}

To conclude this section, for any choice of the triplet $\mathcal{T} = (\b{x}^h, \b{b}, \b{r})$ for which there are only polynomially many options by Claim~\ref{claim:polytriple}, we find the optimal strategy of player 1 among light battlefields that satisfies $\mathcal{T}$ if possible. Lemma~\ref{lem:findstrata} guarantees that the obtained strategy over the light battlefields is indeed the optimal solution if the choice of triplet $\mathcal{T}$ is right. Therefore, it suffices to compare the guaranteed payoff of all obtained solutions for different triplets and report the one that provides the maximum guaranteed payoff for player 1. This procedure gives us the optimal strategy of player 1 against the weaker adversary and therefore it guarantees a payoff of $u$ against him. By Corollary~\ref{cor:weakeradversary}, this gives a $\maxmin{(1-\epsilon/2)u}{1}$ strategy against player 2. Recall that this is achieved over the updated battlefield weights. However, by Corollary~\ref{cor:bfweightrounding}, any $\maxmin{(1-\epsilon/2)u}{1}$ strategy of player 1 on updated battlefields, gives a $\maxmin{(1-\epsilon)u}{1}$ on the original game instance and this proves the main theorem of this section.

\restatethm{\ref{thm:discretepure}}{\thmdiscretepure{}}

\subsection{The Case of 2-Strategies}
\label{sec:twopure}
In this section we generalize the results of the previous section to the case of \cmixedstrategies{2}. To achieve this we first design an algorithm that obtains a \maxmin{u/3}{p} \cmixedstrategy{2}. Then, we use this algorithm and adapt some of the techniques from Section~\ref{sec:one-pure} to give an algorithm that finds a \maxmin{(1-\epsilon)u}{p} \cmixedstrategy{2}. 

The proof of the following theorem which is the main result of this section comes later in the section.
\begin{theorem} \label{theorem:epsapprox2}
			Given that player 1 has a \maxmin{u}{p} \cmixedstrategy{2}, there exists a polynomial time algorithm that obtains a \maxmin{(1-\epsilon)u}{p} \cmixedstrategy{2} of player 1 for any arbitrarily small constant $\epsilon > 0$.
\end{theorem}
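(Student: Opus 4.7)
The plan is to extend Theorem~\ref{thm:discretepure} to two pure strategies by combining the heavy--light decomposition of Section~\ref{sec:one-pure} with a generalized weak adversary that approximately attacks two strategies simultaneously. First I would reduce to the only nontrivial regime: by Observation~\ref{obs:pmorethanhalf}, the case $p > 1/2$ is already handled by applying Theorem~\ref{thm:discretepure} to a single pure strategy, and by Observation~\ref{obs:plessthanhalf} any $0 < p < 1/2$ can be upgraded to $p = 1/2$; so I may assume the goal is to find two pure strategies $\b{x}^1, \b{x}^2$ played with probability $1/2$ each so that against every pure $\b{y}$ at least one of $\utilitya{\b{x}^j}{\b{y}}$ reaches $(1-\epsilon)u$. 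Mirroring Section~\ref{sec:one-pure}, I cap each weight at $u$, round down to powers of $(1+\delta)$ for $\delta = \Theta(\epsilon)$ (losing only a $(1-\delta)$ factor by Observation~\ref{obs:bfweightrounding}), and fix a heavy/light threshold $\tau = \Theta(\epsilon u)$ so that by Observation~\ref{obs:constantheavybfw} only $O(1)$ distinct heavy weights remain.

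Next, generalizing Claim~\ref{cl:samewsametroops} to pairs of strategies, I may assume each of $\b{x}^1$ and $\b{x}^2$ is roughly balanced within each heavy weight class, so each of their heavy partial strategies is described by an $O(1)$-dimensional integer vector, giving $n^{O(1)}$ pairs $(\b{x}^{h,1}, \b{x}^{h,2})$ to enumerate. For each such pair, the partial responses $\b{y}^h$ of player $2$ that need to be considered are exactly those that simultaneously keep both strategies below utility $u$ on the heavy side; each such response loses $O(1/\epsilon)$ heavy battlefields against each of $\b{x}^1$ and $\b{x}^2$, and the counting argument behind Observation~\ref{obs:puresetbheavyconstant} still yields $O(1)$ such $\b{y}^h$ per pair. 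Enumerating all triples $(\b{x}^{h,1}, \b{x}^{h,2}, \b{y}^h)$ thus runs in polynomial time and fixes both the heavy payoffs and the remaining budgets $(m_1, m_2)$ that $\b{y}^h$ leaves for the light side.

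The main new ingredient is a weaker adversary that approximately best-responds to two strategies at once. In the $c=1$ setting (Lemma~\ref{lem:weakisgood}) knapsack greedy loses at most $\wmax$; for two strategies the true best response is a two-dimensional packing and admits no such simple greedy. I plan to parametrize the weaker adversary by a pair of thresholds $(t_1, t_2)$---target utility ceilings against $\b{x}^1$ and $\b{x}^2$---so that on each light battlefield the response is chosen locally from four options (attack $\b{x}^1$ only, attack $\b{x}^2$ only, attack both, neither) using only $w_i$, $x^1_i$, $x^2_i$, and the thresholds. Since every light weight is at most $\tau = O(\epsilon u)$, a Lemma~\ref{lem:weakisgood}-style rounding argument will show that the weaker adversary is within $O(\epsilon u)$ of the true best response against any $(\b{x}^1, \b{x}^2)$. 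For each enumerated heavy triple I then adapt the DP of Lemma~\ref{lem:findstrata} by duplicating its ``ratio boundary'' information $(\b{b}, \b{r})$---one copy per strategy, per enumerated $\b{y}^h$---so that it jointly tracks both strategies' guaranteed payoffs. The auxiliary \maxmin{u/3}{p} algorithm that the section promises first serves as a bootstrap/feasibility step in parameter regimes (e.g.\ $u$ not much larger than $\tau$) where the heavy--light argument is not informative.

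The main obstacle is the joint design of the thresholded local rule and the light-side DP: the rule must lose at most $O(\epsilon u)$ against the true two-dimensional best response, while being simple enough that the DP state---a pair $(b_i, r_i)$ per strategy, per enumerated heavy response---stays of polynomial size. Balancing these two requirements is precisely why the introduction emphasizes that the weak adversary must receive ``$c$ thresholds'' and decide locally, and it is the step where essentially all of the new technical work concentrates relative to the $c=1$ case.
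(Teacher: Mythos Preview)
Your high-level architecture matches the paper: reduce to $p=\sfrac{1}{2}$, round weights, split into heavy/light with $\tau=\Theta(\epsilon u)$, enumerate player~1's heavy partial \cmixedstrategies{2} (via a two-strategy analogue of Claim~\ref{cl:samewsametroops}, which the paper proves as Lemma~\ref{lem:uniquetypes}), enumerate the $O(1)$ relevant heavy responses, and run a joint DP on the light side against a weaker adversary. That skeleton is exactly right.

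Where your sketch diverges from the paper is the design of the weaker adversary, and this is not just a detail you left unspecified---your proposed parametrization would not work as stated. You describe the adversary as controlled by a pair of \emph{utility} ceilings $(t_1,t_2)$ with four local options per battlefield. The paper instead builds the greedy opponent from the \emph{fractional} best response (Algorithm~\ref{alg:2-fractional}): it maintains two vectors $\b{h},\b{h}'$ and in each step increases the cheapest available pair, measured by cost-per-weight ratios. The key structural fact is that this fractional optimum is governed by \emph{three} ratio thresholds, not two: the cheapest available ratio $r_a$ for hurting $\b{x}$ alone, the cheapest $r'_b$ for hurting $\b{x}'$ alone, and the cheapest joint ratio $r_c+r'_c$ for hurting both on the same battlefield. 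A two-threshold rule cannot capture the joint option, and without it the local decision can be arbitrarily far from the two-dimensional packing optimum on instances where attacking both strategies on one battlefield is much cheaper than attacking them separately. The paper's signature (Definition~\ref{def:sign}) records the three boundary indices $(a,b,c)$ together with the troops/utilities spent there, and Lemma~\ref{lem:sign-respose} shows this signature alone determines the greedy response on every other battlefield locally. Lemma~\ref{lem:theverygoodlem} then shows the fractional optimum is integral except on at most two battlefields, giving the $2\wmax$ loss bound. This is the machinery that makes the DP state polynomial.

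Finally, you mischaracterize the role of the $\sfrac{1}{3}$-approximation: it is not a bootstrap for degenerate parameter regimes (and ``$u$ not much larger than $\tau$'' cannot happen since $\tau=\Theta(\epsilon u)$). Rather, Theorem~\ref{theorem:3-approx} is where the greedy opponent, the signature, and the signature-indexed DP are developed in the absence of heavy battlefields; the $(1-\epsilon)$ result then layers the heavy--light decomposition on top, re-running that same DP once per (heavy-pair, $\b{y}^h$) combination with the signature now additionally indexed by $\b{y}^h$.
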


%

Given that the first player randomizes over two pure strategies $\b{x}$ and  $\b{x'}$, the second player's best response is a strategy that maximizes his payoff. That is, player 2 seeks to find a strategy $\b{y} \in \puresetb$ such that $\min(\utilityb{\b{x}}{\b{y}}, \utilityb{\b{x'}}{\b{y}})$ is maximized.
In this case, we assume that player 1 cannot guarantee utility $u$ for himself, otherwise he would just play one pure strategy. Therefore, by Observation~\ref{obs:pmorethanhalf} and Observation \ref{obs:plessthanhalf} We can safely assume that in any \maxmin{u}{p} \cmixedstrategy{2} of the first player p=$\sfrac{1}{2}$ holds.

Although, there exist polynomial time algorithms to find the best response of player 2, we still need simpler algorithms to be able to use them and find an optimal \maxminc{(1-\epsilon)u}{p}{2} for player 1. To this end, we define a new opponent which is a weaker version of the second player and we call it the \textit{greedy opponent}. Instead of solving the best response problem optimally, the greedy opponent just takes a simple greedy approach. We prove that playing against this weaker opponent gives us a \maxminc{u/3}{p}{2}. To approach that we first give an alternative formulation of the best response problem in which the solution is represented by two binary vectors. We then relax the integrality condition of elements in the vectors and design a greedy algorithm that finds the best response of the second player in this case. This algorithm is the base for the greedy opponent's strategy. We prove that if the second player takes this approach he loses at most $2\cdot\wmax$ payoff compared to his best response strategy. Recall that $\wmax$ is the maximum weight of the battlefields.

To give an alternative formulation of the best response problem, we define a new problem in which any solution is represent  by two vectors of size $k$. Given strategies $\b{x}$ and $\b{x'}$ we first define \textit{cost} vectors $\b{c}$ and $\b{c'}$. For any battlefield $i$, if $x_i \leq x'_i$ then $c_i := x_i$ and $c'_i := x'_i - x_i$. Otherwise,  $c'_i := x'_i$ and $c_i := x_i - x'_i$. Roughly speaking, when $x_i \leq x'_i$ holds, $c_i$ is the number of troops that player 2 needs to put in battlefield $i$ to win strategy $\b{x}$ in this battlefield and $c'_i$ is the number of troops that he should add to win both strategies. The solution to this problem is two vectors $\b{h}$ and $\b{h'}$ that maximizes $\min(\b{w}\cdot\b{h}, \b{w}\cdot\b{h'})$ subject to the following conditions.
\begin{enumerate}
\item For any $i\in [k]$, elements $h_i$ and $h'_i$ are binary variables. 
\item If $x_i \leq x'_i$, then $h'_i \leq h_i$ holds. Otherwise,  $h_i \leq h'_i$ holds.  \label{condition 2}
\item Also, $\b{c} \cdot \b{h} + \b{c'} \cdot \b{h'} \leq m$.  \label{condition 3}
 \end{enumerate}
 Given vectors $\b{h}$ and $\b{h'}$, the solution of this problem, one can find a strategy $\b{y}$ of player 2 where $ u=\b{w}\cdot\b{h}$ and $u'=\b{w}\cdot\b{h}$. (Recall that $u$ and $u'$ respectively denote the utility that strategy $\b{y}$ gets against strategies $\b{x}$ and $\b{x'}$.) We claim that if for any $i\in [k]$, player 2 puts $h_i\cdot c_i + h'_i \cdot c'_i$ troops in battlefield $i$, then $u \geq \b{w}\cdot\b{h}$ and $u' \geq \b{w}\cdot\b{h}$ hold. 
 Note that by condition \ref{condition 3}, he has enough troops to play this strategy.  Without loss of generality, assume that $x_i \leq x'_i$. Therefore, by condition \ref{condition 2}, $h'_i \leq h_i$ holds. There are three possible cases. Either both $h'_i$ and $h_i$ are 0, both are equal to 1, or $h'_i = 0$ and $h_i=1$.  
If $h_i = h'_i = 0$, then $w_i \cdot h_i $ and $w'_i \cdot h'_i$ are both equal to zero. In the case of $h_i = 1$ and $h_i=0$, we have $h_i\cdot c_i + h'_i \cdot c'_i = x_i$. Therefore, in battlefield $i$, player 2 gets utility $w_i$ against strategy $x_i$. Also, if $h_i = h'_i = 1$, the equality $h_i\cdot c_i + h'_i \cdot c'_i = x'_i$ holds which means player 2 gets utility $w_i$ against both strategies in this battlefield.
 Thus, in all three cases, the utility that player 2 gets against strategies $\b{x}$ and $\b{x'}$ in battlefield $i$ is receptively at least $w_i \cdot h_i$ and at least $w'_i \cdot h'_i$. 
 
  Moreover, given any strategy $\b{y}$ of player 2, one can give two valid vectors \b{h} and \b{h'} such that $u = \b{w}\cdot\b{h}$ and $u' = \b{w}\cdot\b{h}$. Iff $y_i \geq x_i$, let $h_i := 1$. Also iff $y_i \geq x'_i$, we set $h'_i := 1$. Therefore,  $u = \b{w}\cdot\b{h}$ and $u' = \b{w}\cdot\b{h}$ hold. We also need to show that $\b{c\cdot h} + \b{c' \cdot h'} \leq m$. For this, it suffices to show $c_i\cdot h_i + c_i' \cdot h_i' \leq y_i$ for any $i\in [k]$. Without loss of generality, assume that $x_i \leq x'_i$. Thus, $c_i = x_i$ and $c'_i = x'_i - x_i$ hold. Note that we can assume that $y_i$ is either 0, $x_i$ or $x'_i$. In all these three cases,  $c_i\cdot h_i + c_i' \cdot h_i' \leq y_i$ holds; therefore, vectors \b{h} and \b{h'} satisfy the necessary conditions.

To sum up, for any solution of the defined problem which we denote by vectors $\b{h}$ and $\b{h'}$ there is a strategy of the opponent with payoff $\b{c} \cdot \b{h} + \b{c'} \cdot \b{h'}$. Also, for any \maxminc{u}{p}{2} of player 2,  there are two vectors $\b{h}$ and $\b{h'}$  that satisfy the necessary conditions of the problem and $\min(\b{w}\cdot\b{h}, \b{w}\cdot\b{h'}) = u$ holds for them. Therefore, this formulation is indeed a valid formulation of the best response problem. From now on, we may use these two formulations interchangeably.

\subsubsection{A $\sfrac{1}{3}$-Approximation}

Any strategy of the player 2 against a \cmixedstrategy{2} of the first player can be represented by two vectors $\b{h}$ and $\b{h'}$. If we relax the integrality constraint of the elements in these vectors, they can be fractional numbers between 0 and 1.  We call such a strategy a fractional strategy. In this section, we give an algorithm to find the best fractional strategy of the opponent. Using this algorithm we give an exact definition of the greedy opponent. Also, using some properties of the algorithm we prove that player 1 achieves a \maxmin{u/3}{p} strategy by finding his best strategy against the greedy opponent.

Given $s=(\b{x}, \b{x'})$ a \cmixedstrategy{2} of the first player, vectors  $\b{h}$ and $\b{h'}$ are a valid representation of a fractional response $\b{y}$ iff: \begin{enumerate}
\item For any $i\in [k]$, $h_i$ and $h'_i$ are real numbers between 0 and 1. 
\item If $x_i \leq x'_i$, then $h'_i \leq h_i$ holds. Otherwise,  $h_i \leq h'_i$ holds.  
\item Also, $\b{c} \cdot \b{h} + \b{c'} \cdot \b{h'} \leq m$.  
 \end{enumerate}
 
The utility that this strategy gets against $\b{x}$ and $\b{x'}$ is respectively $\utilityb{x}{y} = \b{w}\cdot\b{h}$ and $\utilityb{x'}{y}= \b{w}\cdot\b{h'}$. Let ($\b{h}$, $\b{h'}$) be such a strategy. We call an element $h_i$ (or $h'_i$) an \textit{available element} if it is possible to increase $h_i$ by a nonzero amount without changing $h'_i$. Formally element $h_i$ is available iff the following conditions hold: $h_i < 1$ and if $x'_i \leq x_i$, then $h_i \leq h'_i$ holds.  We also call $h_i$ and $h'_i$ \textit{jointly available} iff $h_i<1$ and $h'_i < 1$. In the other words, it is possible to increase both of them by a nonzero amount without violating the necessary conditions on $\b{h}$ and $\b{h'}$. Let \b{c} and \b{c'} be the cost vectors of a \cmixedstrategy{2} of the first player  denoted by $s$. We define two \textit{ratio vectors} of length $k$ for this strategy and denote them by $\b{r}$ and $\b{r'}$. For any battlefield $i\in [k]$, we set $r_i = \frac{c_i}{w_i}$ and $r'_i= \frac{c'_i}{w_i}$. Note that spending $\epsilon$ amount of troop on an available element, $h_i$, increases it by $\epsilon / c_i$ and  increases \utilityb{x}{y} by $\epsilon / r_i$. Roughly speaking, elements with smaller ratios are more valuable.
 
Given a \cmixedstrategy{2} of the first player in an instance of the Colonel Blotto, Algorithm~\ref{alg:2-fractional} find a best fractional response of the second player. This algorithm consists of several iterations. At the beginning, all the elements of vectors \b{h} and \b{h'} are $0$. In each iteration, the algorithm chooses two elements $h_i$ and $h'_j$ where they are independently or jointly available and they minimize $r_i + r'_j$. Then, it increases each one by a nonzero amount such that $\b{h} \cdot \b{w}$ and  $\b{h'} \cdot \b{w}$ increase equally. Note that to increase these elements the algorithm spends an amount of troop that is determined by the cost vectors. This algorithm stops if there is no troop left or it is not possible to increase the elements. Note that the output of this algorithm satisfies equation $\b{h} \cdot \b{w} = \b{h'} \cdot \b{w}$. In the other words, the strategy that this algorithm finds gets the same amount of utility against strategies \b{x} and \b{x'}. In Lemma~\ref{lem:equal-utility}, we prove that any best fractional response of the second player has this property. Also, we prove in Lemma~\ref{lem:3-approx-2-strategy} that this algorithm indeed finds a best fractional response.  

\begin{algorithm}[h]
	\caption{Best fractional strategy of the second player.}
	\label{alg:2-fractional}
	\begin{algorithmic}[1]
		\Statex \textbf{Input:} strategies $\b{x}$ and $\b{x'}$ of the first player and vector $\b{w}$ which is the battlefields' weight vector.
		\Statex \textbf{Output:} two vectors $\b{h}$ and $\b{h'}$ which denote a fractional strategy of player 2.
		\State Let $\b{c}, \b{c'}$ denote cost vectors of $\b{x}$ and $\b{x'}$.
		\State $\b{h}, \b{h'} \gets (0, \dots, 0)$
    		\While{$m' > 0$}
	\State $B_1 \gets \{i : i\in [k], h_i \text{ is available}\}$ and $a\gets \argmin_{i\in B_1}$ $r_i$   \label{line:4}

		\State $B'_1 \gets \{i : i\in [k], h'_i \text{ is available}\}$ and $b \gets \argmin_{i\in B'_1}$ $r'_i$ \label{line:5}
		\State $B_2 \gets \{i : i \in [k], h_i \text{ and } h'_i \text{ are jointly available} \}$ and $d \gets \argmin_{i\in B_2}$ $ r_i+ r'_i$  \label{line:6}

\Comment{\textit{In the case of tie, pick the smaller index}}
		\If{ $r_a + r'_b < r_d + r'_d$} 
		\State Maximize $t + t'$ subject to the following conditions: 
			\begin{itemize}[topsep=0pt,itemsep=-1ex,partopsep=1ex,parsep=1ex]\setlength{\itemindent}{0.5in} 
			 \item  After increasing $h_a$ by $t$ and $h'_b$ by $t'$, vectors $\b{h}$ and $\b{h'}$ are valid. 
			\item $t\cdot w_a = t'\cdot w_b$ 
			\item $ t \cdot c_a + t' \cdot c'_b \leq m' $ 
			\end{itemize}
		\State Increase $h_a$ by $t$ and $h'_b$ by $t'$.
		\State $m' \gets m' - (t \cdot c_a + t' \cdot c'_b)$
		\Else

		\State Maximize $t$ subject to the following conditions:
			\begin{itemize}[topsep=0pt,itemsep=-1ex,partopsep=1ex,parsep=1ex]\setlength{\itemindent}{0.5in}
			\item After increasing $h_d$ and $h'_d$ by $t$, vectors $\b{h}$ and $\b{h'}$ are valid.  
			\item $ t \cdot (c_d + c'_d) \leq m' $	
			\end{itemize}
		\State Increase both $h_d$ and $h'_d$ by $t$.
		\State $m' \gets  t \cdot (c_d + c'_d)$

		\EndIf
		\EndWhile
		\State \Return $\b{h}, \b{h'}$
	\end{algorithmic}	
\end{algorithm}

\begin{lemma}\label{lem:equal-utility}
If vectors $\b{h}$ and $\b{h'}$ represent a best fractional strategy of the player 2 against an arbitrary 2-strategy of the first player, then $\b{w}\cdot\b{h} = \b{w}\cdot\b{h'}$ holds.
\end{lemma}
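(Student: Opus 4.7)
The plan is to prove the claim by contradiction via an improving local perturbation. Suppose $(\b{h}, \b{h'})$ is a best fractional response but, without loss of generality, $A := \b{w}\cdot\b{h} > \b{w}\cdot\b{h'} =: B$; then the objective value achieved is $B$, and the goal is to produce a feasible $(\tilde{\b{h}}, \tilde{\b{h'}})$ with $\min(\b{w}\cdot\tilde{\b{h}},\,\b{w}\cdot\tilde{\b{h'}}) > B$, contradicting optimality.

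First I would locate a useful coordinate. Since $A > B$, we have $\sum_i w_i(h_i - h'_i) > 0$, so there exists at least one battlefield $i$ with $h_i > h'_i$. Condition~2 of the formulation then forces $x_i \le x'_i$ at such an $i$ (otherwise condition~2 would read $h_i \le h'_i$ and be violated), which in turn gives $c_i = x_i$ and $c'_i = x'_i - x_i \ge 0$. The strict gap $h_i - h'_i > 0$ is the source of slack I will exploit, since it leaves room to raise $h'_i$ without violating $h'_i \le h_i$.

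Next I would construct the perturbation in cases. If $c'_i = 0$ (equivalently $x_i = x'_i$), then increasing $h'_i$ by a small $\delta > 0$ is free in budget, and is feasible because $h'_i + \delta \le h_i \le 1$ for small $\delta$; it leaves $A$ unchanged and raises $B$ by $\delta w_i > 0$, contradicting optimality. If $c'_i > 0$ but the budget constraint is strict, the same small increase of $h'_i$ by $\delta$ is still feasible for $\delta$ sufficiently small, again raising $B$ while keeping $A$ fixed. The remaining case has $c'_i > 0$ with the budget tight; if additionally $c_i > 0$, I would perform a budget-neutral swap at index $i$: decrease $h_i$ by $\delta\,c'_i/c_i$ and increase $h'_i$ by $\delta$. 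The budget change is exactly $-\delta c'_i + \delta c'_i = 0$, the new values still satisfy $h'_i + \delta \le h_i - \delta c'_i/c_i$ for $\delta$ small enough (since $h_i > h'_i$), and we have $\Delta A = -\delta w_i c'_i/c_i$, $\Delta B = +\delta w_i$. Because $A > B$ strictly holds at the start, choosing $\delta$ small ensures the new $A$ stays strictly above the new $B$, so the new $\min$ equals the strictly larger $B$-value, again contradicting optimality.

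The main obstacle is the boundary situation in which $c_i = 0$ (equivalently $x_i = 0$) at every candidate $i$ with $h_i > h'_i$; here the local swap at $i$ frees no budget. My plan to resolve this is first to observe that when $c_i = 0$ the variable $h_i$ itself is free in budget, so without loss of generality an optimal response has $h_i = 1$; hence if $h_i > h'_i$ then $h'_i < 1$, and since the budget is tight there must exist another index $j$ with $c_j h_j + c'_j h'_j > 0$. I would then decrease a shrinkable $h_j$ (one with $c_j > 0$ for which condition~2 does not force $h_j = h'_j$, a property I would verify by a short subcase analysis on whether $x_j \leq x'_j$) to free $\delta c_j$ of budget, and spend it on raising $h'_i$ by $\delta c_j / c'_i$; this yields the same kind of strict improvement as in the previous paragraph. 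The leftover fully degenerate configurations, in which every budget-bearing coordinate is locked by condition~2 equalities, I expect to eliminate by an averaging step that rebalances $(\b{h}, \b{h'})$ into an optimum with $A = B$ without lowering $\min(A,B)$; I anticipate that this last corner case will be the principal technical subtlety of the proof.
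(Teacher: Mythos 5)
Your plan follows the same route as the paper's proof --- a local perturbation at a witness battlefield $i$ with $h_i > h'_i$, using the strict gap $A > B$ to absorb a small loss in $A$ --- but you execute it far more carefully. The paper's argument is a single sentence (``decrease $h_i$ by a nonzero amount and increase $h'_i$ using the extra amount of troop achieved from that''), which implicitly assumes $c_i > 0$ and ignores the exchange rate between the freed budget $\delta c_i$ and the cost $c'_i$ of raising $h'_i$. Your three resolved cases ($c'_i = 0$; slack budget; tight budget with $c_i > 0$ via the budget-neutral swap of sizes $\delta c'_i/c_i$ and $\delta$) are all correct, and your observation that condition~2 forces $x_i \le x'_i$ at any witness is exactly the point the paper leaves tacit.

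The corner you leave open, however, is a genuine gap --- and in fact it cannot be closed, because the lemma as stated is \emph{false} precisely there. Take $k=2$, $\b{w} = (1, 100)$, $\b{x} = (0,2)$, $\b{x'} = (1,2)$, $m = 1$, so that $\b{c} = (0,2)$, $\b{c'} = (1,0)$, the order constraints read $h'_i \le h_i$, and the budget constraint is $h'_1 + 2h_2 \le 1$. Since $h'_2 \le h_2 \le (1-h'_1)/2$, every feasible solution satisfies
\begin{equation*}
\b{w}\cdot\b{h'} \;=\; h'_1 + 100\,h'_2 \;\le\; h'_1 + 50(1-h'_1) \;\le\; 50,
\end{equation*}
and the value $50$ is attained by $h'_1 = 0$, $h_2 = h'_2 = \sfrac{1}{2}$, with $h_1 \in [0,1]$ arbitrary because $c_1 = 0$. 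Choosing $h_1 = 1$ yields a best fractional response with $\b{w}\cdot\b{h} = 51 \neq 50 = \b{w}\cdot\b{h'}$. This point sits exactly in your ``fully degenerate'' configuration: the only cost-bearing coordinate is locked at $h_2 = h'_2$ by condition~2, and trading it against $h'_1$ loses $B$-utility at rate $50$ per unit of budget against a gain of $1$, so no perturbation (and no averaging step) can improve the minimum or equalize the two inner products while staying optimal. What is actually true --- and all that the paper's subsequent lemmas need --- is the existential version: \emph{some} best fractional response satisfies $\b{w}\cdot\b{h} = \b{w}\cdot\b{h'}$, e.g., the output of Algorithm~2, which maintains equality by construction since every iteration increments both inner products by the same amount. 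The paper's one-line proof breezes past the same corner, so your proposal, though incomplete, locates the real defect in the statement more precisely than the published argument does.
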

\begin{proof}
Let $\b{y}$ be a best fractional strategy of the second player represented by $\b{h}$ and $\b{h'}$ where  $\b{w}\cdot\b{h} \neq \b{w}\cdot\b{h'}$. Without loss of generality, assume  $\b{w}\cdot\b{h} > \b{w}\cdot\b{h'}$. We show that it is possible to increase  $\min(\b{w}\cdot\b{h}, \b{w}\cdot\b{h'})$ by modifying \b{y}. Since $\b{w}\cdot\b{h} > \b{w}\cdot\b{h'}$ holds, there exists a battlefield $i\in [k]$ where $h_i > h'_i$. We can decrease $h_i$ by a nonzero amount and increase $h'_i$ using the extra amount of troop achieved from that. This modification of  $h_i$  and $h'_i$ increases   $\min(\b{w}\cdot\b{h}, \b{w}\cdot\b{h'})$ by a nonzero amount which is a contradiction with the assumption that \b{y} is a best fractional strategy.
\end{proof}

\begin{lemma} \label{lem:3-approx-2-strategy}
Algorithm~\ref{alg:2-fractional} gives a best fractional response of the second player.
\end{lemma}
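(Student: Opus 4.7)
The plan is to interpret the best fractional response problem as a linear program and show that Algorithm~\ref{alg:2-fractional} follows the continuous greedy rule that is optimal for this LP. By Lemma~\ref{lem:equal-utility}, it suffices to maximize a common value $U$ subject to $\b{w}\cdot\b{h}\ge U$, $\b{w}\cdot\b{h'}\ge U$, the coupling conditions (condition 2), the budget $\b{c}\cdot\b{h}+\b{c'}\cdot\b{h'}\le m$, and $h_i,h'_i\in[0,1]$. Since Algorithm~\ref{alg:2-fractional} explicitly preserves $\b{w}\cdot\b{h}=\b{w}\cdot\b{h'}$ throughout, it remains to argue that its final common value is largest possible.

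Next I would quantify the ``rate'' of each atomic move. Spending an infinitesimal $\delta$ troops on an available $h_i$ gains $\delta/r_i$ on the $\b{h}$-side; on an available $h'_j$ it gains $\delta/r'_j$ on the $\b{h'}$-side; on a jointly available pair $(h_d,h'_d)$, spending $\delta(c_d+c'_d)$ troops gains $\delta w_d$ on each side simultaneously. To respect the equality $\b{w}\cdot\b{h}=\b{w}\cdot\b{h'}$, the ``balanced'' atomic moves are either (i) a split move on indices $a,b$ that raises both sides by the same amount with troop cost per unit of common utility equal to $r_a+r'_b$, or (ii) a joint move on index $d$ with cost $r_d+r'_d$ per unit of common utility. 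Algorithm~\ref{alg:2-fractional} greedily picks the atomic move minimizing this cost among all currently available ones.

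The core argument is an exchange/swap argument. Suppose for contradiction there is a fractional strategy $(\b{h^*},\b{h'^*})$ with $\b{w}\cdot\b{h^*}=\b{w}\cdot\b{h'^*}=U^*>U$, where $(\b{h},\b{h'})$ is the algorithm's output with common value $U$. Decompose both $(\b{h},\b{h'})$ and $(\b{h^*},\b{h'^*})$ into a sum of elementary balanced moves of types (i) and (ii) (respecting the coupling constraints on each index); this is possible precisely because on any index the coupling condition makes the feasible region a chain in $[0,1]^2$. Consider the first elementary move in which the two decompositions differ. By greedy choice the algorithm's move has rate at most that of any move in $(\b{h^*},\b{h'^*})$ still available at that moment, and feasibility of the swap preserves all constraints, so we may replace the optimal's move by the algorithm's without decreasing $U^*$. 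Iterating, we transform $(\b{h^*},\b{h'^*})$ into an extension of $(\b{h},\b{h'})$ with the same or better value, which after exhausting the budget yields $U\ge U^*$, a contradiction.

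The main obstacle is establishing that the swap in each exchange step is indeed feasible: availability of an index in the algorithm's state is not the same as availability in the optimal's state, because the upper bounds $h_i\le 1$, $h'_i\le 1$ and the coupling $h_i\le h'_i$ or $h'_i\le h_i$ can be binding in different ways. To handle this, I would track, alongside the exchange, a careful bookkeeping of ``slack'' on each index: whenever the algorithm's next move involves an index $i$ where the optimal is already saturated, we first reroute the optimal's mass off $i$ onto some alternative move with rate no better than the algorithm's (which must exist by greedy choice, since otherwise that alternative would have been selected by the algorithm earlier), and only then perform the swap. This monotone swap procedure terminates and produces the required contradiction, completing the proof.
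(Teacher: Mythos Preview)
Your approach and the paper's are both exchange arguments, but the paper executes it more directly. Rather than decomposing both solutions into sequences of elementary balanced moves and comparing the sequences, the paper compares the state vectors themselves: it takes an optimal solution $(\b{\eta},\b{\eta'})$, finds the \emph{first iteration} $t$ of the algorithm after which some coordinate of $(\b{h},\b{h'})$ strictly exceeds the corresponding coordinate of $(\b{\eta},\b{\eta'})$, and among all optimal solutions picks one minimizing $t$ (and then the overshoot). It then does a single local perturbation of $(\b{\eta},\b{\eta'})$---decrease $\eta_{i'}$ (or $\eta'_{j'}$) at an index where the optimal exceeds the algorithm, and increase it at the index the algorithm chose---using the greedy property only to certify that this perturbation costs no more troops and preserves feasibility. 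This pushes $t$ later (or shrinks the overshoot), contradicting minimality.

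The gap in your proposal is exactly where you flag it. Your rerouting justification, ``which must exist by greedy choice, since otherwise that alternative would have been selected by the algorithm earlier,'' does not go through as stated: the algorithm's greedy choice is made with respect to \emph{its} current state, not the state of the optimal solution after a partial chain of reroutes. Once you begin rerouting mass off a saturated index, the set of available moves for the optimal changes, and you need a separate invariant guaranteeing that every subsequent reroute can still land on a move with rate no worse than the algorithm's. Proving that invariant is essentially the whole lemma. The paper's vector-level argument avoids this entirely: because it never decomposes into moves, a single swap at the first point of divergence suffices, and the ``alternative index'' it needs is found directly from the inequality $\b{w}\cdot\b{\eta}>\b{w}\cdot\b{h}$ (or the analogous one on the primed side), not from any chain of feasibility-preserving reroutes.
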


\begin{proof}
Let strategy \b{y} denote the  output of Algorithm~\ref{alg:2-fractional}. 
 Assume $\b{y}$ is not a best response of the second player  and let strategy $\b{y'}$ represented by a pair of vectors $\b{\eta}$ and $\b{\eta'}$ be a best response of player 2 against the  \cmixedstrategy{2} ($\b{x}$, $\b{x'}$).   Let $t$ denote the first iteration of the algorithm after which there exists at least a battlefield $i\in [k]$ where $\eta_i < h_i  $ or $ \eta'_i < h'_i$ (Vectors \b{h} and \b{h'} are defined in Algorithm~\ref{alg:2-fractional}.) Let $i$ and $j$ denote the battlefields where $h_i$ and $h'_j$ are increased in the $t$-th iteration of the algorithm. (It is possible that $j=i$.) 
 Assume that among all the best strategies of the second player $\b{y'}$ is the strategy that minimizes $t$. Also among all those with minimum $t$, $\b{y'}$ is the one that minimizes $(h_i - \eta_i) +  (h'_j- \eta'_j)$. We show that given such a strategy we can modify it in a way that $t$ or $(h_i - \eta_i) +  (h'_i- \eta'_i)$ decreases which is a contradiction. Note that $\utilityb{\b{x}}{\b{y}} = \utilityb{\b{x'}}{\b{y}}$ holds since each iteration of the Algorithm~\ref{alg:2-fractional} increases them by the same amount. Therefore, strategy $\b{y'}$ should perform  better than $\b{y}$ against both $\b{x}$ and $\b{x'}$. More formally, $\utilityb{\b{x}}{\b{y}} < \utilityb{\b{x}}{\b{y'}}$ and   $\utilityb{\b{x'}}{\b{y}} < \utilityb{\b{x'}}{\b{y'}}$ hold. There are two possible cases for the relation between $\b{h}$ and $\b{h'}$ after iteration $t$. In both cases we prove that it is possible to modify $\b{y'}$ and lower $(h_i - \eta_i) +  (h'_i- \eta'_i)$ or $t$ while it is still a best strategy of the second player.

\paragraph{Case 1:} Both $h_i > \eta_i$ and $h'_j > \eta'_j$ hold. In this case, strategy \b{y} achieves more utility against \b{x} in battlefield $i$ and  against \b{x'} in battlefield $j$ than strategy \b{y'} does. Therefore, strategy \b{y'} compensates this by putting more troops in other battlefields. If there is a battlefield $i'\in [k]$, where $h_{i'} < \eta_{i'}$ and  $h'_{i'} < \eta'_{i'}$ then it is possible to modify strategy \b{y'} by decreasing $\eta_{i'}$ and  $\eta'_{i'}$ by a nonzero amount and increasing $\eta_i$ and  $\eta'_j$  such that the amount of troops used by strategy \b{y'} does not increase and its utility does not decrease. It is possible since Algorithm~\ref{alg:2-fractional} in each iteration,  among all the available elements chooses a pair that spending a unit of troop in them gives us the maximum amount of utility, and elements 
$h_{i'}$ and $h'_{i'}$ are both available at this iteration. If this does not hold then there are two battlefields $i', j'\in [k]$ where $h_{i'} < \eta_{i'}$ and  $h'_{j'} < \eta'_{j'}$. In this case, the only difference is that we need to make sure that both $h_{i'}$ and $h'_{j'}$ are available at this iteration which can be inferred from the fact that for both $i'$ and $j'$,  equations  $h'_{i'} \geq \eta'_{i'}$ and  $h_{j'} \geq \eta_{j'}$ hold.

\paragraph{Case 2:} Exactly one of $h_i > \eta_i$ and  $h'_j > \eta'_j$ holds. Without loss of generality, we assume $h_i > \eta_i$ holds. Note that by Lemma~\ref{lem:equal-utility} the amount of utility that strategy \b{y'} gets against strategies \b{x} and \b{x'} is equal. Also, since $h_i > \eta_i$ and  $h'_j \leq \eta'_j$ hold, there exists a battlefield $i'\in [k]$ where $\eta_{i'} - h_{i'} >  \eta'_{i'} - h'_{i'}$. This means that $h_{i'}$ is available at iteration $t$. However, in each iteration, Algorithm~\ref{alg:2-fractional}, chooses a pair of elements in \b{h} and \b{h'} where the amount of utility achieved per spending a unit of troop in their combination is maximized. This yields that it is possible to modify \b{y'} by decreasing $\eta_{i'}$ and increasing  $\eta_{i}$ in a way that $\b{\eta'}$ and $\b{\eta}$ are still a valid representation of strategy \b{y'} and \utilityb{\b{y}}{\b{x}} is not decreased.

In both cases we proved that it is possible to modify strategy \b{y} such that either $t$ decreases by 1 or $(h_i - \eta_i) +  (h'_j- \eta'_j)$  decreases while $t$ is not changed,  which is a contradiction with the assumptions that we had on strategy \b{y'}. Thus, strategy \b{y} the output of Algorithm~\ref{alg:2-fractional} is indeed the best fractional response of the second player
\end{proof}

To give a 2-approximation of the first player's best \cmixedstrategy{2}, we first define the greedy opponent. The exact definition is given in Definition~\ref{def:greedy opponent}. His response is to find a fractional response using Algorithm \ref{alg:2-fractional} and round it down to an integral one. In Observation~\ref{lem:two-different} we prove that the greedy opponent's response differs from the fractional response in at most two battlefields. Note that if $\wmax > u/3$ holds, the first player simply achieves \maxmin{u/3}{p} \cmixedstrategy{2} by putting all his troops on the battlefield with wight $\wmax$. For the simplicity throughout the paper we assume $\wmax \leq u/3$ holds. This yields that the best \cmixedstrategy{2}  of the first player against such an  opponent is a \maxmin{u/3}{p} \cmixedstrategy{2}. Thus, in the rest of this section we focus on finding the best strategy of player 1  against the greedy opponent.

\begin{definition}[Greedy Opponent] \label{def:greedy opponent}
Greedy opponent is a weaker version of the second player. The response of this opponent against a \cmixedstrategy{2} of the first player, $\b{s}=(\b{x}, \b{x'})$, is as follows. 
 Using Algorithm~\ref{alg:2-fractional}, he first finds a best fractional response of the second player against $\b{s}$ , which is denoted by vectors $\b{h}$ and $\b{h'}$. Since it is a fractional response some of the elements in these vectors are fractional. The greedy opponent rounds down the elements of these vectors and plays according to the rounded vectors. 
	
\end{definition}

\begin{observation}\label{lem:two-different}
Let strategy $\b{s}$  be the fractional response of player 2 against a given \cmixedstrategy{2} of the first player that is obtained from Algorithm~\ref{alg:2-fractional}. Also, let $\b{s'}$  denote the response of the greedy opponent against the same strategy of the first player. Strategies $\b{s}$  and $\b{s'}$  put different amount of troops in at most two battlefields. 
\end{observation}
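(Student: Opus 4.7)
The plan is to establish the observation by tracking the state of the vectors $\b{h}$ and $\b{h'}$ across the iterations of Algorithm~\ref{alg:2-fractional}. Since the greedy opponent rounds each entry of these vectors down, the troop allocation in battlefield $i$ changes if and only if $h_i$ or $h'_i$ is strictly between $0$ and $1$ at termination. Thus the observation reduces to showing that at most two battlefields carry such fractional entries in the algorithm's output.

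First I would observe that the ratios $r_i = c_i/w_i$ and $r'_i = c'_i/w_i$ depend only on the instance data and are invariant throughout the algorithm's execution. Consequently, the priority structure used by $\argmin$ in the choices of $a$, $b$, and $d$ is fixed, and a battlefield's position in this priority depends only on which of the sets $B_1, B'_1, B_2$ it belongs to. Next, I would classify an iteration as \emph{terminating} if the budget constraint is the binding one; otherwise the binding constraint is either an upper bound $h_i = 1$ (or $h'_i = 1$) or an ordering constraint $h_a = h'_a$ (respectively $h'_b = h_b$). In a non-terminating iteration the binding constraint forces at least one of the increased elements to attain an integer value, or to be tied to the matching entry on the same battlefield.

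The heart of the argument will be an inductive claim: after every non-terminating iteration, at most one battlefield contains a fractional $h$- or $h'$-entry. The intuition is that once a battlefield $e$ acquires a fractional entry, that entry continues to be the minimum-ratio element in the relevant available set (because ratios do not change), so Algorithm~\ref{alg:2-fractional} keeps selecting $e$ as $a$, $b$, or $d$ in subsequent iterations and saturates it before opening fractional activity on another battlefield. In the terminating iteration the budget cap can leave the increments partial: the \textbf{if}-branch modifies at most the two distinct battlefields $a$ and $b$, while the \textbf{else}-branch modifies only $d$. Combining the inductive invariant with this terminal step yields the bound of at most two battlefields carrying fractional entries.

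The hard part will be the inductive step in the case when the binding ordering constraint $h_e = h'_e$ is triggered at a non-integer value, removing $h_e$ (or $h'_e$) from the $B_1$ (or $B'_1$) availability set. I would have to argue carefully that the companion entry of battlefield $e$ still lies at the minimum of its own availability set, so the algorithm's subsequent focus remains on $e$ and no stray second fractional battlefield accumulates. Once this is settled, rounding down entrywise alters the troop placement only on those at-most-two battlefields, which proves the observation.
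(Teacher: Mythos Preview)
Your proposal is correct and follows essentially the same approach as the paper. The paper's proof of the observation is a one-line appeal to a separate structural lemma (Lemma~\ref{lem:theverygoodlem}) which establishes exactly the inductive invariant you describe: at the start of each iteration at most one of $h_a, h'_b, h_d, h'_d$ is fractional (and it is in fact one of $h_a$ or $h'_b$), and the terminal iteration can introduce at most one additional battlefield with fractional entries. Your sketch of the inductive step, including the identification of the ordering-constraint case as the delicate point, mirrors the paper's argument closely.
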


\begin{proof}
	We prove in Lemma~\ref{lem:theverygoodlem} that there are at most two battlefields $i, j\in [k]$ where $h_i, h_j, h'_i$ or $h'_j$ are fractional. Since the greedy opponent rounds down these vectors, his strategy differs from the fractional strategy in at most two battlefields.
\end{proof}

In the following lemma we present three import properties of Algorithm~\ref{alg:2-fractional} which we use later to prove the main theorem of this section.
  
 \begin{lemma} \label{lem:theverygoodlem}
 Consider $a, b$ and $c$ that are defined in lines \ref{line:4}, \ref{line:5} and \ref{line:6} of Algorithm~\ref{alg:2-fractional}.
 \begin{enumerate}[topsep=0pt,itemsep=-1ex,partopsep=1ex,parsep=1ex]
 \item At the beginning of each iteration of Algorithm~\ref{alg:2-fractional}, any fractional element of $\b{h}$ and $\b{h'}$ is in $\{h_a, h'_b, h_c, h'_c\}$.
 \item Also, at the beginning of each iteration at most one of $h_a, h'_b, h_c$ or $h'_c$ is fractional.
 \item After the last iteration of the algorithm there are at most two indices $i, j\in [k]$ where $h_i, h_j, h'_i$ or $h_j$ is fractional.
 \end{enumerate}
 
 \end{lemma}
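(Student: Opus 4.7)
The plan is to prove parts~1 and~2 jointly by induction on the iteration index, and then derive part~3 from the state at termination.

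The base case is trivial: before iteration~$1$ every coordinate of $\b{h}$ and $\b{h'}$ is zero, so no coordinate is fractional. For the inductive step, assume the invariants hold at the start of iteration~$t$, with $a=a_t$, $b=b_t$, $d=d_t$ denoting the indices computed there. In iteration~$t$ the algorithm modifies at most two coordinates: either $h_{a_t}$ and $h'_{b_t}$ (case~(a), coupled through $t\,w_{a_t} = t'\,w_{b_t}$) or $h_{d_t}$ and $h'_{d_t}$ (case~(b), by the same additive amount~$t$). Since $t+t'$ (respectively $t$) is maximized subject to budget and validity, the iteration stops for exactly one of the following reasons: a modified coordinate hits its availability upper bound (either the value $1$ or the value of its ``counterpart'' coordinate), or the budget $m'$ is exhausted and the algorithm terminates.

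The essential structural observation is that whenever a modified coordinate stops at its non-unit availability bound (for example, $h_{a_t}$ reaches $h'_{a_t}$ in the regime $x'_{a_t}<x_{a_t}$), the counterpart coordinate is either non-fractional by the inductive hypothesis (because it does not belong to $\{h_{a_t},h'_{b_t},h_{d_t},h'_{d_t}\}$ and so by part~1 is not fractional at iteration~$t$) or it coincides with the unique pre-existing fractional element. In the former case the stopped coordinate becomes non-fractional too; in the latter both coordinates end up fractional but at a single index that was already selected by the algorithm. Because the ratios $r_i, r'_i$ do not change across iterations, any surviving fractional coordinate remains the argmin of its set ($B_1$, $B'_1$, or $B_2$) at iteration~$t+1$ provided it is still available, which places it inside the new tuple $\{h_{a_{t+1}},h'_{b_{t+1}},h_{d_{t+1}},h'_{d_{t+1}}\}$ and preserves part~1. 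Preservation of part~2 then follows from the optimality inequality $r_{a_t}+r'_{b_t} < r_{d_t}+r'_{d_t}$ (or its reverse) that directed the algorithm into its chosen branch: any configuration in which a previously pending fractional element coexists with a newly created fractional element at a different type of index would contradict the minimality of the pair that was selected.

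For part~3, the algorithm terminates at some iteration $t^{\star}$ because the budget runs out during the increment step. The coordinates that become fractional during $t^{\star}$ itself lie among the at most two indices touched in that iteration (either $\{a_{t^{\star}},b_{t^{\star}}\}$ in case~(a) or the single index $\{d_{t^{\star}}\}$ in case~(b)). By parts~1 and~2 applied at the start of iteration~$t^{\star}$, any pre-existing fractional coordinate lies in $\{h_{a_{t^{\star}}},h'_{b_{t^{\star}}},h_{d_{t^{\star}}},h'_{d_{t^{\star}}}\}$ and so shares an index with the coordinates touched during $t^{\star}$; therefore at termination all fractional coordinates are confined to at most two distinct indices $i,j\in[k]$. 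The principal difficulty in this plan is the case analysis underlying the preservation of part~2, specifically tracking how the availability sets $B_1$, $B'_1$, $B_2$ evolve and how index collisions (for instance $a_t=b_t$ or $a_t=d_t$) interact with an increment that simultaneously creates and closes fractional coordinates; the clean invariant that makes everything go through is the fact that the minimum-ratio choices are preserved between iterations as long as the chosen coordinates remain available, which forces any surviving pending fractional coordinate to be re-selected inside the next iteration's allowed tuple.
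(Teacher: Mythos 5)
Your skeleton matches the paper's: induction over iterations, powered by the observations that each element's ratio is fixed once and for all, that the minimum ratio over the available sets never decreases (so a previously increased, still-available element is re-selected as the argmin), and that every fractional element is available or jointly available; part~3 is then read off the state at the final iteration. But two steps have genuine gaps. The first is that you never establish the deduction on which both part~2 and part~3 actually rest: if at the start of an iteration at most one coordinate anywhere is fractional, that coordinate is necessarily \emph{independently} available (were it only jointly available, i.e.\ $x_i > x'_i$ and $h_i = h'_i$, its same-index partner would be fractional as well), and hence by the argmin argument it must be $h_a$ or $h'_b$ --- never $h_c$ or $h'_c$ at a fresh index. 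Without this, your part~3 fails: you assert that any pre-existing fractional coordinate ``shares an index with the coordinates touched during $t^{\star}$,'' but the tuple $\{h_a, h'_b, h_c, h'_c\}$ spans up to three distinct indices while the last iteration touches only $\{a,b\}$ or only $\{c\}$, so a pre-existing fractional $h_c$ combined with the branch that increments $h_a$ and $h'_b$ would, on your accounting, leave fractional entries on \emph{three} battlefields. The paper closes exactly this hole: the lone fractional element is $h_a$ or $h'_b$, so branch (a) confines all fractional entries to indices $a,b$, and branch (b) to $c$ plus one of $a,b$.

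The second gap is that your handling of the stop-at-fractional-counterpart event is wrong rather than merely incomplete. You allow the outcome ``both coordinates end up fractional but at a single index''; if that could occur in a non-final iteration, the pair would be jointly available at the start of the next iteration and, by your own argmin reasoning, would surface as $h_c, h'_c$ both fractional --- contradicting part~2. So this case must be shown \emph{impossible}, not absorbed. And it is: if, say, $h_a$ is constrained by $h_a \le h'_a$ (the regime $x_a > x'_a$) and $h'_a$ is fractional, then $h'_a$ is itself independently available (increasing it alone only loosens the constraint) and was previously increased, so by ratio-monotonicity and the tie-breaking rule $h'_a = h'_b$, i.e.\ $a = b$; the coupled update then increases $h_a$ and $h'_a$ at equal rates (since $t\, w_a = t'\, w_b$ with $w_a = w_b$), the gap $h'_a - h_a$ is preserved, and the binding stop is at the value $1$, never at the fractional counterpart. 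Your appeal to the branch-selection inequality $r_a + r'_b < r_c + r'_c$ to ``contradict the minimality of the pair that was selected'' does no work here: that inequality compares ratios of candidate increments and says nothing about where an increment halts. (Your part~1 and the base case are essentially the paper's argument and are fine.)
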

 
 \begin{proof}
 We first prove that at the beginning of the algorithm, all the elements of \b{h} and \b{h'} other than $h_a, h'_b, h_c$, and $h'_c$ are integral. Let $r$, $r'$ respectively denote the minimum ratios among the available elements of vectors \b{h} and \b{h'} in an arbitrary iteration of the algorithm. It is easy to see that neither $r$ nor $r'$ decreases throughout the algorithm. Thus, any element that is increased before this iteration is either unavailable or is in \{$h_a, h'_b, h_c$, $h'_c$\}. Also note that any element that has value less than $1$ is either available or jointly available. Therefore, any element that is not in \{$h_a, h'_b, h_c$, $h'_c$\} is equal to either $0$ or $1$.
 
Moreover, we show that at the beginning of each iteration of Algorithm~\ref{alg:2-fractional}, at most one of $h_a, h'_b, h_c$ and  $h'_c$ is fractional. We use proof by induction. We prove that if it is not the last iteration of the algorithm, and at the beginning of that at most one element in \{$h_a, h'_b, h_c$, $h'_c$\} is fractional, after this iteration the same holds. It suffices since we know that any fractional element is in \{$h_a, h'_b, h_c$, $h'_c$\}. Note that since at most one element in both vectors is fractional this fractional element is independently available. Therefore, it is either $h_a$ or $h'_b$. If $r_a + r'_b \leq r_c + r_c $ then the algorithm increases $h_a$ and $h'_b$ until one of them is not available anymore (note that it is not the last iteration) which means it is equal to $1$. Otherwise, the algorithm increases both $h_c$ and $h'_c$ by the same amount until one of them becomes equal to $1$. Therefore, in both cases at most one element remains fractional.

Now, consider the last iteration of the algorithm. We proved that at the beginning, at most one element in \{$h_a, h'_b, h_c$, $h'_c$\} is fractional. Also, the fractional element is either $h_a$ or $h'_b$. If at this iteration, the algorithm changes $h_a$ and  $h'_b$, then these are the only ones that can be fractional. Also, if the algorithm changes $h_c$ and $h'_c$, then only $h_c$, $h'_c$ and one of  $h_a$ or $h'_b$ can be fractional. In both cases there are at most two battlefields $i$ and $j$ where $h_i, h_j, h'_i$ or $h'_j$ are fractional.
 \end{proof}

 For any \cmixedstrategy{2} of the first player we define a signature which is determined by Algorithm~\ref{alg:2-fractional} and  the state in which this algorithm terminates.

\begin{definition}[Signature of a 2-strategy] \label{def:sign}
	For any strategy $s = (\b{x}, \b{x'})$ find the best fractional response of the algorithm using Algorithm~\ref{alg:2-fractional} and denote it by vectors $\b{h}$ and $\b{h'}$. consider sets $B_1$, $B'_1$, and $B_2$ in the last iteration of the algorithm. Define \begin{itemize}[topsep=3pt,itemsep=-1ex,partopsep=1ex,parsep=1ex]
 	\item $a := \argmin_{i\in B_1} r_i $, 
 	\item  $b :=  \argmin_{i\in B'_1} r'_i$, and 
 	\item $c:=\argmin_{i\in B_2} r_i + r'_i$.
 \end{itemize}
 Also, let $\mu$ be the total number of troops that are spent on $h_a$, $h'_b$, $h_c$ and $h'_c$. For instance, if $a$, $b$ and $c$ are three different battlefields, $\mu := c_a \cdot h_a + c'_b \cdot h'_b +  c_c \cdot h_c + c'_c \cdot h'_c$. Moreover, denote by $\dot{u_1}$ and $\dot{u_2}$ the utility that the greedy opponent achieves against strategies \b{x} and \b{x'} in these battlefields.
	We define the signature of strategy $s$ to be $(a, b, c, \mu , x_{a,b,c}, x_{a,b,c}', \dot{u_1}, \dot{u_2} )$ and we denote it by $\sign{s}$. Note that we represent $(x_a, x_b, x_c)$ by $x_{a,b,c}$. 
\end{definition}

\begin{lemma}\label{lem:sign-respose}
Let $\sign{}$ be the signature of the greedy opponent's response against a 2-strategy of the first player ($\b{x}, \b{x'}$). There exists a function that determines the response of the greedy opponent in any battlefield $i \in [k]-\{a,b,c\}$ given $x_i$,  $x'_i$ and $\sign{}$. 
\end{lemma}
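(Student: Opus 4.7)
The plan is to leverage the structural guarantee in part 3 of Lemma~\ref{lem:theverygoodlem}: after Algorithm~\ref{alg:2-fractional} terminates, the only indices at which $\b{h}$ or $\b{h'}$ can be fractional lie in $\{a,b,c\}$. Hence, for every battlefield $i \notin \{a,b,c\}$, both $h_i$ and $h'_i$ already belong to $\{0,1\}$; rounding down (the only operation the greedy opponent performs on top of the fractional response) leaves them untouched, and the greedy opponent's troop count on battlefield $i$ equals $h_i c_i + h'_i c'_i$, where $c_i$ and $c'_i$ are computable directly from $(x_i, x'_i)$. It therefore suffices to exhibit a function that, given $\sign{}$ and $(x_i, x'_i)$, outputs the two bits $(h_i, h'_i)$.

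From $\sign{}$ we recover the terminal cutoff ratios $r_a = c_a/w_a$, $r'_b = c'_b/w_b$, $r_c = c_c/w_c$, $r'_c = c'_c/w_c$ that appear in lines~\ref{line:4}--\ref{line:6} of the last iteration of Algorithm~\ref{alg:2-fractional}, and from $\mu, \dot{u}_1, \dot{u}_2$ we recover the partial-assignment state among $\{h_a, h'_b, h_c, h'_c\}$. The rule I would prove is the following threshold characterization: for $i \notin \{a,b,c\}$, $h_i = 1$ iff either $r_i < r_a$ (so $i$ would have been selected in $B_1$ strictly before $a$ in some earlier iteration), or $r_i + r'_i < r_c + r'_c$ and condition~\ref{condition 2} permits the joint increase at $i$; an analogous rule, with $r'_b$ in place of $r_a$, characterizes $h'_i$. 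Since every ingredient is a function of $\sign{}$ and of $(x_i, x'_i, w_i)$, this gives the desired function.

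The justification is a monotonicity argument. The minima $r_a$, $r'_b$, and $r_c + r'_c$ taken in lines~\ref{line:4}--\ref{line:6} are nondecreasing across the iterations of Algorithm~\ref{alg:2-fractional}, because any index realizing a current minimum either continues to do so at the next iteration or leaves the corresponding set $B_1$, $B'_1$, $B_2$ (by being saturated to $1$ or by losing availability via condition~\ref{condition 2}), in which case the new minimum can only be larger. Hence any index $i \notin \{a,b,c\}$ with ratio strictly below the corresponding terminal cutoff must have been picked as $a$, $b$, or $d$ in some strictly earlier iteration; a short case analysis, splitting on which of the two branches of the algorithm was taken at that iteration and on whether the troop budget was exhausted there, shows that $h_i$ (respectively $h'_i$) must have been raised all the way to $1$ by the time the algorithm reaches its final iteration. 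Indices with ratios exceeding all terminal cutoffs were never chosen and still carry their initial value $0$.

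The main obstacle is the coupling between $h_i$ and $h'_i$ induced by condition~\ref{condition 2}, which forces one coordinate to wait for the other depending on the sign of $x_i - x'_i$. I would resolve this by splitting on whether $x_i \le x'_i$ or $x_i > x'_i$, both of which are known from the input, and, in each sub-case, tracing separately the first iteration at which $h_i$ enters $B_1$ or $B_2$ and the first at which $h'_i$ enters $B'_1$ or $B_2$. In each sub-case the terminal bits $(h_i, h'_i)$ are uniquely pinned down by the cutoff comparisons above (with the same tie-breaking as in Algorithm~\ref{alg:2-fractional}, namely smaller index first), yielding the claimed function.
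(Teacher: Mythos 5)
Your high-level plan coincides with the paper's: use part 3 of Lemma~\ref{lem:theverygoodlem} to get integrality outside $\{a,b,c\}$, recover the terminal cutoff ratios $r_a$, $r'_b$, $r_c+r'_c$ from the signature, and exploit the fact that the minima computed in lines~\ref{line:4}--\ref{line:6} of Algorithm~\ref{alg:2-fractional} never decrease (the paper packages the resulting rule as Algorithm~\ref{alg:partial-strategy}). However, the threshold rule you propose is wrong in exactly the case that the paper's proof spends most of its effort on. You characterize $h'_i$ by the symmetric disjunction ``$r'_i < r'_b$, or $r_i + r'_i < r_c + r'_c$ and condition~\ref{condition 2} permits the joint increase.'' Take $x_i \le x'_i$ with $r_i < r_a$, $r'_i \ge r'_b$, and $r_i + r'_i < r_c + r'_c$: the joint increase at $i$ is perfectly permitted by condition~\ref{condition 2} (raising both coordinates together from $(0,0)$ never violates $h'_i \le h_i$), so your rule outputs $h'_i = 1$ --- but the correct value is $h'_i = 0$. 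The reason is that while $h_i < 1$, the joint option at $i$ is never selected: at any such iteration the current individual minima $\gamma, \gamma'$ satisfy $\gamma \le r_i$ (since $h_i$ itself is individually available) and $\gamma' \le r'_b \le r'_i$ (by monotonicity), so $\gamma + \gamma' \le r_i + r'_i$ and the algorithm prefers the individual pair; once $h_i$ saturates to $1$, the pair $(h_i,h'_i)$ leaves $B_2$ permanently, and thereafter $h'_i$ can only rise individually at ratio $r'_i \ge r'_b$, which never beats the terminal minimum. This also falsifies your intermediate claim that ``any index with ratio strictly below the corresponding terminal cutoff must have been picked as $a$, $b$, or $d$ in some strictly earlier iteration'' and hence driven to $1$: an index can sit strictly below the terminal \emph{joint} cutoff yet exit $B_2$ by saturation of $h_i$ alone, never having been picked as $d$.

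The correct function is therefore asymmetric, as in Algorithm~\ref{alg:partial-strategy}: for $x_i \le x'_i$, if $r_i < r_a$ (ties broken by index) then $h_i = 1$ and $h'_i = 1$ iff $r'_i < r'_b$; the joint test against $r_c + r'_c$ is consulted only when the individual test for $h_i$ fails. Proving that $h'_i = 0$ in the case $r_i < r_a$, $r'_i \ge r'_b$ even when $r_i + r'_i \le r_c + r'_c$ is precisely the contradiction argument the paper gives (and the one sketched above), and it is the missing ingredient in your proposal. A smaller instance of the same slip: with $x_i \le x'_i$, your first disjunct sets $h'_i = 1$ from $r'_i < r'_b$ alone, but if $h_i = 0$ (i.e., $r_i \ge r_a$ and $r_i + r'_i \ge r_c + r'_c$) then condition~\ref{condition 2} forces $h'_i = 0$, and no contradiction with the minimality of $r'_b$ arises because $h'_i$ is not available when $h'_i = h_i$.
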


\begin{proof}
Let $\sign{}$ be $(a, b, c, \mu , x_{a,b,c}, x_{a,b,c}', \dot{u_1}, \dot{u_2} )$. Also without loss of generality, we assume that $x_i \leq x'_i$: therefore,  $c_i = x_i$ and $c'_i = x'_i - x_i$ hold. We claim that Algorithm~\ref{alg:partial-strategy} finds $h_i$ and $h'_i$ given $x_i$, $x'_i$, and signature $\sign{}$. To prove that this algorithm is correct, we need to show that if $h_i$ or $h_j$ is set to $1$ by this algorithm, any greedy response to a strategy with this signature also sets this to $1$. We also need to prove that any element that is set to $0$ is also $0$ any greedy response to a strategy with this signature. 

Note that the minimum ratio of the available (or jointly available) elements does not decrease after each iteration. Also, note that in Algorithm~\ref{alg:2-fractional}, the tie breaking rule while finding the available element with the minimum ratio is the index of elements. Thus, the definition of the signature directly results that any element that is set to $1$ by this algorithm is indeed correct. We just need to show the second part. By Algorithm~\ref{alg:2-fractional}, for battlefield $i$, if $r_i + r'_i > r_c + r'_c$ holds or  $r_i + r'_i = r_c + r'_c$ and $i > c$ hold, $h'_i = 0$.  Also, if in addition to the mentioned condition, $r_i > r_a$ holds and  $r_i = r_a$ and $i > a$ hold then, $h_i = 0$.  The only remaining case is when $r_i < r_a$ and  $r'_i \geq r_b'$ holds. In this case  the algorithm sets $h'_i = 0$ even if $r_i + r'_i \leq r_c + r'_c$. We prove by contradiction that $h'_i = 0$ is correct in this situation. Assume that there exists a response in which $h'_i = 1$. Let $\gamma$ and $\gamma'$ respectively denote the minimum ratios among the ratio of available elements in vectors $\b{h}$ and $\b{h'}$ at the iteration that $h'_i$ first changes. The only way that $h'_i$ changes is as a jointly available element. For this to be possible, conditions  $r_i + r'_i \leq \gamma +  \gamma' $ and  $r_i + \gamma' >  \gamma +  \gamma'$ must hold. This is a contradiction with the fact that minimum ratio of the available  elements does not decrease throughout the algorithm. Thus, Algorithm~\ref{alg:partial-strategy} correctly finds the best response of the greedy opponent in battlefield $i$. 
\end{proof}

\begin{algorithm}
	\caption{Greedy opponent's response in battlefield $i$ given $x_i$, $x'_i$ and $\sign{} = (a, b, c, \mu , x_{a,b,c}, x_{a,b,c}', \dot{u_1}, \dot{u_2} )$}
	\label{alg:partial-strategy}
	\begin{algorithmic}[1]
	\Statex Without loss of generality, this algorithm assumes that $x_i\leq x'_i$. Also $r_i$ and $r'_i$ are $i$-th elements of ratio vectors that can be computed using $x_i, x'_i$ and $w_i$.
	\State $h_i, h'_i \gets 0$
			\If{$r_i < r_a$ \b{or} ($r_i = r_a,  i < a$) }
				\State $h_i \gets 1$
				\If{$r'_i < r_b'$ \b{or} ($r'_i = r_b', i < b $)}
					\State $h'_i \gets 1$
				\EndIf
			\ElsIf{$r_i + r'_i < r_c + r_c'$ \b{or} ($r_i + r'_i = r_c + r_c', i < c$)}
				\State $h_i, h'_i \gets 1$
			\EndIf
		
		\State \Return $h_i$ and $h'_i$
	\end{algorithmic}	
\end{algorithm}


\begin{lemma} \label{lem:poly-sign}
Let $S$ be the set of 2-strategies of the first player in an instance of Discrete Colonel Blotto  and let $\sign{S}$ denote the set of signatures of the strategies in $S$. Size of the set $\sign{S}$ is polynomial and there exists a polynomial time algorithm to find it.
\end{lemma}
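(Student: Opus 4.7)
The plan is to first bound $|\sign{S}|$ by a polynomial by counting the number of admissible values of each coordinate of a signature, and then describe how to actually enumerate the achievable signatures.

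For the size bound I would argue coordinate by coordinate. The indices $a$, $b$, $c$ each take values in $[k] \cup \{\bot\}$ (with $\bot$ indicating that no such element exists in the last iteration of Algorithm~\ref{alg:2-fractional}), contributing at most $(k+1)^3$ possibilities. The partial allocations $x_{a,b,c}$ and $x'_{a,b,c}$ are triples of integers in $\{0, 1, \ldots, n\}$, contributing at most $(n+1)^6$. The total troop expenditure $\mu$ is an integer in $\{0, 1, \ldots, m\}$, contributing at most $m+1$. Finally, once $(a,b,c)$ and the partial allocations are fixed, the pair $(\dot{u}_1, \dot{u}_2)$ is determined by which of the three battlefields $\{a,b,c\}$ the greedy opponent wins against each of $\b{x}$ and $\b{x'}$, so there are at most $2^3 \cdot 2^3 = 64$ possibilities. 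Multiplying gives
\[
|\sign{S}| \;\leq\; (k+1)^3 \cdot (n+1)^6 \cdot (m+1) \cdot 64 \;=\; \poly(n,m,k),
\]
which proves the size bound.

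For the algorithmic part, I would enumerate every tuple in the Cartesian product above and, for each, decide whether it arises as $\sign{s}$ for some $s = (\b{x}, \b{x}') \in S$. Fixing a candidate signature pins down the partial strategies $(x_{a,b,c}, x'_{a,b,c})$ on three battlefields; what remains is to check that the pair can be completed to a valid 2-strategy on the other $k-3$ battlefields such that Algorithm~\ref{alg:2-fractional} terminates in precisely the configuration described by the signature. The key structural input comes from Lemma~\ref{lem:theverygoodlem}: at termination, every battlefield outside $\{a,b,c\}$ has its $h$-coordinates equal to $0$ or $1$, and by Lemma~\ref{lem:sign-respose} the greedy response on such a battlefield is determined by its ratios $r_i, r'_i$ and the thresholds $r_a, r'_b, r_c + r'_c$ encoded in the signature. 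Thus a completion is feasible iff one can assign integer troop counts $(x_i, x'_i)$ to the remaining battlefields so that (i) the two row sums do not exceed $n$, (ii) the induced ratios respect the threshold comparisons implied by the signature, and (iii) the resulting $\mu$, $\dot{u}_1$, $\dot{u}_2$ match. This is straightforwardly encoded as a dynamic program over the battlefields that tracks, at each step, the remaining budgets for $\b{x}$ and $\b{x}'$ and the partial contributions to $\dot{u}_1$ and $\dot{u}_2$, similarly in spirit to the DP of Lemma~\ref{lem:findstrata}.

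The main obstacle will be the feasibility check: we have to verify that a prescribed tuple really corresponds to a terminating state of Algorithm~\ref{alg:2-fractional}, which forces subtle ratio inequalities $r_i \ge r_a$, $r'_i \ge r'_b$, and $r_i + r'_i \ge r_c + r'_c$ (together with the correct tie-breaking by index) on every remaining battlefield. Because these constraints couple $x_i$ and $x'_i$ nonlinearly through the ratios $w_i/x_i$ and $w_i/(x'_i-x_i)$, care is needed in designing the DP state to enforce them locally. Once that is handled, iterating over the polynomially many candidate signatures and running the feasibility DP for each yields the set $\sign{S}$ in polynomial time.
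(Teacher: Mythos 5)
Your proposal is correct, and its skeleton --- a coordinate-by-coordinate count of $(a,b,c,\mu,x_{a,b,c},x'_{a,b,c},\dot{u_1},\dot{u_2})$ followed by enumeration of the Cartesian product --- is the same as the paper's. The one place you genuinely diverge is the pair $(\dot{u_1},\dot{u_2})$. You bound it by $2^3\cdot 2^3$ win/loss patterns on $\{a,b,c\}$, which is valid if these quantities are utilities of the \emph{rounded} greedy opponent, as Definition~\ref{def:sign} literally states. The paper instead treats them as utilities of the unrounded best \emph{fractional} response: it invokes Lemma~\ref{lem:equal-utility} to get $u_1+\dot{u_1}=u_2+\dot{u_2}$, deduces that $\dot{u_1}-\dot{u_2}$ has polynomially many values, and recovers $(\dot{u_1},\dot{u_2})$ by simulating the last iteration of Algorithm~\ref{alg:2-fractional}; this fractional reading is the one actually used downstream, in the filtering step of Theorem~\ref{theorem:3-approx}, where candidate solutions with $u_1+\dot{u_1}\neq u_2+\dot{u_2}$ are discarded. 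Under that reading your $64$-pattern count fails as stated, because one of $h_a,h'_b$ (and possibly $h_c,h'_c$) is fractional and contributes a non-integral amount to $\dot{u_1},\dot{u_2}$. The fix is cheap: by Lemma~\ref{lem:theverygoodlem} only $O(1)$ patterns of fractional elements occur, and once the pattern, the $0/1$ values of the remaining elements, $\mu$, and the cost entries induced by $x_{a,b,c},x'_{a,b,c}$ are fixed, the fractional values are pinned down by the budget identity $\mu=c_a h_a+c'_b h'_b+c_c h_c+c'_c h'_c$ together with the last-iteration coupling (e.g.\ $t\cdot w_a = t'\cdot w_b$), so the pair still takes $O(1)$ values per tuple --- which is exactly the paper's ``simulate the last iteration'' step, just made explicit.

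On the algorithmic half you do somewhat more than the paper's proof of this lemma: the paper only argues that the enumeration is polynomial (feasibility is deferred to the signature-indexed dynamic program inside Theorem~\ref{theorem:3-approx}, so a polynomial superset of \sign{S} suffices there), whereas you propose to prune infeasible tuples immediately with a DP in the style of Lemma~\ref{lem:findstrata} and Theorem~\ref{theorem:3-approx}, using Lemma~\ref{lem:sign-respose} to make the ratio-threshold constraints local. That is sound and consistent with the paper's machinery; just make sure your feasibility check also enforces the equal-utility condition $u_1+\dot{u_1}=u_2+\dot{u_2}$, since without it the DP can certify tuples that no actual run of Algorithm~\ref{alg:2-fractional} ever produces.
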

\begin{proof}
Let $(a, b, c, \mu , x_{a,b,c}, x_{a,b,c}', \dot{u_1}, \dot{u_2} )$ denote the signature of an arbitrary \cmixedstrategy{2} $s=(\b{x}, \b{x'})$. Note that all possible values of  $a, b, c, x_{a,b,c} $ and $  x_{a,b,c}'$ is bounded by a polynomial in $n$ and $k$. Also, by Lemma~\ref{lem:theverygoodlem}, the amounts of troop that the second player spends in the rest of the battlefields is integral. Therefore, $\mu$ is polynomial in $m$. Thus, we just need to prove that given an assignment to these variables there are polynomially many possible cases for values of $\dot{u_1}$ and $\dot{u_2}$. Let $u_1$ and $u_2$ denote the amount of troops that the the greedy opponent gets in the other $k-3$ battlefields. By Lemma~\ref{lem:equal-utility}, in a valid response of this opponent,  $u_1 + \dot{u_1} = \dot{u_2} + u_2$ holds. Since $u_1 - u_2$ has polynomially many possible values, the same holds for $\dot{u_1} - \dot{u_2}$. We claim  that there is a polynomial time algorithm that finds $\dot{u_1}$ and $\dot{u_2}$ given value of $\dot{u_1} - \dot{u_2}$. By Lemma~\ref{lem:theverygoodlem}, before the last iteration of Algorithm~\ref{alg:2-fractional}, at most one of $h_a, h'_b, h_c$ and $h'_c$ is fractional and all the other elements are integral. We also mention in the proof that this fractional one is either $h_a$ or $h'_b$. This element is responsible for the difference between $\dot{u_1}$ and $\dot{u_2}$; therefore, the value of that is uniquely determined given $\dot{u_1}-\dot{u_2}$. After fixing the value of this fractional element, we find the exact values of $\dot{u_1}$ and $\dot{u_2}$ by simulating the last iteration of the algorithm.
\end{proof}
\begin{theorem} \label{theorem:3-approx}
Given that there exists a \maxminc{u}{p}{2} of the first player, there is an algorithm that finds a \maxminc{u/3}{p}{2} of this player in polynomial time.
\end{theorem}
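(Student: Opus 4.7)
The plan is as follows. First, assume $\wmax\le u/3$; otherwise placing all $n$ troops on the battlefield of weight $\wmax$ is already a pure strategy (hence a 2-strategy) achieving utility $\wmax>u/3$ against every response of player~2. Enumerate all signatures of 2-strategies, of which there are polynomially many and which are efficiently listable by Lemma~\ref{lem:poly-sign}. For each signature $\sigma=(a,b,c,\mu,x_{a,b,c},x'_{a,b,c},\dot u_1,\dot u_2)$, run a dynamic program that finds the 2-strategy $(\b x,\b{x'})$ realizing $\sigma$ that minimizes the utility the greedy opponent's response obtains against it. The DP sweeps through the battlefields in order, with state (current index, troops of $\b x$ and $\b{x'}$ spent so far, troops spent by the greedy opponent so far, running utilities of the greedy opponent against $\b x$ and against $\b{x'}$). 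At battlefields in $\{a,b,c\}$ the placements are forced by $\sigma$, while at every other battlefield the DP enumerates the polynomially many choices of $(x_i,x'_i)$ and, by Lemma~\ref{lem:sign-respose}, uses Algorithm~\ref{alg:partial-strategy} to compute the local integer response $(h_i,h'_i)$ of the greedy opponent from $(x_i,x'_i)$ and $\sigma$. Writing $G_1(s):=\utilityb{\b x}{\b y_g}$ and $G_2(s):=\utilityb{\b{x'}}{\b y_g}$ where $\b y_g$ is the greedy response to $s$, the identity $\min_s\min(G_1(s),G_2(s))=\min(\min_s G_1(s),\min_s G_2(s))$ lets us optimize $G_1$ and $G_2$ as scalar objectives separately per signature and then take the smaller. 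Return the overall minimizer $s_{\mathrm{alg}}$.

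For the approximation, let $s^*$ be the hypothesized \maxminc{u}{p}{2}; by Observations~\ref{obs:pmorethanhalf} and~\ref{obs:plessthanhalf} recapped at the start of this section we may take $p=\sfrac{1}{2}$. Define $\mathrm{OPT}(s):=\max_{\b y}\min(\utilityb{\b x}{\b y},\utilityb{\b{x'}}{\b y})$, let $F(s)$ denote the value of player~2's best fractional response (Algorithm~\ref{alg:2-fractional}), and set $\mathrm{GREEDY}(s):=\min(G_1(s),G_2(s))$. Every integer response is a feasible fractional response, so $\mathrm{OPT}(s)\le F(s)$, and by Observation~\ref{lem:two-different} the greedy and fractional responses differ on at most two battlefields, each of weight at most $\wmax$; hence $F(s)-\mathrm{GREEDY}(s)\le 2\wmax$. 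Since $s^*$ is a \maxminc{u}{\sfrac{1}{2}}{2}, for any $\b y$ at least one of $\b x^*,\b{x'}^*$ gets utility $\ge u$, so $\mathrm{OPT}(s^*)\le\sum_i w_i - u$; because $s^*$ has some signature our algorithm enumerates, optimality gives $\mathrm{GREEDY}(s_{\mathrm{alg}})\le\mathrm{GREEDY}(s^*)\le\mathrm{OPT}(s^*)\le\sum_i w_i - u$. Chaining,
\[
\mathrm{OPT}(s_{\mathrm{alg}}) \;\le\; F(s_{\mathrm{alg}}) \;\le\; \mathrm{GREEDY}(s_{\mathrm{alg}}) + 2\wmax \;\le\; \sum_i w_i - u + 2\wmax,
\]
which means that against any response of player~2 at least one of the two pure strategies in $s_{\mathrm{alg}}$ guarantees utility $\ge u-2\wmax\ge u-2(u/3)=u/3$, making $s_{\mathrm{alg}}$ a \maxminc{u/3}{\sfrac{1}{2}}{2}.

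The main technical obstacle is ensuring that the strategy emitted by the DP truly realizes its intended signature: beyond forcing the placements on $\{a,b,c\}$ to match $x_{a,b,c},x'_{a,b,c}$ and induce greedy utilities $\dot u_1,\dot u_2$ there, the ratios $w_i/x_i$ and $w_i/x'_i$ of every outside battlefield must be consistent with $a,b,c$ remaining the argmins chosen in the last iteration of Algorithm~\ref{alg:2-fractional}. Fortunately, Algorithm~\ref{alg:partial-strategy} already classifies each outside battlefield via exactly these ratio comparisons, so enforcing consistency reduces to rejecting within each DP transition any choice of $(x_i,x'_i)$ whose ratios would contradict the order encoded in $\sigma$. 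Because both the number of signatures and the size of each DP are polynomial in $n,m,k$, the overall running time is polynomial, completing the proof.
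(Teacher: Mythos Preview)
Your overall architecture matches the paper's: reduce to $p=\sfrac{1}{2}$, handle $\wmax>u/3$ trivially, otherwise play against the greedy opponent, enumerate signatures, and run a dynamic program whose state is $(j,n_1,n_2,\omega,u_1,u_2)$ with per-battlefield transitions computed via Algorithm~\ref{alg:partial-strategy}. The approximation chain $\mathrm{OPT}\le F\le \mathrm{GREEDY}+2\wmax$ and the use of Observation~\ref{lem:two-different} are also as in the paper.

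The gap is in how you certify that the strategy emitted for a signature $\sigma$ actually has signature $\sigma$. You propose to ``reject within each DP transition any choice of $(x_i,x'_i)$ whose ratios would contradict the order encoded in $\sigma$.'' But Lemma~\ref{lem:sign-respose} already guarantees that Algorithm~\ref{alg:partial-strategy} returns the correct integer $(h_i,h'_i)$ for \emph{every} outside placement, \emph{provided} the full strategy really has signature $\sigma$; there are no ratio-inconsistent transitions to reject. What can go wrong is global: the greedy opponent may not spend exactly $m-\mu$ troops outside $\{a,b,c\}$, or Algorithm~\ref{alg:2-fractional} may not terminate with $a,b,c$ as the last-iteration argmins, so the rounded utilities on $\{a,b,c\}$ need not be $\dot u_1,\dot u_2$. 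When that happens, the DP's ``apparent'' $\min(G_1,G_2)$ is not the true $\mathrm{GREEDY}(s)$, and your step $\mathrm{GREEDY}(s_{\mathrm{alg}})\le \mathrm{GREEDY}(s^*)$ no longer follows (you only know the apparent value is small). The paper handles this not by per-transition ratio checks but by a post-DP filter: it keeps only those $(u_1,u_2)$ with $D(k{-}3,n_1,n_2,m-\mu,u_1,u_2)=1$ \emph{and} $u_1+\dot u_1=u_2+\dot u_2$, the latter being exactly the balanced-utility property of the greedy response inherited from Lemma~\ref{lem:equal-utility}. That filter is what pins the emitted strategy to $S_\sigma$ and makes the apparent and actual greedy utilities coincide; your proposal is missing this mechanism.

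A minor point: the identity $\min_s\min(G_1,G_2)=\min(\min_s G_1,\min_s G_2)$ is true but unnecessary here, since the DP already carries both $u_1$ and $u_2$; the paper simply minimizes $u_1+\dot u_1$ after the filter rather than separating the two objectives.
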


\begin{proof} By Observation~\ref{lem:two-different}, the strategy of the greedy opponent differs from the fractional best response in at most two battlefields. Also, the best integral strategy gets utility at most equal to the fractional one. Therefore, this weaker opponent loses at most $2\cdot\wmax$ compared to his best response strategy. Note that if losing at most $2\cdot\wmax$ does not guarantee a  \maxminc{u/3}{p}{2} for player 1, he can put all his troops in the battlefield with the maximum weight and just win that one. This yields that the best strategy of the first player against the greedy opponent guarantees a \maxminc{u/3}{p}{2} for him. 
	
	The problem that we need to solve now is how to find the best strategy of player 1 against the greedy opponent. We cannot afford to go over all the possible strategies and search for the best one. However, playing against the greedy opponent gives us the ability to narrow down our search space and find the best response by solving polynomially many dynamic programs. For any signature $\sign{}$, let $S_{\sign{}}$ denote the set of all the first player's strategies with signature $\sign{}$.
		To find the best strategy of player 1 against the greedy opponent, we just need to exhaustively searches through all the possible signatures and for any \sign{} find the best strategy of player 1 in set $S_{\sign{}}$. Note that by  Lemma~\ref{lem:poly-sign} there are polynomially many valid signatures. We give a dynamic program that given a signature $\sign{} = (a, b, c, \mu , x_{a,b,c}, x_{a,b,c}', \dot{u_1}, \dot{u_2} )$, finds the best strategy in set $S_{\sign{}}$. 	By Lemma~\ref{lem:sign-respose}, there exists a function that determines the response of the greedy opponent in any battlefield $i$ given $x_i$,  $x'_i$, and $\sign{}$. Let $f(i, x_i, x'_i, \sign{})$ denote this function.  The dynamic program that we use is $D(j, n_1, n_2, \omega, u_1, u_2)$ with $j \in \{0, \ldots, k-3\}$, $n_1, n_2 \in \{0, , \ldots, n\}$, $\omega \in \{0, 1 , \ldots, m-\mu\}$ and $u_1 , u_2 \in \{0, \ldots, \sum_i w_i\}$. Note that $j$ is in set $\{0, \ldots, k-3\}$ since before the dynamic program we remove three battlefields $a$, $b$ and $c$. The value of $D(j, n_1, n_2, \omega, u_1, u_2)$ is either 0 or 1. It is 1 iff it is possible to give two partial strategies $\b{x} =(x_1, \ldots, x_j)$ and $\b{x'}=(x'_1, \ldots, x'_j)$ over the first $j$  battlefields such that all the following conditions are satisfied. Let $\b{h}$ and $\b{h'}$ denote the partial response of the greedy opponent to \cmixedstrategy{2} $(\b{x}, \b{x'})$ where for any $i$, $h_i$ and $h'_i$ are determined by $f(i,x_i,x'_i,\sign{})$.		\begin{itemize}
			\item Condition 1: $\Sigma_{i\in[j]} x_i = n_1$ and $\Sigma_{i\in[j]} x'_i = n_2$. This is a condition on the amount of troops that the first player uses in the first $j$ battlefields.
			\item Condition 2: $\b{w}\cdot \b{h} = u_1 $ and $\b{w}\cdot \b{h'} = u_2 $. This condition guarantees that the utility that the greedy opponent gets against strategies \b{x} and \b{x'} in the first $j$ battlefield is respectively $u_1$ and $u_2$. 
			\item Condition 3: $\b{c}\cdot \b{h} + \b{c'} \cdot \b{h'} = \omega $, where $\b{c'}$ and $\b{c}$ are partial cost vectors of $\b{x}$ and $\b{x'}$. This condition guarantees that the greedy opponent uses $\omega$ troops in the first $j$ battlefields.
		\end{itemize}	
			
\paragraph{Base case.} We start with the case where $j=0$. Here, clearly, $D(0, n_1, n_2, \omega, u_1, u_2)= 1$ iff $n_1, n_2, \omega, u_1$ and $u_2$ are all 0.
			
\paragraph{Updating the DP.} To update $D(j, n_1, n_2, \omega, u_1, u_2)$, we only have to decide on how many troops to put on the $j$-th battlefield in strategies $\b{x}$ and $\b{x}'$. We try all the possibilities for values of variables $x_j$ and $x'_j$ and check, recursively, whether any of these choices satisfies the requirements for the dynamic value to be 1. Let $\xi$ and $\xi'$
denote an assignment to $x_j$ and $x'_j$. Also, let $\omega'$ denote the amount of troops that the greedy opponent puts in battlefield $j$ if $x_j = \xi $ and $x'_j = \xi'$. Moreover, $u'_1$ and $u'_2$ respectively denote the utility that he gets against strategies $\b{x}$ and $\b{x}'$ in this battlefield. Note that $u'_1$, $u'_2$ and $\omega'$ can be achieved using function $f$.
 $D(j, n_1, n_2, \omega, u_1, u_2) =1$ iff there exist at least a pair of $\xi \in \{0, \dots, n_1 \}$ and $\xi' \in \{0, \dots, n_2 \}$ where $$ D(j-1, n_1-\xi, n_2-\xi', \omega-\omega', u_1-u'_1, u_2-u'_2)=1.$$

After finding the value of $D$ for all the valid inputs, we need to identify the strategies that are in set $S_{\sign{}}$ to be able to find the best one. Roughly speaking, one may think that to find the best strategy of player 1 in set $S_{\sign{}}$,  it is enough if using the dynamic data we find the strategy that minimizes the utility that the greedy opponent gets in $k-3$ remaining battlefields. However, it is not true since using this method we may end up finding a strategy that is apparently very good for the first player but does not have the same signature as $\sign{}$.  This may happen since in the dynamic we do not consider the fact that  the greedy opponent finds a strategy in which he gets the same utility against both strategies $\b{x}$ and $\b{x'}$. To avoid that, before finding the best strategy using the dynamic data we need to filter out the strategies that are not in $S_{\sign{}}$. Recall that signature $\sign{}$ is $(a, b, c, \mu , x_{a,b,c}, x_{a,b,c}', \dot{u_1}, \dot{u_2} )$. For any pair of $u_1$ and $u_2$ that $D(k-3, n_1 , n_2, m-\mu, u_1, u_2) = 1$ there exists a \cmixedstrategy{2} $\b{s} \in S_{\sign{}}$ against which the greedy opponent gets utility $u_1 + \dot{u_1}$  iff $u_1 + \dot{u_1} = u_2 + \dot{u_2}$ since any response of the greedy opponent to a \cmixedstrategy{2} $\b{s}=(\b{x}, \b{x'})$ achieves the same utility against both strategies $\b{x}$ and $\b{x'}$. Therefore, to find the best \cmixedstrategy{2} in $S_{\sign{}}$, using the dynamic data we find a pair of $u_1$ and $u_2$ such that  $D(j, n_1, n_2, \omega, u_1, u_2) =1$ and $u_1 + \dot{u_1} = u_2 + \dot{u_2}$ subject to minimizing $u_1 + \dot{u_1}$. This means that there exists a 2-strategy $\b{s}$ of the first player against which the greedy opponent achieves $u_1 + \dot{u_1}$ utility.

By Lemma~\ref{lem:poly-sign}, there are polynomially many different valid signatures and there is a polynomial time algorithm to find them. Therefore, one can find the best strategy of the first player against the greedy opponent by going over all the possible signatures and finding the best \cmixedstrategy{2} using the described dynamic program. Note that having a best strategy against the greedy opponent gives a \maxminc{u/3}{p}{2} of player~1.
\end{proof}

\subsubsection{A $(1-\epsilon)$-Approximation}
 In Section~\ref{sec:one-pure} we give a $(1-\epsilon)$-Approximation for the case of one pure strategy. Here, we use the same idea and adapt it for our purpose.  Recall that, we first define an updated wight vector by rounding down the weight of battlefields to a power of $(1+\epsilon)$. Then, we partition them to two sets of heavy and light battlefields with threshold $\tau = \epsilon u/4$. Corollary~\ref{cor:bfweightrounding} implies that finding a \maxmin{(1-\epsilon/2)u}{1} \cmixedstrategy{2} after this modification gives us a $\maxmin{(1-\epsilon)u}{1}$ \cmixedstrategy{2}. Therefore, we only focus on the instance with the updated weights. The overall idea is to first define a weaker opponent such that finding the best strategy of the first player against him gives us a \maxmin{(1-\epsilon/2)u}{1} \cmixedstrategy{2}. Then, give a polynomial time algorithm that finds this best strategy. The weaker adversary in the previous section is an opponent whose response against a strategy of the first player is to go over all his pure strategies on the heavy battlefields and for each one, play greedily on the light battlefields. The weaker adversary that we define against \cmixedstrategies{2} is almost similar to the one in the previous section. The only difference is on the greedy algorithm that he uses on the light battlefields which is to respond as a greedy opponent (with a minor technical modification that we explain later).  Note that the greedy opponent loses  at most $2\cdot\wmax$ compared to his best response; thus, playing against this weaker opponent gives a $\maxmin{(1-\epsilon/2)u}{1}$ \cmixedstrategy{2} of the first player. 
  
 To be able to provide a best response algorithm, we first give a new representations for the strategies of both players on heavy battlefields so that we can give proper limits on the number of strategies that they have. By Observation~\ref{obs:constantheavybfw}, the number of distinct heavy battlefields in bounded by a constant. For the case of \cmixedstrategies{1}, we represent first player's strategies by the number of troops that he puts on each battlefield weight.  However, the exact same representation does not work for   \cmixedstrategies{2} of the first player, but we are still able to represent them by vectors of  polynomial length such that the number of valid representations is polynomial as well. Note that given a \cmixedstrategy{2} $s=(\b{x}, \b{x'})$ of the first player, for each battlefield $i\in [k]$, either $x_i\leq x'_i$ or $x_i > x'_i$ holds. Based on this we partition the battlefields to two types. A battlefield $i$ is of type~1 if $x_i\leq x'_i$ otherwise it is of type~2. Lemma~\ref{lem:uniquetypes} proves that there is an optimal  \cmixedstrategy{2} $s=(\b{x}, \b{x'})$ of the first player where both strategies \b{x} and \b{x'} put roughly the same number of troops in battlefields of the same type and the same weight.

\begin{lemma}\label{lem:uniquetypes}
	If player 1 has a \maxmin{u}{p} \cmixedstrategy{2} , he also has a \maxmin{u}{p} \cmixedstrategy{2} $s=(\b{x}, \b{x'})$ where for any two battlefields $i$ and $j$ that conditions $w_i = w_j$, $i \leq j$, and $x_i \leq x'_i$ hold we have $0 \leq x_i - x_j \leq 1$ and $0 \leq x'_i - x'_j \leq 1$.
\end{lemma}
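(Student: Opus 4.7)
The approach generalizes the exchange argument used in the proof of Claim~\ref{cl:samewsametroops} (the analogous $c=1$ statement) to the two-strategy setting. Starting from an arbitrary optimal \maxmin{u}{p} \cmixedstrategy{2} $(\hat{\b{x}}, \hat{\b{x}}')$, the plan is to perform a sequence of local modifications within each weight class that ultimately yield a strategy satisfying the claimed structure, while maintaining the \maxmin{u}{p} guarantee throughout.

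Because battlefields that share a weight are fully interchangeable from the game's perspective, I first relabel the battlefields so that within each weight class all type-1 battlefields (those with $\hat{x}_i \leq \hat{x}_i'$) occupy smaller indices than type-2 battlefields; this pure relabeling does not affect the value of the strategy. Next, within each weight class I perform a smoothing step: given two battlefields $i, j$ of the same type with, say, $\hat{x}_i - \hat{x}_j \geq 2$, I transfer one unit of troops in $\b{x}$ from $i$ to $j$ by setting $\tilde{x}_i = \hat{x}_i - 1$ and $\tilde{x}_j = \hat{x}_j + 1$. If this single-sided transfer would push $j$ across the type boundary, I simultaneously apply the analogous transfer in $\b{x}'$ so that both battlefields retain their original type. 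An entirely symmetric step is used whenever $\hat{x}_i' - \hat{x}_j' \geq 2$. Total troop usage in each of $\b{x}$ and $\b{x}'$ is preserved, and iterating drives both coordinates within each (weight, type) subgroup to lie within $1$ of each other. A final sort inside each weight class, listing type-1 battlefields in decreasing order of their pair of troop counts, yields the monotonicity conditions $0 \leq x_i - x_j \leq 1$ and $0 \leq x_i' - x_j' \leq 1$ asserted in the lemma.

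The principal obstacle is verifying that each smoothing step preserves the \maxmin{u}{p} property. The argument parallels that of Claim~\ref{cl:samewsametroops}, but must be carried out simultaneously for both pure strategies in the support. The key insight is that, because battlefields $i$ and $j$ carry equal weight, the adversary's utility is insensitive to swapping their troop allocations. Concretely, for any response $\b{y}$ of player~2 to the smoothed pair $(\tilde{\b{x}}, \tilde{\b{x}}')$, one can construct a response $\b{y}'$ to the original pair $(\hat{\b{x}}, \hat{\b{x}}')$ by reallocating $\b{y}$'s troops between battlefields $i$ and $j$ in such a way that, against each of $\hat{\b{x}}$ and $\hat{\b{x}}'$ separately, the adversary's utility is no smaller than against the smoothed strategy. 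This is established by a short case analysis on how the values $y_i, y_j$ compare with the original and modified troop counts on $i$ and $j$, exploiting the fact that a troop transfer from a heavier to a lighter battlefield (within a weight class) can only make the adversary's task of beating the first player weakly harder. Consequently, if the smoothed strategy failed to be \maxmin{u}{p}, then so would the original, contradicting optimality; thus the smoothing is safe and the iterative construction produces a \maxmin{u}{p} \cmixedstrategy{2} of the claimed form.
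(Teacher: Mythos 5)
Your overall plan---an exchange argument within each weight class, generalizing Claim~\ref{cl:samewsametroops}---is the same idea the paper uses, but your per-step invariant is false, and this is a genuine gap. You claim that a one-unit transfer in $\b{x}$ alone (performed whenever no type flip occurs) ``can only make the adversary's task of beating the first player weakly harder.'' It cannot: the adversary's cheapest way to realize the mixed outcome on the pair $\{i,j\}$---winning both battlefields against $\b{x}$ but only one against $\b{x}'$---costs $\min(x'_i+x_j,\; x_i+x'_j) = \tfrac{1}{2}\bigl(t+t'-|a-b|\bigr)$, where $t=x_i+x_j$, $t'=x'_i+x'_j$, $a=x_i-x_j$, $b=x'_i-x'_j$. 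This cost depends on the \emph{alignment} of the imbalances of the two pure strategies, not on each strategy's internal balance, so balancing $\b{x}$ while leaving $\b{x}'$ unbalanced can strictly lower it. Concretely, take two type-1 battlefields of equal weight $w$ with $(x_i,x_j)=(5,3)$ and $(x'_i,x'_j)=(9,4)$. Your trigger fires ($x_i-x_j=2$) and no type flips ($x_j+1=4\le x'_j=4$), so you move to $(4,4)$ and $(9,4)$. Before the move this profile costs the adversary $\min(9+3,\,5+4)=9$ troops; after it, playing $y_i=y_j=4$ costs only $8$ (winning both battlefields against the new $\b{x}$ and battlefield $j$ against $\b{x}'$). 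The freed troop can be spent elsewhere, so there are instances in which the original pair is \maxmin{u}{\sfrac{1}{2}} while your smoothed intermediate pair is not; the construction of $\b{y}'$ from $\b{y}$ fails exactly here, and with it the induction over smoothing steps.

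The paper's proof avoids this by rebalancing \emph{both} strategies on the pair simultaneously and in one shot---setting $x_i=\lceil t/2\rceil$, $x_j=\lfloor t/2\rfloor$, $x'_i=\lceil t'/2\rceil$, $x'_j=\lfloor t'/2\rfloor$ with the ceilings on the same battlefield, which forces $|a-b|\le 1$---and then comparing the original allocation directly to this final one: it enumerates the utility profiles player 2 can obtain on $\{i,j\}$ (nothing; one or both battlefields against $\b{x}$ only; one battlefield against both strategies; both against $\b{x}$ with one against $\b{x}'$; both against both) and checks that the minimum number of troops required for each profile never decreases, the key case being that $\min(x'_i+x_j,\,x_i+x'_j)$ is maximized by the aligned rounding. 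To repair your argument you must either couple the transfers in $\b{x}$ and $\b{x}'$ so that $|a-b|$ never increases along the way, or abandon the step-by-step invariant and compare start and end states directly as the paper does. (Two smaller omissions: you never argue termination of your iteration, and both the relabeling and the final within-class sort need the---easy, by symmetry of equal-weight battlefields---observation that permuting allocations preserves the guarantee.)
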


\begin{proof}
Let $s=(\b{x}, \b{x'})$ be a \maxmin{u}{P} \cmixedstrategy{2} of player 1. Assume battlefields $i$ and $j$, where $i< j$, are of the same type and the same weight, but conditions $|x_i - x_j| \leq 1$ and $|x'_i - x'_j| \leq 1$ do not hold for them. Let $t = x_i+x_j$ and $t' = x'_i+x'_j$ respectively denote the amount of troops that strategies $\b{x'}$ and $\b{x'}$ put on these battlefields. There are five possible cases for the utility that the second player gets in these battlefields. We show that it is possible to redistribute $t$ and $t'$ on battlefields $i$ and $j$ to satisfy the mentioned conditions without lowering the number of troops that the opponents needs to spend in each case. 
Let $u$ and $u'$ respectively denote the utility that the second player gets against strategies \b{x} and \b{x'} in these battlefields. All the possible cases for value of $u$ and $u'$ are as follows.
\begin{enumerate}
\item  Case 1: $u = w$ and $u' = w$. The second player achives this  by spending $\min(x'_i, x'_j)$ troops which is maximized when $x'_i = \lceil t'/2\rceil$, $x'_j = \lfloor t'/2\rfloor$.
\item  Case 2: $u = 2w$ and $u' = w$. The opponent spends $\min(x'_i+ x_j, x'_j + x_i)$ troops to get this utility. This is maximized when $x'_i = \lceil t'/2\rceil$, $x'_j = \lfloor t'/2\rfloor$,   $x_i = \lceil t/2\rceil$ and $x_j = \lfloor t/2\rfloor$.
\item  Case 3: ($u = 0$ and $u' = 0$), 	 ($u = w$ and $u' = w$) or ($u = 2w$ and $u' = 2w$). These cases are independent of the distribution. The opponent spends respectively $0$, $t$ and $t+t'$ troops to achieve theses results.
\end{enumerate}
Therefore, if a given strategy does not satisfy the desired conditions, we modify it by setting  $x'_i = \lceil t'/2\rceil$, $x'_j = \lfloor t'/2\rfloor$,   $x_i = \lceil t/2\rceil$ and $x_j = \lfloor t/2\rfloor$. As we showed, after this modification there is no possible amount of utility that the second player can get with spending fewer number of troops in battlefields $i$ and $j$. Also this modifidied strategy satisfies $0 \leq x_i - x_j \leq 1$ and $0 \leq x'_i - x'_j \leq 1$.
\end{proof}

We represent any partial \cmixedstrategy{2} of the first player on the heavy battlefields by a vector of length $\poly{(1/\epsilon)}$. Each entry of this vector is associated with a heavy weight. The $i$-th entry consists of a tuple which determines how many battlefields of the $i$-th heavy weight are from type 1, and how many troops each of \b{x} and \b{x'} spend on the battlefields of each type with this weight. Let us denote by $\puresetaheavy{}$ the set of all such vectors. Note that each entry of a vector in this representation has polynomially many possible values. The following is a corollary of this fact.
\begin{corollary}
\label{cor:heavypoly}
	$|\puresetaheavy{}| \leq n^{\poly(1/\epsilon)}$ where $\puresetaheavy{}$ is the set of \cmixedstrategies{2} of the first player on the heavy battlefields. 
\end{corollary}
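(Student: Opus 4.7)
The plan is to combine two ingredients that are already available: the bound on the number of distinct heavy weights, and the normalization lemma which says that battlefields of the same type and the same weight receive essentially the same number of troops in some optimal solution. Together these force each element of $\puresetaheavy{}$ to be encoded by a short vector whose coordinates take only polynomially many values.

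First, I would invoke Observation~\ref{obs:constantheavybfw}: after the rounding step battlefield weights lie in $\{(1+\epsilon/2)^j\}$, and heavy battlefields have weight in $[\tau,u]$ with $\tau = \Theta(\epsilon u)$, so the number $k_d^h$ of distinct heavy weights is $O((1/\epsilon)\log(1/\epsilon)) = \poly(1/\epsilon)$. Next I would apply Lemma~\ref{lem:uniquetypes} to assume without loss of generality that any optimal \cmixedstrategy{2} $s=(\b{x},\b{x}')$ satisfies the following for every pair of battlefields $i,j$ of the same weight and the same type (type 1 being $x_i \leq x'_i$, type 2 otherwise): the four quantities $x_i,x_j,x'_i,x'_j$ are all within $1$ of the appropriate per-type value. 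This means that the partial restriction of $s$ to the heavy battlefields of one weight $w$ is fully determined (up to relabelling indistinguishable battlefields) by the tuple
\[
\bigl(\ n_w^{(1)},\ a_w^{(1)},\ a_w^{\prime(1)},\ a_w^{(2)},\ a_w^{\prime(2)}\ \bigr),
\]
where $n_w^{(1)}$ is the number of type-1 battlefields of weight $w$, and $a_w^{(t)}, a_w^{\prime(t)}$ are the (nearly common) troop counts that $\b{x}$ and $\b{x}'$ assign to each type-$t$ battlefield of weight $w$.

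Now I would count. Each coordinate of such a tuple is an integer in $\{0,1,\dots,n\}$ (the count $n_w^{(1)}$ is in $\{0,\dots,k\}$, and since a heavy strategy that puts more than $n$ troops on one battlefield is invalid, the troop counts lie in $\{0,\dots,n\}$). Thus each per-weight tuple has at most $(n+1)^{5}$ possibilities. Since the representation of a partial 2-strategy on heavy battlefields is the concatenation of $k_d^h$ such tuples, the total number of valid representations is at most
\[
(n+1)^{5 k_d^h} \;=\; n^{\poly(1/\epsilon)},
\]
which is exactly the claimed bound.

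The main obstacle is really bookkeeping: making sure that the tuple $(n_w^{(1)}, a_w^{(1)}, a_w^{\prime(1)}, a_w^{(2)}, a_w^{\prime(2)})$ indeed captures a canonical representative of the optimal solution. This is guaranteed by Lemma~\ref{lem:uniquetypes} together with the fact that interchanging two battlefields of the same weight preserves the game's value (since both players are indifferent between battlefields of equal weight), so any remaining ``$\pm 1$'' slack inside a weight class can be absorbed by an arbitrary tie-breaking rule without affecting the argument. No new technical ideas are needed beyond what is used in Observation~\ref{obs:puresetaheavypoly} for the single-strategy case, only the refined tuple to record the additional degree of freedom introduced by having two strategies and therefore two types per weight.
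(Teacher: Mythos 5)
Your proposal is correct and matches the paper's argument: the paper likewise combines Observation~\ref{obs:constantheavybfw} (constantly many distinct heavy weights) with Lemma~\ref{lem:uniquetypes} (battlefields of the same weight and type receive troop counts within $1$) to encode each element of $\puresetaheavy{}$ as a vector of length $\poly(1/\epsilon)$ whose entries record, per heavy weight, the number of type-1 battlefields and the troop expenditures of $\b{x}$ and $\b{x}'$ on each type, each entry taking polynomially many values. The only cosmetic difference is that the paper records total troops per (weight, type) class while you record the nearly-common per-battlefield counts; under Lemma~\ref{lem:uniquetypes} these encodings are equivalent and yield the same $n^{\poly(1/\epsilon)}$ bound.
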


Also, using the same argument as we had for the case of \cmixedstrategies{1}, one can see that, here as well, it is possible to represent  pure strategies of the second player on heavy battlefields by vectors of constant size. Let $\puresetbheavy{}$ denote the set of all such vectors. It is easy to see that bservation~\ref{obs:constantheavybfw} holds for this case as well, and $|\puresetbheavy{}|$ is bounded by a constant.

As we mentioned, the weaker adversary acts as a greedy opponent on the light battlefields; therefore, he uses Algorithm~\ref{alg:2-fractional} to find a best fractional response on these battlefields. However, there is a technical detail that we should consider here. Let $s= (\b{x}, \b{x'})$ denote the strategy of player 1 and let $\b{y}$ be a partial response of the weaker adversary on heavy battlefields. By Lemma~\ref{lem:equal-utility} any best fractional response gets the same utility against both the strategies of the first player. Note that it is possible that $\utilityb{\b{x}}{\b{y}} =\utilityb{\b{x'}}{\b{y}}$ does not hold.  Thus, the amount of utility that a best fractional response  gets against \b{x} and \b{x'} on the light battlefields differs by $\utilityb{\b{x}}{\b{y}} -\utilityb{\b{x'}}{\b{y}}$. Here, the issue is that in Algorithm~\ref{alg:2-fractional} we assume that at the beginning the amounts of utility that the greedy opponent has against strategies  $\b{x}$ and $\b{x'}$ are both equal to 0. Therefore, at each iteration the algorithm increases them by the same amount. Hence, the same algorithm cannot be used here, but with a slight modification we can use it for the case of $\utilityb{\b{x}}{\b{y}} \neq \utilityb{\b{x'}}{\b{y}}$ as well. Denote by $u$ and $u'$ the amount of utility that the fractional response gets against strategies $\b{x}$ and $\b{x'}$ until an arbitrary iteration of Algorithm~\ref{alg:2-fractional}. If we use it to complete the partial strategy $\b{y}$ on the light battlefields, at the beginning $u = \utilityb{\b{x}}{\b{y}}$ and $u' = \utilityb{\b{x'}}{\b{y}}$. Without loss of generality, assume $u' \leq u $. The modification is as follows. Since we want $u$ and $u'$ to be equal at the end, before we start increasing both of them, we greedily increase the available elements of vector $\b{h}$ (to increases $u$) until $u = u'$ holds. This is a correct approach since in a best fractional response the overall weight of the light battlefields in which $h_i > h'_i$ is at least $u- u'$. Also, any such battlefield is available at the beginning of Algorithm~\ref{alg:2-fractional}. One can easily verify that this modification does not affect the properties that we have proved for the greedy opponent's response. 

Also, recall that signature of any \cmixedstrategies{2} of the first player, defined in Definition~\ref{def:sign}, is unique since the response of the greedy opponent against a \cmixedstrategy{2} is unique as well. However, the signature of a \cmixedstrategy{2} on the light battlefields also depends on the strategy of the opponent on the heavy battlefields. Roughly speaking, a \cmixedstrategy{2}  of the first player  $s = (\b{x}, \b{x'})$, for any $\b{y^h} \in \puresetbheavy{}$ faces a different type of greedy opponent on the light battlefields since the remaining amount of troops and $\utilityb{\b{x}}{\b{y^h}} - \utilityb{\b{x'}}{\b{y^h}}$ changes the strategy of the greedy opponent on the light battlefields. Therefore, in the definition of signature of a \cmixedstrategy{2} on the light battlefields we should also include the type of the opponent which is determined by a strategy of the second player on the heavy battlefields. Given any such strategy, $\b{y^h}$, we denote signature of \cmixedstrategy{2} $\b{s}$ by \signtwo{\b{s}}{\b{y^h}}. Now, we are ready to prove the main theorem of this section which is as follows.


\restatethm{\ref{theorem:epsapprox2}}{Given that player 1 has a \maxmin{u}{p} \cmixedstrategy{2}, there exists a polynomial time algorithm that obtains a \maxmin{(1-\epsilon)u}{p} \cmixedstrategy{2} of player 1 for any arbitrarily small constant $\epsilon > 0$.}
			
\begin{proof}
Let  $s^{h}= (\b{x}, \b{x'})\in \puresetaheavy{}$ be a \cmixedstrategy{2} of the first player on heavy battlefields and let $S(s^{h})$ denote the set of all the \cmixedstrategies{2} of the first player that plays strategy $\b{s^{h}}$ on heavy battlefields. 
We find a partial \cmixedstrategy{2} on the light battlefields $\b{s^{l}}$ such that the combination of these  
\cmixedstrategies{2}, denoted by $\b{s} = (\b{s^{h}}, \b{s^{l}})$, loses at most $\epsilon u/2$ compared to the best strategy in $S(s^{h})$. For any $i\in |\puresetbheavy{}|$, strategy $\b{y^h_i}$ denotes the $i$-th strategy in \puresetbheavy{}. For any $i\in |\puresetbheavy{}|$, we fix a signature \sign{i} and find a \cmixedstrategy{2} $\b{s^{l}}$ such that for any $i$, $\signtwo{(\b{s^h, s^{l}})}{\b{y^h_i}} = \sign{i}$ is satisfied. Also, among all such strategies, we pick the one that minimizes the utility of the weaker adversary against $(\b{s^h, s^{l}})$. Recall that to find a best strategy of the first player against a single greedy opponent we design a dynamic program in the proof of Theorem~\ref{theorem:3-approx}. The main difference here is that we want a strategy that is the best against multiple greedy opponents; therefore, we need to run these dynamic programs in parallel and update them simultaneously. For that, we use an idea that is presented in the previous section. In Section~\ref{sec:one-pure}, we design a dynamic program that finds the best strategy of the first player against multiple weaker adversaries at the same time. Combining this idea with the dynamic program presented in the proof of Theorem~\ref{theorem:3-approx} gives us a polynomial time algorithm to find a best strategy of the first player against the weaker adversary.

Note that |\puresetbheavy{}| is bounded by a constant and by Lemma~\ref{lem:poly-sign} the number of possible signatures is polynomial; thus, all the possible cases that we consider to find the best strategy of the first player against the weaker adversary in $S(s^{h})$ is polynomial as well. Also, size of the set $\puresetaheavy{}$ is polynomial. Hence, we can find the best \cmixedstrategy{2} of the first player against the weaker adversary in polynomial time.
As we mentioned such a strategy is a \maxmin{(1-\epsilon)u}{p} \cmixedstrategy{2} strategy of the first player.
\end{proof}

\subsection{Generalization to the Case of $c$-Strategies for $c > 2$}\label{sec:discg}
In this section the goal is to provide an algorithm that finds a \maxmin{(1-\epsilon)u}{p} \cmixedstrategy{c}
 of the first player given that the existence of a  \maxmin{u}{p} \cmixedstrategy{c} is guaranteed. The result of this section is stated in the following theorem more formally.
 
\begin{theorem}\label{theorem:discretemulti}
Given that there exists a \maxmin{u}{p} \cmixedstrategy{c} of player 1, there is a polynomial time algorithm that finds a  \maxmin{(1-\epsilon)u}{p} \cmixedstrategy{c} of this player.
\end{theorem}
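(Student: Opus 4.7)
}
The plan is to lift the two-strategy PTAS of Theorem~\ref{theorem:epsapprox2} to arbitrary constant $c$ by combining the heavy-light decomposition of Section~\ref{sec:one-pure} with a generalized greedy opponent. I first apply Theorem~\ref{thm:cstrategyprobabilities} to enumerate, in $O(1)$ time, a set $P_c$ of candidate profiles so that it suffices to find, for each fixed profile $(q_1,\ldots,q_c)$, pure strategies $\b{x}^1,\ldots,\b{x}^c$ of player 1 such that for every response $\b{y}$ of player 2 the sum of $q_j$ over $j$ with $\utilitya{\b{x}^j}{\b{y}}\geq u$ is at least $p$. I then round every battlefield weight down to the nearest power of $(1+\epsilon/2)$ (invoking Corollary~\ref{cor:bfweightrounding}) and split the battlefields into heavy ones with weight at least $\tau=\epsilon u/(4c)$ and light ones. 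As in Observation~\ref{obs:constantheavybfw} the number of distinct heavy weights is $O_\epsilon(1)$, and a direct generalization of Lemma~\ref{lem:uniquetypes} (one shows the coordinatewise multiset of $(x^1_i,\ldots,x^c_i)$ can be made equal, up to $\pm 1$, across battlefields of the same weight, since we are indifferent among them) lets us represent any partial $c$-strategy on the heavy battlefields by a vector indexed by (heavy-weight, ordering type in $\mathfrak{S}_c$): how many heavy battlefields belong to that type and how many troops each of the $c$ strategies places on them. Since there are $O(1)$ heavy weights and $c!=O(1)$ types, the set $\puresetaheavy{}$ has size $n^{\poly(1/\epsilon)}$; similarly $\puresetbheavy{}$ has size $O_\epsilon(1)$ because player 2 can afford to lose at most $O(1/\epsilon)$ heavy battlefields.

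Next I generalize the greedy opponent of Algorithm~\ref{alg:2-fractional}. Fix a heavy partial response $\b{y}^h\in\puresetbheavy{}$; this determines, for each $j\in[c]$, the residual deficit $u_j=u-g_j$ player 2 needs to impose on $\b{x}^j$ over the light battlefields and the residual budget $m'$. For the light phase, I relax integrality: player 2 picks fractional ``win levels'' $\b{h}^1,\ldots,\b{h}^c\in[0,1]^{k^\ell}$ with a monotonicity constraint inherited from the type of each battlefield (the order in which $h^j_i$'s can be increased is dictated by the sorted order of $x^1_i,\ldots,x^c_i$), aiming to maximize $\min_j \b{w}\cdot\b{h}^j$ subject to the global troop constraint. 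The key structural claim is that this LP admits an optimum characterized by $c$ thresholds $r_1,\ldots,r_c$ such that battlefield $i$ is won in strategy $j$ iff its ratio $w_i/c^{(j)}_i$ satisfies a local condition depending only on $(r_1,\ldots,r_c)$ and the sorted order of the $x^j_i$'s. This generalizes the ``$r_a,r_b',r_c+r_c'$'' case analysis in Lemma~\ref{lem:theverygoodlem} via LP-duality, where the $r_j$'s are the dual prices for the $c$ utility constraints and the troop dual gives a baseline. An analogue of Lemma~\ref{lem:theverygoodlem} yields that at most $O(c)$ coordinates are fractional in this optimum, so rounding down the fractional response to integers costs at most $O(c)\cdot\wmax \leq O(c\tau)\leq \epsilon u/2$ in each $u_j$; by definition of $\tau$ and the assumption $\wmax\leq u$ (otherwise player 1 wins by concentrating on a single battlefield), this is absorbed in the $(1-\epsilon)$ factor. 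So, by the constant-sum argument of Corollary~\ref{cor:weakeradversary} applied coordinatewise, any strategy guaranteeing a payoff of $u$ against this weak adversary guarantees $(1-\epsilon/2)u$ against an actual best response.

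To actually find such a strategy efficiently, I extend the signature machinery of Definition~\ref{def:sign}. For each heavy response $\b{y}^h_i\in\puresetbheavy{}$ ($i=1,\ldots,|\puresetbheavy{}|=O(1)$), the light greedy optimum is parameterized by a signature $\sign{i}$ recording the $c$ threshold ratios $r^i_1,\ldots,r^i_c$, the $O(c)$ marginal battlefields and their exact $x^j$-values, and the contributions these marginals make to each $u_j$. The generalization of Lemma~\ref{lem:poly-sign} shows that each $\sign{i}$ takes only polynomially many values: the thresholds are of the form $w_b/t$ for a battlefield index $b\in[k]$ and $t\in[n]$, the marginal battlefield indices are from $[k]$, and by Lemma~\ref{lem:equal-utility}'s generalization the contributions are determined up to $\poly(n,k)$ values. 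Since $|\puresetbheavy{}|=O(1)$, the joint tuple $(\b{x}^h,\sign{1},\ldots,\sign{|\puresetbheavy{}|})$ has polynomially many realizations. For each such tuple I analogously extend the DP of Lemma~\ref{lem:findstrata}: the state is $(j, n_1,\ldots,n_c, \omega_1,\ldots,\omega_{|\puresetbheavy{}|}, u_{1,1},\ldots,u_{c,|\puresetbheavy{}|})$, where $n_j$ is the light-phase troop usage of $\b{x}^j$, $\omega_i$ is the light-phase troop usage of player 2 under heavy response $\b{y}^h_i$, and $u_{j,i}$ is the corresponding induced utility of $\b{x}^j$; the transition picks the $c$-tuple $(x^1_j,\ldots,x^c_j)$ for the current light battlefield and updates these using the local, signature-determined response function (a direct analogue of Lemma~\ref{lem:sign-respose} and Algorithm~\ref{alg:partial-strategy}). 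Since both $c$ and $|\puresetbheavy{}|$ are constants, the DP has polynomial size, and returning the best strategy across all tuples gives the PTAS.

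The main obstacle in this program is the structural analysis of the $c$-threshold greedy procedure: proving that the fractional optimum is characterized by $c$ ratio thresholds with only $O(c)$ fractional coordinates, and that this characterization supports a local, signature-driven decision rule amenable to a polynomial-size DP. Once this generalization of Lemmas~\ref{lem:equal-utility}, \ref{lem:3-approx-2-strategy}, and \ref{lem:theverygoodlem} is in hand, the rest of the argument is a mechanical lift of the $c=2$ case, and the $(1-\epsilon)u$ guarantee follows by combining the $\epsilon u/2$ rounding loss on heavy weights with the $\epsilon u/2$ loss from rounding the $O(c)$ fractional light coordinates.
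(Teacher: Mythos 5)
Your proposal tracks the paper's route in Section~\ref{sec:discg} on almost every step (profile enumeration via Theorem~\ref{thm:cstrategyprobabilities}, weight rounding, heavy--light split with threshold $\Theta(\epsilon u/c)$, heavy types indexed by weight and by the permutation ordering of $x^1_i,\ldots,x^c_i$ as in Lemma~\ref{lem:multiuniquetypes}, the $n^{\poly(1/\epsilon)}$ and $O(1)$ bounds on \puresetaheavy{} and \puresetbheavy{}, the fractional relaxation with at most $c$ fractional battlefields, and the signature-driven joint DP over constantly many heavy responses), but it has one genuine gap: your weak adversary solves $\max \min_{j\in[c]} \b{w}\cdot\b{h}^j$, i.e., tries to hold \emph{all} $c$ strategies below $u$. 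That is the correct best-response objective only when the unique losing set is $[c]$ itself, equivalently when every proper subset of the support has probability at most $1-p$. The paper instead defines a \emph{losing set} as any subset of the support with total probability exceeding $1-p$, observes that $(\b{x}^1,\ldots,\b{x}^c)$ is \maxmin{u}{p} iff no response of player 2 beats every strategy of some losing set, and introduces one opponent type $P_L$ per losing set $L$, folding the pair $(L,\b{y})$ into the constant-size heavy-response set. For a profile such as $\{\sfrac{2}{5},\sfrac{1}{5},\sfrac{1}{5},\sfrac{1}{5}\}$ with $p=\sfrac{2}{5}$ (Table~\ref{table:nonuinformprob}), player 2 wins by beating only $\{1,2,3\}$ (mass $\sfrac{4}{5}>1-p$) while ignoring $\b{x}^4$; your adversary's value $\min_{j\in[4]}\b{w}\cdot\b{h}^j$ stays small because of the one unbeaten strategy, so a strategy you certify as safe may fail to be \maxmin{u}{p}. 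The fix is mechanical within your framework --- enumerate the $O(2^c)$ losing sets, run the fractional greedy with objective $\min_{j\in L}$ for each $L$, and give the DP a feasibility filter per $(L,\b{y}^h)$ pair rather than a single min --- but as written the correctness step invoking Corollary~\ref{cor:weakeradversary} ``coordinatewise'' does not go through.

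A secondary deviation: you characterize the fractional optimum by $c$ per-strategy dual thresholds $r_1,\ldots,r_c$, whereas the paper parameterizes by ratios of \emph{subsets} of strategies (up to $2^c-1$ of them), mirroring the $c=2$ case where the three relevant quantities $r_a$, $r'_b$, $r_c+r'_c$ are precisely the ratios for $\{1\},\{2\},\{1,2\}$. Because the marginal cost of beating a set of strategies on a battlefield is governed by the maximum allocation within that set (the joint-availability chain), per-strategy prices alone do not obviously determine the local decision; if you pursue the duality route you must verify that the per-battlefield decision reduces to choosing the prefix of the chain maximizing cumulative reduced cost, or else retreat to subset ratios as the paper does. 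Finally, a small inaccuracy: the $\pm1$ uniformization holds across battlefields of the same \emph{type} (same weight \emph{and} same ordering permutation, Lemma~\ref{lem:multiuniquetypes}), not merely the same weight as your parenthetical asserts --- though your representation, which indexes by (weight, type), already uses the correct statement.
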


Note that in the case of \maxmin{u}{P} \cmixedstrategy{2} we show that in an optimal strategy the first player plays both strategies with the same probability. It is easy to see that for the case of $c>2$ this does not hold.
However, we show that to find an optimal strategy it suffice to just consider a constant set of probability assignment to the strategies.  By Theorem~\ref{thm:cstrategyprobabilities} given that a \maxmin{u}{p} \cmixedstrategy{c} is guaranteed, there exists an algorithm to construct a set $P_c$ of $O(1)$ profiles in time $O(1)$. Recall that for a mixed strategy $\b{x}$, its profile, denoted by \profile{\b{x}}, is a multisite of probabilities associated to the pure strategies in the support of $\b{x}$. Since $|P_c|$ is bounded by a constant we can solve the problem for all profiles $P_c$ and for each one find the best strategy of player 1 that has this profile. In the rest of this section we assume that a probability assignment to the strategies is given. Let $\b{s} = (\b{x^1},  \dots \b{x^c})$ be a \cmixedstrategy{c} of the first player. For any $i\in [c]$ denote by $p^i$ the probability with which the first player plays strategy $\b{x^i}$.

Similar to the case of \maxmin{(1-\epsilon)u}{p} \cmixedstrategy{2}, we start by decomposing the battlefields to sets of heavy and light battlefields. the threshold for that is $\tau = \epsilon u/2c$. We also round down the wights as mentioned in the previous sections and just solve the problem for rounded wights.  Then, we  give proper limits on the number of possible \cmixedstrategies{c} of the first player  and the strategies of the second player on heavy battlefields to be able to design a polynomial time dynamic program that finds this a $(1-\epsilon)$-approximate \cmixedstrategy{c}. Let $\b{s} = (\b{x^1},  \dots, \b{x^c})$ be a \cmixedstrategy{c} of the first player. We say two heavy battlefields $i$ and $j$ are from the same type if $w_i = w_j$ and there exists a permutation  $(a_1, a_2, \dots, a_c)$  of numbers 1 to $c$ where $x^{a_1}_i \leq \dots \leq x^{a_c}_i$ and $x^{a_1}_j \leq \dots \leq x^{a_c}_j$ hold. We claim if a \maxmin{u}{p} \cmixedstrategy{c} of the first player exists, there also exists a \maxmin{u}{p} \cmixedstrategy{c} in which the number of troops that the first player puts in the battlefields of the same type in each strategy is almost the same. The formal claim is stated in the following lemma.

\begin{lemma}\label{lem:multiuniquetypes}
If player 1 has a \maxmin{u}{p} \cmixedstrategy{c}, he also has a \maxmin{u}{p} \cmixedstrategy{c} $(\b{x^1}, \dots \b{x^c})$ where for any two battlefields of the same type $i$ and $j$ and any $a\in [c]$ we have $0 \leq x^{a}_i - x^{a}_j \leq 1$.
\end{lemma}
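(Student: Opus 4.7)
The plan is to generalize the case analysis in the proof of Lemma~\ref{lem:uniquetypes} to $c$ strategies by exploiting the type definition more carefully. Fix two same-type battlefields $i<j$ that share a common weight $w$ and a common permutation $\pi=(a_1,\dots,a_c)$ with $x^{a_1}_i\leq\dots\leq x^{a_c}_i$ and $x^{a_1}_j\leq\dots\leq x^{a_c}_j$. It never pays the opponent to allocate to battlefield $i$ a value other than $x^{a_\ell}_i$ for some $\ell\in\{0,1,\dots,c\}$ (with the convention $x^{a_0}=0$), since any other value either wastes troops or is dominated. Hence the opponent's optimal action on $\{i,j\}$ is captured by two levels $\ell_i,\ell_j\in\{0,1,\dots,c\}$ at total troop cost $x^{a_{\ell_i}}_i+x^{a_{\ell_j}}_j$, and the utility reaped on $\{i,j\}$ by each pure strategy $\b{x}^{a_k}$ depends only on the sorted pair $(L,H)=(\min(\ell_i,\ell_j),\max(\ell_i,\ell_j))$: strategy $a_k$ earns $2w$ if $k>H$, $w$ if $L<k\leq H$, and $0$ if $k\leq L$.

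Given this characterization, I would define the rebalanced allocation by setting $t^{a_k}:=x^{a_k}_i+x^{a_k}_j$ and replacing $(x^{a_k}_i,x^{a_k}_j)$ with $(\lceil t^{a_k}/2\rceil,\lfloor t^{a_k}/2\rfloor)$. The per-strategy troop totals on $\{i,j\}$ are preserved, the condition $0\leq x^{a_k}_i-x^{a_k}_j\leq 1$ now holds at $(i,j)$, and because ceiling and floor are monotone and $t^{a_1}\leq\dots\leq t^{a_c}$, the permutation $\pi$ continues to order the troop counts on both battlefields; in particular $(i,j)$ remain of the same type, and so does the type of $i$ or $j$ relative to any other battlefield. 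It remains to check that the opponent's cost to realize any outcome $(L,H)$ on $\{i,j\}$ does not decrease: when $L=H$ the cost equals $t^{a_L}$, invariant under rebalancing; when $L<H$ the cost is $\min(x^{a_L}_i+x^{a_H}_j,\,x^{a_H}_i+x^{a_L}_j)$, which before rebalancing is at most $(t^{a_L}+t^{a_H})/2$ and after rebalancing equals $\lfloor(t^{a_L}+t^{a_H})/2\rfloor$ by a short parity case split. Since all costs are integers, the post-rebalancing cost is at least the pre-rebalancing one, so the opponent needs at least as many troops on $\{i,j\}$ to force any given $(L,H)$; combined with the fact that each $\b{x}^{a_k}$ reaps the same payoff on $\{i,j\}$ under the same $(L,H)$, the rebalanced mixed strategy still meets the threshold $u$ with probability at least $p$.

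To obtain the condition simultaneously for all same-type pairs, I would iterate the pairwise operation, controlling termination by the potential $\Phi=\sum_{a}\sum_{\{i,j\}\text{ same type}}(x^a_i-x^a_j)^2$, which strictly decreases at every nontrivial application; equivalently, one can argue globally by processing each equivalence class $C$ of same-type battlefields and each strategy index $a$ independently, redistributing $\sum_{i\in C}x^a_i$ as uniformly as possible over $C$ with the convention that lower-indexed battlefields receive the ceiling. Monotonicity of the class sums preserves $\pi$ on every $i\in C$, and the two-battlefield cost inequality extends by induction on $|C|$. The main obstacle is the $L<H$ case above, which has no real analogue in the $c=2$ proof (there the only ``split'' outcome was $(u,u')=(2w,w)$ with a single permitted cost); the crux is that the minimum of the two cross-assignments of levels to battlefields is pinned to $\lfloor(t^{a_L}+t^{a_H})/2\rfloor$ after balancing while being bounded above by $(t^{a_L}+t^{a_H})/2$ beforehand, so integrality ensures the rebalanced allocation weakly dominates the original.
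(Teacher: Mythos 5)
Your pairwise exchange is precisely the paper's move: set $t^{a_k}=x^{a_k}_i+x^{a_k}_j$, replace $(x^{a_k}_i,x^{a_k}_j)$ by $(\lceil t^{a_k}/2\rceil,\lfloor t^{a_k}/2\rfloor)$, and argue that the opponent's minimum cost to realize any utility outcome on the pair does not decrease. In fact you supply the detail the paper waves through with ``one can easily verify'': reducing the opponent's play on $\{i,j\}$ to a level pair $(L,H)$ with the level-$0$ convention, noting that each $\b{x}^{a_k}$'s payoff on the pair depends only on $(L,H)$, and checking that the post-rebalancing cost $\min\bigl(\lceil t^{a_L}/2\rceil+\lfloor t^{a_H}/2\rfloor,\ \lceil t^{a_H}/2\rceil+\lfloor t^{a_L}/2\rfloor\bigr)=\lfloor (t^{a_L}+t^{a_H})/2\rfloor$ dominates the pre-rebalancing cost $\min(x^{a_L}_i+x^{a_H}_j,\ x^{a_H}_i+x^{a_L}_j)\le\lfloor (t^{a_L}+t^{a_H})/2\rfloor$ by integrality. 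This is correct, covers the paper's case split (its ``$u=2w,u'=w$'' case is your $L<H$), and together with monotonicity of floor/ceiling it preserves the sorting permutation $\pi$ on both battlefields, so the exchange step is sound.

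The gap is in the iteration, which you attempt to formalize and the paper leaves entirely implicit. Two of your claims there fail. First, rebalancing can create or destroy ties (e.g., $t^{a_1}=7$, $t^{a_2}=8$ places ceilings $4,4$ on battlefield $i$), so the type of $i$ relative to \emph{third} battlefields is not preserved; the index set of $\Phi$ can grow, and --- worse --- ``same type'' is not transitive (a battlefield with tied entries is sorted by two permutations and can bridge two battlefields sharing none), so ``equivalence classes of same-type battlefields'' are not well-defined and your global redistribution variant does not parse. Second, $\Phi$ does not strictly decrease on moves whose only violations are orientation ones, i.e., $x^a_i-x^a_j=-1$ with $t^a$ odd: the rebalance maps the difference $-1$ to $+1$ and leaves every squared term unchanged. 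Both defects are repaired by a potential that never mentions types, e.g., the lexicographic pair $\bigl(\sum_a\sum_i (x^a_i)^2,\ \sum_a\sum_i i\cdot x^a_i\bigr)$: a rebalance with some $|x^a_i-x^a_j|\ge 2$ strictly decreases the first coordinate and no rebalance increases it, while an orientation-only fix leaves the first coordinate unchanged and strictly decreases the second, since it moves the larger value to the smaller index. With that substitution your proof is complete; note the published proof of Lemma~\ref{lem:multiuniquetypes} never addresses termination at all, and its $c=2$ predecessor, Lemma~\ref{lem:uniquetypes}, disposes of it with a one-line ``inductively update.''
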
  

\begin{proof}
Assume a best  \maxmin{u}{p} \cmixedstrategy{c} $\b{s}$ is given that does not satisfy condition $0 \leq x^{a}_i - x^{a}_j \leq 1$ for some $a\in [c]$. For any strategy $\b{x^a}$, let $t^a$ denote the number of troops that this strategy spends in battlefields $i$ and $j$. We modify $\b{s}$ by redistributing these troops. For any $a\in [c]$, set $x_j^{a} = \lfloor t^a/2 \rfloor$ and $x_i^a = \lceil t^a/2 \rceil$. 
We prove that this modification does not lower the amount of troops that the opponent needs to spend in these two battlefields to get any possible amounts of utility from strategies $\b{x^1} \dots \b{x^c}$. Let $w$ denote the weight of battlefields $i$ and $j$. Also we say a strategy $x^a$ is smaller than $x^b$ in these battlefields if $a$ comes before $b$ in the given permutation (that defines their type).
Utility of the second player in these battlefields has the following form: 
there are two strategies $\b{x^a}$ and $\b{x^b}$ where $\b{x^a} < \b{x^b}$ and the second player gets utilities $2\cdot w$ against strategies smaller than or equal to $\b{x^a}$ and  gets utility $w$ against the ones that are between $\b{x^a}$ and $\b{x^b}$. To get this utility the second player spends $\min(x^a_i+ x^b_j , x^a_j + x^b_i)$ number of troops. One can easily verify that our modification does not decrease this. Therefore, any given \maxmin{u}{p} \cmixedstrategy{c} can be transformed to a \cmixedstrategy{c} that gets the same utility and satisfies the mentioned conditions. 
\end{proof}

Note that there are $\poly(1/\epsilon)$ different weights of heavy battlefields. Also, the number of permutation of numbers $1$ to $c$ is $c!$. Therefore, there are $\poly(1/\epsilon)$ types of heavy battlefields. This means that we can represent any \cmixedstrategy{c} of the first player on the heavy battlefields by a vector of length $\poly(1/\epsilon)$. Each element of this vector, for any $i\in [c]$, contains a variable that shows how many troops strategy $\b{x^i}$ puts in battlefields of this type. Let $\puresetaheavy{}$ denote the set of all such partial $\cmixedstrategies{c}$ in this representation. Since the total number of troops is $n$, each variable has $n$ possible values. The following is a corollary of this.

\begin{corollary}\label{obs:puresetaheavypoly}
	$|\puresetaheavy{}| \leq n^{\poly(1/\epsilon)}$.
\end{corollary}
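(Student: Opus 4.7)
The plan is to bound $|\puresetaheavy{}|$ by a direct counting argument that combines the three ingredients the authors have already set up: the fact that only a constant number of distinct heavy weights survive the rounding step, the fact that there are only $c!=O(1)$ orderings of the $c$ strategies, and the structural reduction from Lemma~\ref{lem:multiuniquetypes}.

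First I would recall that, after rounding weights down to powers of $(1+\epsilon)$ and capping them at $u$, a battlefield is heavy iff its weight lies in $[\epsilon u/(2c),\, u]$. Since $c$ is a constant, the number of powers of $(1+\epsilon)$ in this interval is at most $\log_{1+\epsilon}(2c/\epsilon)=\poly(1/\epsilon)$; call this number $W$. This is the analogue, for the $c$-strategy case, of Observation~\ref{obs:constantheavybfw}.

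Second, following Section~\ref{sec:discg}, I would formalize a \emph{type} as a pair $(w,\pi)$ where $w$ is a heavy weight and $\pi$ is a permutation of $[c]$ encoding the ordering $x^{\pi(1)}_i\le\cdots\le x^{\pi(c)}_i$ on battlefields of that type. The number of types is therefore at most $W\cdot c! = \poly(1/\epsilon)$. Let $T$ denote this number.

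Third, I would invoke Lemma~\ref{lem:multiuniquetypes}: we may restrict attention to partial strategies in which, for each strategy $\b{x}^a$ and each type, the troop counts on the battlefields of that type differ by at most $1$. Consequently a partial $c$-strategy on the heavy battlefields is fully described (up to a harmless $\pm 1$ rearrangement within each type, which does not affect the payoff against any response) by specifying, for each of the $T$ types and each of the $c$ strategies, the troop count that strategy places on a representative battlefield of that type. Each such count is an integer in $\{0,1,\dots,n\}$, giving at most $(n+1)^{cT}$ vectors in total. Since $cT=\poly(1/\epsilon)$, this is $n^{\poly(1/\epsilon)}$, as claimed.

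The only genuine subtlety, which I would be careful to spell out, is ensuring that the representation is well-defined even when a type contains multiple battlefields of the same weight with the $\pm 1$ slack: since the battlefields within a type are exchangeable and the adversary's best response depends only on the multiset of troop counts, one can augment the representation with a single integer per (type, strategy) pair counting how many battlefields get the ``extra'' troop; each such integer is bounded by $k$, contributing at most an $n^{O(cT)}=n^{\poly(1/\epsilon)}$ factor that is absorbed into the final bound. I do not expect any real obstacle here — the content of the corollary is a bookkeeping exercise once the structural Lemma~\ref{lem:multiuniquetypes} is in hand.
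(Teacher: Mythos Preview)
Your proposal is correct and follows essentially the same counting argument as the paper: bound the number of heavy weights by $\poly(1/\epsilon)$, multiply by $c!$ to get the number of types $T=\poly(1/\epsilon)$, and then observe that a partial $c$-strategy is encoded by $O(cT)$ integers each bounded by a polynomial in the input. The paper's representation records, for each type, the total number of troops each strategy spends on battlefields of that type (so each coordinate is at most $n$), whereas you record the representative per-battlefield count plus an auxiliary integer for the $\pm 1$ slack; both lead to the same $n^{\poly(1/\epsilon)}$ bound, and your extra care with the slack is a harmless refinement the paper leaves implicit.
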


When the goal of the first player is to achieve utility $u$ with probability $p$, he loses if the second player gets utility at least $u$ with probability more than $1-p$. We call a set of strategies in ${x^1, \dots x^c}$ a \textit{losing set} if the sum of probability of the strategies in it is more than $(1-p)$. If the second player gets utility at least $u$ against all the strategies in at least one losing set, the first player loses in the sense that he can not get utility $u$ with probability $p$. Therefore, a \cmixedstrategy{c} of the first player is a \maxmin{u}{p} \cmixedstrategy{c} iff there does not exist a pair of a losing set $L$ and a pure strategy of the second player $\b{y}$ where $\b{y}$ gets utility at least $u$ against all the strategies in $L$. Let $L_c$ be the set of all the losing sets. For any $L\in L_c$ we define a new type of opponent whose goal is to get utility at least $u$ against all the pure strategies in set $L$ and we denote it by $P_L$. Also, recall that to prevent player 1 from achieving a payoff of $u$, any opponent of type $P_L$ can only lose in at most $2c/\epsilon$ heavy battlefields against  any strategy in $L$. We represent any pure strategy of this opponent by a vector of size $\poly(1/\epsilon)$.  Entries of this vector represent the number of battlefields of any type that the second player wins against any strategy of the first player in set $L$. Therefore, each entry contains $c\cdot c!$  numbers where each one  is in $[0, 2c/\epsilon]$. Let $|\puresetbheavy{}(L)|$ be the set of all partial pure strategies of the opponent of type $P_L$ on heavy battlefields. We denote any pure strategy of the second player on heavy battlefields by a pair of $(L, \b{y})$ where $L$ determines the type of the opponent and $\b{y}$ is his partial strategy on the heavy battlefields. Let $|\puresetbheavy{}|$ denotes the set of all such pairs.

\begin{corollary}\label{obs:puresetbheavyconstant}
	$|\puresetbheavy{}| \leq O(1)$.
\end{corollary}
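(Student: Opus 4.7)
The plan is to prove the bound by a direct counting argument, decomposing $\puresetbheavy{}$ according to the losing-set component $L$ and then bounding the number of partial vectors $\b{y}$ for each fixed $L$. The main point is that every quantity involved — $c$, the number of heavy battlefield types, the entries of $\b{y}$, and the per-entry ranges — is a constant (or a constant depending only on $\epsilon$), so the product is also constant.

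First I would bound the number of losing sets. Since $L \subseteq \{\b{x}^1, \ldots, \b{x}^c\}$, we have $|L_c| \leq 2^c = O(1)$, using that $c$ is constant. This already pins down the ``$L$'' coordinate of the pair $(L, \b{y}) \in \puresetbheavy{}$ to a constant number of choices, so it remains to show that for each fixed $L$, $|\puresetbheavy{}(L)| = O(1)$.

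Next I would count the partial strategies $\b{y}$ in $\puresetbheavy{}(L)$ for a fixed $L$. Exactly as in Corollary~\ref{obs:puresetaheavypoly}, the representation exploits Lemma~\ref{lem:multiuniquetypes}: two heavy battlefields are of the same type when they share a weight and induce the same comparison permutation of $x^1_i, \ldots, x^c_i$; together with Observation~\ref{obs:constantheavybfw} (only $O(1/\epsilon \cdot \log(1/\epsilon))$ distinct heavy weights) and the bound $c!$ on permutations, the number of heavy-battlefield types is $\poly(1/\epsilon) = O(1)$. Each entry of $\b{y}$ stores, for every strategy in $L$ and every weight-permutation ``slot,'' the number of battlefields of that type that player~2 wins against that strategy of player~1. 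There are $c \cdot c!$ such numbers per entry, and each is an integer in $[0, 2c/\epsilon]$, because against any single $\b{x}^a \in L$ player~2 can afford to lose on at most $2c/\epsilon$ heavy battlefields (otherwise the payoff of that strategy on heavy battlefields alone would exceed $u$, contradicting that $L$ is a losing set targeted by $P_L$).

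Putting these together, for each fixed $L$, the number of vectors $\b{y}$ is at most $\bigl((2c/\epsilon+1)^{c \cdot c!}\bigr)^{\poly(1/\epsilon)}$, which is a constant since $c$ and $\epsilon$ are both constants. Multiplying by the $2^c$ choices for $L$ yields $|\puresetbheavy{}| \leq O(1)$, as claimed. I do not anticipate any real obstacle in this proof; the only subtlety worth being careful about is justifying the $[0, 2c/\epsilon]$ range for each coordinate, which follows from the heavy-battlefield threshold $\tau = \epsilon u / 2c$ together with the definition of $P_L$ (player~2 must get utility at least $u$ against every strategy in $L$, and a single heavy battlefield contributes more than $\tau$, so losing more than $2c/\epsilon$ of them against some $\b{x}^a \in L$ would give player~1 at least $u$ from the heavy battlefields alone).
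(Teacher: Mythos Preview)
Your proposal is correct and follows essentially the same counting argument the paper lays out in the paragraph immediately preceding the corollary: bound the number of losing sets $L$ by $2^c$, observe that each vector $\b{y}\in\puresetbheavy{}(L)$ has $\poly(1/\epsilon)$ entries (one per heavy-battlefield type), each entry consisting of $c\cdot c!$ integers in $[0,2c/\epsilon]$, and multiply. Your explicit justification of the $[0,2c/\epsilon]$ range via the threshold $\tau=\epsilon u/2c$ is exactly the intended reasoning (the paper states this just above the corollary without spelling out the arithmetic).
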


Similar to the Section~\ref{sec:one-pure} and Section~\ref{sec:twopure} we define a weaker adversary such that his best strategy loses at most $u/2c$ compared to the best strategy of the second player. We also design a dynamic program that given a partial \cmixedstrategy{c} of player 1 and a partial strategy of the weaker adversary on the heavy battlefields, finds the best \cmixedstrategy{c} of player 1 against this opponent on the light battlefields. Without loss of generality we assume that the utility of the opponent is the minimum utility that he gets against all the strategies of the first player. The reason is that we have different types of opponents and for each one we fix the strategies that he is playing against. One can verify that combining these with the methods of the pervious sections gives us an algorithm to find a \maxmin{(1-\epsilon)u}{p} \cmixedstrategy{c} of the first player. 

Recall that in the case of \cmixedstrategies{2}, we had an alternative representation of the second player's response which is a pair of binary vectors of length $k$. We define a similar representation for the response of the second player against \cmixedstrategies{c}. Note that  the main representation of his response is by a vector of length $[k]$ in which the $i$-th entry is the number of troops that the second player puts  in the $i$-th battlefield. An alternative representation is to represent any response of player 2 against a \cmixedstrategy{c} $\b{s} = (\b{x^1}, \dots, \b{x^c})$ of player 1 by $c$ vectors of length $k$ which we denote by $(\b{h^1}, \dots, \b{h^c})$. For any $i\in [c]$, vector $\b{h^i}$ determines the strategy of player 2 against strategy $\b{x^i}$. If $h^i_b = 1$ holds for a battlefield $b\in [k]$, the second player wins strategy $\b{x^i}$ in this battlefield. Note that if for a pair of $i, j \in [c]$, $x^i_b < x^j_b$ holds, then $h^j_b = 1$ yields $h^i_b = 1$ which means $h^j_b \leq h^i_b$. We also define $c$ cost vectors $(\b{c^1}, \dots, \b{c^c})$ for the strategy of first player. These cost vectors, are to transform strategies of player 2 between the two representations.  For any battlefield $b$,  let $\pi^b$  be a permutation  of numbers $1$ to $c$ where for any two consecutive elements $i, j$ in that, $x^i_b \leq x^j_b$. We set $c^j_b := x^j_b- x^i_b$. Roughly speaking, assuming that the second player wins strategy $\b{x^i}$ in battlefield $b$, entry $c^j_b$ is the number of troops that he needs to add to this battlefield to win strategy $\b{x^i}$ as well. Also, for $l:=\pi^b_1$  we set  $c^{l}_b:= x^{l}_b$. A set of vectors is a fractional solution to the best response problem iff:
\begin{enumerate}
\item For any pair of $i, j \in [c]$, $x^i_b < x^j_b$  yields $h^j_b \leq h^i_b$.
\item Entries of the vectors are fractional numbers between $0$ and $1$.
\item The amount of troops used by the second player is at most $m$. In the other words, $\Sigma_{i\in [c]} \b{c^i}\cdot \b{h^i} \leq m$. 
\end{enumerate} Such a response is a best response iff $\min_{i\in [c] } \b{w} \cdot \b{h^i}$ is maximized. We define the strategy of the weaker adversary as follows: he searches through all his pure strategies on the heavy battlefields and for each one, finds his best fractional response on the light battlefields. Then, he rounds down the fractional vectors and plays according to the rounded vectors on the light battlefields. Note that any integral response gets utility at most equal to the best fractional strategy. We prove that there exists a best fractional response that the number of battlefields in which at least one of the vectors $\b{h^1}, \dots, \b{h^c}$ is fractional is at most $c$. Thus, the weaker adversary loses at most $\epsilon u/2$ compared to the best integral strategy of the second player.

\begin{lemma}
For any response of the second player \b{y}, let $B_{\b{y}}$ denote the set of battlefields where for any $i\in B_{\b{y}}$ there exists at least one $j\in [c]$ where $h^j_i$ is fractional. There exists a best fractional response of player 2 for which $|B_{\b{y}}|\leq c$.
\end{lemma}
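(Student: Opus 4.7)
The plan is to cast the best fractional response problem as a linear program and exploit the structure of its vertex solutions, generalizing the $c=2$ argument of Lemma~\ref{lem:theverygoodlem}. Introduce an auxiliary variable $t$ and write the LP as: maximize $t$ subject to the utility constraints $\b{w}\cdot\b{h^i} \ge t$ for every $i \in [c]$, the budget constraint $\sum_i \b{c^i}\cdot\b{h^i} \le m$, the ordering constraints $h^j_b \le h^i_b$ whenever $x^i_b < x^j_b$, and the box constraints $0 \le h^j_b \le 1$. The feasibility polytope lives in $\mathbb{R}^{ck+1}$, and standard LP theory guarantees that the optimal objective value is attained at a vertex of this polytope; this vertex is the best fractional response I would take.

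The heart of the argument is a tight-constraint count at the chosen vertex. At any vertex, exactly $ck+1$ linearly independent constraints must be tight. I would partition these into four groups: (i) tight box constraints, contributing $ck - f$ where $f$ is the number of variables $h^j_b$ that lie in $(0,1)$; (ii) tight ordering constraints, of which the genuinely new count $O$ is the number that are tight on \emph{pairs of fractional} variables, since any tight ordering on a pair containing a $\{0,1\}$-value is linearly dependent on the associated box constraint; (iii) at most $c$ tight utility constraints (call their number $U$); and (iv) at most one tight budget constraint (call this $B_g$). Adding them, $(ck - f) + O + U + B_g = ck + 1$, and since $U \le c$ and $B_g \le 1$, this yields $f - O \le c$.

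To finish, I would translate this bound into a per-battlefield statement. Because every ordering constraint lies within a single battlefield, the fractional variables at battlefield $b$ can be viewed as the vertices of a graph whose edges are the tight orderings inside $b$. Linear independence of the tight constraints at a vertex forces this graph to be a forest, so letting $f_b$ and $O_b$ denote the numbers of fractional variables and tight orderings at $b$, the difference $f_b - O_b$ equals the number of connected components, which is at least $1$ whenever $f_b \ge 1$. Summing over battlefields gives
\[
|B_{\b{y}}| \;\le\; \sum_{b : f_b \ge 1} (f_b - O_b) \;=\; f - O \;\le\; c,
\]
establishing the claim.

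The main obstacle, as usual with vertex arguments, is the bookkeeping in step~(ii): one must verify cleanly that tight ordering constraints involving at least one integer-valued endpoint are linearly dependent on the associated box constraints and therefore do not enlarge the rank, and that the remaining tight orderings between fractional endpoints really are forced to be acyclic by linear independence. Once this linear-algebraic accounting is in place, the rank identity defining a vertex yields $|B_{\b{y}}| \le c$ essentially by inspection.
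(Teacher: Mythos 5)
Your proof is correct, but it takes a genuinely different route from the paper's. The paper picks a best fractional response that minimizes $|B_{\b{y}}|$, associates to each battlefield $i \in B_{\b{y}}$ the indicator vector in $\{0,1\}^c$ recording which of $h^1_i, \ldots, h^c_i$ are fractional, and asserts that minimality forces these vectors to be linearly independent in $\mathbb{R}^c$ --- implicitly a perturbation/exchange argument: a linear dependency among the fractionality patterns would let one shift mass between battlefields while preserving the budget and all $c$ utilities, clearing the fractional part of some battlefield and contradicting minimality. You instead write the best-response problem explicitly as an LP in $(\b{h}, t) \in \mathbb{R}^{ck+1}$, pass to an optimal vertex (which exists: the box constraints pin the $\b{h}$-coordinates and the utility constraints bound $t$ above, so the polyhedron contains no line), and count a maximal linearly independent family of tight constraints. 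Your two flagged bookkeeping points do check out: a tight ordering constraint $h^j_b \le h^i_b$ forces $h^j_b = h^i_b$, so an integer endpoint makes both endpoints integer and the ordering row lies in the span of the two tight box rows; and since ordering rows in distinct battlefields have disjoint supports, an independent selection restricted to a battlefield's fractional variables is exactly a forest, giving $f_b - O_b \ge 1$ whenever $f_b \ge 1$. One presentational caveat: your displayed identity $(ck-f)+O+U+B_g = ck+1$ should be read as the size of a \emph{chosen} independent witnessing subset that includes all $ck-f$ tight box rows (possible, as these are independent unit vectors), not as a statement about all tight constraints --- if $O$ counted every tight fractional-pair ordering, cycles could make some $f_b - O_b \le 0$ and the telescoping $|B_{\b{y}}| \le \sum_{b : f_b \ge 1}(f_b - O_b)$ would fail; your forest observation is precisely what repairs this. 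Comparatively, the paper's argument is shorter but its key independence claim is stated with no supporting detail (and its perturbation must also silently respect the ordering constraints, a point it glosses over), whereas your vertex argument is longer but mechanical, makes every step checkable by standard polyhedral theory, and yields slightly more information, namely the bound $f - O \le c$ on the total number of fractional coordinates, not merely on the number of fractional battlefields.
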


\begin{proof}
  Let \b{y} be a best fractional strategy with minimum $|B_{\b{y}}|$. For any $i\in B_{\b{y}}$ define a vector $\b{v^i}$. For any $j\in [c]$, entry $v^i_j:=1$ if $h^i_j$ is fractional. Otherwise it is 0. By the optimality of the solution, all such vectors are independent. Thus, the number of such vectors is bounded by the dimension which is $c$ here.	
\end{proof}

Let $\b{s} = (\b{x^1}, \dots, \b{x^c})$  be a strategy of player 1 and let $\b{y}$ be the response of the weaker adversary to that. We define the signature of $\b{s}$ to be the set of $c$ battlefields that have at least a fractional element in $\b{y}$ and the number of troops that each player puts in them. We claim that knowing the signature of a strategy and the number of troops that different strategies of player 1 put in a given battlefield, one can uniquely determine the strategy of player 2 in that battlefield. Define the ratio of a subset of strategies in $\b{s}$ to be the minimum amount of troops that one need to add to the fractional solution to increase the utility of the second player against these strategies by a very small fixed amount denoted by $\delta$. The overall idea is that given the signature of a strategy, we can find these ratios for all the subsets and similar to what we do in Algorithm~\ref{alg:partial-strategy} for the case of \cmixedstrategies{2}, these ratios and the index of the fractional battlefields are enough to determine the strategy of player 2 in a given battlefield. Having this function and combining it with the ideas of the previous section (the dynamic program designed in the proof of \ref{theorem:3-approx}) gives us a dynamic program to find the best strategy of player 1 against the weaker adversary.

\section{Extension to Maximin Strategies}\label{sec:exp}
\newcommand{\opt}[0]{\ensuremath{\mathsf{OPT}}}

In this section, we show that our results carry over to the case where our goal is to maximize the guaranteed expected utility. Recall that for the case of \maxmin{u}{p} strategies, we proved in Section~\ref{sec:probdist} that regardless of the game structure, it suffices to only consider a constant number of probability assignments (profiles) to the pure strategies. We used this to first fix the profile and then solve the game by finding the actual pure strategies. Unfortunately, this is not the case when the objective is to maximize the expected utility. However, we show that it is possible to consider only a polynomial number of profiles while ensuring that the found solution among them is a $(1-\epsilon)$-approximation of the actual maximin strategy.

Throughout this section we denote by $\opt$ the guaranteed expected payoff of the optimal maximin \cmixedstrategy{c}. Our goal is to construct a \cmixedstrategy{c} in polynomial time that guarantees an expected utility of at least $(1-\epsilon)\opt$ against any strategy of the opponent for any given constant $\epsilon > 0$. We call this a $(1-\epsilon)$-approximate maximin strategy. We start with the following claim.

\begin{claim}
	Either $\opt = 0$ or $\opt > 1/c$.
\end{claim}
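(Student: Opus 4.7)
The plan is to prove the bound by constructing, from any optimal $c$-strategy, a specific $c$-strategy whose expected utility against every pure response is at least $1/c$. The only property of the game I need is that each weight $w_i$ is a positive integer, so that for every pure strategies $\b{x}^i$ and $\b{y}$, the value $\utilitya{\b{x}^i}{\b{y}} = \sum_{j: x^i_j > y_j} w_j$ is a non-negative integer, and thus it is either $0$ or at least $1$.

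First I would fix an optimal maximin \cmixedstrategy{c} $\b{x}$ with support $\{\b{x}^1, \ldots, \b{x}^s\}$ for some $s \le c$, played with probabilities $q_1, \ldots, q_s$. The key combinatorial observation is: if there exists a pure response $\b{y}^\circ$ of player~2 with $\utilitya{\b{x}^i}{\b{y}^\circ} = 0$ for every $i \in [s]$, then the expected utility of $\b{x}$ against $\b{y}^\circ$ is also $0$, which forces $\opt = 0$. Hence, assuming $\opt > 0$, for every pure strategy $\b{y}$ of player~2 at least one pure strategy in the support of $\b{x}$ obtains strictly positive, and therefore integer-at-least-one, utility; in particular,
\[
\sum_{i=1}^{s} \utilitya{\b{x}^i}{\b{y}} \;\ge\; 1 \qquad \text{for every pure } \b{y}.
\]

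Next I would exhibit the witness strategy $\b{x}'$ obtained by playing each $\b{x}^i$ with probability $1/s$. This is again a $c$-strategy, and against any pure response $\b{y}$ of player~2 its expected utility is $\frac{1}{s}\sum_{i=1}^{s} \utilitya{\b{x}^i}{\b{y}} \;\ge\; \frac{1}{s} \;\ge\; \frac{1}{c}$, using $s \le c$ in the last step. Minimizing over $\b{y}$, the guaranteed expected utility of $\b{x}'$ is at least $1/c$; since $\opt$ is the supremum over all \cmixedstrategies{c}, this gives $\opt \ge 1/c$.

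The main conceptual step is the observation above that turns the optimality of $\b{x}$ into a pointwise lower bound on $\sum_i \utilitya{\b{x}^i}{\b{y}}$, which is exactly what makes the uniform-over-the-support strategy work. To obtain the strict inequality $\opt > 1/c$ in the claim, one has to improve on the uniform construction in the boundary case $s = c$; a natural route is to argue that if the uniform $c$-strategy achieves exactly $1/c$ against a best response, then for every such response exactly one $\b{x}^i$ wins weight $1$ and the rest win $0$, and this rigid structure either lets one prune the support to $s' < c$ (yielding $1/s' > 1/c$) or reweight the $q_i$'s to strictly improve the guarantee. The rest of the section only uses that $\opt$ is bounded away from $0$ by $\Omega(1/c)$, so the non-strict version of the claim is what I would formally invoke downstream.
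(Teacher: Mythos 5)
Your proof is correct and follows essentially the same route as the paper's: both use integrality of the weights to get a pointwise lower bound of $1$ on the total support utility against any pure response, then play the support uniformly to guarantee $1/c$. If anything, you are more careful than the paper, which also only establishes $\opt \geq 1/c$ via this construction but asserts the strict inequality without comment; as you correctly observe, the non-strict bound is all that the downstream arguments (e.g., Observation~\ref{obs:minprob}) actually require.
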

\begin{proof}
	Assume that $\opt > 0$. This means that there exists a set $S$ of $c$ pure strategies, where against any strategy of player 2, at least one of the strategies in $S$ obtains a non-zero utility. Since the battlefield weights are integers, against any strategy of player 2, at least one strategy in $S$ obtains a payoff of at least 1. Now, by playing each of these strategies with probability $1/c$, we guarantee an expected utility of at least $1/c$ against any strategy of the opponent. Hence $\opt > 1/c$.
\end{proof}

Let us denote by $w := \sum_{i\in[k]} w_i$ the sum of all battlefield weights. Our next claim gives a lower bound for the probabilities assigned to the strategies in the support.

\begin{claim}
	For any given $\epsilon > 0$, there exists a $(1-\epsilon)$-approximate maximin strategy for any instance of (continuous or discrete) Colonel Blotto where every strategy in the support is played with probability at least $\frac{\epsilon \opt}{cw}$.
\end{claim}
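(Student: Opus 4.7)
The plan is to start from an optimal maximin \cmixedstrategy{c} $\b{x}^*$ (which is guaranteed to exist by compactness/finiteness) and surgically remove any pure strategy in its support whose probability is below the threshold $\theta := \frac{\epsilon \opt}{cw}$, reallocating the freed probability mass to one of the surviving pure strategies. Concretely, let $L \subseteq [c]$ be the indices of pure strategies in the support of $\b{x}^*$ with $p_i^* < \theta$, and let $j^\star$ be any surviving index with $p_{j^\star}^* \geq \theta$; such a $j^\star$ must exist, because at least one of the $\leq c$ probabilities is at least $1/c \geq \theta$ (using $\opt \leq w$ and $\epsilon \leq 1$). Define $\b{x}$ to be the mixed strategy that plays pure strategy $i$ with probability $p_i^*$ for $i \notin L \cup \{j^\star\}$, plays $j^\star$ with probability $p_{j^\star}^* + \sum_{i \in L} p_i^*$, and plays nothing from $L$. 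By construction, $\b{x}$ has support size at most $c$ and every pure strategy in its support has probability at least $\theta$.

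The main step is the approximation bound. For every (possibly mixed) strategy $\b{y}$ of player 2 and any pure strategy $\b{x}^i$, the utility is bounded by the total weight, i.e., $\utilitya{\b{x}^i}{\b{y}} \leq w$. Writing $q_i$ for the probability that $\b{x}$ plays pure strategy $i$, we have $q_i \geq p_i^*$ for all $i \notin L$ and $q_i = 0$ for $i \in L$. Hence
\begin{equation*}
\mathbb{E}_{\b{x}' \sim \b{x}}[\utilitya{\b{x}'}{\b{y}}] - \mathbb{E}_{\b{x}' \sim \b{x}^*}[\utilitya{\b{x}'}{\b{y}}] = \sum_{i \notin L}(q_i - p_i^*)\utilitya{\b{x}^i}{\b{y}} - \sum_{i \in L} p_i^* \utilitya{\b{x}^i}{\b{y}}.
\end{equation*}
The first sum is non-negative since $q_i \geq p_i^*$ and utilities are non-negative, while the second sum is at most $\sum_{i \in L} p_i^* \cdot w < |L|\cdot \theta \cdot w \leq c \cdot \frac{\epsilon \opt}{cw}\cdot w = \epsilon \opt$. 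Combining these and using $\mathbb{E}_{\b{x}^*}[\utilitya{\b{x}'}{\b{y}}] \geq \opt$, we conclude $\mathbb{E}_{\b{x}}[\utilitya{\b{x}'}{\b{y}}] \geq \opt - \epsilon \opt = (1-\epsilon)\opt$, establishing that $\b{x}$ is a $(1-\epsilon)$-approximate maximin \cmixedstrategy{c}.

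The only mild subtlety is the edge case $\opt = 0$, in which the bound is vacuous and any \cmixedstrategy{c} works; and the existence of $j^\star$ above, which is the only place we use that $\opt \leq w$ and $\epsilon \leq 1$. No structural property of the continuous or discrete variant is used beyond the crude bound $\utilitya{\cdot}{\cdot}\leq w$, so the argument applies uniformly to both settings. I do not expect any real obstacle here; the step that must be written with care is the inequality chain above, particularly ensuring that the redistribution of probability mass from $L$ onto $j^\star$ can only \emph{increase} the expected utility against every adversary strategy, so that no approximation loss is incurred from the redistribution itself.
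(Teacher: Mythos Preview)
Your proposal is correct and follows essentially the same approach as the paper: start from an optimal maximin \cmixedstrategy{c}, drop the pure strategies with probability below $\theta=\frac{\epsilon\opt}{cw}$, and bound the resulting loss by $|L|\cdot\theta\cdot w\leq c\cdot\frac{\epsilon\opt}{cw}\cdot w=\epsilon\opt$. Your write-up is in fact slightly more careful than the paper's, which simply zeroes out the small probabilities without explicitly redistributing the freed mass; your reallocation to a surviving index $j^\star$ (and the verification that such an index exists) cleanly ensures the result is still a valid probability distribution while only improving the expected utility against any fixed $\b{y}$.
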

\begin{proof}
	Consider an optimal maximin strategy. If no strategy in its support is played with probability less than $\frac{\epsilon \opt}{cw}$, we are done. Otherwise, set the probability of all such strategies in the support to be 0 (i.e., remove them from the support). Now consider a strategy of player 2. Each of the removed strategies gets a utility of at most $w$ against this strategy since $w$ is sum of battlefield weights. On the other hand, there are at most $c$ such strategies. Therefore, the overall cost for the expected utility is
	\begin{equation}
		\frac{\epsilon \opt}{cw} \cdot c \cdot w = \epsilon \opt.
	\end{equation}
	This implies that the remaining strategies in the support obtain an expected utility of at least $(1-\epsilon)\opt$ concluding the proof.
\end{proof}

Assuming that $\opt > 0$ (otherwise a single pure strategy without any troops is the solution), by combining the two claims above we get the following observation.
\begin{observation}\label{obs:minprob}
	We can assume w.l.o.g., that the minimum probability is $\Omega(1/w)$.
\end{observation}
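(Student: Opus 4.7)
The plan is to simply chain the two claims that immediately precede the observation, treating the degenerate case $\opt = 0$ separately. If $\opt = 0$, the pure strategy that assigns zero troops to every battlefield trivially attains the optimum, so the support has size one and its unique element is played with probability $1$; the conclusion is vacuous. So the only substantive case is $\opt > 0$.

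For $\opt > 0$, I would invoke the first claim to get $\opt > 1/c$, and then invoke the second claim to produce a $(1-\epsilon)$-approximate maximin \cmixedstrategy{c} in which every pure strategy in the support is played with probability at least $\frac{\epsilon\,\opt}{c\,w}$. Substituting the lower bound on $\opt$ into this probability bound gives
\[
\frac{\epsilon\,\opt}{c\,w} \;>\; \frac{\epsilon}{c^{2}\,w} \;=\; \Omega\!\left(\tfrac{1}{w}\right),
\]
where the final equality uses that $c$ and $\epsilon$ are fixed constants throughout this section. This is exactly the claimed lower bound.

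The only point worth spelling out is why the qualifier ``w.l.o.g.''\ is justified here, since an exact maximin \cmixedstrategy{c} can in principle play some pure strategies with probabilities far smaller than $1/w$. The justification is that we are already targeting a PTAS (Theorems~\ref{thm:expcont} and~\ref{thm:expdisc}), so a multiplicative loss of $(1-\epsilon)$ in the guaranteed expected utility is admissible; the second claim shows that pruning all support elements with probability below $\tfrac{\epsilon\,\opt}{c\,w}$ and focusing on the surviving strategies fits within exactly this slack. I do not foresee any obstacle; the observation is essentially an accounting step combining the two preceding claims.
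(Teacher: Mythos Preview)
Your proposal is correct and takes essentially the same approach as the paper: handle the degenerate case $\opt=0$ trivially, and otherwise chain the two preceding claims to obtain the $\Omega(1/w)$ lower bound. Your added remark about why the ``w.l.o.g.''\ is justified (the PTAS target already tolerates the $(1-\epsilon)$ loss) is a helpful clarification that the paper leaves implicit.
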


We use this observation to consider only $O(\log w)$ probabilities for each strategy, leading to a polynomial number of profiles that have to be considered.

\begin{lemma}\label{lem:polyprobexp}
	For any constant $\epsilon > 0$, and for any instance of continuous or discrete Colonel Blotto, there are only polynomially many profiles among which an $(1-\epsilon)$-approximate maximin is guaranteed to exist. 
\end{lemma}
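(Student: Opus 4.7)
The plan is a standard multiplicative discretization of the profile, layered on top of Observation~\ref{obs:minprob}. First I would invoke that observation (applied with $\epsilon/2$ in place of $\epsilon$) to restrict attention to profiles whose probabilities all lie in a bounded multiplicative range $[\alpha,1]$ for some $\alpha=\Omega(\epsilon/w)$, where $w=\sum_i w_i$; then I would enumerate all profiles whose coordinates are normalized powers of $(1+\delta)$ for $\delta=\epsilon/4$, and show that one of these is within a $(1+O(\epsilon))$ factor of any target profile on each coordinate.

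Concretely, set $J=\lceil\log_{1+\delta}(1/\alpha)\rceil=O(\log(w/\epsilon)/\epsilon)$ and define the candidate set $P$ to contain, for every tuple $(j_1,\ldots,j_c)\in\{0,1,\ldots,J\}^c$, the normalized profile with $q_i\propto(1+\delta)^{j_i}$. Since $c$ and $\epsilon$ are constants and the weights are encoded in binary (so $\log w$ is polynomial in the input length), $|P|=(J+1)^c$ is polynomial. This handles the counting part of the lemma; the remaining task is to show that some profile in $P$ supports an $(1-\epsilon)$-approximate maximin strategy.

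For the approximation, let $(q_1^*,\ldots,q_c^*)$ be the profile of a $(1-\epsilon/2)$-approximate maximin \cmixedstrategy{c} supplied by Observation~\ref{obs:minprob}, with associated pure strategies $\b{x}^1,\ldots,\b{x}^c$. Set $j_i^\star=\lfloor\log_{1+\delta}(q_i^*/\alpha)\rfloor$, $\tilde{q}_i=\alpha(1+\delta)^{j_i^\star}\in[q_i^*/(1+\delta),\,q_i^*]$, $S=\sum_i\tilde{q}_i\le 1$, and $q'_i=\tilde{q}_i/S$; the tuple $(q'_1,\ldots,q'_c)$ is the profile in $P$ indexed by $(j_1^\star,\ldots,j_c^\star)$, and $q'_i\ge\tilde{q}_i\ge q_i^*/(1+\delta)$. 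Playing $\b{x}^1,\ldots,\b{x}^c$ with this profile instead of $(q_1^*,\ldots,q_c^*)$, the guaranteed expected utility against any pure strategy $\b{y}$ of the opponent satisfies
\[
\sum_i q'_i\,\utilitya{\b{x}^i}{\b{y}}\;\ge\;\frac{1}{1+\delta}\sum_i q_i^*\,\utilitya{\b{x}^i}{\b{y}}\;\ge\;\frac{(1-\epsilon/2)\,\opt}{1+\delta}\;\ge\;(1-\epsilon)\,\opt,
\]
where the final inequality uses $\delta=\epsilon/4$. Hence this profile in $P$ supports an $(1-\epsilon)$-approximate maximin strategy, establishing the lemma.

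There is no serious obstacle in this plan; the only subtle point is that without Observation~\ref{obs:minprob} the discretization range would be unbounded, as probabilities could otherwise be arbitrarily close to zero. Because the argument only manipulates the distribution over the support and never touches the structure of the pure strategies, it applies uniformly to both the continuous and the discrete variant of Colonel Blotto.
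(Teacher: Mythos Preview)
Your proof is correct and follows essentially the same approach as the paper: both invoke Observation~\ref{obs:minprob} to lower-bound the probabilities by $\Omega(1/w)$ and then discretize multiplicatively to powers of $(1+\Theta(\epsilon))$, yielding $O(\log w)^c$ candidate profiles. Your version is in fact slightly more careful, since you explicitly renormalize after rounding down (and split the error budget accordingly), a step the paper glosses over.
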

\begin{proof}
	Suppose that the probabilities are all in set $P = \{p_0, (1+\epsilon)p_0, (1+\epsilon)^2p_0, \ldots, 1\}$ where $p_0$ is the lower bound for minimum probability. We showed in Observation~\ref{obs:minprob} that it suffices to have $p_0 = \Omega(1/w)$, therefore $|P| \leq O(\log w)$ since $\epsilon$ is assumed to be constant. Note that $O(\log w)$ is polynomial in the input size, thus, even if we try $O(\log w)$ possibilities for $c$ strategies, we have to try polynomially many possibilities. It remains to prove that $P$ provides a $(1-\epsilon)$-approximate maximin. Consider an optimal maximin strategy. Round down the probability of each of its strategies to be in set $P$. Clearly, the updated probability of each strategy is more than a $(1-\epsilon)$ fraction of its original probability. Therefore, against every strategy of player 2, the updated strategy with probabilities in $P$ obtains a payoff of at least $(1-\epsilon)\opt$ concluding the proof. 
\end{proof}

We use Lemma~\ref{lem:polyprobexp} to first fix the probabilities that are assigned to the strategies in the support and then construct them. We further need to fix the value of $\opt$ a priori. This can be done via a binary search so long as by having probabilities $p_1, \ldots, p_c$ and the value of $\opt$, we have an oracle that decides whether it is feasible to construct strategies $\b{x}^1, \ldots, \b{x}^c$ that guarantee an expected payoff of at least $(1-\epsilon)\opt$ with these probabilities or not. Therefore it suffices for the continuous and discrete variants of Colonel Blotto to provide this oracle. This is our goal in the next two sections.

\subsection{Continuous Colonel Blotto}\label{sec:expcont}
Given probabilities $p_1, \ldots, p_c$ and the optimal maximin value $\opt$, our goal in this section is to construct $c$ strategies $\b{x}^1, \ldots, \b{x}^c$ for the continuous variant of Colonel Blotto that guarantee a payoff of at least $(1-\epsilon)\opt$ in expectation, against any strategy of player 2 (or report that this is infeasible). As in Section~\ref{sec:continuous}, we start by formulating the original problem as a (non-linear) program.
\begin{equation}\label{prog:expcg1}
\begin{array}{ll@{}ll}
\text{find}  & \b{x}^1, \ldots, \b{x}^c &  &\\
\text{subject to}& x^j_i \geq 0  & &\forall i, j: i\in[k], j\in[c] \\
&                  \sum_{i\in[k]} x^j_{i} \leq n         & &\forall j \in [c]\\
&	\sum_{j\in[c]} p_j \cdot \utilitya{\b{x}^j}{\b{y}} \geq \opt & & \forall \b{y} \in \puresetb{}
\end{array}\end{equation}
We need to better understand the last constraint of the formulation above to be able to solve it in polynomial time. For this, similar to the case of \maxmin{u}{p} strategies, we give an appropriately adapted definition of {\em critical tuples} and combine it with configurations that were introduced in Section~\ref{sec:contcg}.

\begin{definition}[Critical tuples]\label{def:expcritical}
	Consider a tuple $\b{W} = (W_1, \ldots, W_k)$ where each $W_i$ is a subset of $[c]$. We call $\b{W}$ a {\em critical tuple} if and only if we have $
		\sum_{i, j: j \in W_i } p_j w_i < \opt.$
\end{definition}

Recall from Section~\ref{sec:contcg} that a configuration \b{G} is a vector of $k$ matrices $G_1, \ldots, G_k$ which we call partial configurations, where for any $i \in [k]$, and for any $j_1, j_2 \in [c]$, the value of $G_i(j_1, j_2)$ is $\leqc$ if $x^{j_1}_i \leq x^{j_2}_i$ and it is $\geqc$ otherwise.  Furthermore, for configuration $\b{G}$ and critical tuple $\b{W}$, define $z_i(\b{G}, \b{W}) := \argmax_{j: j \not\in W_i}{x^j_i}$. Note that it is crucial that $z_i(\b{G}, \b{W})$ is solely a function of $\b{G}$ and $\b{W}$ (and not the actual strategies $\b{x}^1, \ldots, \b{x}^c$) so long as strategies $\b{x}^1, \ldots, \b{x}^c$ comply with \b{G}.
\begin{equation}\label{lp:expcg2}
\begin{array}{ll@{}ll}
\text{find}  & \b{x}^1, \ldots, \b{x}^c &  &\\
\text{subject to}& x^j_i \geq 0  & &\forall i, j: i\in[k], j\in[c] \\
&                  \sum_{i\in[k]} x^j_{i} \leq n         & &\forall j \in [c]\\
&  \text{ensure that }\b{x}^1, \ldots, \b{x}^c \text{ comply with } \b{G}\\
&	\sum_{i \in [k]} x^{z_i(\b{G}, \b{W})}_i > m & & \text{for every critical tuple $\b{W}=(W_1, \ldots, W_k)$}
\end{array}\end{equation}
This is essentially the same LP as LP~\ref{lp:cg2} of Section~\ref{sec:contcg} except that we use a different notion of critical tuples.

\begin{observation}\label{obs:expequivalencelpprog}
	Suppose that the configuration \b{G} of the optimal solution we seek to find is fixed. Then, the last constraint of LP~\ref{lp:expcg2} is equivalent to the last constraint of Program~\ref{prog:expcg1}.
\end{observation}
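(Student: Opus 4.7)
The plan is to establish the equivalence by showing that pure strategies $\b{y}$ of player 2 and critical tuples $\b{W}$ are in correspondence (modulo a slight collapse), so that the Program constraint is violated for some $\b{y}$ iff the LP constraint is violated for some $\b{W}$. The key bridge is the observation that for any response $\b{y}$, the induced ``winning pattern'' $W_i := \{j \in [c] : x^j_i > y_i\}$ turns the expected utility expression into the exact form $\sum_{i,j: j\in W_i} p_j w_i$ appearing in Definition~\ref{def:expcritical}, and conversely $z_i(\b{G},\b{W})$ picks out the minimal number of troops player 2 must spend on battlefield $i$ to realize $W_i$ as the set of winners.

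For the forward direction, I would start from a pure strategy $\b{y} \in \puresetb{}$ with $\sum_{j} p_j\,\utilitya{\b{x}^j}{\b{y}} < \opt$. Setting $W_i = \{j : x^j_i > y_i\}$, one rewrites the LHS as $\sum_{i,j : j\in W_i} p_j w_i$, so $\b{W}$ satisfies Definition~\ref{def:expcritical} and is a critical tuple. Since $j \notin W_i$ means $x^j_i \leq y_i$, we have $x^{z_i(\b{G},\b{W})}_i = \max_{j\notin W_i} x^j_i \leq y_i$, hence $\sum_{i\in[k]} x^{z_i(\b{G},\b{W})}_i \leq \sum_i y_i \leq m$, contradicting the LP's last constraint. (In the degenerate case $W_i = [c]$, one takes $x^{z_i(\b{G},\b{W})}_i := 0$ by convention, which is consistent with $y_i = 0$ sufficing on battlefield $i$.)

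For the backward direction, given a critical tuple $\b{W}$ that witnesses a violation, i.e., $\sum_i x^{z_i(\b{G},\b{W})}_i \leq m$, I would construct the pure strategy $y_i := x^{z_i(\b{G},\b{W})}_i$. Then $\b{y} \in \puresetb{}$ because $\sum_i y_i \leq m$. Letting $W'_i := \{j : x^j_i > y_i\}$ be the set of strategies that actually win battlefield $i$ against $\b{y}$, the definition of $z_i$ yields $x^{j'}_i \leq y_i$ for every $j' \notin W_i$, so $W'_i \subseteq W_i$. Consequently
\begin{equation*}
\sum_{j\in[c]} p_j\,\utilitya{\b{x}^j}{\b{y}} \;=\; \sum_{i,j: j\in W'_i} p_j w_i \;\leq\; \sum_{i,j: j\in W_i} p_j w_i \;<\; \opt,
\end{equation*}
so $\b{y}$ violates the Program constraint.

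The only real subtlety -- and the step I would most want to double-check -- is that $z_i(\b{G},\b{W})$ genuinely depends only on $\b{G}$ and not on the particular values of the $\b{x}^j$'s. This is where the hypothesis ``the configuration is fixed'' enters: since $\b{G}$ records the pairwise orderings of $x^1_i,\ldots,x^c_i$ on every battlefield, the argmax over $\{j \notin W_i\}$ is determined purely by $\b{G}$ and $W_i$, so the LP constraint is a bona fide linear constraint in the $\b{x}^j$'s. Once this is in place, both implications above are mechanical, and together they give the claimed equivalence.
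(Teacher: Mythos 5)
Your proof is correct and follows essentially the same route as the paper's: both directions are established via the correspondence $W_i = \{j : x^j_i > y_i\}$ in one direction and $y_i = x^{z_i(\b{G},\b{W})}_i$ in the other. Your explicit handling of the containment $W'_i \subseteq W_i$ and of the degenerate case $W_i = [c]$ is slightly more careful than the paper's write-up, but the argument is the same.
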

\begin{proof}
	Suppose at first that the last constraint of LP~\ref{lp:expcg2} is violated. We show that given that the solution should comply with \b{G}, Program~\ref{prog:expcg1} is infeasible. To do so, consider the critical tuple $\b{W}=(W_1, \ldots, W_k)$ for which we have $\sum_{i \in [k]} x^{z_i(\b{G}, \b{W})}_i \leq m$. Consider the strategy $\b{y}$ of player 2 where $y_i = x^{z_i(\b{G}, \b{W})}_i$. Clearly this is a feasible strategy for player 2 since it requires only $\sum_{i \in [k]} x^{z_i(\b{G}, \b{W})}_i$ troops which is assumed to be no more than $m$. By Definition~\ref{def:expcritical}, the expected utility of any strategy of player 1 that complies with \b{G} is less than \opt{} against $\b{y}$, meaning that Program~\ref{prog:expcg1} is infeasible.
	
	Now suppose that LP~\ref{lp:expcg2} has a feasible solution $\b{x}^1, \ldots, \b{x}^c$. We show that this is also a valid solution for Program~\ref{prog:expcg1}. Assume for the sake of contradiction that this is not true. That is, player 2 has a strategy $\b{y}$ that prevents strategy $\b{x}^1, \ldots, \b{x}^c$ to obtain an expected payoff of $\opt$. For any $i\in[k]$, define $W_i := \{ j: x^j_i > y_i \}$. Since the expected utility of player 1 by playing this strategy is less than $\opt$ against $\b{y}$, we have 
	$\sum_{i, j: j \in W_i } p_j w_i < \opt.$ which means $(W_1, \ldots, W_k)$ is indeed a critical tuple. Consider the constraint of LP~\ref{lp:expcg2} corresponding to this critical tuple, we need to have $\sum_{i \in [k]} x^{z_i(\b{G}, \b{W})}_i \leq m$ and therefore this constraint must be violated; contradicting the fact that $\b{x}^1, \ldots, \b{x}^c$ is a feasible solution of LP~\ref{lp:expcg2}.	
\end{proof}

\begin{lemma}
	LP~\ref{lp:expcg2} can be solved in polynomial time using the ellipsoid method.
\end{lemma}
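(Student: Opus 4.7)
The plan is to design a polynomial-time separation oracle for LP~\ref{lp:expcg2} and then invoke the ellipsoid method. The polynomially-many simple constraints (non-negativity, troop budgets, and compliance with the fixed configuration $\b{G}$) can be verified directly. The only difficulty is the exponentially-many critical-tuple constraints. By Observation~\ref{obs:expequivalencelpprog}, given that the strategies comply with $\b{G}$, separating over these constraints is equivalent to checking whether player 2 has any strategy $\b{y}\in\puresetb$ that pushes player 1's expected utility $\sum_{j}p_j \utilitya{\b{x}^j}{\b{y}}$ strictly below $\opt$. If such a $\b{y}$ exists, the critical tuple $\b{W}=(W_1,\ldots,W_k)$ with $W_i:=\{j:x^j_i>y_i\}$ is a witness of a violated constraint; if not, all critical-tuple constraints hold.

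To find such a $\b{y}$, first observe that only the subset $W_i=\{j:x^j_i>y_i\}$ governs the utility contribution on battlefield $i$, so without loss of generality $y_i\in\{0,x^1_i,\ldots,x^c_i\}$, giving at most $c+1=O(1)$ choices per battlefield. Let $U_i(y_i):=\sum_{j:x^j_i>y_i}p_j w_i$. Define a dynamic program
\begin{equation*}
D(i,u) \;=\; \min_{y_i\in\{0,x^1_i,\ldots,x^c_i\}}\bigl(D(i-1,u-U_i(y_i))+y_i\bigr),
\end{equation*}
which stores the minimum number of troops player 2 must spend on the first $i$ battlefields to hold player 1's expected payoff to at most $u$; the base case is $D(0,0)=0$ and $D(0,u)=0$ for $u\ge 0$. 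After filling the table, the existence of a violating $\b{y}$ is equivalent to $D(k,u)\le m$ for some $u<\opt$. Standard traceback then reconstructs $\b{y}$ (hence the violated critical tuple $\b{W}$).

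The main obstacle is ensuring that the utility axis $u$ takes only polynomially many values, so that the DP has a polynomial number of states. This is handled by the fact that $c=O(1)$ and, by Lemma~\ref{lem:polyprobexp}, the profile $p_1,\ldots,p_c$ comes from a polynomial-sized set and can be taken to have polynomial bit complexity (at a negligible multiplicative loss absorbed in the $1-\epsilon$ approximation we already tolerate). Scaling utilities by a common denominator of the $p_j$'s turns every $U_i(y_i)$ into an integer of polynomial magnitude, so $u$ ranges over $\poly(k,w,1/\epsilon)$ values. Each of the $O(k\cdot \poly)$ states has $O(c)=O(1)$ outgoing transitions, so the DP -- and hence the separation oracle -- runs in polynomial time. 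Combined with the fact that the variables of LP~\ref{lp:expcg2} admit vertex solutions of polynomial bit complexity, the ellipsoid method solves LP~\ref{lp:expcg2} in polynomial time.
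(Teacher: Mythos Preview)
Your proposal is correct and follows essentially the same approach as the paper: reduce separation over the critical-tuple constraints to player~2's best-response problem (via Observation~\ref{obs:expequivalencelpprog}) and solve the latter by a knapsack-style dynamic program. The only difference is bookkeeping---the paper parametrizes the DP by the number of troops player~2 has spent, $D(i,m')$, whereas you parametrize by the utility level $D(i,u)$; both yield polynomial-size tables, and your treatment of the bit complexity of the utility axis (via the polynomial bit-length of the fixed profile $p_1,\ldots,p_c$) is in fact more careful than the paper's sketch.
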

\begin{proof}
	We only need to give a separating oracle for the last constraint of LP~\ref{lp:expcg2} since there are only polynomially many constraints of other types. Observation~\ref{obs:expequivalencelpprog} shows that this constraint is essentially equivalent to the best response of player 2 which is used in Program~\ref{prog:expcg1}. That is, if we can solve the best response of player 2 in polynomial time, we will be able to check whether any constraint of LP~\ref{lp:expcg2} is violated. We show that in fact, the best-response of player 2 can be solved in polynomial time. To do so, given $c$ strategies $\b{x}^1, \ldots, \b{x}^c$ along with their probabilities $p_1, \ldots, p_c$, we seek to find a strategy $\b{y}$ of player 2 that maximizes his expected utility. For this, one can use a simple knapsack-like dynamic program $D(i, m')$ which essentially represents what expected payoff can be obtained from the first $i$ battlefields given that player 2 uses only $m'$ troops among them. One can easily confirm that this dynamic program can be updated by considering all possibilities of the number of troops over the $i$th battlefield and recurse over the prior battlefields. This gives a polynomial time algorithm for the best response of player 2, and therefore, a polynomial time algorithm for the separating oracle of LP~\ref{lp:expcg2}.
\end{proof}

The lemma above shows that if we are given the actual configuration \b{G}, we can solve the problem in polynomial time. In Sections~\ref{sec:contc2} and \ref{sec:contcg}, we showed how it suffices to only consider a polynomial number of configurations when all battlefields have the same weight (i.e., the uniform variant of the game). The same argument applies to the expected case since the players are still indifferent to the battlefields of the same weight. The generalization to the case of different battlefield weights follows from essentially the same approach described in Section~\ref{sec:contcg}. It suffices to consider the $\delta$-uniform variant of the game for $\delta$ being a relatively smaller error threshold than $\epsilon$. Then on each bucket of battlefields of the same weight, we only count the number of battlefields of each partial configuration instead of specifying the exact partial configuration of each battlefield. Using similar techniques as in Sections~\ref{sec:contc2} and \ref{sec:contcg}, it can be shown that it suffices to consider only a constant number of possibilities per bucket and therefore a polynomial number of configurations in total. This concludes the continuous variant of the game when the objective is to compute an approximate maximin strategy.

\begin{theorem}\label{thm:expcont}
	For any $\epsilon > 0$, and any constant $c$, there exists a polynomial time algorithm to obtain a $(1-\epsilon)$-approximate maximin \cmixedstrategy{c} for player 1 in the continuous Colonel Blotto game.
\end{theorem}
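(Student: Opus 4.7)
\textbf{Proof proposal for Theorem~\ref{thm:expcont}.} My plan is to reduce the maximin problem to a polynomial number of calls to the "feasibility oracle" of Section~\ref{sec:expcont}, each of which is solved via the configuration-decomposition strategy that was developed for \maxmin{u}{p} \cmixedstrategies{c} in Section~\ref{sec:contcg}. First, by Lemma~\ref{lem:polyprobexp}, there exist only polynomially many profiles $(p_1,\ldots,p_c)$ among which a $(1-\epsilon/3)$-approximate maximin is guaranteed to exist, so I may iterate over all of them. Second, I binary-search over candidate values of $\opt$; since battlefield weights are integers and $\opt\leq\sum_i w_i$, a binary search to within a $(1-\epsilon/3)$ multiplicative factor only needs $O(\log(\sum_iw_i)/\epsilon)$ queries. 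Thus the whole problem reduces to the following feasibility question: given fixed $(p_1,\ldots,p_c)$ and a target value $U$, decide whether there exist $\b{x}^1,\ldots,\b{x}^c$ for which the expected utility is at least $U$ against every pure response of player 2 (up to a $1-\epsilon/3$ factor).

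For this oracle I use the machinery already set up in Section~\ref{sec:expcont}: Observation~\ref{obs:expequivalencelpprog} shows Program~\ref{prog:expcg1} is equivalent to LP~\ref{lp:expcg2} once a configuration $\b{G}$ (the sign pattern of each $x_i^{j_1}$ vs.\ $x_i^{j_2}$) is fixed, and LP~\ref{lp:expcg2} is solvable in polynomial time by the ellipsoid method since the best-response of player 2 is a polynomially-solvable knapsack-style DP. So in principle one can enumerate all configurations $\b{G}$ and solve LP~\ref{lp:expcg2} for each, reporting feasibility iff some call succeeds. The issue is that $\b{G}$ has up to $(c!)^k$ possible values. To reduce this to a polynomial, I would mirror the Section~\ref{sec:contcg} argument: pass to the $\delta$-uniform variant with $\delta=\Theta(\epsilon^3)$, group battlefields into $O(\log w/\delta)$ buckets of identical (rounded) weight, and for each bucket only record, for each of the $c!$ possible partial configurations, \emph{how many} battlefields of this bucket take that partial configuration (indifference within a bucket comes from the uniform weight inside it, exactly as in Sections~\ref{section:uniformc=2} and \ref{sec:contcg}). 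For a bucket of size $k_b\le 1/\delta$ the number of count-vectors is constant; for a bucket of size $k_b>1/\delta$ the counts are discretized to multiples of $\delta k_b$, again giving a constant per bucket. This yields $(O(1))^{O(\log w)}=\poly(\text{input})$ configurations in total.

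The main obstacle I expect is the approximation analysis for the discretization on large buckets. The argument in Section~\ref{sec:contcg} for \maxmin{u}{p} strategies used Lemma~\ref{lemma:notlosemuch}, which compared $\alpha$ (the total weight of large buckets) with $u$ under the assumption $n\geq(1+\epsilon)m/\lfloor(1-p)c+1\rfloor$. For the expected-utility setting I would prove the analog: if $\alpha$ is the total weight of battlefields in large buckets, then there is a simple maximin \cmixedstrategy{c} that guarantees expected payoff $\Omega(\epsilon \alpha/c)$ by spreading troops uniformly and symmetrically across these battlefields in each of the $c$ pure strategies (using the fact that on such buckets the opponent cannot simultaneously cover more than a small fraction without exceeding $m$ troops, by the same pigeonhole counting as in Lemma~\ref{lemma:notlosemuch}). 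This lower bound on $\opt$ would then let us charge the $\delta$-fraction of battlefields on which our count-discretization makes a wrong guess against $\opt$ (via Lemma~\ref{lemma:approx}-style reasoning adapted to expected utility: switching the comparisons of $x_i^j,x_i^{j'}$ on a set $S$ of battlefields and zeroing those coordinates loses at most $\sum_{i\in S}w_i$ in expected utility). Combined with the $O(\delta u)$ error from ignoring buckets of weight below $\delta U/k$ and the $(1-\delta)$ factor lost by passing to the $\delta$-uniform weights (Observation~\ref{obs:bfweightrounding}), choosing $\delta=\Theta(\epsilon^3)$ yields a $(1-\epsilon/3)$-approximation at the LP step; composing the three $(1-\epsilon/3)$ losses (profile discretization, $\opt$ binary search, configuration approximation) gives a $(1-\epsilon)$-approximate maximin \cmixedstrategy{c} in polynomial time, establishing the theorem.
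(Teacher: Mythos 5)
Your proposal follows essentially the same route as the paper: discretize the profiles via Lemma~\ref{lem:polyprobexp}, binary-search over $\opt$, and implement the feasibility oracle by fixing a configuration $\b{G}$, solving LP~\ref{lp:expcg2} with an ellipsoid/separation-oracle argument, and pruning the configuration space to polynomially many via the $\delta$-uniform bucketing of Sections~\ref{section:nonuniformc=2} and~\ref{sec:contcg}. In fact you spell out the approximation accounting for the large-bucket discretization (the expected-utility analog of Lemma~\ref{lemma:notlosemuch}) in more detail than the paper does, so the proposal is correct and matches the paper's argument.
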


\subsection{Discrete Colonel Blotto}\label{sec:expdisc}

In this section we solve the same problem solved above for the discrete variant of Colonel Blotto. That is, given probabilities $p_1, \ldots, p_c$ and the optimal maximin value $\opt$, our goal is to construct $c$ strategies $\b{x}^1, \ldots, \b{x}^c$ for the discrete variant of Colonel Blotto that guarantee a payoff of at least $(1-\epsilon)\opt$ in expectation, against any strategy of player 2 (or report that this is infeasible).

We show that a similar approach to that of Section~\ref{sec:integral}, with minor changes, can be applied to this case. Recall that the main idea that we used in Section~\ref{sec:integral} was to partition the battlefields into two disjoint subsets of heavy and light battlefields. Then roughly speaking, the idea was to perform an exhaustive search over the heavy battlefields and solve the problem over the light battlefields against a number of weaker adversaries each corresponding to a response of player 2 over the heavy battlefields. Fix $\delta$ to be a relatively smaller error threshold than $\epsilon$. We say battlefield $i$ is heavy if and only if $w_i \geq \delta \opt$. We first show that w.l.o.g. we can assume that the number of heavy battlefields is at most $O(1)$ or otherwise a simple strategy obtains an expected utility of at least \opt{}.

\begin{claim}
	If the number of heavy battlefields is more than $2c^2/(\epsilon\delta)$, there exists an algorithm to find a \cmixedstrategy{c} minimax strategy providing a utility of at least $\opt$ in polynomial time if $m < (1-\epsilon)nc$.
\end{claim}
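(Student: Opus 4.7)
The plan is to exhibit an explicit, easily-constructed \cmixedstrategy{c} whose guaranteed expected utility already exceeds $\opt$, exploiting the fact that a large pile of heavy battlefields together with player 2's budget shortfall $nc - m > \epsilon nc$ forces player 2 to leave a sizable weight on the table.

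Concretely, I would arbitrarily partition the $k^h$ heavy battlefields into $c$ disjoint groups $B_1,\dots,B_c$ of size $\lfloor k^h/c\rfloor$ each (discarding fewer than $c$ leftover battlefields), set $t := \lfloor nc/k^h \rfloor$, and let the pure strategy $\b{x}^j$ place exactly $t$ troops on every battlefield of $B_j$ and zero elsewhere. Each $\b{x}^j$ uses at most $t\lfloor k^h/c\rfloor \le n$ troops, so it is a feasible pure strategy. Play each $\b{x}^j$ with probability $1/c$.

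Fix any pure response $\b{y}$ of player 2. Because $\b{x}^j$ is nonzero only on $B_j$, player 1 wins battlefield $i \in B_j$ against $\b{x}^j$ iff $y_i < t$, and wins no battlefield outside $B_j$. Letting $H := \bigcup_j B_j$,
\begin{equation*}
\sum_{j=1}^c \utilitya{\b{x}^j}{\b{y}} \;=\; \sum_{i \in H:\, y_i < t} w_i.
\end{equation*}
Player 2 can achieve $y_i \ge t$ on at most $m/t$ battlefields, so at least $|H| - m/t$ of the battlefields in $H$ contribute to the sum, each of weight $\ge \delta\opt$. Using $t \ge nc/k^h - 1$, $m < (1-\epsilon)nc$, and $|H| \ge k^h - c$, a short calculation yields $m/t < (1-\epsilon)|H|$ up to lower-order terms, so the expected utility is at least $\frac{\epsilon\delta|H|}{c}\opt > \frac{\epsilon\delta}{c}\cdot\frac{2c^2}{\epsilon\delta}\cdot\opt = 2c\cdot\opt$, comfortably above $\opt$.

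The principal obstacle is the degenerate regime $nc < k^h$, in which $t = 0$ and the above construction places no troops at all. I would handle it by instead selecting only $nc$ of the heavy battlefields, partitioning them into $c$ groups of $n$, and placing a single troop on each; invoking $m < (1-\epsilon)nc$ under the ties-to-player-2 rule still guarantees player 1 strictly more than $\epsilon nc$ wins across strategies, yielding expected utility at least $\epsilon n \delta\opt$. Arguing that this matches $\opt$ in this small-$n$ corner relies on the auxiliary observation that any \cmixedstrategy{c} can extract positive utility from at most $nc$ battlefields, so $\opt$ itself is proportionally capped; the deliberately slack factor $2c^2$ in the hypothesis $k^h > 2c^2/(\epsilon\delta)$ is engineered precisely to absorb this regime consolidation and the various rounding losses.
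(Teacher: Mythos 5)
Your high-level plan is the same as the paper's: split the heavy battlefields into $c$ disjoint groups, play group $j$ with probability $1/c$, and use the budget deficit $nc - m > \epsilon nc$ to argue that player 2 cannot suppress enough heavy wins. The execution, however, has a gap that the paper's version is specifically designed to avoid. The paper takes each group to have size exactly $2c/(\epsilon\delta)$ --- a constant independent of the number $k^h$ of heavy battlefields --- and spreads all $n$ troops almost uniformly inside the chosen group, so every battlefield played carries roughly $\epsilon\delta n/(2c)$ troops. You instead spread over \emph{all} $k^h$ heavy battlefields, so each carries only $t = \lfloor nc/k^h\rfloor$ troops, a quantity that shrinks as $k^h$ grows. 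This breaks the ``short calculation'': from $t \ge nc/k^h - 1$ one only gets $m/t < (1-\epsilon)k^h\cdot\frac{nc}{nc-k^h}$, and the factor $\frac{nc}{nc-k^h}$ is not a lower-order correction --- it diverges as $k^h$ approaches $nc$. Even before that, the floor alone is fatal: if, say, $k^h = 2nc/3$, then $t=1$, each pure strategy uses only about $2n/3$ of its $n$ troops, and since ties go to player 2 a single opposing troop wins each contested battlefield; for $\epsilon \le 1/3$ player 2's budget $m < (1-\epsilon)nc$ can still cover all of $H$, he wins every heavy battlefield, and your strategy guarantees $0$ rather than $\opt$.

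The degenerate regime $nc < k^h$ is also not closed. Your own bound there is $\epsilon n\delta\opt$, which reaches $\opt$ only when $n \ge 1/(\epsilon\delta)$, and nothing in the hypotheses forces that. The ``auxiliary observation'' you invoke --- that $\opt$ is proportionally capped because at most $nc$ battlefields can be contested --- is not true in the form you need: ``heavy'' only means weight at least $\delta\opt$, so the battlefields you win may each be worth exactly $\delta\opt$ while $\opt$ is driven by one much heavier battlefield that a one-troop allocation cannot hold. Concretely, take $c=1$, $n=2$, $m=1$, $\epsilon=0.4$, $\delta=1/100$, one battlefield of weight $100$ and a thousand of weight $1$: all hypotheses of the claim hold, $\opt=100$ (stack both troops on the big battlefield), but one troop on each of two heavy battlefields yields utility $1$ after player 2 ties the big one. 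The fix is precisely the paper's fixed group size: with $c$ groups of $2c/(\epsilon\delta)$ battlefields, each carrying about $\epsilon\delta n/(2c)$ troops, denying $c/\delta$ wins in a group costs player 2 roughly $(1-\epsilon/2)n$ troops, hence more than $(1-\epsilon)nc$ over the $c$ disjoint groups, contradicting $m<(1-\epsilon)nc$; and winning $c/\delta$ battlefields of weight at least $\delta\opt$ in a single group already gives expected utility $\frac{1}{c}\cdot\frac{c}{\delta}\cdot\delta\opt=\opt$.
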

\begin{proof}
	If the number of heavy battlefields is more than $2c^2/(\epsilon\delta)$, partition them into $c$ disjoint subsets of size $2c/(\epsilon\delta)$. Consider the \cmixedstrategy{c} of player 1 that chooses one of these subsets uniformly at random (i.e., with probability $1/c$) and distributes his troops among its battlefields almost uniformly (i.e., with pairwise difference of at most 1). Observe that even if in one of these strategies, player 1, wins $c/\delta$ heavy battlefields, the expected utility that he gets would be more than \opt, since
	\begin{equation*}
		\frac{1}{c} \cdot \frac{c}{\delta} \cdot \delta \opt = \opt.
	\end{equation*}
	Therefore for player 2 to prevent player 1 from getting an expected utility of \opt{}, he has to win at least $(1-\epsilon)\sfrac{c}{\epsilon\delta}$ battlefields of each strategy which is not feasible for him since he needs to have $m \geq (1-\epsilon)nc$.
\end{proof}

Now, since we bound the number of heavy battlefields by a constant, we have our desired property that the number of relevant responses of player 2 over the heavy battlefields is bounded by a constant. 

\begin{observation}
	Given a strategy of player 1 over the heavy battlefields, there exists a set $S^h_2$ of a constant number of responses of player 2 over heavy battlefields among which an optimal response is guaranteed to exist.
\end{observation}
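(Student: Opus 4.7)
The plan is to reduce player 2's (arbitrary) response on the heavy battlefields to a small canonical set and then simply count. The previous claim lets us assume the number of heavy battlefields is bounded by a constant $H := 2c^2/(\epsilon\delta)$ (otherwise the simple randomized strategy already yields expected utility at least $\opt$). Fix the strategies $\b{x}^1, \ldots, \b{x}^c$ of player 1 restricted to the heavy battlefields. For each heavy battlefield $i$, define the canonical set of troop counts
\[
T_i := \{0\} \cup \{x^j_i : j \in [c]\},
\]
so $|T_i| \leq c+1$. The canonical response set $S^h_2$ will consist of all joint responses $\b{y}^h$ with $y^h_i \in T_i$ for every heavy battlefield~$i$ (and such that $\sum_i y^h_i \leq m$). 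Clearly $|S^h_2| \leq (c+1)^H = O(1)$, since both $c$ and $H$ are constants depending only on $\epsilon$ and $\delta$.

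Next I would run a simple exchange argument to show that restricting to $S^h_2$ is without loss. Consider any response $\b{y}^h$ of player 2 on the heavy battlefields, and some battlefield $i$ where $y^h_i \notin T_i$. Sort the $x^j_i$'s and let $x^{(r)}_i$ be the largest one with $x^{(r)}_i \leq y^h_i$ (set $x^{(0)}_i := 0$). Replacing $y^h_i$ by $x^{(r)}_i$ does not change the set of player 1's pure strategies that beat player 2 on battlefield $i$ (ties go to player 2, so the winning condition is $y^h_i \geq x^j_i$), hence the partial utility contributions to every $\b{x}^j$ on battlefield $i$ are preserved. Moreover, $x^{(r)}_i \leq y^h_i$, so the modified response uses no more troops than before, which means the remaining budget on the light battlefields only grows. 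Since the light-battlefield portion of player 2's response can then only improve, the modified response is at least as good as the original.

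Applying this swap at every heavy battlefield produces an $\b{y}^{h,\star} \in S^h_2$ whose extension to the full game (using the best continuation on the light battlefields) achieves an expected utility against $\b{x}^1, \ldots, \b{x}^c$ at least as large as that of an optimal $\b{y}^h$. Hence an optimal response is guaranteed to be realizable by some element of $S^h_2$, which is what the observation claims.

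The main obstacle is really just the exchange argument, and even that is mild: one must check that ``lowering'' $y^h_i$ to the canonical threshold preserves who wins battlefield $i$ against each $\b{x}^j$ under the tie-breaking rule (player 2 wins ties). Once that is in hand, the constant bound on $|S^h_2|$ is immediate from $H = O(1)$ and $c = O(1)$. No structural properties of the light battlefields are used in this step, which is important because it lets the downstream dynamic program enumerate all elements of $S^h_2$ in $O(1)$ time before committing to a strategy on the light battlefields.
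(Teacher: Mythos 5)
Your proof is correct and follows essentially the same approach as the paper: restrict player 2 on each heavy battlefield to the $c+1$ canonical troop counts $\{0\}\cup\{x^1_i,\ldots,x^c_i\}$ and use the constant bound on the number of heavy battlefields to get $(c+1)^{O(1)}=O(1)$ responses. The paper simply asserts this restriction is without loss, whereas you spell out the exchange argument (rounding $y^h_i$ down to the largest canonical value, which preserves the winner against every $\b{x}^j$ under the tie-breaking rule and frees up troops), which is a welcome but not substantively different elaboration.
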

\begin{proof}
	We can assume that the number of troops that player 2 puts on the $i$th heavy battlefield is equal to the number of troops that player 1 puts in this battlefield in one of his $c$ pure strategies. Therefore on each battlefield player 2 has $c+1$ options. Combined with the fact that the number of heavy battlefields is $O(1)$, this means there are only $(c+1)^{O(1)} \in O(1)$ relevant responses for player 2 over heavy battlefields.
\end{proof}

The observation above implies that the techniques of Section~\ref{sec:discg} where we perform an exhaustive search over the heavy battlefields and solve a dynamic program with as many dimensions as the number of responses of player 2 over the heavy battlefields is essentially feasible. It only remains to define the weaker adversary over the light battlefields. This turns out to be much simpler than the case of finding a \maxmin{u}{p} strategy.

\paragraph{The weaker adversary.} Given strategies $\b{x}^1, \ldots, \b{x}^c$ with probabilities $p_1, \ldots, p_c$, for any $i \in [k]$ and any $j \in [c]$ define $u(i, j)$ to be the expected payoff that player 2 gets by putting exactly $x^j_i$ troops in battlefield $i$. Moreover, we define the ratio $r(i, j)$ to be $u(i, j)/x^j_i$. The first action of the weaker adversary is to choose $i \in [k]$, $j\in[c]$ with maximum ratio $r(i, j)$ and put exactly $x^j_i$ troops in battlefield $i$. Next, for any strategy $j'$, we decrease $x^{j'}_i$ by $x^j_i$. Intuitively, this updates the additional number of troops that the weaker adversary has to put in this battlefield to win strategies with higher number of troops. Now over the updated strategies, the weaker adversary again computes the ratios, picks the one with the higher value and update the strategies accordingly. This continues until the weaker adversary spends all of his $m$ troops.

\begin{lemma}
	By optimizing against the weaker adversary's greedy best response, one can guarantee an expected payoff of $\opt - w_{\max}$.
\end{lemma}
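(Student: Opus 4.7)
The plan is to bound player~2's additional utility from playing a true best response over the weaker adversary's greedy response by $\wmax$, and then to transfer this to a lower bound on player~1's guaranteed payoff via the constant-sum structure of the game. First I would reformulate player~2's best response as a multiple-choice knapsack. Fixing $\b{x}^1, \ldots, \b{x}^c$ with probabilities $p_1, \ldots, p_c$, without loss of generality player~2 chooses $y_i \in \{0, x^1_i, \ldots, x^c_i\}$ for each battlefield $i$ at cost $y_i$ and value $u(i, j)$ where $y_i = x^j_i$, subject to $\sum_i y_i \leq m$. Every individual ``upgrade'' (increasing the committed troops in battlefield $i$ from some level to a higher one) contributes marginal value at most $w_i \leq \wmax$, because $u(i, j) = w_i \sum_{j' : x^{j'}_i \leq x^j_i} p_{j'} \leq w_i$.

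Next, I would argue that the weaker adversary's procedure is precisely the greedy algorithm for this knapsack formulation: at each step it picks, over all battlefields and all remaining upgrades, the one with the largest value-to-cost ratio $u(i,j)/x^j_i$ (where the $x^j_i$ have been rescaled by the amount already committed to battlefield~$i$), commits the corresponding troops, and iterates. By the standard analysis of greedy knapsack, this yields value at least $\mathrm{OPT}_{\mathrm{LP}} - w^{\star}$, where $\mathrm{OPT}_{\mathrm{LP}}$ is the LP relaxation value (allowing fractional upgrades) and $w^{\star}$ is the marginal value of the first upgrade that does not fit in the remaining budget. Since $w^{\star} \leq \wmax$ and $\mathrm{OPT}_{\mathrm{LP}}$ upper-bounds the integer optimum $W^{\star}$, the weaker adversary's utility is at least $W^{\star} - \wmax$.

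Finally, I would transfer this bound to player~1 via the identity $\utilitya{\b{x}}{\b{y}} + \utilityb{\b{x}}{\b{y}} = \sum_i w_i$, which holds in expectation for mixed strategies. For any mixed strategy $\sigma$ of player~1, writing $\tau^{\star}(\sigma)$ and $\tau_{\mathrm{weak}}(\sigma)$ for player~2's optimal response and the weaker adversary's response respectively, step~2 gives $\mathbb{E}[\utilityb{\sigma}{\tau^{\star}(\sigma)}] - \mathbb{E}[\utilityb{\sigma}{\tau_{\mathrm{weak}}(\sigma)}] \leq \wmax$, and the constant-sum identity converts this to $\mathbb{E}[\utilitya{\sigma}{\tau^{\star}(\sigma)}] \geq \mathbb{E}[\utilitya{\sigma}{\tau_{\mathrm{weak}}(\sigma)}] - \wmax$. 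The true optimal maximin strategy $\sigma^{\dagger}$ guarantees $\opt$ against \emph{any} response, so in particular $\mathbb{E}[\utilitya{\sigma^{\dagger}}{\tau_{\mathrm{weak}}(\sigma^{\dagger})}] \geq \opt$. Hence any strategy $\sigma^{\circ}$ that maximizes the expected payoff against the weaker adversary satisfies $\mathbb{E}[\utilitya{\sigma^{\circ}}{\tau_{\mathrm{weak}}(\sigma^{\circ})}] \geq \opt$, and therefore $\mathbb{E}[\utilitya{\sigma^{\circ}}{\tau^{\star}(\sigma^{\circ})}] \geq \opt - \wmax$, which is exactly the claim.

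The main obstacle is making the greedy-knapsack analysis in step~2 rigorous under the specific rescaling in the weaker adversary's description. Within a single battlefield the ``items'' interact (paying to win against a high-cost strategy automatically wins against all cheaper ones), so textbook greedy bounds do not apply off the shelf: one must verify that after each rescaling the residual problem is again a knapsack whose items have value bounded by $\wmax$, and that picking the globally maximum ratio at each step is equivalent to the LP-greedy on this residual problem. A short exchange argument, or an LP-duality argument comparing the greedy schedule to the fractional optimum and charging the gap to a single boundary item, should suffice.
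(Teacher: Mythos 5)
The paper states this lemma without any proof, so the right benchmark is its proof of the analogous statement in the $c=1$, \maxmin{u}{p} setting (Lemma~\ref{lem:weakisgood} together with Corollary~\ref{cor:weakeradversary}), and your argument follows exactly that template: cast player~2's best response as a knapsack, bound the greedy's loss by the value of one item, and transfer the bound to player~1 through the constant-sum identity $\utilitya{\b{x}}{\b{y}}+\utilityb{\b{x}}{\b{y}}=\sum_i w_i$. Your third step is airtight and is essentially verbatim the argument of Corollary~\ref{cor:weakeradversary}. The real content is your second step, and you have correctly located the one place where $c\geq 2$ genuinely differs from the ordinary knapsack of the $c=1$ case: within a battlefield the options $y_i\in\{0,x^1_i,\dots,x^c_i\}$ are nested, the incremental ratios along such a chain need not be monotone, and the adversary's rule as stated in the paper rescales costs to residual costs after each commitment. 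What you defer to ``a short exchange argument'' is in fact the entire mathematical content of the lemma, so it should be written out. It does go through: because the greedy's menu at every step contains \emph{every} remaining level of every battlefield (not just the next increment), the maximum aggregate-value-over-residual-cost ratio for a battlefield is attained at a vertex of the upper concave envelope of its residual cost--value curve, so the procedure is Dantzig's greedy on the union of the per-battlefield envelope segments; the integral and fractional schedules then diverge only at the single partially-fit segment, whose value is at most $w_i\leq\wmax$ since an entire battlefield contributes at most $w_i$, and the envelope relaxation upper-bounds the true integral best response. With that step made explicit, your proof is complete and more detailed than what the paper provides.
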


The lemma above shows that if we optimize our strategy against the weaker adversary over the light battlefields we obtain our desired $(1-\epsilon)\opt$ expected utility so long as we allow the adversary to play any arbitrary strategy over the heavy battlefields. We capture the last iteration of algorithm above by {\em signatures} similar to Definition~\ref{def:sign} of Section~\ref{sec:discg}. Then by fixing strategy of player 1 over the heavy battlefields, and the signature of the weaker adversary for each of his constant relevant responses, we construct the optimal strategy of player 1 over the light battlefields in polynomial time using a dynamic program as in Section~\ref{sec:discg}.

\begin{theorem}\label{thm:expdisc}
	For any $\epsilon > 0$, and any constant $c$, there exists a polynomial time algorithm to obtain a $(1-\epsilon)$-approximate maximin \cmixedstrategy{c} for player 1 in the discrete Colonel Blotto game if $n \geq (1+\epsilon) m/c$.
\end{theorem}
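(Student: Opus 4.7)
The plan is to assemble the pieces already prepared in Sections~\ref{sec:exp} and \ref{sec:integral}--\ref{sec:discg} into an oracle-plus-search scheme. By Lemma~\ref{lem:polyprobexp}, there are only polynomially many profiles $(p_1,\ldots,p_c)$ among which a $(1-\epsilon/4)$-approximate maximin \cmixedstrategy{c} is guaranteed, and by binary search over $\opt \in \{0,1,\ldots,\sum_i w_i\}$ we may further fix the target expected payoff. Thus it suffices, for every fixed pair $(\b{p},\opt)$, to give a polynomial-time oracle that either returns pure strategies $\b{x}^1,\ldots,\b{x}^c$ achieving expected utility at least $(1-\epsilon/2)\opt$ against any pure response of player~2 or certifies infeasibility. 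Losing one $(1-\epsilon/4)$ factor in each of the two places produces an overall $(1-\epsilon)$ approximation.

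For a fixed $(\b{p},\opt)$, pick a small parameter $\delta \ll \epsilon$ and call a battlefield \emph{heavy} if $w_i \geq \delta\opt$ and \emph{light} otherwise. The claim preceding Theorem~\ref{thm:expdisc} (which uses the hypothesis $n \geq (1+\epsilon)m/c$, i.e.\ $m<(1-\epsilon')nc$ for suitable $\epsilon'$) lets us assume the number of heavy battlefields is at most $2c^2/(\epsilon\delta) = O(1)$; otherwise the uniform randomization over $c$ equal-sized groups of heavy battlefields already yields expected utility at least $\opt$. Since the number of heavy battlefields is constant and $c$ is constant, the set of partial pure strategies of player~1 over the heavy battlefields has size $n^{O(1)}$, and for any fixed such partial strategy the set of relevant partial responses of player~2 (restricted to putting either $0$ or one of the $c$ values $x^j_i$ troops on each heavy $i$) is of size $(c+1)^{O(1)} = O(1)$. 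Enumerate both exhaustively.

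For the light part, replace the true opponent with the \emph{weaker adversary} described at the end of Section~\ref{sec:expdisc}: at each step, subject to whatever troops remain after the fixed heavy response, he picks the residual pair $(i,j)$ maximizing the ratio $r(i,j)=u(i,j)/x^j_i$, spends $x^j_i$ troops on battlefield $i$, subtracts $x^j_i$ from the requirements $x^{j'}_i$ of the other strategies, and iterates. Since on light battlefields $w_{\max} < \delta\opt$, a knapsack-style exchange argument shows the weak adversary's expected payoff is within an additive $\delta\opt$ of the true best response's; hence optimizing against him loses only a $(1-\delta)$ factor. Define a \emph{signature} on the light side, as in Definition~\ref{def:sign} of Section~\ref{sec:discg}, by recording the final pair $(i,j)$ selected by the greedy rule, the troops committed there, and the expected-utility split, \emph{per heavy response}; Lemma~\ref{lem:poly-sign}-style counting shows the number of such signature tuples remains polynomial once the heavy response set (constant size) is fixed.

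With these enumerations pinned down, run a dynamic program over the light battlefields whose state is $(j, n'_1,\ldots,n'_c, \b{\omega}, \b{u})$, where $j$ is the next light index, $n'_\ell$ is troops already allotted by strategy $\b{x}^\ell$, and $\b{\omega},\b{u}$ are the troops spent and utilities earned by the weak adversary, one coordinate per heavy response; this is the direct multi-adversary analogue of the DP in Lemma~\ref{lem:findstrata} and Theorem~\ref{theorem:3-approx}. Because the number of weak-adversary variants is $O(1)$, the number of heavy responses per strategy is $O(1)$, and $c$ is constant, the DP has polynomially many states and $O(n)$ transitions per state. Combining the DP's value with the fixed heavy payoff and the target signature yields the oracle, and iterating over $(\b{p},\opt)$ and all discretizations gives the $(1-\epsilon)$-approximate \cmixedstrategy{c}.

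\paragraph{Main obstacle.} The delicate step, as in Section~\ref{sec:discg}, is getting the weak adversary's greedy rule to be \emph{locally} determined by the signature so that the DP can update it one battlefield at a time while simultaneously respecting the rule against every one of the $O(1)$ heavy responses. The selection-ratio greedy here is simpler than the two-vector $(\b{h},\b{h}')$ machinery of Section~\ref{sec:twopure}, which suggests the local-decision lemma analogous to Lemma~\ref{lem:sign-respose} goes through with only notational changes; nonetheless, verifying that a signature on the light side uniquely pins down each $(i,j)$ decision under all heavy responses simultaneously (so that the DP value truly reflects feasibility) is where the bulk of the bookkeeping lies.
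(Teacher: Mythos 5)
Your proposal follows essentially the same route as the paper: fix the profile and $\opt$ via Lemma~\ref{lem:polyprobexp} plus binary search, use the heavy-light split at threshold $\delta\opt$ with the constant bound on heavy battlefields (relying on $n \geq (1+\epsilon)m/c$), exhaustively enumerate player 1's heavy partial strategies and player 2's constantly many heavy responses, and optimize against the ratio-greedy weak adversary on the light battlefields via a signature-indexed multi-adversary dynamic program. The bookkeeping obstacle you flag (that the signature must locally determine the greedy decision simultaneously under every heavy response) is exactly the step the paper also defers to the machinery of Section~\ref{sec:discg}, so the argument is correct at the same level of detail as the paper's own.
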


Note that for the case where $n < m/c$, the second player can simply put $\max\{ x^1_i, \ldots, x^c_i\}$ troops on battlefield $i$, ensuring that we get expected utility 0.

\section{Further Complexity Results}\label{sec:complexity}
For the purpose of studying its complexity, let us define {\sc Colonel Blotto} as the following computational problem:  Given a description of the discrete Colonel Blotto game --- that is, the integer number of available troops for both players, and integer weights for the $k$ battlefields --- what are the maxmin strategies of the two players in the game whose utility is the probability of winning more than a threshold, say half, of the total weight?   Since the maxmin strategy is an exponential object, we only require the probability with which the last strategy (a specific allocation) is played\footnote{It would be more natural to ask the value of the game; we require instead a component of the maxmin strategy for a technical reason:  in our reduction below for the generalized problem, computing the value is easy: the game is symmetric, and the value is always zero.}.  It is clear that this problem can be solved by the ellipsoid algorithm in $2^{n^{O(1)}}$ (exponential) time, where $n$ is the size of the input.
We conjecture that the problem is exponential time-complete.

We have been unable to prove this conjecture;  but as a promising start and consolation prize, we can show exponential time-completeness for the following generalization of the problem:  In {\sc General Colonel Blotto}\footnote{According to the Wikipedia, ``General Colonel'' is a extant rank in certain armies.} we are given a function $u_1:{\cal S}_1\times {\cal S}_2$ to the integers; that is, for each allocation of troops, $u_1$ computes the utility of Player 1 (the utility of Player 2 is, as always, its negation).  The function $u_1$ is of course given as a Boolean circuit, $U$, since its explicit form is exponential.  Thus, the circuit $U$ is the input of the problem.

We can show the following:
\begin{theorem}\label{thm:complexity}
{\sc General Colonel Blotto} is exponential time-complete.
\end{theorem}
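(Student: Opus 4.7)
The plan is to prove both containment and hardness. For inclusion in EXPTIME, I would observe that an instance specifies $n,m,k$ in binary plus a circuit $U$ of size $N$, and that the pure strategy space of each player has size at most $\binom{n+k-1}{k-1} \leq (n+k)^k = 2^{\mathrm{poly}(N)}$. Using $U$ we can write down the entire bimatrix of utilities in exponential time, and then solve the resulting two-player zero-sum game via the standard maxmin linear program in time polynomial in the matrix size, i.e., in $2^{\mathrm{poly}(N)}$. This yields the entire maxmin mixed strategy, from which any particular component (such as the probability on the final allocation) can be read off in the required form.

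For hardness I would reduce from the problem of computing a specific bit of a maxmin mixed strategy in a succinctly represented two-player zero-sum game: the input is a circuit $V(a,b)$ defining a payoff on $\{0,1\}^p \times \{0,1\}^q$, together with a distinguished strategy $\hat a\in\{0,1\}^p$ and a rational $\rho$, and we decide whether some maxmin strategy of player~1 plays $\hat a$ with probability $\geq \rho$. This problem is EXPTIME-complete by the standard simulation of an alternating polynomial-space Turing machine by a two-player zero-sum game whose strategies are succinctly describable functions from configurations to moves; the payoff circuit $V$ evaluates the induced play to a win/loss, and extracting a specific component of the equilibrium encodes the acceptance decision.

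The reduction embeds the succinct game into General Colonel Blotto using a \emph{pairing gadget}: take $k=2\max(p,q)$ battlefields partitioned into $\max(p,q)$ ordered pairs, and set $n=p$, $m=q$. A \emph{valid} allocation for player~1 places exactly one troop in each of the first $p$ pairs --- in the left slot if $a_i=0$ and in the right slot if $a_i=1$ --- and no troops elsewhere; valid allocations for player~2 are defined symmetrically over the first $q$ pairs. The circuit $U$ built from $V$ first verifies validity of each allocation (a polynomial-size check). If player~1's allocation is invalid it outputs a huge negative constant $-M$; if only player~2's is invalid it outputs $+M$; and if both are valid it outputs $V(a,b)$ where $a,b$ are decoded from the allocations. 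Choosing $M$ strictly larger than the range of $V$ forces any maxmin profile to put full probability on valid encodings, so that the induced submatrix is exactly the payoff matrix of the succinct game. Hence the maxmin probability on the Blotto allocation encoding $\hat a$ equals the maxmin probability on $\hat a$ in the succinct game, and by permuting battlefields we can arrange for this allocation to be the last in lexicographic order, matching the problem as formally stated.

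The principal obstacle will be establishing the hardness baseline: turning the intuition that succinct zero-sum games simulate alternating polynomial space into a rigorous EXPTIME-hardness statement for the specific bit-of-the-maxmin problem, including careful handling of the precision of $\rho$ and of rational representations of equilibrium probabilities. Once that is in place, the pairing-gadget reduction is essentially mechanical modulo the choice of $M$ and the elementary domination argument showing that invalid encodings receive zero weight in equilibrium.
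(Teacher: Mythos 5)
Your membership argument and your embedding gadget are both sound, and the gadget is in fact very close in spirit to what the paper does (the paper uses ``complementary'' allocations $C(S)$ over $4n$ battlefields and a large penalty $\pm c$ to dominate away infeasible allocations, exactly your validity-check-plus-$\pm M$ device). The problem is the step you yourself flag as the ``principal obstacle'': you assume as a baseline that deciding a bit of the maxmin of a \emph{succinctly represented one-shot zero-sum matrix game} is EXPTIME-complete, and you propose to get this from the simulation of an alternating polynomial-space machine. That simulation produces a multi-round game of perfect information whose \emph{pure strategies are functions from configurations to moves}; these objects are exponentially long, so the induced normal-form game has doubly-exponentially many pure strategies, not the singly-exponential number your succinct matrix game can accommodate. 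Restricting attention to ``succinctly describable'' strategies, as you suggest, is not known to preserve the value, and there is no standard reference establishing EXPTIME-hardness of the bit-of-maxmin problem for succinct matrix games along these lines. So the reduction chain has a missing first link.

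The paper fills exactly this hole by a different route: it starts from {\sc Succinct Circuit Value} (EXPTIME-complete by Papadimitriou--Yannakakis), reduces it to {\sc Succinct Linear Inequalities} using the fact that the classical circuit-value-to-LP reduction is a \emph{projection} and therefore lifts to the succinct setting, and then invokes Adler's explicit reduction from the feasibility of $Ax=b,\ x\geq 0$ to a skew-symmetric zero-sum game whose payoff matrix has entries computable entry-by-entry by a small circuit. Feasibility of the linear system is read off from whether the last component of the (symmetric) maxmin is nonzero --- which is also why the problem is phrased as asking for a component of the maxmin rather than the value (the constructed game is symmetric, so its value is identically zero and carries no information). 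If you replace your alternating-machine baseline with this LP-based chain, your pairing gadget then completes the proof essentially as the paper's does.
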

\begin{proof}
We start with a problem we call {\sc Succinct Circuit Value}:  You are given a Boolean circuit with $2^n$ gates {\em implicitly} through another circuit $C$ with $n$ inputs.  For each input $i\in[2^n]$, $C$ outputs $2n+3$ bits interpreted as a triple $(t(i),j(i),k(i))$, where $t(i)\in\{0,1,\lor,\land,\lnot\}$ is the type of the gate, and $j(i),k(i)<i$ are the gates that are inputs of gate $i$.  If $t(i)\in\{0,1\}$ then $j(i)=k(i)=0$, and if $t(i)=\lnot, k(i)=0$.  The question asked is, does the output gate $2^n-1$ evaluate to 1?  It follows immediately from the techniques in \cite{papadimitriou1986note} that  {\sc Succinct Circuit Value} is exponential time-complete.  Fo technical reasons, we require that not all inputs are zero (say, $t(0)=1$), a restriction that obviously maintains complexity.

We reduce this problem to another we call {\sc Succinct Linear Inequalities}:  We are given a circuit $C$ which, in input $i,j$ gives the integer entry $A_{ij}$ of an $M\times N$ matrix $A$ --- $j = N+1$ it returns the value $b_i$ of an $M$-vector $b$.  The question is, does the system $Ax=b, x\geq 0$ have a solution?  We claim that this problem is also exponential time-complete, by a simple reduction from {\sc Succinct Circuit Value}, emulating the well known reduction between the non-succinct versions (see for example the textbook \cite{dasgupta2006algorithms}, page 222).  In proof, we know from \cite{papadimitriou1986note} that it suffices for such reductions between succinct problems to work that the corresponding reduction between the non-succinct problems is of a special kind called {\em projection}, and the vast majority of known reductions can be easily rendered as projections.  Again for technical reasons, we modify slightly the construction by adding redundant constraints to the $0,1,\lnot$ gates to make sure that $N=2^n$ and $M=3\cdot 2^n-3$

Finally, we reduce {\sc Succinct Linear Inequalities} to {\sc General Colonel Blotto}.  For this part we follow the surprisingly recent reduction \cite{adler2013equivalence} from linear programming to zero-sum games (a 60 year old reduction due to Danzig was known to be incomplete, but it was far too much technical work to fix it...).  Adler's reduction starts from a system $Ax=b, x\geq 0$ and produces a skew-symmetric payoff matrix 
\[
P=
\left[
\begin{array}{c c c c c}
0 & 0 & A & e & -b \\
0 & 0 & -e^TA & 1 & -e^Tb \\
-A^T & A^Te & 0 & 0 & 1 \\
-e^T & -1 & 0 & 0 & 1 \\
b^T & -b^Te & 0 & -1 & 0 \\ 
\end{array}
\right].
\]
Actually, Adler starts by adding one extra row to $A$ to ensure that $Ax=0$ has no nonzero, nonnegative solutions; however, in our case this is guaranteed by requiring that the original circuit has at least one nonzero input.  He shows that the system $Ax=b, x\geq 0$ has a solution iff the last component of the (symmetric) maxmin strategy of this zero-sum game is nonzero.  

What remains is to label the $L=M+N+3 =4^n$ rows and columns of this matrix by allocations of troops by the two general colonels, and define the utility $u_1$.  The number of battlefields is $4n$, and we define the set of {\em feasible} allocations to be of the form $C(S)$ where $S$ is any subset of the first half of the battlefields.  $C(S)$ assigns one troop to each battlefield in $S$, and each battlefield $j$ such that $j-2n\not\in S$.  That is, troop assignments in the first and the second half of the battlefields complement each other.  Finally, we define the utility function $u_1$:  Given two allocations $A,B$, $u_1(A,B)$ is defined as follows:
\begin{itemize}
\item If both $A$ and $B$ are infeasible, $u_1(A,B)=0$;
\item if $A$ is feasible and $B$ is infeasible, $u_1(A,B)=-c$, where $c$ is larger than any payoff; this way, player 1 is disincentivized from using $A$;
\item similarly, if $B$ is feasible and $A$ is infeasible, $u_1(A,B)=c$;
\item finally, if both $A$ and $B$ are feasible, say $A=C(S)$ and $B=C(T)$ for subsets $S,T$ of $[2n]$, $u_1(A,B)=P_{ij}$, the $(i,j)$-th entry of the payoff matrix $P$ constructed in Adler's reduction, where the binary representation of $i$ is the set $S$ followed by the set $[2n]-S$, and similarly for $j$ and $T$.
\end{itemize}
It is clear that infeasible allocations are dominated, and can thus be eliminated from the game.  The feasible strategies are in one-to-one correspondence with the rows and columns of the matrix $P$, and thus the maxmin of the  {\sc General Colonel Blotto} game is the same as the maxmin of $P$.  Therefore, the last component of the maxmin strategy is nonzero if and only if the original circuit has value one, and the reduction is complete.
\end{proof}



\bibliographystyle{plain}
\bibliography{references}
	
\end{document}